\documentclass[letterpaper,11pt]{article}
\usepackage[margin=1in]{geometry}

\usepackage[switch, modulo]{lineno}

\usepackage{mathtools}
\usepackage[T1]{fontenc}
\usepackage{fnpct}
\usepackage{microtype}
\usepackage{paralist}
\usepackage[ttscale=.85]{libertine}
\usepackage{libertinust1math}
\usepackage{sectsty}
\usepackage[size=tiny]{todonotes}
\usepackage[normalem]{ulem}
\usepackage{hyperref}

\usepackage{enumerate}
\usepackage{xspace} 
\usepackage{tcolorbox}
\usepackage{algorithm}
\usepackage{algpseudocode}

\newif\ifdraft
\drafttrue
\usepackage[utf8]{inputenc}
\usepackage[maxbibnames=99,sortcites]{biblatex}
\usepackage{framed}
\usepackage{tcolorbox}

\usepackage{enumitem}
\usepackage{blindtext}

\newcommand{\E}[1]{\mathbf{E}\left[#1\right]}
\newcommand{\Exp}[1]{\mathbb{E}\left[#1\right]}
\newcommand{\Prob}[1]{\mathbf{P}\left(#1\right)}
\newcommand{\Pro}[1]{\mathbb{P}\left(#1\right)}

\newcommand{\poly}{\operatorname{\text{{\rm poly}}}}

\newcommand{\eps}{\varepsilon}
\newcommand{\whp}{w.h.p.\xspace}
\newcommand{\wehp}{w.e.h.p.\xspace}
\newcommand{\aas}{a.a.s.\xspace}

\newcommand{\dd}{\,\mathrm d}

\newcommand{\B}{\mathcal B}
\newcommand{\Hb}{\ensuremath{\mathbb H^2}\xspace}
\newcommand{\bH}{\Hb}
\newcommand{\disk}{\ensuremath{\mathcal D_R}\xspace}

\newcommand{\hrg}{\mathcal{G}(n, \alpha, C)}

\newcommand{\bigO}{\mathcal{O}}

\DeclareMathOperator{\polylog}{polylog}

\DeclareMathOperator{\dist}{d_h}

\newcommand{\intensity}{\lambda}
\newcommand{\func}{\rho}

\usepackage[capitalize,noabbrev]{cleveref}
\usepackage[dvipsnames]{xcolor}
\definecolor{customred}{RGB}{225, 90, 90}
\usepackage{xspace}
\usepackage{tikz}
\usepackage[framemethod=tikz]{mdframed}
\mdfsetup{frametitlealignment=\centering}

\usepackage{uniinput}

\usepackage{graphicx,tipa}
\usepackage{caption}
\usepackage{subcaption}

\usepackage{amsmath}
\usepackage{amssymb}

\usepackage{amsthm}
\usepackage{thmtools}
\usepackage{thm-restate}

\declaretheorem{theorem}
\newtheorem*{theorem*}{Theorem}
\declaretheorem{lemma, proposition, corollary, remark, observation, definition, claim}[
style=plain,
sibling=theorem
]

\makeatletter 
\DeclareFontFamily{U}{tipa}{}
\DeclareFontShape{U}{tipa}{m}{n}{<->tipa10}{}
\newcommand{\arc@char}{{\usefont{U}{tipa}{m}{n}\symbol{62}}}%
\newcommand{\arc}[1]{\mathpalette\arc@arc{#1}}
\newcommand{\arc@arc}[2]{%
  \sbox0{$\m@th#1#2$}%
  \vbox{
    \hbox{\resizebox{\wd0}{\height}{\arc@char}}
    \nointerlineskip
    \box0
  }%
}
\makeatother

\newcounter{ctr}
\loop
 \stepcounter{ctr}
 \expandafter\edef\csname c\Alph{ctr}\endcsname{\noexpand\mathcal{\Alph{ctr}}}
\ifnum\thectr<26
\repeat

\usepackage{authblk}
\usepackage{fontawesome5}

\newcommand{\authoremail}[1]{%
  \href{mailto:#1}{\tiny\raisebox{4pt}{\faEnvelope[regular]}}%
}

\newcommand{\CONGEST}{\ensuremath{\mathsf{CONGEST}}\xspace}
\newcommand{\LOCAL}{\ensuremath{\mathsf{LOCAL}}\xspace}

\newcommand{\Layer}[1]{\mathcal{L}_{#1}}
\newcommand{\tiletile}[2]{\mathcal{T}_{#1,#2}}
\newcommand{\block}[2]{\mathcal{A}_{#1,#2}}
\newcommand{\angulardist}[2]{\delta_\varphi(#1,#2)}
\newcommand{\rootlayer}[0]{\ell_0}
\newcommand{\boxbelow}[2]{\mathcal{B}_{#1, #2}}

\author[1]{Yannic Maus\thanks{This research was funded in whole or in part by the Austrian Science Fund (FWF) \url{https://doi.org/10.55776/P36280}, \url{https://doi.org/10.55776/I6915}. For open access purposes, the author has applied a CC BY public copyright license to any author-accepted manuscript version arising from this submission.}\authoremail{yannic.maus@tugraz.at}}
\author[2]{{Janosch Ruff\thanks{This research was partially funded by the German Research Foundation (Deutsche Forschungsgemeinschaft, DFG) – project number 390859508.}}\authoremail{Janosch.Ruff@hpi.de}}
\author[2]{{Sonia Simons}\authoremail{sonia.simons@student.hpi.uni-potsdam.de}}
\author[2]{{George Skretas}\authoremail{Georgios.Skretas@hpi.de}}

\affil[1]{TU Graz, Austria}
\affil[2]{Hasso Plattner Institute, University of Potsdam, Germany}

\title{Distributed Symmetry Breaking on Hyperbolic Random Graphs}
\date{}

\begin{document}
\allsectionsfont{\sffamily}
\maketitle
\thispagestyle{empty}

\begin{abstract}
Real-world networks like the internet share patterns like a power law degree distribution and a high clustering coefficient. Many of these properties are captured by the generative model of hyperbolic random graphs (HRGs), which provides a theoretical framework for studying such networks. Motivated by the observation that several algorithms perform better on real-world networks than their worst-case guarantees suggest, we design and analyse distributed algorithms under the assumption that the input graph is an HRG. 
Indeed, prior work has shown that the classical symmetry-breaking problem of $\Delta+1$ colouring, where $\Delta$ is the maximum degree of the graph, can be solved in 2 rounds on HRGs [Maus and Ruff; SODA'26]. 

In stark contrast to this 2-round algorithm for $\Delta+1$ colouring, we prove that the related symmetry-breaking problems of maximal independent set (MIS) and maximal matching (MM) are substantially harder: we establish a lower bound of $\Omega\left(\frac{\log\log n}{\log\log\log n}\right)$ for MIS and MM on HRGs. Our lower bound techniques rely on new structural insights that may be of independent interest: we show that HRGs contain $d$-ary trees with large height and degree which enables us to adapt and lift prior impossibility results for distributed algorithms to the setting of HRGs. 

We also show that these lower bounds are polynomial tight: we design algorithms tailored to HRGs that solve MIS and MM in $\tilde{\bigO}(\log^{5/3}\log n)$ rounds with high probability in the \LOCAL model, improving over the general worst-case lower bound of $\Omega\left(\min\left\{\log \Delta, \sqrt{\log n}\right\}\right)$ rounds [Khoury and Schild; FOCS'25].

Finally, we show that access to geometric information can significantly reduce the complexity of maximal matching: if vertices know their respective geometric embeddings, then MM can be solved in $\bigO(\log\log\log n)$ rounds on HRGs. This reveals a separation between the standard \LOCAL model and an embedding-aware variant on hyperbolic random graphs.
\end{abstract}

\clearpage
\thispagestyle{empty}
\tableofcontents

\newpage
\setcounter{page}{1}
\section{Introduction}
\emph{Symmetry breaking} problems play a key role in the theory of distributed computing. In this paper, we study the fundamental symmetry breaking problems \emph{maximal independent set} (MIS) and \emph{maximal matching} (MM). A maximal independent set is an independent set that cannot be extended by adding any vertex, while a maximal matching is a matching where no additional edge can be added. We study these problems through the lenses of the classic \LOCAL model and the \CONGEST model \cite{Linial-92,Peleg-00}: A communication network is represented by a graph $G=(V,E)$ with $|V|=n$. In synchronous rounds, vertices exchange messages with their neighbours, and the round complexity of an algorithm is the number of rounds until all vertices have produced their output. We consider both the \LOCAL model, where messages have unbounded size, and the \CONGEST model, where each message is limited to $\bigO(\log n)$ bits.

One of the classic results from the 1980's is \emph{Luby's Algorithm}, giving a randomised algorithm that finds an MIS within $\bigO(\log n)$ rounds with high probability \cite{luby86, Alon1986}. The same approach can then also be extended to an MM. Remarkably, despite much progress on distributed computing since then, $\bigO(\log n)$ remains the best-known round complexity as a function of $n$ for general graphs \cite{barenboimelkin_book, BEPS, ghaffari-soda-16, HarrisSS18, khoury2025round, ks-stoc-26, Barenboim2009, ghaffari2023faster, GG24, gghir-soda-23, ghaffariDISC19, lenzen2011mis, Mtivier2010, Schneider2010, RG20}. Meanwhile, Khoury and Schild recently showed a lower bound of $\Omega(\sqrt{\log n})$ for MIS and MM on worst-case general graphs \cite{khoury2025round}, improving upon an earlier bound of $\Omega(\sqrt{\log n/\log\log n})$ rounds by Kuhn, Moscibroda, and Wattenhofer~\cite{kmw-jacm-2016}. 

Worst-case graphs, however, may be a poor proxy for the communication networks that motivate many distributed problems. This raises the question of whether symmetry breaking becomes easier when the communication network is drawn from a model that captures structural features commonly observed in real-world networks. A recent step in this direction was taken by Maus and Ruff, who showed that a distributed colouring problem requiring $\Omega(\log^* n)$ rounds on worst-case graphs \cite{Linial-87, Naor91} can be solved in just two rounds on \emph{hyperbolic random graphs}, a common abstraction of real-world networks \cite{mr-soda-26}. Their simple algorithm significantly outperforms the state of the art for general graphs. 
 Their work raises the broader question of whether this dramatic speed-up is specific to colouring, or whether it reflects a more general phenomenon. 
Thus, in this paper, we further explore the following research question: 

\begin{tcolorbox}
How does the structure of real-world networks affect the complexity of symmetry breaking?
\end{tcolorbox}

While different real-world networks have different structures, e.g., the internet graph has a different topology than a road network, many real-world networks have been observed to share several topological features. For example, many networks exhibit a heterogeneous degree distribution, close to a power-law.\footnote{That is, the number of vertices that have degree $d$ is roughly $\approx n\cdot d^{-\tau}$ (where $\tau \in (2, 3)$ is the \emph{power-law exponent}).} Examples include, among others, social networks~\cite{vhhk-s-19}, communication networks like the internet \cite{faloutsos1999power}, and biological networks \cite{koonin2006power}. Another property observed in real-world networks is a high clustering coefficient \cite{PhysRevE.74.056114, Newman2003, Watts1998}. We follow the lines drawn in \cite{pkbv-info-2010, kpk-h-10, Bogu2010, Serrano2008} using hyperbolic geometry to encapsulate both of these properties. In particular, we use the model of \emph{hyperbolic random graph} (HRG) as introduced by Krioukov, Papadopoulos, Kitsak, Vahdat, and Boguñá \cite{kpk-h-10}. At this point, HRGs are well established as a model for complex real-world networks, and many different centralised algorithmic problems have been studied \cite{bffkmm-icalp-2022,bfk-chrg-18, bfk-tw-2016, katzmann-exactvc-2023, katzmann-approxvc-2023, Kiwi2024, bmrs-stacs-25}. Moreover, the excellent empirical work by Bläsius and Fischbeck suggests that the theoretical analysis of algorithms on HRGs provides a good predictor for performance on real-world networks \cite{bf-evacaga-22}, and HRGs also serve as the underlying model in the work of Maus and Ruff \cite{mr-soda-26}. A \emph{threshold hyperbolic random graph} is obtained by sampling $n$ vertices into a hyperbolic disk with radius $R \approx 2\log n$, and a pair of vertices share an edge if the hyperbolic distance between the two vertices is at most $R$. We provide a formal definition in \Cref{sec:prelims} and refer the interested reader to \cite{Krohmer2016, katz-diss-23} for an in-depth introduction to HRGs. 
\subsection{Our Contributions}
  \begin{figure}[t]
    \centering \includegraphics{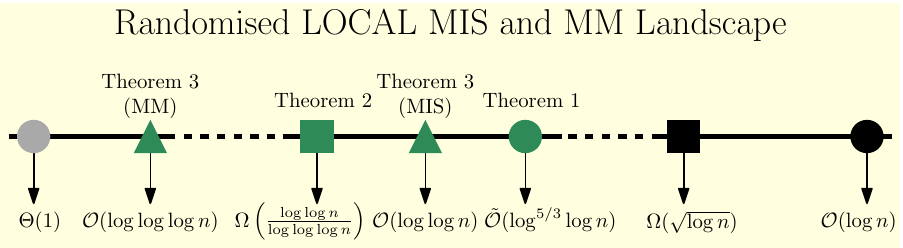} 
    \caption{Landscape of randomised \LOCAL Maximal Independent Set (MIS) and Maximal Matching (MM). Green: our results for MIS/MM on hyperbolic random graphs. Black: current state of the art for MIS/MM on general graphs. Grey: distributed colouring on HRGs. Disks: upper bounds (\Cref{thm:mainpolylog}). Squares: lower bounds (\Cref{thm:lowerbound}). Triangles: embedding-aware upper bounds (\Cref{thm:geometric}).}
    \label{fig:results}
  \end{figure}
We show that the complexity landscape of MIS and MM on HRGs differs dramatically from that of the distributed graph colouring problem studied in \cite{mr-soda-26}; see also \Cref{fig:results} for visualisiation of our results: On the algorithmic side, we show that MIS and MM can be solved exponentially faster on HRGs than on worst-case graphs by giving $\log^{\bigO(1)}\log n$-round algorithms (\Cref{thm:mainpolylog}). But, on the lower-bound side, we show that this speed-up has inherent limits. In contrast to graph colouring, MIS and MM do not collapse to constant time on HRGs; in fact, our lower bounds show that any further improvement to our upper bounds is limited to a small polynomial improvement (\Cref{thm:lowerbound}). 
This impossibility result relies on new structural insights into HRGs: we prove that their giant component contains substructures (trees) on which faster distributed algorithms are impossible (\Cref{pro:d-regular-trees}).
We believe the existence of these substructures is of independent interest, as it contributes to the structural understanding of HRGs as models of real-world networks.

\smallskip

The lower bound above applies in the standard \LOCAL model, where nodes do not know their position in the underlying hyperbolic geometry. This naturally raises the question of whether such geometric information can be exploited algorithmically. In the classic \LOCAL model, the initial knowledge of nodes is limited to their own ID and potentially some coarse upper bounds on global parameters such as the maximum degree or the number of nodes. However, in real-world networks, nodes may have additional information about the network that may be exploited algorithmically. For maximal matching, we show that such extra information can indeed be useful: we design an algorithm that beats the aforementioned lower bound when nodes have access to their geometric HRG coordinates, giving a doubly-exponential improvement over the complexity on general graphs (\Cref{thm:geometric}). 

We next present our results in detail and provide further background. 

\paragraph{Efficient Symmetry Breaking Algorithms.} 
We obtain the following algorithms for MIS and MM on HRGs that are exponentially faster than the best algorithms for general graphs~\cite{luby86,ghaffari-soda-16,BEPS}. 
\begin{restatable}[Upper bounds for MIS and MM]{theorem}{maintheorem}
\label{thm:mainpolylog}
    There are randomised distributed algorithms that, for threshold hyperbolic random graphs, compute Maximal Independent Set and Maximal Matching
    \begin{itemize}
        \item in $\tilde{\bigO}(\log^{5/3} \log n)$ rounds $\wehp$\footnote{We say an event $\mathcal{E}$ occurs \emph{with extremely high probability} (\wehp) if $\Pro{\mathcal{E}} \in 1 - n^{-\omega(1)}$.} of the \LOCAL model;
        \item  in $\tilde{\bigO}(\log^{3} \log n)$ rounds $\wehp$ of the \CONGEST model.\footnote{For MM, we obtain the slightly stronger bound of $\bigO(\log^3\log n)$.}
    \end{itemize}
\end{restatable}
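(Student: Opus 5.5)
The plan is to exploit the layered structure of threshold HRGs: high-degree vertices sit near the centre of the disk, and vertices far from the centre have small degree and only local neighbourhoods. The algorithm runs in two phases.

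\textbf{Phase 1: shattering to bounded degree.} Run a degree-reduction routine, either Ghaffari's MIS algorithm \cite{ghaffari-soda-16} or the shattering phase of \cite{BEPS}, for $\bigO(\log \Delta')$ rounds, restricted to vertices of degree at most some threshold $\Delta' = \polylog n$. Vertices of larger degree lie within radius $R - \Theta(\log\log n)$ of the centre. I expect the structural results on HRGs (as in \cite{mr-soda-26}) to show that these "inner" vertices form a clique-like core of diameter $\bigO(1)$, or at least a set of bounded diameter. A constant number of rounds then suffices for that core to compute a greedy MIS/MM by gathering its topology, after which its neighbours are removed. The choice $\Delta' = 2^{\log^{\beta}\log n}$ is made so that the shattering phase takes $\bigO(\log^{\beta}\log n)$ rounds.

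\textbf{Phase 2: small components.} After shattering, the standard guarantee is that undecided vertices form components of size $\poly(\Delta')\log n$. For HRGs I would aim for a stronger bound: the remaining components live in outer layers, where the graph is geometrically local, like a near-one-dimensional structure along the angular direction. Their size should therefore be $\polylog\log n$-ish, or at least their diameter should be small. On each component we run the best deterministic algorithm, for example network decomposition via Rozhoň–Ghaffari \cite{RG20} or \cite{GG24}, with complexity roughly $\tilde{\bigO}(\log^{2}N)$ for component size $N$. Balancing the phase-1 cost $\log^{\beta}\log n$ against the phase-2 cost, which involves $\log N$ with $N$ depending on $\Delta'$, gives the exponent $5/3$. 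For \CONGEST the same scheme applies, but with the \CONGEST versions of the deterministic algorithms, which cost $\bigO(\log^{3})$ in $\log n$ scaled down to $\log\log n$. This yields $\tilde{\bigO}(\log^{3}\log n)$. MM is obtained by the same argument on the line graph, where the HRG structure is inherited.

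\textbf{Main obstacle.} The hardest step is the structural probabilistic lemma showing that the shattered leftover components have size $\poly\log\log n$ rather than merely $\poly(\Delta')\log n$. This is what requires w.e.h.p.\ control over the geometry. My first attempt would be to partition the disk into angular sectors and radial layers and show that any connected leftover component spans few sectors. Each sector is independently shattered with probability $1-\exp(-\poly\log\log n)$, so a union bound over connected sector patterns (a Galton–Watson/lattice-animal count) bounds the component size. The difficulty is that the sectors are not fully independent, which a distance-$\bigO(1)$ separation argument should handle.
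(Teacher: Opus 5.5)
\textbf{The gap: the high-degree core.} Your plan fails at the vertices of degree above $\Delta'$. Degree at least $\polylog n$ means radius at most $R-\Theta(\log\log n)$, and by \Cref{lem:measure-inner-disk} that disk holds $n/\polylog n$ vertices. It is not clique-like. Only $\B_0(R/2)$ is a clique, and a typical vertex in layer $\Theta(\log\log n)$ has no neighbour there. Each inward hop multiplies the layer index by only about $1/(2\alpha-1)$, so this core has diameter of order $\log\log n$, not $\bigO(1)$.

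In \LOCAL you could still gather the core in $\bigO(\log\log n)$ rounds, once that diameter bound is proved \wehp. In \CONGEST, however, collecting the topology of $n/\polylog n$ vertices of polynomial degree is impossible, so your \CONGEST bounds have no mechanism for the core at all. ``MM on the line graph'' does not help either: the line graph is not an HRG, and simulating it in \CONGEST is costly.

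The missing idea is that the core never has to be solved. Every vertex of radius at most $R-7\log\log n$ (resp.\ $R-6\log\log n$) has polylogarithmically many low-degree neighbours near the boundary that behave essentially independently. Hence a single Luby step among vertices of degree at most $\lceil\log^{3/2}n\rceil$ puts one of its neighbours into the MIS \wehp (\Cref{lem:cheddar2}). For MM, a random edge proposal by each such low-degree vertex reaches every inner vertex \wehp (\Cref{lem:match-inner-disk}). Both steps take $\bigO(1)$ \CONGEST rounds.

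\textbf{Your ``main obstacle'' targets the wrong bound.} Components of size $\poly\log\log n$ are unnecessary. \cite{GG24} is a deterministic $\tilde{\bigO}(\log^{5/3}N)$ algorithm on $N$-vertex graphs, so $N=\polylog n$ already gives the \LOCAL bound. The exponent $5/3$ comes from there, not from balancing. Your accounting, which uses $\tilde{\bigO}(\log^2 N)$ and the incompatible choices $\Delta'=\polylog n$ and $\Delta'=2^{\log^{\beta}\log n}$, cannot produce it. Nor can your lattice-animal union bound deliver $\poly\log\log n$: over $n/\polylog n$ starting sectors with per-sector failure probability $e^{-\poly\log\log n}$, it only excludes animals of at least about $\log n/\poly\log\log n$ sectors.

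The paper instead proves a constant-round geometric shattering:
\begin{itemize}
\item One Luby step among vertices of degree $\Theta(\log^4 n)$, which \wehp all lie at radius $R-8\log\log n\pm\bigO(1)$, yields a winner in every angular window of width $\log^{30\alpha}n/n$ (\Cref{lem:cheddar1}).
\item Once the inner disk is removed, the winners' neighbourhoods cut the outer annulus into sectors with $\bigO(\log^{30\alpha}n)$ vertices and no edges between sectors (\Cref{pro:shattering-mis}).
\item For MM the winners are pairwise angularly far apart (\Cref{lem:cheddar20}), so each computes a maximal matching of its 2-hop neighbourhood in parallel.
\end{itemize}
This replaces both your $\bigO(\log\Delta')$-round randomised phase and the off-the-shelf shattering lemma, which the paper deliberately avoids.

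If you keep your route for \LOCAL, note that the standard guarantee on the degree-$\Delta'$ subgraph, suitably boosted to \wehp, already gives $\polylog n$-size components. Together with an $\bigO(\log\log n)$-round gather of the core, the \LOCAL bound is salvageable, conditional on that lemma. The \CONGEST bounds are not salvageable this way.
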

All probabilistic statements in our theorems are with respect to both the random generation of the HRG and the random choices made by the algorithm.
See \Cref{fig:results} for a visual representation of our results in comparison to runtimes of existing work.
All algorithms of \Cref{thm:mainpolylog} use a two-step approach. For MIS, the first step consists of two iterations of a Luby-style algorithm in which certain nodes activate themselves, draw a random number, and nodes whose drawn numbers are local minima join the independent set. As a result, we obtain that the graph \emph{shatters} into exponentially smaller remaining unsolved connected components, on each of which we can then run the best deterministic algorithm exploiting the component's small size, namely the algorithm of Ghaffari and Grunau for MIS/MM in the \LOCAL model~\cite{GG24} and the fastest deterministic algorithms in the \CONGEST model~\cite{fggkr-soda-23}. Such a two-step approach is usually referred to as the \emph{shattering technique} and has been used in essentially all state-of-the-art symmetry breaking algorithms, e.g., \cite{BEPS,ghaffari-soda-16,CLP18}, and one also knows that the randomised complexity is lower bounded by the deterministic complexity on exponentially smaller instances \cite{CKP19}. A central difference in our work and the main contribution of \Cref{thm:mainpolylog}  is that we use a fundamentally different argument than all these existing works to show that such a shattering phenomenon emerges. This is interesting as recently discovered flaws and missing details in existing shattering approaches ask for more different shattering approaches \cite{GHMN26}.

One might hope that ordinary Luby iterations already yield such a shattering. We show that this is not the case: for any constant number of Luby iterations, both large components and large degrees remain; see \Cref{sec:hrg-aint-nuthin-to-fuck-with} for details.
\paragraph{Lower bounds.} \Cref{thm:mainpolylog} is not only faster than the known fastest algorithms for MIS and MM but also exponentially faster than the lower bound of $\Omega\left(\min\{\log \Delta, \sqrt{\log n}\}\right)$ by Khoury and Schild \cite{khoury2025round} for general graphs,  since the maximum degree $\Delta$ of an HRG is polynomial in $n$. Combined with the fact that the other central symmetry-breaking problem of $\Delta+1$-colouring, and even colourings with fewer colours, can be solved in constant time~\cite{mr-soda-26}, suggests that the algorithms in \Cref{thm:mainpolylog} may be far from optimal. We show that this is not the case. Any improvements to \Cref{thm:mainpolylog} can only be polynomial in its runtime. More precisely, we prove the following theorem, providing a strong separation between the runtime of MIS/MM and colouring problems on HRGs. 
\begin{restatable}[Lower bounds for MIS and MM]{theorem}{theoremlowerbound}
\label{thm:lowerbound}
 Asymptotically almost surely\footnote{With probability $1-o(1)$ over the draw of the hyperbolic random graph $G$.}, the randomised complexity\footnote{There exists an algorithm where the error probability is at most $1/n^{c}$ for some constant $c>0$.} for computing an MIS or an MM for the giant component of a hyperbolic random graph is 
$\Omega\left(\frac{\log\log n}{\log\log\log n}\right)$ rounds in the \LOCAL model.
\end{restatable}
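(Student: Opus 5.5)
The plan is to lift a known lower bound for MIS and MM on trees to HRGs. We show that the giant component of an HRG a.a.s.\ contains, as a subgraph with controlled local structure, a $d$-ary tree of height $h$ for suitable $d$ and $h$ (this is the structural result referenced as \Cref{pro:d-regular-trees}). We then argue that any fast algorithm on the HRG yields a fast algorithm on such trees, which contradicts known tree lower bounds.

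On trees, the relevant result is the lower bound of Balliu, Brandt, Hirvonen, Olivetti, Rabie and Suomela (and its randomised version): on $d$-regular (or $d$-ary) trees, MM and MIS require $\Omega(\min\{d, \log\log n/\log\log\log n\})$ randomised rounds. The lower-bound instances need only be locally tree-like, i.e.\ regular trees of height at least the runtime. So it suffices to find, in the giant component, a tree $T$ in which every internal vertex has degree about $d = \Theta(\log\log n/\log\log\log n)$ and whose height exceeds the target runtime $t$.

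For the embedding into the HRG, I would exploit the hierarchical structure. A vertex at radius $r$ has expected degree about $e^{(R-r)/2}$, and its neighbours at larger radius lie in an angular window of width about $e^{(R-r)/2}\cdot e^{-(R-r')/2}$-type. Concretely, I would pick a vertex $v_0$ at radius roughly $R - c\log d \cdot t$ and descend layer by layer, where each layer increases the radius by about $2\log d + O(1)$. Each node should have at least $d$ children in the next layer within disjoint angular sectors. Disjoint sectors keep the children's subtrees apart, so the structure is an actual tree, and the points below each node form independent Poisson-like samples.

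The main obstacle is that the lower-bound argument requires the $t$-neighbourhood of the relevant vertices in $G$ itself, not merely a subgraph, to look like the tree instance. HRGs have extra edges (siblings at similar radius and angle are adjacent; vertices also have many low-degree neighbours). To handle this, I would not demand that the HRG locally equal a regular tree. Instead I would use that the tree lower bounds are proven by round elimination on the problem family with $d$ bounded from below: the extra edges and degrees can be simulated. A cleaner route is a reduction. We show that the $t$-hop ball around the root in $G$ contains an induced $d$-ary tree of height $t$ with the property that every vertex off the tree attaches in a way the algorithm can simulate.

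Alternatively, I would pick the tree in the outer band, near radius $R$, where the graph is sparse and locally tree-like. A.a.s.\ there are vertices whose $t$-neighbourhood is exactly a tree with degrees at least $d$. Since only $n^{o(1)}$ vertices are needed, we require the probability of a good configuration per candidate to be at least $n^{-1+\epsilon}$. Choosing $d^t \le \log^{O(1)} n$, i.e.\ $t\log d = O(\log\log n)$, makes the probability estimates of appropriately placed points work out. This gives exactly $t = \Theta(\log\log n/\log\log\log n)$ with $d$ of the same order. Finally, a.a.s.\ this tree lies in the giant component because its root has radius below the giant-component threshold. Applying the randomised tree lower bound with error probability $1/n^c$ completes the proof.
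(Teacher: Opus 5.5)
Your outline matches the paper's: plant regular trees in the giant, transfer an HRG algorithm to trees, and apply a round-elimination bound on trees. But the step you yourself call the main obstacle is never resolved, and it is the heart of the proof.

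\textbf{The first resolution fails.} If tree vertices have neighbours outside the tree, an MIS or MM of $G$ restricted to the tree need not be one for the tree, since a tree vertex can be dominated or matched from outside. Such extra neighbours also typically create short cycles: by \Cref{lem:triangle-inequality}, two neighbours of a vertex that are both sufficiently more central than it are adjacent. So the views are not trees, round elimination says nothing, and ``the extra edges can be simulated'' has no content.

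\textbf{The second resolution names the right target but only asserts it.} The outer band is not locally tree-like (HRGs have non-vanishing clustering); a suitable tree is an atypical event. What the transfer needs is what \Cref{pro:d-regular-trees} supplies if you use its full statement: an \emph{induced} tree in which every internal vertex has \emph{exactly} $d$ children, and whose only edge to the rest of $G$ is a single cut edge at the root. Degrees ``at least $d$'' do not reproduce the views of the regular-tree instance. The paper obtains this by prescribing exactly one point per box and emptiness elsewhere, and the supporting estimates in your sketch are off in four ways.
\begin{itemize}
\item Each box costs a factor at most $1/e$, and each vertex's forbidden extra neighbourhood costs $\exp(-\Theta(\text{expected degree}))$. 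So the tree must have at most a small constant times $\log n$ vertices, and the root must sit noticeably below layer $2\log\log n$. The paper uses about $\sqrt{\log n}$ vertices and root layer $\lceil\log\log n\rceil$. Your criterion $d^t\le\log^{O(1)}n$ is too weak: $\log^{K}n$ prescribed points with $K>1$ occur with probability $n^{-\omega(1)}$.
\item You give no independence argument. Central vertices have huge angular reach, so the paper uses buffer sectors of angle $\Theta(n^{1/(2\alpha)}\log n/n)$ and conditions on $V\cap\B_0(r^*)=\emptyset$, which holds a.a.s. That leaves only about $n^{1-1/(2\alpha)}$ independent candidates, so ``$n^{-1+\epsilon}$ per candidate'' is the wrong threshold.
\item ``The root has radius below the giant-component threshold'' cannot place the tree in the giant. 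The root's neighbourhood is conditioned to be empty apart from the tree. The paper instead plants the single attachment vertex in layer $\lceil 2\log\log n/(1-\alpha)\rceil$, where vertices lie in the giant a.a.s.
\end{itemize}

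\textbf{Even granting the structure, the tree bound you quote is not the one you can use.} As usually stated, the randomised bound for regular trees has error probability $1/N$, with $N$ the size of the tree instance. The planted tree has only $\polylog n$ vertices, so off the shelf you get only $\Omega(\log\log\log n/\log\log\log\log n)$. You need the form
\[
\Omega\bigl(\min\{d,\ \log_d n',\ \log_d\log(1/p)-O(1)\}\bigr)
\]
with $p=n^{-c}$ inherited from the HRG algorithm. Your remark that the instances ``need only be locally tree-like'' is the right intuition, but it must be proved. The paper does this by redoing Theorem~7.1 of \cite{BBKO22} with general $p$ and adding the diameter cap $\log_d n'$ (\Cref{thm:sevenOne}).

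\textbf{The transfer is also only asserted.} It is \Cref{lem:algotree}: run the HRG algorithm on a $(d+1)$-regular tree, pretending it is an HRG on $n$ nodes. Map near-leaf views to near-leaf vertices of the planted tree, and interior views to vertices far from both the leaves and the cut edge. Then every local failure on the tree is a local failure in $G$, which requires the tree height to be a large constant times the runtime.

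With these pieces in place, your parameter choice would also work: $d$ and $t$ both $\Theta(\log\log n/\log\log\log n)$, with the constant in $t$ small enough that $d^t\le\sqrt{\log n}$.
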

\paragraph{On substructures of hyperbolic random graphs.} 
To prove \Cref{thm:lowerbound} we show that with high probability any HRG contains relatively large induced $d$-ary trees\footnote{A rooted tree where each vertex except for the leaves has branching factor $d$.} and then leverage the fact that distributed algorithms for MIS/MM cannot be fast on tree-like graphs~\cite{kmw-jacm-2016,khoury2025round,Coupette2021,bbhors-jacm-2019,Balliu2026New, bgko-podc-23,BBKO22}.
 \begin{theorem*}[Informal version of \Cref{pro:d-regular-trees}]\label{thm:trees}
In an HRG, asymptotically almost surely, there exist polynomially many induced $d$-ary trees with $n' \approx \sqrt{\log n}$ vertices, any degree $d$, depth $h \approx \log_d\log n$ and where the root of each tree has one additional edge to a vertex that connects it to the giant component.
\end{theorem*}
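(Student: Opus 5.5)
Fix a tree $T$ with the required shape: branching factor $d$, depth $h \approx \log_d\log n$, so $n' = \Theta(d^h) \approx \sqrt{\log n}$ vertices after tuning constants. Our plan is to embed many disjoint copies of $T$ geometrically. We place each copy inside a small angular sector near the boundary of the disk $\disk$, and choose target regions so that the configuration is forced to be an induced copy of $T$ with high enough probability.

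\textbf{Step 1: regions.} Recall that two vertices at radii $r_1, r_2$ are adjacent iff their angular distance is at most roughly $2e^{(R-r_1-r_2)/2}$. We assign each tree level $i$ a target radius $r_i$, increasing toward the leaves. A vertex at level $i$ gets an angular window of width $\theta_i$. Its $d$ children get pairwise disjoint, well-separated subwindows of width $\theta_{i+1}$ inside the neighbourhood range of the parent. We choose $r_i$ and the separations so that three conditions hold:
\begin{itemize}
\item every parent--child pair is within distance $R$, for any positions inside their boxes;
\item any two siblings, or any non-adjacent pair in the tree, are at distance greater than $R$;
\item no point of the boxes has a neighbour elsewhere except as described.
\end{itemize}
For the root we additionally reserve one box, closer to the centre, for a vertex $w$ of large degree (radius $\le R/2$ say). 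Such a $w$ lies in the giant component a.a.s.\ by known results: vertices of radius at most $R/2$ form a clique contained in the giant. The hyperbolic triangle inequality lets us check non-adjacency using only radial and angular coordinates.

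\textbf{Step 2: probability of one gadget.} The event for one sector is that each of the $n'$ boxes contains exactly one vertex and the forbidden ``exclusion region'' contains none. The exclusion region is the union of the neighbourhoods of the boxes minus the boxes themselves, intersected with the sector. Using Poissonisation (or direct binomial estimates), each box has expected mass $\mu_j$, giving probability about $\mu_j e^{-\mu_j}$. The exclusion region has expected mass $M$. Choosing boxes with $\mu_j = \Theta(1)$, the success probability is $e^{-O(n') - M}$.

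The key computation is that $M = O(n')$ as well. The neighbourhood of a point at radius $r$ has measure $\Theta(n e^{-\alpha r})$, polynomially larger for inner points; but the tree vertices near the boundary have constant expected degree, and we pick $r_0$ near $R - O(\log d)$. Hence the probability is $p \ge \exp(-O(\sqrt{\log n}\,\log d))= n^{-o(1)}$.

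\textbf{Step 3: many disjoint gadgets.} We partition the disk into $n^{1-o(1)}$ disjoint angular sectors, each containing all boxes and exclusion regions of one gadget (except the shared $w$ region, handled by taking distinct high-degree vertices near the centre or allowing a common one). Events in disjoint regions are independent under Poissonisation. The number of successes is therefore $\mathrm{Bin}(n^{1-o(1)}, n^{-o(1)})$, which is polynomial a.a.s.\ by Chernoff. We then de-Poissonise in the standard way.

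\textbf{Main obstacle.} The hardest step is the geometric calibration in Step 1, combined with the bound $M = O(n')$ in Step 2. We must choose the radii so that siblings are non-adjacent and children fit in their parent's neighbourhood. We also need depth $h \approx \log_d \log n$ while keeping every tree vertex's exclusion region of constant measure. Since angular widths shrink by a factor of roughly $d$ per level, the radii must shift by about $2\log d$ per level. Over $h$ levels this costs about $2\log\log n \ll R$, so it fits, but the near-boundary constant-degree condition must be verified carefully at each level.
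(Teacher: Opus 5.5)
Your tree gadget is essentially the paper's: one box per tree vertex, a radial shift of about $2\log d$ per level, a Poisson product of order $n^{-o(1)}$, and independent sectors plus Chernoff. However, two steps fail as written.

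\textbf{The attachment vertex $w$ with $r(w)\le R/2$.} By \Cref{lem:max-angle}, $\theta_R(r(w),r)\ge e^{-R/4}=\Theta(n^{-1/2})$ for every $r\le R$, while the whole tree spans only $\polylog n/n$. So a $w$ placed in a box inside the gadget's sector is adjacent to every tree vertex, not just the root. A $w$ that sees the root but no level-$1$ vertex must sit at angular offset $\Omega(n^{-1/2})$, outside your sector. In any case, the expected number of vertices of radius at most $R/2$ adjacent to a root at depth $\ell_0\approx\log\log n$ is $\Theta(n^{1/2-\alpha}e^{\ell_0/2})$, which is polynomially small because $\alpha>1/2$. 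Hence a gadget succeeds with probability $n^{1/2-\alpha+o(1)}$, not $n^{-o(1)}$. Sharing one $w$ among gadgets instead would destroy the independence you use in Step~3.

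The paper keeps the attachment near the boundary. The root's unique outside neighbour lies in a window $\mathcal S\subset\Layer{\bar\ell}$, with $\bar\ell=\lceil 2\log\log n/(1-\alpha)\rceil$, at angular offset $\Theta(\log^{1/2+1/(1-\alpha)}n/n)$. Only the root reaches this window, and it holds exactly one vertex with probability $n^{-o(1)}$. The price is that giant membership of that vertex must be imported from known connectivity results, whereas your clique $\B_0(R/2)$ gives it for free.

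Your calibration is also off. The root cannot sit at $R-O(\log d)$ with constant degree; it must be at depth $\approx\log\log n$, where its expected degree is $\approx\sqrt{\log n}$. Depth $2\log\log n$, as your last paragraph suggests, would already give $M\approx\log n$ and $p=n^{-\Theta(1)}$. The bound $M=O(\sqrt{\log n})$ holds because the level masses $d^ie^{\ell_i/2}$ decay geometrically, not because tree vertices have constant degree.

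\textbf{Unintended edges to vertices of small radius.} A tree vertex at radius $R-\ell$ is adjacent to a vertex of radius $r$ at angular distance up to $\theta_R(R-\ell,r)=\Theta(e^{(\ell-r)/2})$. Vertices exist down to radius about $R-\log n/\alpha$, so the exclusion region you must control spans an angle $n^{1/(2\alpha)-1+o(1)}$. Intersecting it with a narrow sector therefore does not certify that $G[V\cap\Phi]$ is induced with a single cut edge. It also means $n^{1-o(1)}$ disjoint sectors cannot each contain their full exclusion regions.

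The missing ingredient is the paper's empty inner disk. The disk $\B_0(r^*)$ with $r^*=R-\log n/\alpha-\log\log n/2$ has expected occupancy $o(1)$, so it is empty a.a.s. On that event every neighbour of a tree vertex lies within angle $o(\psi)$ of it. Nesting each $\Phi_i$ in a buffer sector $\Psi_i$ of width $\psi=\Theta(n^{1/(2\alpha)}\log n/n)$ then gives $n^{1-1/(2\alpha)}/\log n=n^{\Omega(1)}$ independent trials. This step is exactly why the theorem holds a.a.s.\ rather than with high probability. A first-moment bound on neighbours below radius $R-K\log\log n$, for a large constant $K$, would be an alternative, but some such argument is required.

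With these buffers your $w$-variant would in fact still yield $n^{(2\alpha-1)(1-\alpha)/(2\alpha)-o(1)}$ trees, which is polynomial. That only follows after redoing the accounting above, and it is not the argument you gave.
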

Then \Cref{thm:mainpolylog} implies that \Cref{pro:d-regular-trees} is in some sense almost-tight, i.e., significantly larger trees cannot exist.  For example, $d$-ary trees with degree $d \approx e^{\log^2\log n}$ and height $h \approx \log^2\log n$ cannot exist, as the resulting lower bound for computing MIS on HRGs would contradict the existence of the algorithm in \Cref{thm:mainpolylog}. In fact, we consider \Cref{pro:d-regular-trees} as a main technical contribution of our work that is of independent interest. More generally, these structural insights contribute to existing literature of analysing the structure of HRGs, like the use of ``dangling paths'' to obtain bounds on the cover and hitting times of random walks \cite{Kiwi2024}, bounds on the degeneracy to obtain efficient colouring approximations \cite{bmrs-stacs-25}, the construction of multi-commodity flows to bound the spectral gap \cite{hrg-spectral} or more broadly the connectivity of an HRG \cite{bfm-giant-15, fm-giant-18, km-slcrhg-19, ms-k-19, fk-dhrg-18, DBLP:journals/im/AbdullahFB17}.
\paragraph{Embedding-aware Algorithms.} The lower bounds of \Cref{thm:lowerbound} apply in the standard distributed setting, where nodes only interact through the graph and have no access to the underlying hyperbolic embedding.  This raises the question of whether this hidden structure can be exploited when it is made available to the algorithm. We therefore also study a coordinate-aware variant of the model, in which each node initially knows its position in the hyperbolic disk and can communicate this information to its neighbours in one round. This is a strong assumption, but geometric information of this kind is available or can be inferred in several network settings \cite{Bogu2010}. Similar assumptions have also been considered for distributed MIS in Euclidean unit disk graphs, where they lead to deterministic $\bigO(1)$-round algorithms in the \CONGEST model~\cite{Molla2019} and to efficient maximum independent set approximations for graphs embedded in the hyperbolic plane \cite{bfk-tw-2016, thomas-socg-25}.

Under this additional geometric information, the complexity landscape changes again. For maximal matching, we break the lower-bound barrier of \Cref{thm:lowerbound} and obtain a randomised $\bigO(\log\log\log n)$-round algorithm in the \CONGEST model.  This shows, similar to quantum algorithms which yield an advantage for some distributed problems~\cite{quantum-stoc, quantum-soda}, that one can beat lower bounds if additional computational resources like that of a geometric embedding are available. For maximal independent set, we obtain a $\bigO(\log\log n)$-round \CONGEST algorithm on HRGs with known geometric coordinates.
\begin{restatable}[Embedding-aware algorithms]{theorem}{theoremgeometric}
\label{thm:geometric}
    There are distributed algorithms that, for threshold hyperbolic random graphs given in their geometric representation, compute 
     \begin{itemize}
        \item Maximal Independent Set in ${\bigO}(\log\log n)$ rounds \whp\footnote{We say an event $\mathcal{E}$ occurs \emph{with high probability} (\whp) if $\Pro{\mathcal{E}} \in 1 - \bigO(1/n)$.} of the \CONGEST model;
        \item Maximal Matching in ${\bigO}(\log\log\log n)$ rounds \whp of the \CONGEST model.
    \end{itemize}
\end{restatable}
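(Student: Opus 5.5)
We prove the two items separately; in both, every node first sends its polar coordinates $(r,\varphi)$ to all neighbours in one round, which takes $\bigO(\log n)$ bits after rounding to polynomial precision. We then use the geometry to reduce the problem to instances where known fast algorithms apply. Throughout we use standard HRG facts, each holding w.h.p.: a vertex at radius $r$ has expected degree $\Theta(n e^{-r/2})$ and degree concentrated around it up to polylogarithmic factors. We also use that the disk can be partitioned into layers $\Layer{i}$ of constant radial width, with each layer cut into angular tiles such that any two vertices in the same tile are adjacent (tiles are cliques) and each tile is adjacent only to $\bigO(1)$ tiles of the same layer.

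\textbf{Maximal matching in $\bigO(\log\log\log n)$ rounds.}
\begin{enumerate}
\item \emph{Local matching inside tiles.} Each node computes its tile index locally from its coordinates. Inside each tile, which is a clique whose members know each other's coordinates, we greedily pair vertices by sorted angle in $\bigO(1)$ rounds. This leaves at most one unmatched vertex per tile.
\item \emph{Matching the survivors.} The survivors of adjacent tiles are then matched by processing tile pairs of the same layer in a constant number of colour classes, colouring the tile adjacency structure by angular index mod a constant. Inner layers, whose tiles contain few or no vertices, are handled by matching their vertices to neighbours in outer layers.
\item \emph{The residual graph.} After this step the unmatched vertices form an independent-like set of sparse leftovers. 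The crux is to show that the residual unsolved components w.h.p. have size $\polylog n$ and maximum degree $\bigO(\log n)$.
\item \emph{Finishing.} On the residual components we run a randomised MM algorithm (e.g.\ \cite{BEPS}-style) in $\bigO(\log\Delta')$ rounds. With $\Delta'=\polylog n$ this gives $\bigO(\log\log n)$, which is too slow. So we must further argue that the residual degree is $\bigO(\log\log n)$, giving $\bigO(\log\log\log n)$ via an $\bigO(\log \Delta')$ algorithm with a shattered post-phase.
\end{enumerate}

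\textbf{Maximal independent set in $\bigO(\log\log n)$ rounds.} Process layers from the centre outward, in groups of $\bigO(1)$ consecutive layers. There are $\bigO(\log n)$ layers in total, which is too many to handle one by one. The fix is to observe that layers at radius $r \le R/2$ form a clique, so a single vertex suffices there. For outer layers the degree of a vertex at radius $R-j$ is roughly $e^{j/2}$. Grouping layers so that the degree doubles exponentially gives $\bigO(\log\log n)$ groups, each handled in $\bigO(1)$ rounds: within a group we take a greedy MIS tile by tile, colouring tiles by angular index mod a constant, and then remove dominated vertices.

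\textbf{Main obstacle.} The hardest step is bounding the residual structure for MM so that only $\bigO(\log\log\log n)$ further rounds are needed. I would first try to show that unmatched vertices occur only in outer layers, within $\bigO(\log\log n)$ of the boundary, where degrees are $\polylog\log$. This would use a Chernoff bound over the tile occupancies together with a union bound over tiles.
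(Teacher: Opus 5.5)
\paragraph{Gap in the matching part.} Your steps 3--4 leave a genuine gap, and this route cannot reach $\bigO(\log\log\log n)$. Steps 1--2 coincide with the paper's within-layer matching (\Cref{lem:matching-in-layer}). However, they only resolve intra-layer edges: unmatched vertices of different layers can still be adjacent, your handling of inner layers is unspecified, and nothing bounds the resulting residual graph.

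The estimate you plan to use is also wrong. A vertex of $\Layer{j}$ has degree $\Theta(e^{j/2})$, so within $\bigO(\log\log n)$ layers of the boundary degrees reach $\polylog n$. Degree $\poly\log\log n$ occurs only in the outermost $\bigO(\log\log\log n)$ layers.

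Even granting residual degree $\Delta'=\bigO(\log\log n)$, known randomised MM algorithms need $\bigO(\log\Delta')$ rounds \emph{plus} a deterministic post-shattering phase on components of size $\poly(\Delta')\log n$. With known deterministic algorithms that phase costs $\poly\log\log n$ rounds, so you would not get below $\log\log n$.

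The missing idea is to avoid any generic residual solver:
(i) Clear $\B_0(R-6\log\log n)$ in $\bigO(1)$ rounds. Every vertex of degree at most $\lceil\log^{3/2}n\rceil$ (with probability $1-n^{-\omega(1)}$ this includes all of $\Layer{0}$) marks one random incident edge. Each inner vertex has $\Omega(\log^3 n)$ such neighbours, each marking it with probability at least $1/\lceil\log^{3/2}n\rceil$ (\Cref{lem:match-inner-disk}).
(ii) Maximally match each of the $\bigO(\log\log n)$ outer layers separately.
(iii) Merge neighbouring blocks of layers pairwise in $\bigO(1)$ rounds (\Cref{lem:block-merging}).

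The merge works because the unmatched vertices of a maximally matched block are pairwise non-adjacent. By \Cref{lem:triangle-inequality}, no vertex whose radius exceeds theirs by at least $2\log(2\pi)$ can be adjacent to two of them. Hence they can all simultaneously propose to a random unmatched neighbour in the farther block without conflicts. Only a constant-width border strip then needs a tile-by-tile sweep via property (iii) of \Cref{lem:tiling}. Doubling the block size gives $\bigO(\log\log\log n)$ merge phases.

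\paragraph{Gap in the MIS part.} There are $\Theta(\log n)$ layers between radius $R/2$ and $R$. With doubly exponential degree classes, the groups near $R/2$ therefore contain $\Theta(\log n)$ layers, which contradicts your ``groups of $\bigO(1)$ consecutive layers''.

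Such a group cannot be swept in $\bigO(1)$ rounds by colouring tiles by angular index modulo a constant. Two points with equal angle are at distance $|r(x)-r(y)|\le R$, so tiles of different layers that overlap in angle can contain adjacent vertices. A conflict-free schedule for $k$ layers therefore needs at least $k$ classes. Selecting one vertex of the central clique does not dominate the middle layers at all angles either.

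What you are missing is that the middle layers never need processing if you sweep from the boundary inward. In the first colour class of $\Layer{0}$ the set $I$ is still empty, so every non-empty active tile contributes a vertex to $I$. A vertex at radius at most $R-4\log\log n$ has $\Omega(\log^2 n)$ such tiles in its neighbourhood, each non-empty independently with constant probability. So the whole inner disk is dominated with probability $1-n^{-\omega(1)}$ (\Cref{lem:onion-mis-1}). Only the $\bigO(\log\log n)$ outermost layers remain, and each is handled in $\bigO(1)$ rounds by the $40$-class tile schedule.
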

While we do not break the lower bound barrier of \Cref{thm:lowerbound} for MIS, we obtain a runtime of a ``flat'' $\log\log n$ for a deterministic algorithm. Moreover, the techniques do not rely on shattering as does \Cref{thm:mainpolylog} and, as such, provide a different approach for solving symmetry-breaking problems on real-world networks.

Finally, we remark that \Cref{thm:lowerbound} together with \Cref{thm:geometric} imply that a distributed computation of the embedding for an HRG requires $\Omega\left(\frac{\log\log n}{\log\log\log n}\right)$ rounds. 
\subsection{Further Related Work}
\paragraph{Hyperbolic random graphs.} A hyperbolic random graph combines the non-vanishing clustering coefficient of random geometric graphs (RGGs) \cite{p-rgg-03} and the power-law degree distribution exhibited by models like Chung--Lu graphs \cite{cl-ccrgg-02, cl-adrgged-02} or the Barabási–Albert model \cite{preferal-attatch}. Other properties like that of a giant component\footnote{A unique connected component of size $\Theta(n)$.} \cite{bfm-giant-15, fm-giant-18, bfkrz-esa-2023, Jorritsma2025} and the small-world phenomena\footnote{The largest distance between vertices is $\Theta(\log n)$.} \cite{DBLP:conf/analco/KiwiM15,fk-dhrg-18, ms-k-19, kostas-diameter, DBLP:journals/im/AbdullahFB17} follow. From the algorithmic side problems like Shortest Path \cite{bffkmm-icalp-2022}, Maximum Clique \cite{bfk-chrg-18}, Maximum Independent Set and Maximum Matching \cite{bfk-tw-2016}, Vertex Cover \cite{katzmann-exactvc-2023, katzmann-approxvc-2023}, Random Walks \cite{Kiwi2024} or Colouring \cite{bmrs-stacs-25, mr-soda-26} have been studied. 

A closely related model which captures the same properties is \emph{Geometric Inhomogeneous Random Graphs} (GIRGs) introduced by Bringmann, Keusch and Lengler \cite{Bringmann2019}. While HRGs and GIRGs are often regarded as roughly equivalent \cite{Komjthy2020}, several differences between the two models have been identified \cite{bmrs-stacs-25,cliques-scale-free-2024,bfk-eggihrg-22}. GIRGs have been used to analyse algorithms like routing~\cite{yannic-greedy}, shortest paths~\cite{johannes-2026}, rumour spreading \cite{kostas-rumour}, community testing \cite{Bet_Michielan_Stegehuis_2025}, geometry detection \cite{clara-21} or first passage percolation \cite{Komjthy2024}.

Another related model is the class of \emph{hyperbolic uniform disk graphs} (HUDGs) \cite{thomas-socg-25,bdhm-socg26,bks-hudg-23}; previously pioneered by Kisfaludi-Bak \cite{Kisfaludi-Bak-SODA20}. In this model, vertices are placed arbitrarily in a hyperbolic disk of radius $R$, and two vertices are adjacent whenever their hyperbolic distance is at most $R$. Unlike hyperbolic random graphs, no probability distribution is used to generate the vertex positions. When $R=2\log n$, the model can be viewed as a deterministic or worst-case analogue of a hyperbolic random graph: the same geometric threshold is used, but the vertex positions are chosen adversarially rather than sampled from the hyperbolic random graph distribution.
\paragraph{Distributed maximal independent set and matching.} 
While Luby's algorithm with $\bigO(\log n)$ rounds remains the fastest known algorithm on general graphs, substantial progress has been made on MIS and MM along other directions. From the algorithmic point of view, deterministic algorithms have been developed where the current state of the art for the \LOCAL model is $\tilde{\bigO}(\log^{5/3} n)$ due to Ghaffari and Grunau~\cite{GG24}, and in \CONGEST $\bigO(\log^2 \Delta\cdot \log\log \Delta \cdot \log n)$ rounds for MIS \cite{fggkr-soda-23} and $\bigO(\log^2\Delta \cdot \log n)$ for MM \cite{fischer2020improved}. Moreover, Barenboim, Elkin, Pettie and Schneider introduced a randomised algorithm that runs faster on graphs with small maximum degree $\Delta$ \cite{BEPS}. In his seminal paper, Ghaffari, improved on their MIS algorithm and accomplished a round complexity of $\bigO\left(\log \Delta + \poly\log\log n \right)$ in the \LOCAL model \cite{ghaffari-soda-16,RG20, GG24} and $\bigO\left(\log \Delta\log\log n + \poly\log\log n \right)$ in the \CONGEST model~\cite{G19}. Recently, it was shown by Khoury and Schild that these algorithms solve MIS on trees in $o\left(\sqrt{\log n}\right)$ rounds \cite{ks-stoc-26} while MM on trees requires $\Theta(\sqrt{\log n})$ rounds \cite{khoury2025round, BEPS, ghaffari-soda-16}. This shows that MM is harder than MIS on trees, while on general graphs, MIS is just as hard as MM.\footnote{This can be seen by taking the line graph $H$ of graph $G$ such that an MM on the line graph $H$ is equivalent to an MIS on $G$.} 

Within the broader context of lower bounds, due to a classic construction by Kuhn, Moscibroda and Wattenhofer there is an $\Omega\left(\min\left\{\frac{\log \Delta}{\log\log \Delta}, \sqrt{\frac{\log n}{\log\log n}} \right\}\right)$ lower bound for randomised MIS and MM on certain irregular trees \cite{kmw-jacm-2016} (see also \cite{Coupette2021} for a simplified version). For Maximal matching, the breakthrough work by Balliu, Brandt, Hirvonen, Olivetti, Rabie and Suomela introduced an $\Omega\left(\min\left\{\Delta, {\frac{\log n}{\log\log n}} \right\}\right)$ deterministic lower bound on regular trees\footnote{This work was honoured with the FOCS'19 best paper award.} \cite{bbhors-jacm-2019}. Recently  Khoury and Schild improved the randomised lower bound to $\Omega\left(\min\left\{\log \Delta, \sqrt{\log n} \right\}\right)$ for randomised MM on regular trees \cite{khoury2025round}. For MIS on trees, Balliu, Ghaffari, Kuhn and Olivetti established a lower bound of $\Omega\left(\min\left\{\frac{\log \Delta}{\log\log \Delta}, \sqrt{\frac{\log n}{\log\log n}} \right\}\right)$ for randomised algorithms \cite{bgko-podc-23}. As previously mentioned, MIS can be solved in $o(\sqrt{\log n})$ on trees, and thus, it is not a coincidence that this bound is weaker than that for MM on trees.

For more specialised graph classes, further algorithms have been developed. If the input graph is a growth-bounded graph, Kuhn, Moscibroda, Nieberg and Wattenhofer showed that MIS can be solved in deterministic $\bigO(\log \Delta \cdot \log^* n)$ rounds in \LOCAL \cite{kmnw-disc-05}. Moreover, Barenboim and Elkin developed a $o(\sqrt{\log n})$ rounds algorithm for MIS in the \LOCAL model if the input graph has bounded arboricity \cite{Barenboim2009,PR16}. MIS on bounded-independence graphs requires $\bigO(\log^* n)$ rounds for the \LOCAL model \cite{Schneider2010} and for unit disk graphs, MIS can even be solved in constant rounds by a deterministic \CONGEST algorithm \cite{Molla2019}. 

\section{Technical Overview}\label{sec:tech}
The distinguishing feature of hyperbolic random graphs as compared to general graphs is their underlying hyperbolic geometry that causes two nodes to be adjacent if and only if they are also geometrically close. Thus, our results are based on three geometric mechanisms. First, for the upper bounds, we use short randomised procedures to create geometric separators in the hyperbolic disk; these separators confine every remaining connected component to a narrow angular sector and hence to polylogarithmic size. Second, for the lower bounds, we show that HRGs contain many induced regular trees attached to the giant component by a cut edge, allowing lower bounds from tree-like instances to transfer to HRGs. Third, when geometric coordinates are available, we replace shattering by an explicit tiling of the disk, which lets nodes process well-separated tiles or annuli in parallel. We discuss these three ideas in turn.

\subsection{Efficient Algorithms for MIS and MM}
The algorithms for MIS and MM in \Cref{thm:mainpolylog} follow the same high-level strategy. We first run a constant-round randomised procedure that shatters the graph into connected components of size $\polylog n$. We then solve each remaining component independently using the best available deterministic algorithms. In the \LOCAL model, this gives the $\tilde \bigO(\log^{5/3}\log n)$-round bound via the algorithm of Ghaffari and Grunau~\cite{GG24}, and in the \CONGEST model, it gives the $\tilde \bigO(\log^3\log n)$-round bound via the deterministic \CONGEST algorithms of~\cite{fggkr-soda-23}.

\textit{Why don't we use off-the-shelf algorithms and their analysis?} Essentially all known sublogarithmic-time distributed algorithms for MIS and MM rely on such a shattering framework: an initial randomised step solves most of the graph, leaving only small connected components to be handled deterministically. Existing shattering-based algorithms are not directly suitable for our setting for two reasons. First, their randomised step inherently takes $\Theta(\log \Delta)$ rounds, which is too slow on HRGs, where the maximum degree $\Delta$ is polynomial in $n$. Second, in reality, the standard shattering guarantees do not leave components of $\polylog n$ size; rather, the residual components may have size $\poly(\Delta)\log n$, together with additional structure that can be exploited algorithmically. Relying on this additional structure would lead to weaker final round complexities in our setting. We therefore develop a different geometric argument that establishes the desired shattering phenomenon directly.

\smallskip

The key geometric target of the randomised step is the separator pattern shown in \Cref{fig:full_algo}. We want to remove all vertices in an inner disk (hatched area in \Cref{fig:full_algo}) and, in the remaining outer annulus, remove a collection of neighbourhoods ({green} areas in \Cref{fig:full_algo}) that cut the annulus into narrow angular sectors. If these remaining unsolved sectors have angular width at most $\phi$, for $\phi \approx \polylog(n)/n$, then each sector contains only $\polylog n$ vertices with high probability (blue regions in \Cref{fig:full_algo}b). Moreover, by the choice of the removed regions, any two vertices lying in different sectors are at hyperbolic distance larger than $R$, and hence are non-adjacent. Therefore, once this separator pattern is realised, every remaining connected component is contained in a single narrow sector and has size $\polylog n$.
\begin{figure}
    \centering
    \begin{subfigure}[t]{0.35\textwidth}
        \centering
        \includegraphics[width=\linewidth]{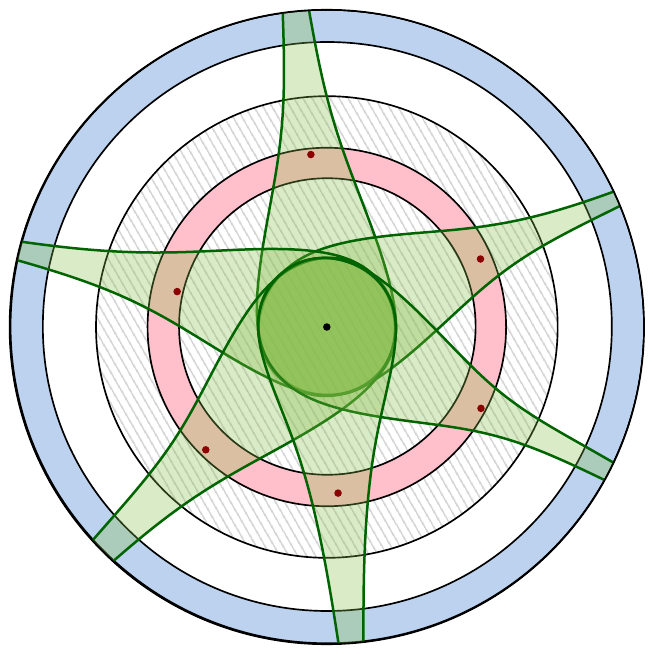}
        \caption{Overview of first two rounds.}
    \end{subfigure}\qquad\qquad
    \begin{subfigure}[t]{0.35\textwidth}
        \centering
        \includegraphics[width=\linewidth]{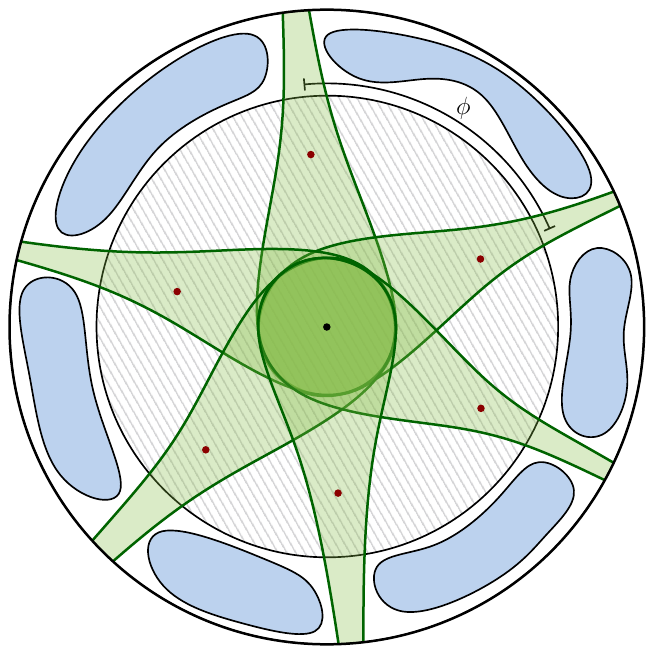}
        \caption{Result after the first two rounds.}
    \end{subfigure}

    \caption{Shattering: (a) Vertices in the blue and red annuli are active in steps 1 and 2, respectively. (b) The hatched area and the green area together form the separator.}
    \label{fig:full_algo}
\end{figure}

The main challenge is to realise this separator pattern while producing a locally valid partial solution for MIS or MM. The way this is done differs substantially between the two problems.

\paragraph{Geometric shattering for MIS.}
The main challenge in designing a fast shattering algorithm, ideally a constant-time algorithm, is dealing with the huge amount of dependencies that occur in algorithms for HRGs. On the positive side for MIS, a selected vertex removes all its neighbours, which makes it well suited for creating geometric separators: if we can make vertices in suitable annuli join the independent set, then their neighbourhoods carve out the desired removed regions, see the {green} disks in \Cref{fig:full_algo}a. 

The first step creates separators in the outer disk. 
We activate vertices in an annulus at radius roughly $\approx R-8\log\log n$ (see red annulus in \Cref{fig:full_algo}a), equivalently, vertices whose degrees lie in an appropriate polylogarithmic range. We then perform one Luby-style step on these active vertices: each active vertex draws a random value, and active vertices that are local winners join the independent set. The neighbourhoods of these winners remove angular intervals in the outer disk. With high probability, every angular interval of width about $\phi \approx \polylog n /n$ contains such a removed neighbourhood. Intuitively, this stems from the fact that active vertices have polylogarithmic degree and thus, given that within an interval of width $\phi$ there $\polylog n$ vertices located, in expectation there will be also a winner vertex. Consequently, the outer annulus is split into angular sectors of width at most $\phi$ (see \Cref{fig:full_algo}b).

The second step removes the inner disk (hatched region in \Cref{fig:full_algo}). Here we activate remaining low-degree vertices close to the boundary (blue annulus in \Cref{fig:full_algo}b). A vertex $u$ in the inner disk has many such active neighbours close to the boundary (that also survived the first step), and for the sake of analysis we can determine $\omega(\log )$ such neighbours for which the events of joining the independent set are independent. Each of these active neighbours has a constant probability of joining the independent set and thereby removing $u$. Since $u$ has $\omega(\log n)$ such independent chances, $u$ remains uncovered with probability $n^{-\omega(1)}$. A union bound then shows that the entire inner disk is removed with high probability.

After these two steps, the remaining vertices lie only in the narrow sectors of the outer annulus. As argued above, different sectors are disconnected, and each sector contains at most $\polylog n$ vertices with high probability. We then solve MIS independently on each remaining component using the deterministic algorithm for the corresponding models of \LOCAL and \CONGEST.

\paragraph{Geometric shattering for MM.}

For maximal matching, the same separator pattern is needed, but it is harder to realise. In MIS, one selected vertex removes its whole neighbourhood. In MM, a matched edge removes only its two endpoints, and a vertex can be matched only once. Thus, removing an entire region requires assigning distinct matching partners without creating conflicts.

We first remove the inner disk (\Cref{fig:full_algo} hatched area). Vertices close to the boundary and of sufficiently small degree are activated (blue annulus in \Cref{fig:full_algo}a). Each active vertex samples one neighbour uniformly at random and proposes to it. We then consider the vertices in the inner disk, which have a sufficiently large degree. With high probability, every such vertex receives at least one proposal from an active neighbour. Each vertex in the inner disk accepts one proposal, and these accepted proposals form matching edges. This matches all vertices in the inner disk and removes them from further consideration.

It remains to create separators in the outer annulus. As in the MIS case, neighbourhoods of vertices in a suitable annulus have the right geometry to separate the outer disk into narrow angular sectors. We activate such vertices (vertices in the red annulus in \Cref{fig:full_algo}a) and let them draw random values and select local maxima as in one iteration of Luby. By choosing the annulus carefully, we ensure that the relevant outer-disk portions of the $2$-hop neighbourhoods of activated nodes are pairwise disjoint. Thus, in the \LOCAL model, each activated vertex can learn its $2$-hop neighbourhood and compute a maximal matching. Removing the matched nodes from the graph disconnects the unsolved parts into small connected components (blue areas in \Cref{fig:full_algo}b). In the \CONGEST model, $2$-hop neighbourhoods cannot be learned efficiently; hence, we use additional structural information to compute the desired maximal matching efficiently. 
\subsection{Lower Bounds for MIS and MM}
The lower bounds are based on showing that HRGs contain hard substructures aka trees. The goal is not merely to find trees as abstract subgraphs, but to find induced regular trees that are attached to the giant component in a controlled way. More precisely, we show that, asymptotically almost surely, the giant component contains polynomially many induced $d$-ary trees (trees with degree $d$) of depth $h$. In particular, we use $d$-ary trees with parameters
\begin{align*}
d \approx {\log\log n}
\qquad\text{and}\qquad
h \approx \log \log n / \log\log\log n.
\end{align*}
Each such tree is attached to the giant component by a cut edge incident to its root, while all other vertices of the tree have no additional edges leaving the tree. Once such trees exist, known lower bounds for MIS and MM on regular trees can be extended and lifted to the stated lower bounds for HRGs.
 \begin{figure}[t]
    \centering \includegraphics{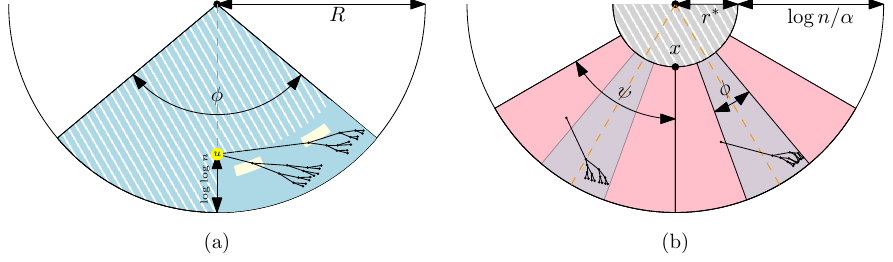} 
    \caption{Excerpt of a hyperbolic disk with trees. (a) Sketch of a nice sector (blue) with angle $\phi$. The hatched area is empty, and the other area contains a $d$-ary tree (here $d = 2$ $h = 3$). Yellow areas are the ``boxes'' of root $u$. (b) Two neighbouring $\Psi$-sectors (red areas) with angle $\psi$. Both embed a $\Phi$-sector with angle $\phi$ (blue areas); both $\Phi$-sectors are nice containing a tree with a "bridge edge" to a vertex in their respective "buffer regions" $\Psi\setminus\Phi$ connecting the tree to the giant component. If a $\Phi$-sector is nice, then any vertex in it has distance larger than $R$ to the point $x$. The grey hatched area contains no vertex a.a.s.}
    \label{fig:d-regular-trees}
  \end{figure}
\paragraph{Geometric construction.}
The trees are located close to the boundary of the hyperbolic disk; see \Cref{fig:d-regular-trees} for an illustration. We consider a narrow angular sector of width
$\phi = o\left({\log n}/{n}\right)$.
Inside such a sector, we prescribe small regions, called boxes (see yellow area in \Cref{fig:d-regular-trees}a for a sketch), where the vertices of the tree should appear. The root is placed at a distance about $\log\log n$ from the boundary (see vertex $u$ in \Cref{fig:d-regular-trees}a). Its children are placed slightly closer to the boundary, at a radial offset of about $2\log d$. This construction is then repeated recursively $h-1$ times: a vertex at level (i) has its $d$ children in boxes at the next radial level.

The boxes are positioned so that three geometric conditions hold. First, each parent is adjacent to every vertex in its child boxes, i.e., within hyperbolic distance at most $R$. Second, vertices in child boxes of the same parent are not adjacent to one another. Third, boxes belonging to different branches are sufficiently separated so that no unintended edges are created. These conditions follow from the hyperbolic distance formula and the choice of the radial and angular spacing between boxes.

Thus, if every prescribed box contains exactly one vertex and the remainder of the sector is empty, then the graph induced by the sector is precisely the desired $d$-ary tree of depth $h$. We call such a sector nice; later, we also ensure that the tree in a nice sector is also connected to the giant component in the desired way.

\paragraph{The small but sufficient probability of a sector being nice.}
We show that a single sector is nice with probability $n^{-o(1)}$: to give intuition for why this is, we first remark that each of the $n' \approx \sqrt{\log n}$ prescribed boxes in a sector has at least constant expected occupancy. So we categorise two different types of boxes, namely

\begin{enumerate}
    \item\label{item:constantbox} boxes where the expected number of vertices is constant, i.e., $\lambda = \Theta(1)$ and;
    \item\label{item:largerbox} boxes where the expected number of vertices is larger than constant, i.e., $\lambda = \omega(1)$.
\end{enumerate}

 For \Cref{item:constantbox}, observe that the number of vertices $X_i$ in the $i$-th box is known to be exactly distributed as a Poisson distribution, i.e., $P(X_i=k)=\lambda^k/k! \cdot e^{-\lambda}$. Here, $\lambda$ is the expected number of vertices in the box and scales with the size of the box. For boxes where  $\lambda =\Theta(1)$ we obtain $P(X_i=1)=\lambda\cdot e^{-\lambda}=p=\Omega(1)$. Thus, since we have at most $n' \approx \sqrt{\log n}$ boxes, the probability that all these boxes have exactly one vertex is given by $p^{n'} = n^{-o(1)}$ as desired.

For \Cref{item:largerbox}, recall that all boxes lie in a sector that has angle $\phi = o(\log n / n)$. Focussing on boxes with expected number of vertices larger than 1, fix any such box $\B_i$, and let $X_i$ be the random variable with which we count the number of vertices in $\B_i$. Using that we have a Poisson distribution, it follows that
$$\Prob{X_i = 1} = \lambda \cdot \exp(-\lambda) \geq \exp(-\lambda) = \Prob{X_i = 0},$$
since $\lambda > 1$. Hence, the probability that a box has exactly one vertex is basically lower bounded by the probability that a box is empty. Now, let $k$ be the number of boxes we consider, and we get from the equation above
$\prod_{i=1}^k \Prob{X_i = 1} \geq \prod_{i=1}^k\Prob{X_i = 0}.$
On the other hand, since all boxes lie within a sector, the probability that all boxes are empty is lower bounded by the probability of the event that the entire sector is empty. Using that the sector has angle $\phi = o(\log n / n)$, the expected number of vertices in a sector is $\lambda_{\text{sector}} \approx n\cdot \phi = o(\log n)$. Thus, the probability of the event that all boxes are empty is $\prod_{i=1}^k\Prob{X_i = 0} \geq \exp{(-\lambda_{\text{sector}})} = n^{-o(1)}$ by another application of a Poisson distribution. Putting everything together, it now follows that $\prod_{i=1}^k \Prob{X_i = 1}  \geq \prod_{i=1}^k\Prob{X_i = 0} = n^{-o(1)}$ as desired.

Of course, finding a nice sector is not enough: the resulting tree must be induced in the full HRG, and it must be attached to the giant component in a controlled way. To ensure this, we embed each candidate sector $\Phi$ into a larger buffer sector $\Psi$. The buffer is chosen large enough so that, conditioned on $\Phi$ being nice and the relevant part of $\Psi \setminus \Phi$ being empty, vertices of the tree have no additional neighbours outside the tree, except for a designated edge from the root to the giant component. We again show that this event holds with probability $n^{-o(1)}$ and we use the standard fact that HRGs contain no vertices too close to the origin asymptotically almost surely to rule out long-range interference from the central part of the disk (see hatched area in \Cref{fig:d-regular-trees}b).

There are polynomially many disjoint buffer sectors $\Psi$, and the events of the $\Psi$-sectors are independent of each other since vertices in two distinct ``nice sectors'' have disjoint neighbourhoods given that there is no vertex in the hatched area of \Cref{fig:d-regular-trees}b. Since each $\Psi$-sector succeeds with probability $n^{-o(1)}$, a Chernoff bound implies that polynomially many of them contain the desired tree structure with high probability. Since we conditioned on the event that the hatched area of \Cref{fig:d-regular-trees}b is empty, an event that holds \aas,  this proves the structural theorem and, via the lower-bound transfer from regular trees, yields \Cref{thm:lowerbound}.

\subsection{Embedding-Aware Algorithms}

We finally outline the algorithms of \Cref{thm:geometric}, where nodes are given access to geometric coordinates. This additional information changes the algorithmic approach. Instead of first producing random separators and then solving small residual components, the algorithms use the geometry directly.
\begin{figure}[t]
\centering

\begin{subfigure}{0.49\linewidth}
\centering
\resizebox{7.2cm}{!}{
\begin{tikzpicture}[scale=5]

\draw[thick] (0,0) circle (0.95);

\def\rA{0.50}
\def\rB{0.65}
\def\rC{0.78}
\def\rD{0.88}
\def\rE{0.95}

\foreach \r in {\rA,\rB,\rC,\rD,\rE}
    \draw[gray!70] (0,0) circle (\r);

\foreach \a in {0,4.5,...,355.5}
    \draw[gray!70] (\a:\rA) -- (\a:\rB);

\foreach \a in {0,2.25,...,357.75}
    \draw[gray!70] (\a:\rB) -- (\a:\rC);

\foreach \a in {0,1.125,...,358.875}
    \draw[gray!70] (\a:\rC) -- (\a:\rD);

\foreach \a in {0,0.5625,...,359.4375}
    \draw[gray!70] (\a:\rD) -- (\a:\rE);

\foreach \k in {0,5,...,160}{
    \pgfmathsetmacro{\a}{2.25*\k}
    \pgfmathsetmacro{\b}{2.25*(\k+1)}

    \fill[customred]
        (\a:\rB)
        arc[start angle=\a,end angle=\b,radius=\rB]
        --
        (\b:\rC)
        arc[start angle=\b,end angle=\a,radius=\rC]
        -- cycle;
}

\end{tikzpicture}
}
\caption{}
\label{fig:a}
\end{subfigure}
\hfill
\begin{subfigure}{0.49\linewidth}
\centering
\includegraphics{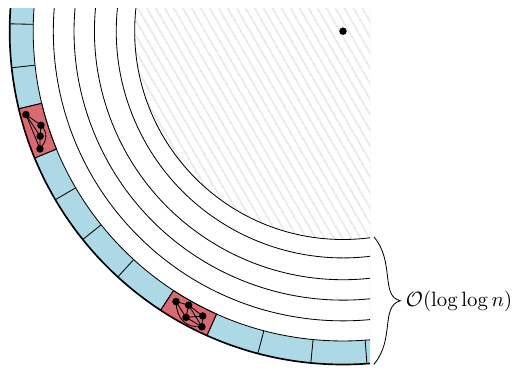}
\caption{}
\label{fig:b}
\end{subfigure}

\caption{Sketch for our tiling. (a) Any pair of points in two different red tiles has a distance larger than $R$. (b) The induced subgraph of a tile is a clique, and there are no edges between the cliques of the red tiles.}
\label{fig:onion}
\end{figure}
\paragraph{Embedding-aware deterministic algorithm for MIS.}
For the MIS algorithm, we first partition the outer part of the hyperbolic disk into tiles arranged in $\bigO(\log\log n)$ radial layers. The tiling is chosen to have two useful properties. First, every tile induces a clique (see red tiles \Cref{fig:onion}b). Second, within each fixed layer, the tiles can be processed according to a constant-size schedule: in each step of the schedule, all currently active tiles are mutually non-adjacent and can therefore be handled in parallel; for a sketch, see red tiles in \Cref{fig:onion}a.

The algorithm is then a simple layer-by-layer sweep. For each of the $\bigO(\log\log n)$ layers, we execute the constant-size schedule for that layer. Whenever a tile is processed, we solve it locally by selecting one still-eligible vertex, if such a vertex exists, and adding it to the independent set. Since a tile is a clique, this is sufficient to resolve all the vertices of the tile. Since each layer requires only $\bigO(1)$ rounds, the total runtime is $\bigO(\log\log n)$.

Correctness for the outer disk follows directly from the sweep: every processed tile is solved when it is considered. It remains to argue that the inner disk (hatched area in \Cref{fig:onion}b) is also dominated. Every inner-disk vertex has neighbours in many processed tiles; we show that by the tiling construction (with high probability over the draw of the HRG), any even adversarially chosen MIS (the algorithm is deterministic) within these tiles still contains at least one neighbour of the vertex. Hence, after the $\bigO(\log\log n)$-round sweep, every vertex is either selected or has a selected neighbour, and the resulting independent set is maximal.

\paragraph{Embedding-aware randomised algorithm for MM.}
For MM, a layer-by-layer sweep would only give an $\bigO(\log\log n)$-round algorithm, which is slower than our target runtime. Moreover, the MIS strategy does not transfer directly: matching one vertex in a tile only removes one neighbour, rather than dominating an entire neighbourhood. We therefore first remove the inner disk (hatched area in \Cref{fig:onion}b) with the same constant-round randomised proposal step as for \Cref{thm:mainpolylog}.

Afterwards, we use a bottom-up approach on the $\bigO(\log\log n)$ remaining outer layers. Initially, we compute a maximal matching inside each single layer in $\bigO(1)$ rounds using a tiling schedule that covers each intra-layer edge. We then repeatedly merge solved layers: when two neighbouring layers have already been solved separately, we may still need to add matching edges between them to obtain maximality for their union. Such merging steps can be performed in parallel for disjoint pairs of layers. The main technical work is to show that each individual merge can be implemented in $\bigO(1)$ rounds as sketched in \Cref{fig:annuli}.

After $t$ merging steps, the algorithm has solved blocks of $2^t$ consecutive layers. Since there are only $\bigO(\log\log n)$ layers, $\bigO(\log\log\log n)$ steps suffice to solve the entire outer disk. Including the constant-round preprocessing of the inner disk, this gives the desired $\bigO(\log\log\log n)$-round algorithm for MM.

\section{Conclusion}
In this work, we investigate classical symmetry-breaking problems through the lens of hyperbolic random graphs. In particular, we resolve an open question posed by \cite{mr-soda-26} by showing that maximal independent set and maximal matching can be computed in $\poly\log\log n$ rounds on hyperbolic random graphs. Thus, when the input graph is drawn from the hyperbolic random graph model, these problems admit substantially faster algorithms than the $\Omega(\sqrt{\log n})$ round complexity known for worst-case instances \cite{khoury2025round}. Our result for the \LOCAL model relies on a constant-round algorithm that shatters an HRG into polylogarithmic-size components. The runtime then follows from the state-of-the-art deterministic algorithm \cite{GG24}; any improvements for the deterministic algorithm in \cite{GG24} would also imply a faster algorithm for hyperbolic random graphs. 

Conversely, we show that MIS and MM are inherently harder than $(\Delta+1)$-colouring on hyperbolic random graphs. While $(\Delta+1)$-colouring can be solved in just 2 rounds \cite{mr-soda-26}, MIS and MM remain substantially more difficult, and we prove a lower bound of $\Omega\left(\frac{\log\log n}{\log\log\log n}\right)$ rounds for both MIS and MM. Our lower bounds rely on new structural insights of hyperbolic random graphs that may be of independent interest. Specifically, we prove that the giant component of a hyperbolic random graph is likely to contain polynomially many logarithmic-size $d$-ary trees for a wide range of different degrees $d$ and height $h$.

Finally, for maximal matching, we demonstrate that access to the underlying geometric coordinates enables an exponentially faster algorithm than implied by this lower bound.

\paragraph{Open Questions.} 
Our work opens up several research questions and  the identified structural insights may also improve the design and analysis of distributed algorithms beyond the HRG model.
\begin{itemize}
    \item How does Luby's algorithm perform on HRGs?
    \item Is there a separation between MIS and MM on HRGs, as is the case on trees?
    \item For embedding-aware graphs, can we overcome the lower bound barrier of \Cref{thm:lowerbound} for MIS?
    \item How can we establish lower bounds in the setting of embedding-aware graphs?
    \item What is the complexity of distributed approximation algorithms for Maximum Matching and Maximum Independent Set on HRGs?
    \item Is the complexity landscape of other symmetry-breaking problems, such as edge colouring or vertex cover, similar to that of MIS/MM or $(\Delta+1)$-colouring on HRGs?
\end{itemize}
\section*{Outline of the rest of the paper} The remainder of this article is structured as follows. In \Cref{sec:prelims}, we give a formal definition for the model of hyperbolic random graphs, and we collect important lemmas we make use of throughout this work. The analyses of our shattering algorithms can be found in \Cref{sec:shattering-algos}. Our findings of $d$-ary trees and the lower bound implications are stated in \Cref{sec:d-regular-trees}. We conclude with our approach for algorithms where the embedding of an HRG is given (\Cref{sec:geometric}). 
\section{Hyperbolic Random Graphs}\label{sec:prelims}
\subparagraph{Hyperbolic plane.}
In the following, we introduce the model of \emph{hyperbolic random graphs} and follow the lines drawn by Papadopoulos~et~al.~\cite{pkbv-info-2010}. Points are represented by their \emph{native representation} in the \emph{hyperbolic plane} $\bH = [0, \infty) \times[0,2\pi)$. Thus, a point $x \in \bH$ is identified by a radial coordinate $r(x)$ and an angular coordinate $\varphi(x)$. We use exclusively the curvature -1 by which the \emph{hyperbolic distance} $\dist(\cdot,\cdot)$ for two points $x,y \in \bH$ is given by
\begin{align}\label{eq:hyperbolic_distance}
    \cosh(\dist(x,y)) = \cosh(r(x))\cosh(r(y)) - \sinh(r(y))\sinh(r(x))\cos(\varphi(x) - \varphi(y)),
\end{align}
and consequently, we use for \bH the topology induced by $\dist$.

For a hyperbolic random graph, we consider a subregion of the hyperbolic plane \Hb: we operate on a disk with radius $R$  denoted by \(\disk = [0,R] \times [0,2\pi)\), centred at the \emph{origin} \((0,0)\). Note that the set of points with distance $r$ to $x\in\disk$ is given by the set $\B_x(r) = \{y\in\disk : \dist(x,y)\leq r\}$, as we restrict the hyperbolic plane to $\disk\subset \Hb$. 

In order to obtain a power-law degree distribution, we introduce the parameter $\alpha \in (1/2, 1)$ (see also \cite{gpp-rhg-12, pkbv-info-2010}). Then, the probability measure $\mu$ on \disk for measurable $\mathcal S\subseteq \disk$ is defined by
\begin{align*}
	\mu(\mathcal S) = \int_S \func(x) \dd x, 
    \qquad \func(x) = \frac{\alpha\sinh(\alpha x)}{2\pi(\cosh(\alpha R) - 1)}, 
\end{align*}
where $\rho$ is the density of $\mu$ with respect to the Lebesgue measure on \disk. 

\subparagraph{(Threshold) hyperbolic random graphs.} Vertices $V$ are identified by their point coordinates in $\disk$: for a subset of vertices that are identified by a set of points $\mathcal{S} \subseteq \disk$, we write $V \cap \mathcal{S}$. Throughout this work, we exclusively work on the \emph{Poissonised version} of \emph{(threshold) hyperbolic random graphs} (see also \cite{Kiwi2024, hrg-spectral, km-slcrhg-19}).  This model allows us to analyse the number of vertices in disjoint areas of the hyperbolic disk independently. Let $n \in \mathbb{Z}^+$ and $N$ be a Poisson random variable with expectation $\E{N} = n$. Then, let $R:= 2\log(n) + C$ for some constant $C \in \mathbb{R}$ and we use an inhomogeneous Poisson point process on $\disk$ where for any $\mathcal{S} \subseteq \disk$ it holds that the number of expected vertices is given by $n\cdot\mu(\mathcal{S}) = \E{|V \cap \mathcal{S}|}$. That is, the intensity function at polar coordinates $(r, \varphi)$ for $0\leq r < R$ is $\intensity(r, \varphi):=n\rho(x)$ and the set of vertices is a random variable $V = \{(r_1,\varphi_1), (r_2,\varphi_2),.., (r_N,\varphi_N)\}$. For a pair of vertices $u,v \in V \cap \disk$, there exists the edge $\{u,v\} \in E$ in a threshold hyperbolic random graph, if and only if it holds $\dist(u,v) \leq R$.

\smallskip

In the following, we collect some results we make use of throughout this work. It will be convenient to characterise the connection of vertices in terms of their \emph{angular distance} $\angulardist{\cdot}{\cdot}$. Since vertices are connected if and only if their distance is at most $R$, we define
\begin{align}\label{eq:angle-func}
    \theta_R(r_1,r_2) = \arccos\left(\frac{\cosh(r_1)\cosh(r_2) - \cosh(R)}
    {\sinh(r_1)\sinh(r_2)}\right),
\end{align}
which denotes by \Cref{eq:hyperbolic_distance} the angle distance for two points with radii $r_1$ and $r_2$ such that their hyperbolic distance is exactly $R$. Thus, two vertices with an angle distance of at most $\theta_R$ are connected. Throughout this work, we use for $\theta_R(\cdot,\cdot)$ the following bounds, which are due to \cite[Corollary 5]{bks-hudg-23}.
\begin{lemma}[Angle distance]\label{lem:max-angle}
Let $x,y \in \disk$ and $1 \leq r(x), r(y) \leq R$ and $r(x)+r(y) \geq R$. Then it holds
$$
\sqrt{e^{R-r(x)-r(y)}} \leq \theta_R(r(x), r(y)) \leq \pi \sqrt{e^{R-r(x)-r(y)}}.
$$
\end{lemma}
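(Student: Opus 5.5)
This lemma is cited from \cite[Corollary 5]{bks-hudg-23}; the plan below reproves it directly from \Cref{eq:angle-func}. Write $r_1=r(x)$, $r_2=r(y)$ and $\theta=\theta_R(r_1,r_2)$. The identity to use is $\cos\theta = 1-2\sin^2(\theta/2)$. With it, \Cref{eq:angle-func} becomes
\[
2\sin^2(\theta/2)\,\sinh r_1\sinh r_2 = \sinh r_1\sinh r_2-\cosh r_1\cosh r_2+\cosh R = \cosh R-\cosh(r_1-r_2).
\]
Rewriting with the product formulas $\cosh R-\cosh(r_1-r_2)=2\sinh\frac{R+r_1-r_2}{2}\sinh\frac{R-r_1+r_2}{2}$ gives the exact expression
\[
\sin^2(\theta/2)=\frac{\sinh\frac{R+r_1-r_2}{2}\,\sinh\frac{R-r_1+r_2}{2}}{\sinh r_1\sinh r_2}.
\]
Everything then reduces to comparing this ratio with $e^{R-r_1-r_2}/4$.

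\textbf{Lower bound.} Since $\sin(\theta/2)\le\theta/2$, it suffices to show $\sin^2(\theta/2)\ge e^{R-r_1-r_2}/4$. In the numerator, both arguments $\frac{R\pm(r_1-r_2)}{2}$ are $\ge0$ because $|r_1-r_2|\le R$. I would use $\sinh a\sinh b=\tfrac12(\cosh(a+b)-\cosh(a-b))=\tfrac12(\cosh R-\cosh(r_1-r_2))$ and keep the main term $e^R/4$. In the denominator, $\sinh r_i\le e^{r_i}/2$. Together these give
\[
\sin^2(\theta/2)\ge \frac{e^R-2\cosh(r_1-r_2)}{e^{r_1+r_2}} .
\]
This should exceed $e^{R-r_1-r_2}/4$ with plenty of room, provided $\cosh(r_1-r_2)$ is not comparable to $e^R$. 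When $|r_1-r_2|$ is close to $R$, the slack is lost. In that case I would not drop terms, and would instead use the convexity of $\sinh$: for $a,b\ge0$ one has $\sinh a\ge a$-type estimates together with $a+b=R$. The constant $1$ (against the true constant $2$ in $2\sqrt{e^{R-r_1-r_2}}$) leaves a factor $4$ of room, which I expect to absorb the boundary effects.

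\textbf{Upper bound.} Use $\theta\le\pi\sin(\theta/2)$, which holds for $\theta\in[0,\pi]$ by the concavity of $\sin$ on $[0,\pi/2]$. It then suffices to show $\sin^2(\theta/2)\le e^{R-r_1-r_2}$. Bound the numerator above by $e^{R}/4$, since $\sinh a\sinh b\le e^{a+b}/4$. Bound the denominator below using $r_i\ge1$: $\sinh r_i\ge(1-e^{-2})e^{r_i}/2$. This gives
\[
\sin^2(\theta/2)\le \frac{e^{R-r_1-r_2}}{(1-e^{-2})^2}.
\]
The factor $(1-e^{-2})^{-2}\approx1.34$ is absorbed by the gap between $\pi$ and the sharp constant $2$. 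Concretely, $\theta\le\pi\sin(\theta/2)\le \pi\cdot1.16\sqrt{e^{R-r_1-r_2}}$ is too weak by itself. I would therefore refine to $\theta\le 2\arcsin(\cdot)$ and use $\arcsin z\le \frac{\pi}{2}z$ on $[0,1]$. This yields $\theta\le\pi\sqrt{\sin^2(\theta/2)}$, so the estimate has to be sharper, namely $\sin^2(\theta/2)\le e^{R-r_1-r_2}$. For that, I would tighten the numerator to $(e^R-e^{-R}\cdots)/4$-type bounds, or handle small $r_i$ separately by using $r_1+r_2\ge R$.

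\textbf{Main obstacle.} The real work is the constant bookkeeping near the extremes: $r_i$ close to $1$ for the upper bound, and $|r_1-r_2|$ close to $R$ for the lower bound. In the bulk regime the asymptotics are easy. Since the statement is cited from prior work, a fallback is to verify only the regime the paper actually uses ($r_i\ge R/2-O(\log\log n)$), where all error terms are $1+o(1)$.
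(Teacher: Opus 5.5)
The paper does not prove this lemma; it imports it from \cite[Corollary 5]{bks-hudg-23}. A self-contained derivation is therefore a reasonable route, and your exact identity $\sin^2(\theta/2)=\frac{\cosh R-\cosh(r_1-r_2)}{2\sinh r_1\sinh r_2}$ is correct.

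\textbf{Lower bound.} This is actually complete once you use a hypothesis you overlooked. Because $1\le r_1,r_2\le R$, you have $|r_1-r_2|\le R-1$. Hence $2\cosh(r_1-r_2)\le e^{R-1}+e^{1-R}\le 2e^{-1}e^{R}<\tfrac34 e^{R}$ for every $R\ge1$. This is exactly what your displayed estimate needs to give $\sin^2(\theta/2)\ge e^{R-r_1-r_2}/4$. The regime ``$|r_1-r_2|$ close to $R$'' never arises, so the vague convexity fallback is unnecessary.

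\textbf{Upper bound.} Here there is a genuine gap. Estimating numerator and denominator separately yields only $\sin^2(\theta/2)\le(1-e^{-2})^{-2}e^{R-r_1-r_2}$. Jordan's inequality $\theta\le\pi\sin(\theta/2)$ is tight at $\theta=\pi$, so no constant factor larger than $1$ can be absorbed. The statement is attained at $R=2$, $r_1=r_2=1$. For $r_1+r_2$ slightly above $R$, your chain returns only the trivial bound $\theta\le\pi$, while the target is strictly smaller.

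Your proposed fixes do not close this gap:
\begin{itemize}
\item The $\arcsin$ ``refinement'' merely restates the same inequality.
\item ``Tighten the numerator or treat small $r_i$ separately'' is not an argument.
\item The restricted-regime fallback gives at best $\theta\le(1+o(1))\pi\sqrt{e^{R-r_1-r_2}}$. The paper needs the exact constants: \Cref{lem:triangle-inequality} balances $2\pi$ against $2\log(2\pi)$. It also applies the lemma with radii below $R/2$, for example $r^*$ in the proof of \Cref{pro:d-regular-trees}.
\end{itemize}

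The missing idea is to keep numerator and denominator coupled. Set $s=r_1+r_2\ge R$ and $c=\cosh(r_1-r_2)\ge1$, and note $2\sinh r_1\sinh r_2=\cosh s-c$. Then
\[
\sin^2(\theta/2)=\frac{\cosh R-c}{\cosh s-c}\le\frac{\cosh R-1}{\cosh s-1}=\frac{\sinh^2(R/2)}{\sinh^2(s/2)}\le e^{R-s}.
\]
The first inequality holds because the ratio is nonincreasing in $c$, since $\cosh R\le\cosh s$. The last holds because $x\mapsto e^{-x}\sinh x=(1-e^{-2x})/2$ is increasing. Combined with $\theta\le\pi\sin(\theta/2)$, this gives the upper bound with the exact constant $\pi$, and it does not even use $r_i\ge1$.
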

As remarked in \cite[Remark 4]{km-slcrhg-19}, the function $\theta_R(\cdot, \cdot)$ is decreasing in both arguments.
\begin{remark}\label{rmk:theta-monotonicity}
    $\theta_R(\cdot, \cdot)$ is strictly decreasing in both arguments.
\end{remark}
Next, we look into the measure for different areas in $\disk$. First, we consider a ball of radius $r$ with the origin as its centre point, as sketched by the red area in \Cref{fig:neighbourhood-and-layers}a (\cite[Lemma 3.2]{gpp-rhg-12}).
\begin{figure}[t]
    \centering \includegraphics[height=0.3\textheight]{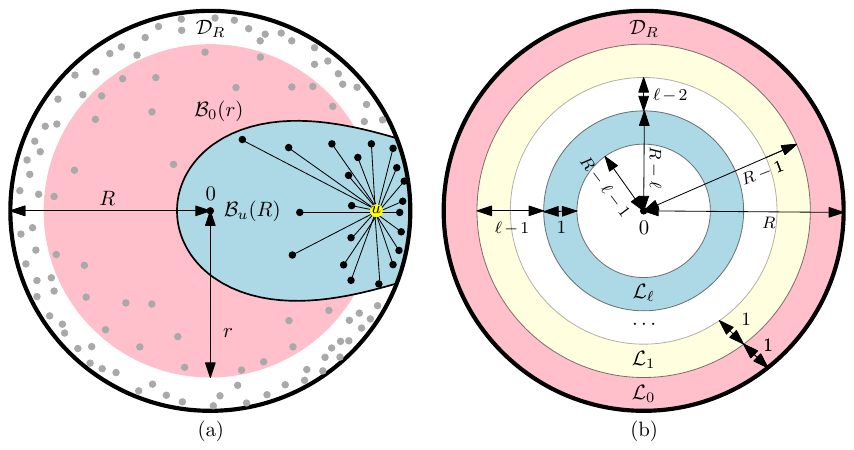}
    \caption{(a) Illustration of the neighbourhood $N(u)$ of a vertex $u$  given by $V \cap \B_u(R)$ (blue region) following the geometry of the hyperbolic disk. The red area is the ball $\B_0(r)$ centred around the origin for some $r$. (b)~Sketch of  layer $\mathcal{L}_0$ (red area), layer $\mathcal{L}_1$ (yellow area) and some layer $\mathcal{L}_\ell$ (blue area) for some $0< \ell<R$.}
    \label{fig:neighbourhood-and-layers}
\end{figure}
\begin{lemma}[Measure of inner disk]\label{lem:measure-inner-disk}
    For any $0 \leq r \leq R$ we have
$\mu\left(\mathcal{B}_0(r)\right) = (1 + o(1))e^{-\alpha(R- r)}~.$
\end{lemma}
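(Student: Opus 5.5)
The claim is $\mu(\mathcal{B}_0(r)) = (1+o(1))e^{-\alpha(R-r)}$, and I would prove it by integrating the density $\func$ directly in polar coordinates. In the native representation the measure $\mu$ is rotationally symmetric. Its radial marginal has density
\[
\frac{\alpha\sinh(\alpha x)}{\cosh(\alpha R)-1}
\quad\text{on } [0,R],
\]
obtained by integrating $\func$ over the angle $\varphi\in[0,2\pi)$, which contributes the factor $2\pi$ that cancels the normalising constant. The ball $\mathcal{B}_0(r)$ about the origin is exactly $[0,r]\times[0,2\pi)$. So the first step is to write
\[
\mu(\mathcal{B}_0(r)) = \int_0^r \frac{\alpha\sinh(\alpha x)}{\cosh(\alpha R)-1}\dd x = \frac{\cosh(\alpha r)-1}{\cosh(\alpha R)-1}.
\]
This is an exact closed form, and everything remaining is asymptotics.

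The second step is to compare this ratio with $e^{-\alpha(R-r)}$. Since $R = 2\log n + C \to \infty$, we have $\cosh(\alpha R) - 1 = \tfrac12 e^{\alpha R}(1+o(1))$. For the numerator, write
\[
\cosh(\alpha r)-1 = \tfrac12 e^{\alpha r}\left(1 - 2e^{-\alpha r} + e^{-2\alpha r}\right) = \tfrac12 e^{\alpha r}\left(1-e^{-\alpha r}\right)^2 .
\]
Combining the two gives
\[
\mu(\mathcal{B}_0(r)) = (1+o(1))\, e^{-\alpha(R-r)}\left(1-e^{-\alpha r}\right)^2 .
\]

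The only real subtlety is the factor $(1-e^{-\alpha r})^2$. It is $1+o(1)$ precisely when $r\to\infty$, but for bounded $r$ (for instance $r=0$, where the measure is $0$) it is not. So the lemma as stated should be read as holding for $r=\omega(1)$, which is the regime where the paper uses it: radii comparable to $R$, such as $R-8\log\log n$. In general one has the upper bound $\mu(\mathcal{B}_0(r)) \le (1+o(1))e^{-\alpha(R-r)}$ for all $r$. I would state the result with this qualification, or with $(1+o(1))$ uniform over $r\ge r_0(n)\to\infty$, and note that the upper bound holds for every $r$. I expect no real obstacle beyond this bookkeeping of the error term.
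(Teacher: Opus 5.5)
Your argument is correct. The paper gives no proof of its own and simply imports the statement from \cite[Lemma 3.2]{gpp-rhg-12}; your integration of the radial marginal to the exact value $\frac{\cosh(\alpha r)-1}{\cosh(\alpha R)-1}$, followed by elementary asymptotics, is the standard way that lemma is established.

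Your caveat is also justified. The exact value equals $e^{-\alpha(R-r)}(1-e^{-\alpha r})^2/(1-e^{-\alpha R})^2$, so the $(1+o(1))$ form fails for bounded $r$ (e.g.\ $r=0$). The same defect carries over to \eqref{eq:layer_measure} when $R-\ell$ is bounded. This is harmless, because in every application in the paper the relevant radius tends to infinity, in fact is of order $\log n$ (e.g.\ $r^*=R-\log n/\alpha-\tfrac12\log\log n$, or layers $\Layer{\ell}$ with $\ell=O(\log\log n)$).
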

We write $N(u) := \{v \in V : \{u,v\} \in E(G)\}$ for the \emph{neighbourhood} of $u$ and $\deg(u) := |N(u)|$ for the \emph{degree} of $u$. Since a vertex $u$ has an edge to every vertex within distance $R$, it is adjacent to all vertices in $\B_u(R) \cap \disk$ (see blue area in \Cref{fig:neighbourhood-and-layers}a). For the expected degree of a vertex, we use the following bounds (\cite[Lemma 7]{mr-soda-26}).
\begin{lemma}[Vertex Degree]\label{lem:vertex-degree}
    Let $u \in V \cap \disk$ be a vertex with radius $r\geq 1$. Then, the expected degree of $u$ in a threshold HRG is 
    $$
    n e^{-r/2}\cdot\frac{(1-o(1))\alpha }{\pi(\alpha-1/2)} \leq \E{\deg(u)} 
        \leq ne^{-r/2}\cdot \frac{ (1+o(1))\alpha}{\alpha-1/2}.
    $$
\end{lemma}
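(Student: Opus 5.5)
The plan is to compute $\E{\deg(u)} = n\,\mu(\B_u(R)\cap\disk)$ by integrating the density over the neighbourhood ball in polar coordinates. For a point at radius $y$, the angular width of the intersection with $\B_u(R)$ is $2\theta_R(r,y)$ when $r+y\ge R$, and the full circle $2\pi$ when $y\le R-r$. Here $r=r(u)$. Hence
\begin{align*}
\mu(\B_u(R)\cap\disk) = \mu(\B_0(R-r)) + \int_{R-r}^{R} \frac{2\theta_R(r,y)}{2\pi}\cdot\frac{\alpha\sinh(\alpha y)}{\cosh(\alpha R)-1}\dd y .
\end{align*}
The first term is handled by \Cref{lem:measure-inner-disk}, which gives $(1+o(1))e^{-\alpha r}$.

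For the second term, I will sandwich $\theta_R(r,y)$ using \Cref{lem:max-angle}. This requires $y\ge 1$, so the negligible sliver near $y<1$ is treated separately (it only matters when $r>R-1$, where the contribution is tiny anyway). The lemma gives $e^{(R-r-y)/2}\le\theta_R\le \pi e^{(R-r-y)/2}$.

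Next I replace $\frac{\alpha\sinh(\alpha y)}{\cosh(\alpha R)-1}$ by $(1+o(1))\alpha e^{-\alpha(R-y)}$, valid uniformly up to additive errors for small $y$. The integral then becomes, up to the constants $1/\pi$ and $1$ respectively,
\begin{align*}
\alpha e^{(R-r)/2}e^{-\alpha R}\int_{R-r}^{R} e^{(\alpha-1/2)y}\dd y = \frac{\alpha}{\alpha-1/2}\, e^{(R-r)/2-\alpha R}\left(e^{(\alpha-1/2)R}-e^{(\alpha-1/2)(R-r)}\right).
\end{align*}
The leading term is $\frac{\alpha}{\alpha-1/2}e^{-r/2}$, and the subtracted term equals $\frac{\alpha}{\alpha-1/2}e^{-\alpha r}$.

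Finally I multiply by $n$ and use $R=2\log n+C$. This gives $n e^{-r/2}$ times the constants $\frac{\alpha}{\pi(\alpha-1/2)}$ (lower) and $\frac{\alpha}{\alpha-1/2}$ (upper).

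The main obstacle is the bookkeeping of the secondary terms. These are the $e^{-\alpha r}$ from the inner disk, the subtracted $e^{-\alpha r}$ term, and the $-1$ in $\cosh(\alpha R)-1$ and the $e^{-\alpha y}$ part of $\sinh$. For the upper bound, the inner-disk term and the subtracted term are of the same order and combine to $e^{-\alpha r}(1-\frac{\alpha}{\alpha-1/2})\le 0$ since $\alpha>1/2$. The upper bound therefore follows directly.

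For the lower bound I simply drop the inner-disk term. I must then argue that $e^{-\alpha r}$ is $o(e^{-r/2})$ relative to the main term, which needs $r\to\infty$. When $r$ is bounded this fails, so for small $r$ I would instead use that the inner disk $\B_0(R-r)$ alone already contributes $\Theta(1)\ge \frac{(1-o(1))\alpha}{\pi(\alpha-1/2)}e^{-r/2}$ up to constants. I expect to need a careful case split at $r=\omega(1)$ versus $r=O(1)$ to obtain the exact constant $\frac{1}{\pi}$ in the lower bound. If that split becomes messy, I fall back on citing \cite[Lemma 7]{mr-soda-26}, which is where this statement is taken from.
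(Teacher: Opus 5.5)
The paper gives no proof of this lemma; it imports it from \cite[Lemma 7]{mr-soda-26}. Your direct integration is the natural derivation: the two constants in the statement are exactly what the two sides of \Cref{lem:max-angle} produce when integrated over $y\in[R-r,R]$.

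Your upper bound is correct and complete. Bounding $\sinh(\alpha y)\le e^{\alpha y}/2$ and $\cosh(\alpha(R-r))-1\le e^{\alpha(R-r)}/2$ puts the same uniform factor $(1-2e^{-\alpha R})^{-1}$ on both pieces. So the net coefficient $1-\frac{\alpha}{\alpha-1/2}<0$ of $e^{-\alpha r}$ really can be discarded. The sliver $y<1$, present only when $r>R-1$, costs $O(e^{-\alpha R})=o(e^{-r/2})$.

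Your lower bound is fine whenever $e^{-(\alpha-1/2)r}=o(1)$, i.e.\ for $r=\omega(1)$. If $\alpha\ge \pi/(2(\pi-1))\approx 0.73$ you do not even need that. Keeping the inner-disk term instead of dropping it leaves the nonnegative coefficient $1-\frac{\alpha}{\pi(\alpha-1/2)}$ on $e^{-\alpha r}$.

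The step that fails is your repair for bounded $r$. The inner disk contributes only about $e^{-\alpha r}$. The inequality $e^{-\alpha r}\ge \frac{\alpha}{\pi(\alpha-1/2)}e^{-r/2}$ is false once $\alpha$ is close to $1/2$, because the right-hand constant diverges.

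No case split can rescue this, because the lower bound as stated is itself false in that regime. Your own starting formula gives $\E{\deg(u)}=n\,\mu(\B_u(R)\cap\disk)\le n$. Yet for $r=1$ and $\alpha=0.55$ the claimed lower bound is $(1-o(1))\frac{0.55\,e^{-1/2}}{0.05\,\pi}\,n\approx 2.1\,n$. Falling back on \cite[Lemma 7]{mr-soda-26} cannot help with a false inequality.

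What your argument does establish, and what you should state, is the following:
\begin{itemize}
\item the lemma with the lower bound restricted to $(\alpha-\frac12)r\to\infty$, which for fixed $\alpha$ means $r=\omega(1)$;
\item for all $r\ge 1$, the uniform bound $\E{\deg(u)}\ge (1-o(1))\,\frac12\min\{1,\frac{\alpha}{\pi(\alpha-1/2)}\}\,n e^{-r/2}$.
\end{itemize}
The second bound follows because either $e^{-\alpha r}\ge \frac12 e^{-r/2}$ or $e^{-r/2}-e^{-\alpha r}\ge\frac12 e^{-r/2}$. Both pieces of your decomposition are nonnegative, so whichever one dominates already gives it. This $\Theta(ne^{-r/2})$ form is what your ``up to constants'' observation actually yields.
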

Another sub-area of the disk we use are \emph{layers} as used by Friedrich and Krohmer \cite{fk-dhrg-18}. For $\ell \in [\lfloor R \rfloor]$, a {layer} is defined by $\Layer{\ell}:= \B_0(R-\ell)\setminus \B_0(R-\ell-1)$ (see \Cref{fig:neighbourhood-and-layers}b for a sketch). Using \Cref{lem:measure-inner-disk} and \Cref{lem:vertex-degree}, we obtain the following.
\begin{lemma}[Layer properties] 
\label{lem:layer-properties} Let $\ell \in [\lfloor R \rfloor]$ and $u \in V \cap \Layer{\ell}$. Then
\begin{equation} \label{eq:layer_measure}
    \mu(\Layer{\ell}) =  (1-e^{-\alpha} +o(1)) e^{-\alpha \ell},
\end{equation}
and 
    \begin{equation} \label{eq:layer_expected_degree}
e^{\ell/2} \cdot \frac{ (1-o(1))\alpha e^{C/2}}{\pi(\alpha-1/2)} \leq \E{\deg(u)} \leq e^{\ell/2} \cdot \frac{ (1+o(1))\alpha e^{C/2}}{\alpha-1/2}.
\end{equation}
\end{lemma}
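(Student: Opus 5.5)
The plan is to derive both parts of \Cref{lem:layer-properties} directly from \Cref{lem:measure-inner-disk} and \Cref{lem:vertex-degree}, using only that $\Layer{\ell}$ is the annulus of radii in $(R-\ell-1, R-\ell]$ and that $R = 2\log n + C$.

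For \eqref{eq:layer_measure}, I write $\mu(\Layer{\ell}) = \mu(\B_0(R-\ell)) - \mu(\B_0(R-\ell-1))$ and apply \Cref{lem:measure-inner-disk} to each term. This gives $(1+o(1))e^{-\alpha\ell} - (1+o(1))e^{-\alpha(\ell+1)} = e^{-\alpha \ell}\bigl(1 - e^{-\alpha} + o(1)\bigr)$. The two $o(1)$ errors combine into one because $e^{-\alpha}$ is a constant bounded away from $1$. One point must be checked: the $o(1)$ in \Cref{lem:measure-inner-disk} has to be uniform in $r$. Directly, $\mu(\B_0(r)) = (\cosh(\alpha r)-1)/(\cosh(\alpha R)-1)$, so the relative error is $O(e^{-\alpha r}+e^{-\alpha R})$. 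For small $\ell$ this is $o(1)$. For $\ell$ close to $R$ I would instead compute $\mu(\Layer{\ell})$ exactly from the closed form, and note that the error is still absorbed relative to $e^{-\alpha\ell}$ as long as $R-\ell$ grows. The last few layers near the origin may need to be stated separately.

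For \eqref{eq:layer_expected_degree}, take $u \in \Layer{\ell}$ with radius $r \in (R-\ell-1, R-\ell]$, where $r\ge 1$ so that \Cref{lem:vertex-degree} applies. The expected degree is monotone decreasing in $r$, since $\B_u(R)\cap\disk$ shrinks in measure as $u$ moves outward; this follows from \Cref{rmk:theta-monotonicity}. I would therefore evaluate the two bounds of \Cref{lem:vertex-degree} at the extreme radii. The key substitution is $n e^{-r/2} = e^{(R-C)/2}e^{-r/2} = e^{(R-r)/2}e^{-C/2}$. Since $\ell \le R-r < \ell+1$, this gives
\[
n e^{-r/2} \in \bigl[e^{\ell/2}e^{-C/2},\; e^{\ell/2}e^{1/2}e^{-C/2}\bigr).
\]
Plugging this into the lower and upper bounds of \Cref{lem:vertex-degree} yields degree bounds of the stated form $e^{\ell/2}\cdot\Theta(1)\cdot\alpha/(\alpha-1/2)$.

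The main obstacle is matching the exact constants in \eqref{eq:layer_expected_degree} as the lemma words them. The computation above produces the factor $e^{-C/2}$, not $e^{C/2}$. It also produces an extra factor $e^{1/2}$ in the upper bound, which comes from the layer width. To obtain the statement verbatim, the prefactor $e^{\pm C/2}$ and the $e^{1/2}$ must be reconciled.

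First, I would recheck the sign convention for $C$, i.e.\ whether $n=e^{(R-C)/2}$ or $n = e^{(R+C)/2}$. If the intended model is $R = 2\log n - C$, then the prefactor becomes $e^{C/2}$, matching the statement.

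Second, I would tighten the upper bound using the $\pi$ that the lower bound loses. \Cref{lem:vertex-degree} as stated has a factor $\pi$ gap between its lower and upper bounds. To recover the stated upper bound with no $e^{1/2}$, I would reprove the degree estimate via $\E{\deg(u)} = n\mu(\B_u(R)) = n\int 2\theta_R(r,r')\rho(r')\,dr'$ and \Cref{lem:max-angle}. The upper bound there carries the constant $\pi$ from $\theta_R \le \pi\sqrt{e^{R-r-r'}}$, against the stated $(1+o(1))\alpha/(\alpha-1/2)$. I would then aim to show that this constant slack absorbs the $e^{1/2}$.

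If neither step closes the gap, the rigorous content I can certify is the derivation above: the correct dependence on $\ell$ and $\alpha$, with explicit constants $e^{-C/2}$ and $e^{1/2}$.
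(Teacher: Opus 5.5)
Your derivation is the paper's own route. The paper offers nothing beyond ``using \Cref{lem:measure-inner-disk} and \Cref{lem:vertex-degree}'', and your differencing of inner-disk measures and your substitution $ne^{-r/2}=e^{(R-r)/2}e^{-C/2}$ with $\ell\le R-r<\ell+1$ are exactly that argument written out. The two discrepancies you flag are genuine defects of the statement, not of your proof, so neither reconciliation can work.

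The sign is not a convention. \Cref{sec:prelims} fixes $R=2\log n+C$, and the paper itself applies \Cref{lem:vertex-degree} with the factor $e^{(c(C,\alpha)-C)/2}$, i.e.\ with $e^{-C/2}$, in the proof of \Cref{lem:cheddar1}.

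Nor can the $e^{1/2}$ be absorbed, because the upper bound is false near the inner edge of the layer, even with $e^{-C/2}$ restored. With the constant $\pi$ of \Cref{lem:max-angle}, your integral reproduces exactly $\frac{\alpha}{\alpha-1/2}$, so there is no slack there. With the sharp $\theta_R(r,y)=(2+o(1))e^{(R-r-y)/2}$ (valid for $r+y-R\to\infty$, the dominant range), it gives the known asymptotics $\E{\deg(u)}=(1+o(1))\frac{2\alpha}{\pi(\alpha-1/2)}\,ne^{-r/2}$ for $r\to\infty$ \cite{gpp-rhg-12}. For $r(u)=R-\ell-\delta$ with $\delta\in[0,1)$ this equals $(1+o(1))\frac{2}{\pi}e^{\delta/2}\cdot e^{\ell/2}\frac{\alpha e^{-C/2}}{\alpha-1/2}$. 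That exceeds the stated bound once $\delta>2\log(\pi/2)\approx 0.90$, by a factor of up to $2e^{1/2}/\pi\approx 1.05$. So no proof of the stated constant exists.

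Your uniformity concern about \eqref{eq:layer_measure} is also justified. The closed form $\mu(\B_0(r))=(\cosh(\alpha r)-1)/(\cosh(\alpha R)-1)$ gives
\[
\mu(\Layer{\ell})=e^{-\alpha\ell}\bigl(1-e^{-\alpha}-(e^{\alpha}-1)e^{-2\alpha(R-\ell)}\bigr)\bigl(1+O(e^{-\alpha R})\bigr).
\]
So the error is $o(1)$ only when $R-\ell\to\infty$. The last $O(1)$ layers carry a constant relative error, and \Cref{lem:vertex-degree} needs $r\ge 1$ there anyway.

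Stop at your fallback, which is the correct lemma: \eqref{eq:layer_measure} for $R-\ell=\omega(1)$, together with
\[
e^{\ell/2}\cdot\frac{(1-o(1))\alpha e^{-C/2}}{\pi(\alpha-1/2)}\le\E{\deg(u)}\le e^{\ell/2}\cdot\frac{(1+o(1))\alpha e^{(1-C)/2}}{\alpha-1/2}.
\]
This is all the paper uses. Every application has $\ell\le\varepsilon\log n$ and needs the degree only up to constant factors. The one exception is the choice of $\hat\ell$ in the proof of \Cref{lem:cheddar1}, which was tuned to the wrong sign and should use $+C$ instead of $-C$.
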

For notational convenience, we sometimes write $V_\ell := V \cap \Layer{\ell}$ for the set of vertices in layer $\ell$.

Finally, the following statement will be useful: it states that if two vertices share a neighbour with larger radius, then they form a connected triangle with that neighbour.
\begin{lemma}\label{lem:triangle-inequality}
    Let $x,y,z \in \disk$ such that $1 \leq r(x) \leq r(y) \leq r(z) - 2\log(2\pi) \leq R -  2\log(2\pi)$. Then, if $\dist(x,z) \leq R$ and $\dist(y,z) \leq R$, it holds that $\dist(x,y) \leq R$.
\end{lemma}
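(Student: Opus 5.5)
We compare angular distances. By \Cref{eq:hyperbolic_distance}, for fixed radii the hyperbolic distance is increasing in the angular distance on $[0,\pi]$. Hence $\dist(x,y) \le R$ holds as soon as $\angulardist{x}{y} \le \theta_R(r(x),r(y))$, and likewise $\dist(x,z)\le R$ gives $\angulardist{x}{z} \le \theta_R(r(x),r(z))$. The plan is to bound $\angulardist{x}{y}$ via $z$ using the triangle inequality for angular distance on the circle, and then use the two-sided estimates of \Cref{lem:max-angle}.

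\emph{Step 1: trivial case.} If $r(x)+r(y)\le R$, the hyperbolic triangle inequality through the origin gives $\dist(x,y)\le r(x)+r(y)\le R$, and we are done. So assume $r(x)+r(y)\ge R$. Then also $r(x)+r(z)\ge R$ and $r(y)+r(z)\ge R$, since $r(z)\ge r(y)\ge r(x)$. All radii are at least $1$ and at most $R$ by hypothesis, so \Cref{lem:max-angle} applies to all three pairs.

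\emph{Step 2: the angular estimate.} Write $a := \sqrt{e^{R-r(x)-r(y)}}$. Using $r(z)\ge r(y)+2\log(2\pi)$, we get
\[
\theta_R(r(x),r(z)) \le \pi\sqrt{e^{R-r(x)-r(z)}} \le \pi e^{-\log(2\pi)}\, a = \tfrac12 a .
\]
Similarly, $r(z)\ge r(x)+2\log(2\pi)$ gives
\[
\theta_R(r(y),r(z)) \le \pi \sqrt{e^{R-r(y)-r(z)}} \le \tfrac12 a .
\]
Hence
\[
\angulardist{x}{y} \le \angulardist{x}{z}+\angulardist{z}{y} \le \theta_R(r(x),r(z))+\theta_R(r(y),r(z)) \le a \le \theta_R(r(x),r(y)),
\]
where the last inequality is the lower bound of \Cref{lem:max-angle}. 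Since $\theta_R\le\pi$ and the angular distance is the minimal circular distance, the triangle inequality for angles is valid here.

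\emph{Step 3: conclusion and main obstacle.} By monotonicity of the distance in the angle, $\dist(x,y)\le R$. The only delicate point is checking the side conditions of \Cref{lem:max-angle}, namely that the radius sums are at least $R$. This is exactly why Step 1 splits off the case $r(x)+r(y)<R$ first.
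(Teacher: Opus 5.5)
Your proof is correct and follows essentially the same route as the paper: the angular triangle inequality through $z$, then the upper bound of \Cref{lem:max-angle} together with $r(z)\ge r(y)+2\log(2\pi)$ to get $\angulardist{x}{y}\le\sqrt{e^{R-r(x)-r(y)}}$, then the lower bound of \Cref{lem:max-angle}. The differences are minor: you bound each of the two terms by $a/2$ separately instead of using monotonicity of $\theta_R$, and your Step~1 explicitly handles the case $r(x)+r(y)<R$, which checks the side condition $r(\cdot)+r(\cdot)\ge R$ of \Cref{lem:max-angle} that the paper's proof uses without comment.
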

\begin{proof}
By our hypothesis that $\dist(x,z) \leq R$ and $\dist(y,z) \leq R$, it follows for the angular distance between $x$ and $y$ by \Cref{lem:max-angle} that
\begin{align*}
  \angulardist{x}{y} \leq \theta_R(r(x), r(z)) + \theta_R(r(y), r(z)) \leq 2\pi\sqrt{e^{R-r(x)-r(z)}},
\end{align*}
since $r(x) \leq r(y)$ and using \Cref{rmk:theta-monotonicity}. Consequently, using our hypothesis that $r(z) \geq r(y) +  2\log(2\pi)$, it follows 
\begin{align*}
  \angulardist{x}{y} \leq \sqrt{e^{R-r(x)-r(y)}} \leq \theta_R(r(x), r(y)),
\end{align*}
and thus, by \Cref{lem:max-angle} we have $\dist(x,y) \leq R$ as desired.
\end{proof}

\paragraph{Further Notation.}
We write $[k]$ (where $k \in \mathbb{Z}^+$) for the set $\{0,1,\dots, k-1\}$. For the randomness over the graph distribution of a hyperbolic random graph, we use $\Pro{\cdot}$ and  $\Exp{\cdot}$. Conversely, we write $\Prob{\cdot}$ and  $\E{\cdot}$ when we deal with the randomness of an algorithm on a sampled hyperbolic random graph $G$. A \emph{sector} $\Phi \subseteq \disk$ with angle $\phi$ and bisector $\varphi$, is defined by the set of points $\Phi := \{x \in \disk : \varphi - \phi / 2 \leq \varphi(x) \leq \varphi + \phi / 2  \}$. For a vertex $u \in V$, we write $E(u) := \{\{u,v\} \in E : v \in N(u)\}$ for the set of \emph{incident edges} to $u$.
\newcommand{\independentset}[1]{\text{IS}({#1})}
\newcommand{\notindependentset}[1]{\text{OUT}({#1})}
\newcommand{\constant}{\ensuremath{1000}}
\newcommand{\algo}{HRG-Shattering-MIS\xspace}
\newcommand{\algomis}{Embedding-aware-MIS\xspace}
\newcommand{\algos}{HRG-Shattering-MM\xspace}
\newcommand{\algosmm}{Embedding-aware-MM\xspace}

\section{Efficient Algorithms for MIS/MM on HRGs (Theorem \ref{thm:mainpolylog})}\label{sec:shattering-algos}
The goal of this section is to prove the following theorem. 
\maintheorem*
All algorithms follow a shattering-based approach that exploits structural properties of hyperbolic random graphs to identify separators that shatter the graph into small components. These can then be solved independently and in parallel using any deterministic algorithm. The overall running time is determined by the algorithm used to solve the remaining components, as the shattering itself takes constant rounds. The proof for the MIS part can be found in \Cref{sec:cheddar}. The MM part we show in \Cref{sec:mm_shatter}.
\subsection{Efficient Maximal Independent Set Algorithm (MIS part of Theorem~\ref{thm:mainpolylog})}\label{sec:cheddar}
We use Luby's algorithm as a subroutine. In Luby's algorithm, all (active) vertices draw a random ID. If a vertex $u$ has the largest ID among its neighbours $N(u)$, then $u$ is added to the independent set $I$. An iteration of Luby's algorithm has $2$ rounds: first, all active vertices send their random ID to all neighbours. Then any active vertex $u$ informs all its neighbours based on the IDs if $u$ is (a) in the independent set, (b) has a neighbour that is part of the independent set, or (c) neither of the two. For $t \in \mathbb{Z}^+$, we denote by $V_{(t)}$ the set of vertices that are neither part of the independent set nor have a neighbour in the independent set after the $t$-th step of \emph{\algo}, where by \algo we refer to the following process.
\begin{tcolorbox}
\begin{itemize}
    \item \textbf{Step 1:} Activate all vertices of the set $U_{(1)} : = \{v \in V : \left\lfloor\frac{\log^{4}(n)}{\constant}\right\rfloor\ \leq \deg(v) \leq \lceil\constant \log^{4}(n) \rceil\}$. Execute one iteration of Luby's algorithm on active vertices $U_{(1)}$.
    \item \textbf{Step 2:} Activate all vertices of the set $U_{(2)} := \{v \in V_{(1)} : \deg(v) \leq \lceil \log^{3/2}(n)\rceil\}$. Execute one iteration of Luby's algorithm on active vertices $U_{(2)}$.
\end{itemize}
\end{tcolorbox}
We analyse step 1 in \Cref{lem:cheddar1} and step 2 is addressed in \Cref{lem:cheddar2}. We then use the two lemmas to show that {\algo} shatters a hyperbolic random graph into components that are all of size at most $\poly\log n$; see also \Cref{pro:shattering-mis}.

 The following lemma tells us that when we consider the independent set $I$ generated after the first step of \algo, that (1) all vertices of $I$ are contained in an annulus $\mathcal{A}$ with constant thickness and (2) for any sector with a large enough angle, we find a vertex that is part of the independent set $I$; see also \Cref{fig:shattering-mis}a for a sketch.
\begin{figure}[t]
    \centering \includegraphics[height=0.2\textheight]{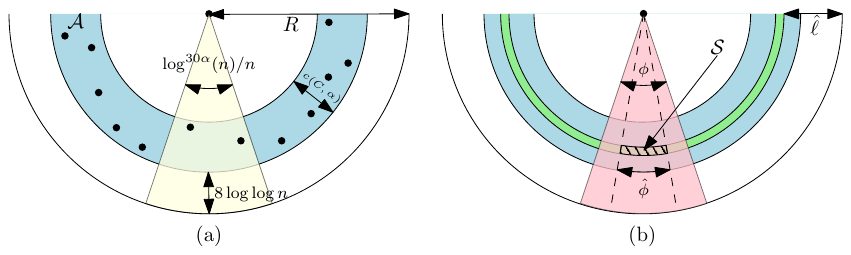}
    \caption{(a) Sketch of \Cref{lem:cheddar1}. All active vertices that join the independent set after the first step are in $\mathcal{A}$ (blue annulus) \wehp, and if a sector is large enough, there is a vertex in the sector that joined the independent set (yellow area). (b)~Sketch of the area $\mathcal{S}$ (hatched region). The sector $\Phi$ (red) is active if area $\mathcal{S}$ contains a vertex that joins the independent set after the first step.}
    \label{fig:shattering-mis}
\end{figure}
\begin{lemma}[Step 1 of \algo] \label{lem:cheddar1}
    Let $G$ be a threshold hyperbolic random graph and consider the set of vertices $I_{(1)}$ that is part of the independent set after step 1 of \algo. Moreover, let $c(C, \alpha)$ be a constant large enough and consider annulus $\mathcal{A} \coloneq \B_0(R - {8\log\log n} + c(C, \alpha))\setminus \B_0(R - {8\log\log n} - c(C, \alpha))$. Then the following holds for $I_{(1)}$ with probability $1 - n^{-\omega(1)}$:
    \begin{enumerate}
        \item\label{item:ring} All vertices contained in the independent set after the first step are in $\mathcal{A}$; $I_{(1)} \subseteq U_{(1)} \subseteq V \cap \mathcal{A}$.

        \item\label{item:active-sector} In any sector with angle $\frac{\log^{30\alpha}n}{n}$ there exists a vertex $v$ such that $v \in I_{(1)}$.
    \end{enumerate}
\end{lemma}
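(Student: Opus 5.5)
For part (1) we need to show that, with probability $1-n^{-\omega(1)}$, every vertex whose degree lies in $[\lfloor\log^4 n/\constant\rfloor,\lceil\constant\log^4 n\rceil]$ has radius within $c(C,\alpha)$ of $R-8\log\log n$. By \Cref{lem:vertex-degree}, a vertex at radius $r$ has expected degree $\Theta(ne^{-r/2})$. At $r=R-8\log\log n$ this is $\Theta(e^{-C/2}\log^4 n)$, and moving the radius by $\pm c$ multiplies the expectation by $e^{\mp c/2}$. We choose $c(C,\alpha)$ so that outside $\mathcal{A}$ the expected degree is either at least $2\constant\log^4 n$ (inner side) or at most $\log^4 n/(2\constant)$ (outer side).

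Conditioned on the position of $u$, its degree is Poisson with this mean, by Poissonisation applied to the ball $\B_u(R)$ minus the point $u$ itself. A Chernoff bound then gives deviation probability $\exp(-\Omega(\log^4 n))=n^{-\omega(1)}$ whenever the mean is $\Omega(\log^4 n)$. For vertices far outside $\mathcal{A}$ with tiny mean, the degree is monotone in the radius: degree is stochastically decreasing in $r$, since $\B_u(R)\cap\disk$ shrinks in measure as $r$ grows. So the probability of reaching $\log^4n/\constant$ is dominated by the case at the outer boundary of $\mathcal{A}$, which is $n^{-\omega(1)}$.

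To make this rigorous for all vertices simultaneously, we use the Mecke formula (or the Campbell–Mecke theorem): the expected number of bad vertices is at most $n\cdot n^{-\omega(1)}$. Markov's inequality finishes part (1). This yields $I_{(1)}\subseteq U_{(1)}\subseteq V\cap\mathcal{A}$.

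For part (2), fix a sector $\Phi$ of angle $\phi=\log^{30\alpha}n/n$. Discretise the circle into $\Theta(n)$ sectors of half this angle, so that every sector of angle $\phi$ contains a full discretised one. This keeps the union bound polynomial.

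Inside a fixed discretised sector, consider the part lying in a thinner annulus at the centre of $\mathcal{A}$, at radius $r_0=R-8\log\log n$. Its measure is about $\phi e^{-\alpha\cdot 8\log\log n}\cdot\Theta(1)$ by \Cref{lem:layer-properties}, so its expected number of points is $\Theta(\log^{30\alpha-8\alpha}n)=\Theta(\log^{22\alpha}n)$. Since $\alpha>1/2$, this is $\omega(\log^{11}n)$.

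Two such points are adjacent only if their angular distance is at most $\theta_R(r_0,r_0)\approx e^{(R-2r_0)/2}=\Theta(\log^8 n/n)$. Hence we can pick $k=\log^{\Omega(1)}n$ candidate points (at least $\log^2 n$, say) that are pairwise angularly separated by more than $4\theta_R$ plus margins. Their closed neighbourhoods among active vertices then overlap in no common active vertex, because active vertices lie in $\mathcal{A}$ by part (1), and their angular reach is $O(\log^8 n/n)$ by \Cref{lem:max-angle}. As a result, the events ``candidate $v_i$ is a local maximum among active neighbours'' depend on disjoint sets of random IDs.

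Each candidate $v$ is active with probability $1-n^{-\omega(1)}$. Its number of active neighbours is at most $\deg(v)\le\lceil\constant\log^4 n\rceil$, so it wins with probability at least $1/(\constant\log^4n+1)$. The winning events are independent given the graph, so the probability that none of the $k$ candidates wins is at most $(1-\Omega(\log^{-4}n))^{k}$. This is $n^{-\omega(1)}$ provided $k=\omega(\log^5 n)$.

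So the plan needs at least $\log^{6}n$ well-separated candidates. The available angular room is $\phi/(5\theta_R)=\Theta(\log^{30\alpha-8}n)\ge\log^{7}n$ slots, using $30\alpha>15$. In each slot, the central-annulus piece has expected occupancy $\Theta(\log^{22\alpha}n/\log^{30\alpha-8}n)=\Theta(\log^{8-8\alpha}n)\geq \Theta(1)$, and indeed $\omega(1)$ when $\alpha<1$. Therefore, with probability $1-n^{-\omega(1)}$, at least $\log^6 n$ slots are nonempty, by a Chernoff bound over independent Poisson slots. A union bound over the $\Theta(n)$ sectors gives part (2).

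The main obstacle is this independence and slot-counting argument. It requires the candidate neighbourhoods to be genuinely disjoint in terms of which random IDs matter. It also requires making the exponents ($30\alpha$ versus $8$ and the $\log^4$ degree) fit with enough slack. If the slot occupancy were only constant, we would still get a constant fraction of nonempty slots via Chernoff, which suffices.
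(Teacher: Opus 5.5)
Your proposal is correct and follows essentially the paper's route: degree concentration confines $U_{(1)}$ to $\mathcal{A}$, and part (2) comes from many angularly separated candidates with disjoint active neighbourhoods, each winning its Luby step with probability $\Omega(\log^{-4}n)$, followed by a union bound over a fixed angular partition. Your step of conditioning on the point configuration before using the random IDs gives the needed independence more cleanly than the paper's ``independent w.e.h.p.'' argument.

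There is one slip to repair. At radius exactly $R-8\log\log n$ the expected degree lies between $\frac{\alpha e^{-C/2}}{\pi(\alpha-1/2)}\log^4 n$ and $\frac{\alpha e^{-C/2}}{\alpha-1/2}\log^4 n$. For some $C$, or for $\alpha$ near $1/2$, this leaves the activation window, so your candidates need not be active. Place the candidate annulus at a suitable constant offset inside $\mathcal{A}$, as the paper does with $\hat\ell=\lceil 8\log\log n+2\log(\alpha-1/2)-C\rceil$. Your counts then change only by constant factors.
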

\begin{proof} 
 We prove the first part of the lemma by showing that, \wehp, no vertex outside the annulus has the same degree as an active vertex $U_{(1)}$. Consequently, no such vertex joins the independent set in the first step.
    
    To this end, let $c(C, \alpha) \coloneqq 100 + |C| + 2\log(1/(\alpha - 1/2))$ chosen with hindsight, and consider any vertex $u \in V \cap \B_0(R - {8\log\log n} - c(C, \alpha))$. Then, for $u$ it holds via \Cref{lem:vertex-degree} that 
    \begin{align*}
        \E{\deg(u)} 
        &\geq \frac{(1-o(1))\alpha}{\pi(\alpha - 1/2)} e^{(c(C, \alpha) - C)/2} \log^{4}n\\
        &\geq 10000 \log^{4}n,
    \end{align*}
    by our choice of $c$. Thus, a Chernoff-bound (\Cref{lem:Poisson-Chernoff}) and union bound yield that for any $u \in V \cap \B_0(R - {8\log\log n} - c(C, \alpha))$, that $\deg(u) > \lceil\constant \log^{4} n \rceil$ w.e.h.p.
    
    Analogously, we get for any vertex $u \in V \cap \mathcal{D}_R \setminus \B_0(R - {8\log\log n} + c(C, \alpha))$ that 
     \begin{align*}
        \E{\deg(u)} 
        & \leq \frac{(1+o(1))\alpha}{(\alpha - 1/2)} e^{-(c(C, \alpha) + C)/2} \log^{4}n\\
        & \leq \frac{\log^{4} n}{10000} .
    \end{align*}
    Another combination of Chernoff bound with a union bound then reveals that for any $u \in V \cap \mathcal{D}_R \setminus \B_0(R - {8\log\log n} + c(C, \alpha))$ that $\deg(u) < \lfloor\frac{\log^{4} n}{\constant}\rfloor$ \wehp This concludes the proof for \Cref{item:ring}.
    
    We now turn to the second point of the lemma. Consider a sector $\Phi$ with angle $\phi = \log ^{16\alpha }(n) /n $. W.l.o.g. let $\Phi$ have bisector $0$. Then, let $\hat{\ell} = \lceil 8\log\log n + 2\log(\alpha - 1/2) - C \rceil$, $\hat{\phi} = \log^{10\alpha}(n)/n$ and consider the area $\mathcal{S} := \{x \in \Layer{\hat{\ell}} : 2\pi - \hat{\phi} /2 \leq \varphi(x) \leq \hat{\phi}/2 \}$ (see also \Cref{fig:shattering-mis}b for a sketch). We say $\Phi$ is \emph{active} if $I_{(1)} \cap \mathcal{S} \neq \emptyset$, i.e., there is a vertex in the area $\mathcal{S}$ that joins the independent set after the first step of \algo.
\begin{claim}\label{claim:active-sector}
    A sector $\Phi$ with angle $\phi = \log ^{16\alpha }(n) /n $ is active with probability $\Omega(1/\log^{4} n)$.
\end{claim}
{
\renewcommand{\qedsymbol}{$\blacksquare$} 
\begin{proof}[Proof of claim]
    We lower bound the probability that $\Phi$ is active as follows: first, we show that any vertex in $\mathcal{S}$ is active in the first step \wehp Then, we show that $\mathcal{S}$ contains at least one vertex \wehp The bound then follows using a union bound over the complementary of these two events and since any active vertex $u$ in step 1 of \algo has degree $\deg(u) \in \Theta(\log^{4} n)$.

    By our choice of $\hat{\ell}$ and $\mathcal{S} \subset \Layer{\hat{\ell}}$, we get for any $u \in V \cap \mathcal{S}$ by using \Cref{lem:layer-properties} that
\begin{equation*} 
\frac{ (1-o(1))\alpha}{\pi} \cdot \log^4 n  \leq \E{\deg(u)} \leq (1+o(1))\alpha \cdot \log^4 n.
\end{equation*}
Since $\alpha \in (1/2, 1)$, using Chernoff bounds then yields that for $u \in V \cap \mathcal{S}$ it holds that  $\left\lfloor\frac{\log^{4}(n)}{\constant}\right\rfloor\ \leq \deg(u) \leq \lceil\constant \log^{4}(n) \rceil$ \wehp Hence, $u$ is active for the first step of Luby in \algo and a union bound shows that any vertex in $\mathcal{S}$ is active \wehp

Next we use that $\mathcal{S}$ spans an angle of $\hat{\phi}=\log^{10\alpha}(n)/n$. This in conjunction with $\mathcal{S} \subset \Layer{\hat{\ell}}$ and $\hat{\ell} \in 8\log\log n + \bigO(1)$ then yields via \Cref{lem:layer-properties} that
$$
\mu(\mathcal{S}) = \hat{\phi}/2\pi \cdot \mu(\Layer{\hat{\ell}}) =\Theta(1)\frac{\log^{10\alpha}n}{n} \cdot e^{-8\alpha \log\log n}  \in  \omega\left(\frac{\log n}{n}\right),
$$
since $\alpha > 1/2$. Consequently, by another Chernoff bound, $\mathcal{S}$ is non-empty \wehp The claim now follows since any vertex in by our outline of the proof for Claim \ref{claim:active-sector} since any active vertex $u$ joins $I$ with probability at least $1/(\deg(u) +1) \in \Omega(1/\log^4 n)$.
\end{proof}}

Now, partition the disk $\disk$ into $\lceil 2\pi n \cdot \log ^{-28\alpha } n\rceil =: k$ sectors such that each sector has angle $\psi \in \Theta( \log ^{28\alpha } (n)/n)$. Note that if each such sector contains at least one vertex that joins the independent set $I_{(1)}$, then it also holds for any sector with angle at least $\frac{\log^{30\alpha}n}{n}$ that there exists a vertex that is in $I_{(1)}$ as desired for \Cref{item:active-sector}.

To show that this holds with the desired probabilistic guarantee, fix any sector $\Psi$ with angle $\psi$ and partition $\Psi$ into $k' \in \Theta(\log ^{12\alpha }n)$ sectors, such that each sector has angle $\phi = \log ^{16\alpha }(n) /n $. For the $i$-th sector with angle $\phi$, we write $\Phi_i$ and let $X_i$ be the indicator random variable that is $1$ if the $i$-th sector is active. Moreover, let $X = \sum_{i}^{k'}{X_i}$ so that by linearity of expectation the expected number of active sectors with angle $\phi$ in $\Psi$ is $$\Exp{X} = k' \cdot\Exp{X_i} \in \Omega(\log n^{12\alpha - 4}n) \in \omega(\log n),$$ by applying Claim \ref{claim:active-sector} and using that $\alpha > 1/2$. Next, for any pair of sectors $\Phi_i$ and $\Phi_j$, we show that $X_i$ and $X_j$ are independent \wehp

To see this, recall that $\Phi_i$ is active if the area $\mathcal{S}_i \subset \Phi_i$ with angle $\hat{\phi} = \log^{10\alpha}(n)/n$ contains a vertex that draws the largest ID among active neighbours. W.l.o.g. let $\Phi_i$ have bisector $0$. Moreover, using \Cref{item:ring}, any active vertex $v$ has radius $r(v) \in R - 8\log\log n - \Theta(1)$ \wehp Hence, using \Cref{lem:max-angle} in conjunction with \Cref{rmk:theta-monotonicity}, any active vertex $v$ that is a neighbour of $u \in V \cap \mathcal{S}_i$ has angle 
$$
\varphi(v) \leq \hat{\phi} + \theta_R(r(u) , r(v)) \in \bigO\left(\frac{\log^8 n +\log^{10\alpha} n }{n} \right) \in o(\phi) \text{ \wehp},
$$
where the last step follows since $\psi = \log ^{16\alpha }(n) /n $ and $\alpha > 1/2$. Consequently the event only depends on the randomness in $\Phi_i$ \wehp A union bound over $k'$ sectors with angle $\phi$ shows that this also holds for all $k'$ $\Phi$-sectors in sector $\Psi$ \wehp

Thus, using that for any pair of sectors $\Phi_i, \Phi_j \subset \Psi$ the respective random variables $X_i$ and $X_j$ are independent \wehp and using $\Exp{X} \in \omega(\log n)$, we obtain via a Chernoff bound that $\Psi$ contains at least $1$ vertex that is in $I_{(1)}$ \wehp A union bound over the $k \in o(n)$ $\Psi$-sectors with angle $\psi$ wraps up the proof since every sector with angle $\psi \in \Theta(\log^{28\alpha}(n)/n) \in o(\log^{30\alpha}(n)/n)$ contains a vertex of $I_{(1)}$ \wehp
\end{proof}
Next, we show that after step two of \algo, all vertices of degree larger than $\approx \log^{7/2} n$ are removed by having a neighbour in the independent set $I_{(2)}$ (except for the vertices we included in the independent set after step one). For the area of these vertices, see also the hatched area in \Cref{fig:shattering-mis2}a. 
\begin{figure}[t]
    \centering \includegraphics[height=0.2\textheight]{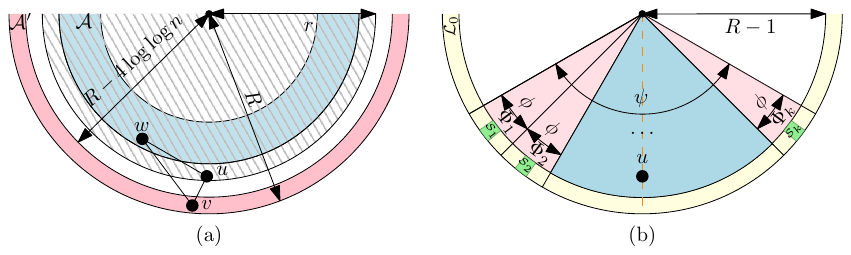}
    \caption{(a) The hatched area is removed after the second step of \algo (\Cref{lem:cheddar2}). Case 1: Vertex $w$ joined the independent set in the first round. If there exist the edges $\{u,v\}$ and $\{v,w\}$, then there exists also the edge $\{u,w\}$. (b) Case 2: Partition of a sector with angle $\psi$ and bisector $\varphi(u)$ into sectors with angle $\phi$. A sector $\Phi_i$ (red area) is active if there is a vertex in $\mathcal{S}_i$ (green area) that joins the independent set.}
    \label{fig:shattering-mis2}
\end{figure}
\begin{lemma}[Step 2 of \algo] \label{lem:cheddar2}
    Let $r \coloneqq R - {7\log\log n}$. Then after the second step of \algo, for any vertex $u \in V_{(1)} \cap \mathcal{B}_0(r)$ it holds with probability $1 - n^{-\omega(1)}$ that $u \not\in V_{(2)}$. 
\end{lemma}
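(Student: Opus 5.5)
Fix a vertex $u \in V_{(1)} \cap \B_0(r)$ with $r = R - 7\log\log n$. The plan is to show that $u$ has $\omega(\log n)$ active neighbours in step 2 whose "win" events are (essentially) independent and each of constant probability. Then $u$ stays in $V_{(2)}$ with probability $n^{-\omega(1)}$, and a union bound over the (Poisson-many, $\bigO(n)$ w.e.h.p.) vertices finishes the proof. The Figure suggests the natural split. Case 1: $u$ was already dominated in step 1, i.e.\ $u\notin V_{(1)}$, which is trivial; the relevant subtlety is that candidate neighbours $v$ of $u$ near the boundary might themselves have been removed in step 1. Case 2: counting surviving candidates.

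\textbf{Step (a): candidate active neighbours.} Let $\ell^\ast := \lceil 3\log\log n + c'\rceil$ for a suitable constant $c'$ and consider vertices $v \in \Layer{\ell^\ast}$ with $|\varphi(v)-\varphi(u)| \le \theta_R(r(u), R-\ell^\ast)/2$. By \Cref{lem:layer-properties}, w.e.h.p.\ such $v$ have degree $\Theta(\log^{3/2} n) \le \lceil \log^{3/2} n\rceil$ (choosing $c'$ so the upper Chernoff bound fits), so they are activated in step 2 provided $v\in V_{(1)}$. Their number in expectation is about $n \cdot \theta_R \cdot e^{-\alpha\ell^\ast} \gtrsim e^{(R-r(u)-R+\ell^\ast)/2}\cdot$ ... concretely $\approx \sqrt{e^{7\log\log n+\ell^\ast}}\, e^{-\alpha \ell^\ast}\cdot\Theta(1) = \log^{\Omega(1)}n$, which is $\omega(\log n)$ since $7/2 + 3/2 - 3\alpha > 2$ for $\alpha<1$.

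\textbf{Step (b): survival of step 1.} I must argue that a candidate $v$ is not in $I_{(1)}$'s neighbourhood. If some $w \in I_{(1)}$ (radius $R-8\log\log n\pm\Theta(1)$ by \Cref{lem:cheddar1}) is adjacent to $v$, then since $r(u)\le r(w)$ up to adjusting, and $r(v) \ge r(w)+2\log(2\pi)$, \Cref{lem:triangle-inequality} applied to $(w,u,v)$ — or rather with the radii ordered $r(w) \le r(u)$? No: $r(u)\le R-7\log\log n < r(w)$ is false, $r(w)\approx R-8\log\log n < r(u)$ is possible. So order $x=w$ or $u$ appropriately; in either ordering both have radius at most $r(v)-2\log(2\pi)$, hence $u\sim w$, contradicting $u\in V_{(1)}$. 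So every candidate neighbour of $u$ lies in $V_{(1)}$ and is active in step 2.

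\textbf{Step (c): independence and constant win probability (main obstacle).} A candidate $v$ joins $I_{(2)}$ if its random value beats all its active neighbours, which number at most $\deg(v) = \bigO(\log^{3/2} n)$ — this is not constant probability. So instead I would restrict to candidates in an even outer layer, $\ell^\ast$ close to a constant, where $\deg(v) = \Theta(1)$... but then the count in (a) becomes $n\theta_R e^{-\alpha\ell^\ast}\approx \log^{7/2}n$, still $\omega(\log n)$, so I choose $\ell^\ast = \Theta(1)$ giving win probability $\ge 1/(\deg(v)+1)=\Omega(1)$ w.e.h.p. For independence, partition the angular window of $u$ into $\omega(\log n)$ sub-sectors spaced by more than twice the $2$-hop angular reach of an outer vertex among low-degree active vertices ($\bigO(\log^{3/2}n/n)$ via \Cref{lem:max-angle}), pick one candidate per sub-sector, and note their win events depend on disjoint Poisson regions, conditioned on the w.e.h.p.\ degree events. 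Then $\Prob{u\in V_{(2)}}\le (1-\Omega(1))^{\omega(\log n)} = n^{-\omega(1)}$.
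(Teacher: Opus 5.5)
Your overall route is the paper's. Step (b) is exactly its Case~1: a boundary neighbour $v$ of $u$ dominated by some $w\in I_{(1)}$ would force $u\sim w$ via \Cref{lem:triangle-inequality}, and $v\in I_{(1)}$ is impossible because $I_{(1)}$ lies at radius $R-8\log\log n\pm\bigO(1)$. Your final version of (c) is its Case~2: candidates in the outermost layers, $\omega(\log n)$ well-separated sub-sectors of $u$'s angular window, a Chernoff bound, then a union bound over $u$.

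\textbf{The gap.} Step (c) asserts that each candidate $v$ in a layer $\ell^\ast=\Theta(1)$ wins with probability $1/(\deg(v)+1)=\Omega(1)$ w.e.h.p., and then conditions on these ``w.e.h.p.\ degree events''. This does not go through. For such $v$ the degree is Poisson with constant mean, so $\deg(v)=\bigO(1)$ holds only with constant probability, not with probability $1-n^{-\omega(1)}$. There is no w.e.h.p.\ event under which the candidates you pick arbitrarily all have constant degree.

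The degree also cannot simply be dropped. The only guaranteed bound, $\deg(v)\le\lceil\log^{3/2}n\rceil$ from activity, gives win probability about $\log^{-3/2}n$. With your $\bigO(\log^2 n)$ sub-sectors this bounds the failure probability by only $e^{-\Theta(\sqrt{\log n})}$, not $n^{-\omega(1)}$.

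\textbf{The fix, as in the paper.} Make the low-degree requirement part of the per-sub-sector success event, and pay for it with the graph randomness. In each sub-sector $\Phi_i$ (angle $\Theta(\log^2 n/n)$), the paper takes a slot $\mathcal{S}_i\subset\mathcal{L}_0$ of angular width $2/n$ at the bisector. It calls $\Phi_i$ active if $\mathcal{S}_i$ contains exactly one vertex, that vertex has $\bigO(1)$ active neighbours, and it wins; each of these holds with non-vanishing probability.

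Independence across sub-sectors comes from localisation. W.e.h.p.\ every active vertex lies outside $\mathcal{B}_0(R-4\log\log n-C)$, so every active neighbour $w$ of the slot vertex $v$ is within angle $1/n+\theta_R(r(v),r(w))<\phi$, i.e.\ inside $\Phi_i$. For the same reason, the quantity to control is the number of neighbours of $v$ outside that inner disk. It dominates the active degree and depends only on a narrow window around $v$. By contrast, $\deg(v)$ itself depends on inner-disk vertices at large angular distance, which would break independence.

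A Chernoff bound over the $\Theta(\log^{3/2}n)$ independent constant-probability events then gives $\omega(\log n)$ active sub-sectors w.e.h.p. Only after that do the algorithm's coin flips give $u\notin V_{(2)}$ with probability $1-n^{-\omega(1)}$.
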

\begin{proof}
Consider the annulus $\mathcal{A}' := \disk\setminus \B_0(R - 4 \log\log n - C)$. Then, by \Cref{lem:vertex-degree}, any vertex $v \in V \cap (\disk\setminus\mathcal{A}' )$ has expected degree $\E{\deg(v)} \in \Omega(\log^2 n)$. Using a Chernoff- and a union bound, all vertices outside the annulus $\mathcal{A}' $ have a degree larger than $\lceil \log^{3/2}(n)\rceil$ \wehp Hence, all vertices that are active in the second step of \algo are within the annulus $\mathcal{A}' $, i.e., $U_{(2)} \subseteq V_{(1)} \cap \mathcal{A}' $ \wehp

Now, using $r= R- \log\log n$ as defined in our lemma statement, fix a vertex $u \in V_{(1)} \cap \B_0(r)$, i.e., a vertex in the inner disk that was not removed in the first round of \algo. We prove our desired statement by the following case distinction: (1) there exists at least one vertex $v$ in the neighbourhood of $u$, such that $v$ lies in the annulus $\mathcal{A}' $ and $v$ was removed after the first step of \algo, i.e., $(N(u) \cap \mathcal{A}' )\setminus (V \setminus V_{(1)}) \neq \emptyset$ (see also \Cref{fig:shattering-mis2}a), and (2) non of the neighbours of $u$, that lie in annulus $\mathcal{A}' $ was removed after the first step of \algo, i.e., $(N(u) \cap \mathcal{A}' )\setminus (V \setminus V_{(1)}) = \emptyset$.

\smallskip

\textbf{Case 1}~[$(N(u) \cap \mathcal{A}' )\setminus (V \setminus V_{(1)}) \neq \emptyset$]: Consider a vertex $v \in (N(u) \cap \mathcal{A}' )\setminus (V \setminus V_{(1)})$. Since $v \notin V_{(1)}$, either (a) $v$ is in the independent set after the first step, $v \in I_{(1)}$, or (b) $v$ has a neighbour $w \in I_{(1)}$ that is in the independent set after the first step. Using \Cref{lem:cheddar1}, no vertex in the set $I_{(1)}$ is in the annulus $\mathcal{A}'$ \wehp and thus, case (a) does not occur \wehp Hence, only case (b) remains  (see \Cref{fig:shattering-mis2}a for a sketch). Using again \Cref{lem:cheddar1}, it follows that vertex $w \in N(v) \cap I_{(1)}$ has radius $r(w) \leq R - 8\log\log n + c$ \wehp Then, since $v \in V \cap \mathcal{A}'$, it holds $r(v) \geq R - 4\log\log n$ - C. Thus, using \Cref{lem:triangle-inequality}, $u \in N(w)$ since $r(u) \leq R - 7\log\log n$. It follows that $u \not\in V_{(1)}$ \wehp since $w \in I_{(1)}$; $u$ was removed after the first step of \algo, and thus, we conclude by (a) and (b) that case~1 does not occur \wehp

\smallskip

\textbf{Case 2}~[$(N(u) \cap \mathcal{A})\setminus (V \setminus V_{(1)}) = \emptyset$]: Consider the sector $\Psi$ with angle $\psi = e^{C/2} \cdot \log^{7/2} n /n$ and bisector $\varphi(u)$. Note that, since $r(u) \leq R - 7\log\log n$, we obtain via \Cref{lem:max-angle} in conjunction with \Cref{rmk:theta-monotonicity} that $\Psi \subset \B_u(R)$, and thus, if a vertex $v \in V \cap \Psi$ joins the independent set, $u$ is removed since $v \in N(u)$. We partition $\Psi$ into $k:= e^{C}\cdot\log^{3/2}n / 100$ sectors, such that for $i \in [k]$ each sector $\Phi_i \subset \Psi$ has angle $\phi = 100\cdot e^{-C/2}\log^2 n /n$. Let $\varphi_i$ be the bisector of $\Phi_i$ and let $\mathcal{S}_i := \{x \in \Phi_i \cap \Layer{0} : \varphi_i - 1/n \leq \varphi(x) \leq \varphi_i + 1/n \}$, i.e., a sub-area of sector $\Phi_i$ in layer $\Layer{0}$, with the same bisector as $\Phi_i$ that spans an angle of $2/n$ (see also green area in \Cref{fig:shattering-mis2}b). We say that $\Phi_i$ is \emph{active} if there exists a vertex $v \in V \cap \mathcal{S}_i$ such that $v$ joins the independent set in the second step of \algo, i.e., $I_{(2)} \cap \mathcal{S}_i \neq \emptyset$.
\begin{claim}\label{claim:active-sector2}
    A sector $\Phi_i$ is active with non-vanishing probability.
\end{claim}
{
\renewcommand{\qedsymbol}{$\blacksquare$} 
\begin{proof}[Proof of claim]
Recall that due to case~2, we condition on the event that no neighbour $v \in N(u) \cap \mathcal{A}'$ has a neighbour $w \in N(w) \cap I_{(1)}$ in the independent set after the first step of \algo. 

Thus, using that $|V \cap \mathcal{S}_i|$ follows a Poisson distribution and $$\mu(\mathcal{S}_i) =  \frac{\mu(\Layer{0})}{\pi n} \in \Omega(1/n),$$ by \Cref{lem:layer-properties} and $\mathcal{S}_i$ spanning an angle of $2/n$, it holds $|v \in \mathcal{S}_i \cap V_{(1)}| = 1$ with non vanishing probability. We write for this event $\mathcal{E}_i := \{|V \cap \mathcal{S}_i| =1\}$.

Moreover, since $v \in V \cap \Layer{0}$ it holds $\E{\deg(v)} \in \bigO(1)$ by \Cref{lem:layer-properties}. Clearly, the number of active neighbours of $v$ is upper bounded by $\deg(v)$. Thus, again it follows from a Poisson distribution that the number of active neighbours of $v$ is $\bigO(1)$ with non-vanishing probability when conditioned on event $\mathcal{E}_i$ (i.e., $\mathcal{S}_i$ is empty except for $v$).

The claim then follows since, given that $v$ has at most constant active neighbours, $v$ joins the independent set with non-vanishing probability.
\end{proof}}
To finish the proof, let $X_i$ be the indicator random variable that is $1$ if sector $\Phi_i$ is active and let $X = \sum_{i=0}^{k-1}X_i$ be the number of ``active $\Phi$-sectors in sector $\Psi$''. By linearity of expectation and Claim \ref{claim:active-sector2} in conjunction with $k \in \Omega(\log n^{3/2})$ it holds $\Exp{X} \in \omega(\log n)$. Thus, if we can show for any $X_i$ and $X_j$ that they are independent, our desired statement follows from a Chernoff bound in conjunction with a union bound over the set of vertices $V \cap \B_0(R - 7\log\log n)$. This is since an active $\Phi$-sector implies that a neighbour $v \in N(u)$ joins the independent set which removes $u$.

We show that this independence holds for any two indicator random variables $X_i$ and $X_j$ \wehp Recall that $X_i$ is $1$ if and only if there is a vertex $v \in V \cap \mathcal{S}_i$ that joins the independent set. Moreover, recall that \wehp only the area $\mathcal{A}' = \disk \setminus \B_0(R-4\log\log n -C)$ contain active vertices. W.l.o.g., let sector $\Phi_i$ have bisector $0$. Thus, considering $v \in V \cap \mathcal{S}_i$ with radius $r(v) \geq R - 1$, for any active neighbour $w \in N(v)$ we have $r(w) \geq R - 4\log\log n$ \wehp, and it holds
$$
\varphi(w) \leq 1/n + \theta_R(r(v), r(w)) \leq \frac{\pi \cdot e^{(1-C)/2} \log^2 n}{n} < \phi \text{ \wehp,}
$$
using \Cref{lem:max-angle} and \Cref{rmk:theta-monotonicity} in conjunction with $\phi = 100 \cdot e^{-C/2} \log ^2 n /n$ and $\mathcal{S}_i$ spanning at most an angle of $1/n$. Thus, all active neighbours of the vertex $v \in V \cap \mathcal{S}_i$ are contained in sector $\Phi_i$ \wehp, yielding the independence \wehp as desired. A Chernoff bound yields $X \in \omega(\log n)$ \wehp such that $u$ is removed \wehp and a union bound over all the set of vertices $|V \cap \B_0(R - 7\log\log n)| \in o(n)$ \wehp wraps up the proof.
\end{proof}
We now have all ingredients to show that \algo shatters a hyperbolic random graph into components of size at most $\poly\log n$.
\begin{proposition}[MIS Shattering]\label{pro:shattering-mis}
    For a threshold hyperbolic random graph $G$, it holds with probability $1 - n^{-\omega(1)}$ that, after the second step of \algo, the largest connected component in $G[V_{(2)}]$ is of size at most $\log^{\bigO(1)}(n)$.\footnote{By $V_{(2)}$ we refer to the set of vertices that are neither in the independent set nor have a neighbour in the independent set after step 2 of \algo.} The round complexity of \algo for $\CONGEST$ is $\bigO(1)$.
\end{proposition}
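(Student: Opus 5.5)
The plan is to combine \Cref{lem:cheddar1} and \Cref{lem:cheddar2} into the geometric separator picture of \Cref{fig:full_algo}, and then bound the number of vertices in each resulting piece. The round complexity is immediate. Each step consists of one activation decision, which is local because a vertex knows its own degree, followed by one Luby iteration. A Luby iteration takes two rounds with $\bigO(\log n)$-bit random IDs and status messages. This gives $\bigO(1)$ \CONGEST rounds.

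For the size bound, I intersect the high-probability events of both lemmas; the union of their failure probabilities is still $n^{-\omega(1)}$. By \Cref{lem:cheddar2}, every vertex of $V_{(2)}$ lies outside $\B_0(R-7\log\log n)$, i.e., in the outer annulus of radius $r(u)> R-7\log\log n$. By \Cref{item:active-sector} of \Cref{lem:cheddar1}, there is a set $W\subseteq I_{(1)}$ with the following property: consecutive elements of $W$, ordered by angle, are at angular distance at most $2\phi$, where $\phi:=\log^{30\alpha}n/n$. By \Cref{item:ring}, every $w\in W$ has $r(w)\in R-8\log\log n\pm c$. The key geometric step is to show that every neighbourhood $\B_w(R)$ acts as a wall for the outer annulus. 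Concretely, let $x,y$ be vertices in the outer annulus, and suppose $x,y\notin \B_w(R)$ with $\varphi(x)<\varphi(w)<\varphi(y)$ (on a short arc). I want to show $\dist(x,y)>R$.

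To prove the wall property, write $\delta_x=\angulardist{x}{w}$ and $\delta_y=\angulardist{y}{w}$. Since $x,y\notin \B_w(R)$, we have $\delta_x>\theta_R(r(x),r(w))\ge \sqrt{e^{R-r(x)-r(w)}}$ by \Cref{lem:max-angle}, and similarly for $y$. The angular distance between $x$ and $y$ is $\delta_x+\delta_y$. It therefore suffices to show
\[
\sqrt{e^{R-r(x)-r(w)}}+\sqrt{e^{R-r(y)-r(w)}} > \pi\sqrt{e^{R-r(x)-r(y)}}\ \ge\ \theta_R(r(x),r(y)).
\]
Assume w.l.o.g.\ $r(x)\le r(y)$. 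Then $e^{(R-r(x)-r(w))/2}$ exceeds $\pi e^{(R-r(x)-r(y))/2}$ as soon as $r(y)-r(w)>2\log\pi$. This holds because $r(y)\ge R-7\log\log n$ while $r(w)\le R-8\log\log n+c$. The lemma's hypotheses ($r\ge 1$, radii summing to at least $R$) hold since all radii involved are $R-\bigO(\log\log n)$. The same computation, applied to long arcs together with the trivial bound, shows that vertices separated by many walls are non-adjacent. Consequently every path in $G[V_{(2)}]$ stays inside a single angular gap between two consecutive walls, so each connected component lies in a sector of angle at most $2\phi$ intersected with the outer annulus. Vertices of $V_{(2)}$ are not in $N(w)$ for any $w\in I_{(1)}$, so they lie outside all walls.

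It then remains to bound the number of vertices in any sector of angle $2\phi$. I cover the disk by $\bigO(n/\log^{30\alpha}n)$ fixed sectors of angle $4\phi$, so that every gap lies in one of them. Each fixed sector has measure $\Theta(\phi)$, so its expected number of vertices is $\Theta(\log^{30\alpha} n)$. A Poisson Chernoff bound (\Cref{lem:Poisson-Chernoff}) shows that each contains at most $2\Theta(\log^{30\alpha}n)=\log^{\bigO(1)}n$ vertices with probability $1-n^{-\omega(1)}$, and a union bound over all fixed sectors completes the argument. I expect the wall-separation computation to be the main obstacle, in particular handling the boundary cases cleanly. In the write-up I would state it as a small claim proved directly from \Cref{lem:max-angle} and \Cref{rmk:theta-monotonicity}.
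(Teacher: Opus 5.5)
Your proposal is correct and takes essentially the same route as the paper. The vertices of $I_{(1)}$ from \Cref{lem:cheddar1} act as angular separators for the outer annulus $\disk\setminus\B_0(R-7\log\log n)$ left by \Cref{lem:cheddar2}, non-adjacency across a separator $w$ is exactly the inequality $\theta_R(r(x),r(w))+\theta_R(r(y),r(w))>\theta_R(r(x),r(y))$ obtained from \Cref{lem:max-angle}, and a Chernoff plus union bound caps each gap at $\log^{\bigO(1)}n$ vertices; two small remarks are that a single separator on the shorter arc already suffices (so the aside about long arcs and many walls is unnecessary), and that applying concentration to fixed, overlapping covering sectors rather than to the data-dependent sectors between consecutive $I_{(1)}$ vertices, as the paper does, is a slight tightening, where with \Cref{lem:Poisson-Chernoff} as stated you should use a factor $a\ge e^{3/2}$ instead of $2$.
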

\begin{proof}
The runtime $\bigO(1)$ directly follows from the fact that one iteration of Luby's algorithm requires $\bigO(1)$ rounds in \CONGEST. Now, consider the set of vertices $I_{(1)}$, i.e., the set of vertices in the independent set after the first step of \algo. Let $k := |I_{(1)}|$ and consider for $I_{(1)}$ an ascending ordering by angular coordinates $v_0, v_1,\dots, v_{k-1}$. For $i \in [k]$, let $\Phi_i$ be the sector defined by $\{x \in \disk : \varphi (v_{i}) \leq \varphi(x) \leq \varphi(v_{i+1})\}$ where $v_k = v_0$. Note that by \Cref{lem:cheddar1}, for $i \in [k]$ it holds for any consecutive pair $v_i, v_{i+1}$ that $\angulardist{v_{i}}{v_{i+1}} \leq 2 \log^{30\alpha}(n)/n$ \wehp and thus, the angle for any sector $\Phi_i$ is at most $\bigO(\log^{30\alpha}(n)/n)$ \wehp Thus, it holds for any $i \in [k]$ that $\mu(\Phi_i) \in \bigO(\log^{30\alpha}(n)/n)$. Via Chernoff bound for $|V \cap \Phi_i|$ and a union bound over $k \in o(n)$ sectors, we then obtain for all $i \in [k]$ that $|V \cap \Phi_i| \in \bigO(\log^{30\alpha}n)$ \wehp Clearly, this implies that $|V_{(2)} \cap \Phi_i| \in \bigO(\log^{30\alpha}n)$ \wehp Thus, if we can show for any pair $i \neq j \in [k]$ that for all pairs $u \in V_{(2)} \cap \Phi_i$ and $v \in V_{(2)} \cap \Phi_j$ it holds $\dist(u,v) > R$ \wehp, this finishes the proof.

 To this end, fix any pair of sectors $\Phi_i$ and $\Phi_{j}$, and w.l.o.g. let $j = i+1$ since non-adjacent sectors are separated by an even larger angular interval. Consider any pair $u \in V_{(2)} \cap \Phi_i$ and $v \in V_{(2)} \cap \Phi_{i+1}$. Using \Cref{lem:cheddar2}, it holds \wehp for both vertices $u, v \in V_{(2)}$ that $r(u), r(v) \geq R - 7 \log\log n$. Moreover, let $s = v_{i+1}$ be the vertex that lies on the intersecting ray of $\Phi_i$ and $\Phi_j$. Then, by \Cref{lem:max-angle} in conjunction with \Cref{rmk:theta-monotonicity}, it holds for the angular distance between $u$ and $v$ that
$$
\angulardist{u}{v} \geq \theta_R(r(v), r(s)) + \theta_R(r(u), r(s)) > \theta_R(r(u), r(v)) \text{ \wehp,}
$$
since $u, v \not\in N(s)$ and $r(v), r(u) \in r(s) + \Omega(\log \log n)$ \wehp by \Cref{lem:cheddar1}. Thus, for all pairs $u \in V_{(2)} \cap \Phi_i$ and $v \in V_{(2)} \cap \Phi_j$ it holds $\dist(u,v) > R$ \wehp; taking a union bound over all pairs of sectors, the desired result follows: every connected component of $G[V_{(2)}]$ is contained in a single sector $\Phi_i$. Since each sector contains at most $\bigO(\log
^{30\alpha}n)$ vertices \wehp, every connected component has poly-logarithmic size. This proves the proposition.
\end{proof}
\begin{proof}[Proof of \Cref{thm:mainpolylog} (MIS part)]
    \textbf{\LOCAL:} For \LOCAL MIS, we obtain a shattering where each connected component is of size at most $\polylog n$ \wehp after $\bigO(1)$ using \Cref{pro:shattering-mis}. Applying in parallel for each component \cite[Theorem 3.1]{GG24}, we obtain an MIS after $\tilde{\bigO}(\log^{5/3}\log n)$ rounds \wehp 

\smallskip

    \textbf{\CONGEST:} For \CONGEST MIS, \Cref{pro:shattering-mis} implies that after $\bigO(1)$ rounds each connected component induced by undecided nodes is of at most poly-logarithmic size \wehp  Consequently, applying in parallel for each connected component \cite[Theorem 1.1]{fggkr-soda-23}, we obtain an MIS after $\tilde{\bigO}(\log^{3}\log n)$ rounds \wehp 
\end{proof}
\subsection{Efficient Maximal Matching Algorithm (MM part of Theorem~\ref{thm:mainpolylog})}\label{sec:mm_shatter}
In this section, we turn to our shattering algorithm, which we use to obtain a maximal matching $M \subseteq E$ in $\polylog\log n$ rounds. {For $t \in \mathbb{Z}^+$, let $M_t$ be the set of edges included in our matching after step $t$ of our algorithm \algos (see description below). We denote by $V_{(t)}$ the set of \emph{unmatched vertices} after step $t$: a vertex is unmatched if it has at least one incident edge that can be included without violating the matching condition after step $t$ of \emph{\algos},} i.e., $V_{(t)} :=\{u \in V : \exists v \in N(u) \text{ such that } E(v) \cap M_t = \emptyset \text{ and } E(u) \cap M_t = \emptyset\}$. The following is a brief description of our algorithm, which we refer to as \algos:
\begin{tcolorbox}
\begin{itemize}
       \item \textbf{Step 1 (Inner disk removal):} Activate all vertices of the set \(U_{(1)} := \{v \in V_{(1)} : \deg(v) \leq \lceil \log^{3/2}(n)\rceil\}\). In parallel, each active vertex \(u \in U_{(1)}\) marks one incident edge uniformly at random. Then, in parallel, each vertex $u \in V$ chooses a marked edge $\{u,v\} \in E$ uniform at random from its incident marked edges, and we add $\{u,v\}$ to the matching $M$.
       
  \item \textbf{Step 2 (Separating the outer disk):} Activate all vertices of the set $U_{(2)} : = \{v \in V : \left\lfloor\frac{\log^{4}(n)}{\constant}\right\rfloor\ \leq \deg(v) \leq \lceil\constant \log^{4}(n) \rceil\}$. In parallel, each vertex $u \in U_{(2)}$ draws a random ID, and we add $u$ to the set $S$ if $u$ has the largest ID among its neighbours in $U_{(2)}$.
 
If \LOCAL: In parallel, each vertex $s \in S$ collects the adjacency matrix of the induced subgraph $G_s := G[N_2(s) \cap V_{(1)}]$.\footnote{We write $N_2(u)$ for the two-hop neighbourhood $N_2(u) := \{v \in V : d_G(u,v) \leq 2\}$.} In parallel, each vertex $s \in S$ computes a maximal matching $M(s) \subseteq E(G_s)$ and we add $M(s)$ to the matching $M$.
 
If \CONGEST: In parallel, each vertex $s \in S$ activates the set of vertices $U(s) := N_2(s) \cap V_{(1)}$. In parallel, for each $s \in S$, compute a maximal matching $M(s)$ for $G[U(s)]$ via \cite[Theorem 1.2]{fischer2020improved} and we add $M(s)$ to the matching $M$.
\end{itemize}
\end{tcolorbox}
The rest of this section is dedicated to proving that \algos shatters an HRG into polylogarithmic connected components; see also \Cref{pro:shattering-mm}. We start by showing in \Cref{lem:match-inner-disk} that after step 1, in constant rounds, all vertices of the inner disk $\B_0(R - 6\log\log n)$ are matched \wehp We then continue by analysing step 2 and the properties it yields for set $S$ in \Cref{lem:cheddar20}.
\begin{lemma}[Step 1 of \algos] \label{lem:match-inner-disk}
    Let $G$ be a threshold hyperbolic random graph. Then in \CONGEST after step 1 of \algos, the set of vertices $V \cap \B_0(R- 6\log\log n)$ is matched in constant rounds \wehp
\end{lemma}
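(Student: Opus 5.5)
Fix a vertex $u \in V \cap \B_0(R-6\log\log n)$. The plan is to show that, \wehp, $u$ receives a marked edge in step 1. Then $u$ chooses one incident marked edge, and we argue this yields a matching edge covering $u$. Step 1 uses only one marking round and one selection round, so the round complexity is constant in \CONGEST.

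\textbf{Active vertices.} As in \Cref{lem:cheddar2}, \Cref{lem:vertex-degree} with a Chernoff and union bound gives that every active vertex lies in $\mathcal{A}' = \disk\setminus\B_0(R-4\log\log n - C)$ \wehp Conversely, a vertex $v$ in the outermost layer $\Layer{0}$ has expected degree $\Theta(1)$ by \Cref{lem:layer-properties}, so it is active with probability $1-o(1)$.

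\textbf{Each inner vertex receives many independent chances.} I would mimic Case 2 of \Cref{lem:cheddar2}. Take the sector $\Psi$ with bisector $\varphi(u)$ and angle of order $e^{C/2}\log^{3}n/n$. By \Cref{lem:max-angle} and \Cref{rmk:theta-monotonicity}, the part of $\Psi$ at radius at least $R-1$ lies inside $\B_u(R)$, since $r(u)\le R-6\log\log n$ gives $\theta_R \ge \log^{3}n/n$ up to constants. Partition $\Psi$ into $k=\Theta(\log^{1-o(1)}n\cdot\log^{\epsilon} n)=\omega(\log n)$ subsectors $\Phi_i$, each of angle $\Theta(\log^2 n/n)$. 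In each $\Phi_i$, let $\mathcal{S}_i\subset\Layer{0}$ be a central strip of angle $\Theta(1/n)$. Call $\Phi_i$ \emph{good} if the following three conditions hold: $\mathcal{S}_i$ contains exactly one vertex $v_i$; $v_i$ has degree at most some constant $D$; and $v_i$ marks the edge $\{v_i,u\}$. The first two conditions hold with constant probability by the Poisson arguments in \Cref{claim:active-sector2}. Given them, $v_i$ is active and marks $u$ with probability at least $1/D$. So each $\Phi_i$ is good with probability $\Omega(1)$.

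\textbf{Independence and concentration.} The neighbourhood of $v_i$ is confined to angular distance $\bigO(\log^2 n/n)$ away from the part near $u$. The only neighbours outside this window are inner vertices like $u$ itself, and their number is small \wehp The degree event of $v_i$ then depends only on the Poisson process inside $\Phi_i$ plus the deep interior, which is shared. Conditioning on the deep interior, the events are independent across $i$, and the marking coins are independent. A Chernoff bound then gives $\omega(\log n)$ good sectors, so $u$ receives at least one marked edge with probability $1-n^{-\omega(1)}$. A union bound over the $o(n)$ inner vertices finishes this step.

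\textbf{Main obstacle: the resulting edges form a valid matching.} Each vertex selects one incident marked edge, and the selected edges could conflict; for example, $u$ accepts $\{u,v\}$ while $v$ also selects a different edge. I would resolve this using the structure of marked edges. Each active vertex marks exactly one edge, and active vertices lie in $\mathcal{A}'$, where their marked edge goes, w.h.p. for the relevant ones, to a neighbour at smaller radius. Let inner vertices (non-active, high degree) be the ones that select. A marked edge then connects an active proposer to a selecting acceptor, and each proposer has only one marked edge. Hence accepted edges are vertex-disjoint on the proposer side, and each acceptor accepts at most one. This gives a matching in which every inner vertex with a proposal is matched.

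I expect the delicate points to be the following. First, an active vertex could itself receive marks; this is handled by letting only non-active vertices accept, and inner vertices are non-active \wehp Second, the independence argument has to be made rigorous despite the shared interior.
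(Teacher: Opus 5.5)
Your overall plan (show that every inner vertex receives a mark, then check that the accepted edges form a matching) is right, but the probability step has a genuine gap.

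\textbf{The subsector count is $\Theta(\log n)$, not $\omega(\log n)$.} With $r(u)\le R-6\log\log n$, \Cref{lem:max-angle} gives a sector $\Psi$ of angle $\Theta(\log^3 n/n)$ whose layer-$0$ part lies in $\B_u(R)$. You cut $\Psi$ into subsectors of angle $\Theta(\log^2 n/n)$, which is the width needed so that the part of $\B_{v_i}(R)$ outside radius $R-4\log\log n-C$ stays inside $\Phi_i$. That yields $k=\Theta(\log n)$ subsectors; the expression $\Theta(\log^{1-o(1)}n\cdot\log^{\epsilon}n)$ has no basis. The template of \Cref{lem:cheddar2} only works there because $r(u)\le R-7\log\log n$ produces $\Theta(\log^{3/2}n)$ subsectors. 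With $\Theta(\log n)$ independent trials of constant success probability, the failure probability for a fixed $u$ is $n^{-\Theta(1)}$, with an unspecified constant in the exponent. That is not $n^{-\omega(1)}$, and it need not survive the union bound over the polynomially many inner vertices. In addition, the conditional independence given the ``deep interior'' is not established. $\deg(v_i)$ counts all neighbours of $v_i$, including $u$ and other interior vertices, so you would still have to show that $\deg(v_i)\le D$ has conditional probability $\Omega(1)$ for every $i$.

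\textbf{The missing idea: no geometric independence is needed.} Once the graph is fixed, the marking choices of distinct active vertices are independent coins of the algorithm. So drop the constant-degree requirement and use all layer-$0$ neighbours of $u$. With extremely high probability, every vertex of $\Layer{0}$ has degree at most $\lceil\log^{3/2}n\rceil$: its expected degree is $\bigO(1)$, and a Poisson tail bound plus a union bound give this with probability $1-n^{-\omega(1)}$, not merely $1-o(1)$ as you state. Hence each such $v$ is active and marks $\{v,u\}$ with probability $1/\deg(v)\ge 1/\lceil\log^{3/2}n\rceil$. Also \wehp, every $u\in V\cap\B_0(R-6\log\log n)$ has $\Omega(\log^3 n)$ neighbours in $\Layer{0}$. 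Conditioned on these graph events,
$\Prob{u \text{ receives no mark}}=\prod_v(1-1/\deg(v))\le(1-1/\lceil\log^{3/2}n\rceil)^{\Omega(\log^3 n)}=e^{-\Omega(\log^{3/2}n)}$,
and a union bound finishes. This is the paper's argument.

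Your resolution of conflicts is sound: only non-active vertices accept, so each active proposer lies in at most one accepted edge and each acceptor in at most one. The paper's proof passes over this point without comment.
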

\begin{proof}
    In the step 1 of \algos, we first activate all vertices with degree at most $\lceil\log^{3/2} n\rceil$. In parallel, all active vertices mark one incident edge uniformly at random.
    
    Fix a vertex $u \in V \cap \B_0(R- 6\log\log n)$ and let $X_u$ be the random variable with which we count the number of marked edges incident to $u$. Note that if we can show that $X_u \geq 1$ for all $u  \in V \cap \B_0(R- 6\log\log n)$ with probability $1 - n^{-\omega(1)}$, then this proves the claim since then any vertex in $u \in V \cap \B_0(R- 6\log\log n)$ can simply select one of the marked edges so that $u$ is matched.
    
To show this, consider the set of vertices which form an $e$ edge with $u$ such that $e$ is potentially marked. Let us write $N'(u) := N(u) \cap \{v \in V : \deg(v) \leq \lceil\log^{3/2} n\rceil\}$ for this set and note that for $v \in N'(u)$, the edge $\{u,v\}$ is marked with probability $1/\deg(v)$, as $v$ samples the edge uniform at random among its neighbours. Hence, it holds
\begin{align}\label{eq:prob-bad-event}
    \Pro{X_u = 0} = \prod_{v \in N'(u)}(1-1/\deg(v)).
\end{align}
In the following, we lower bound $|N'(u)|$. Since for any vertex $v \in V \cap \Layer{0}$, it holds via \Cref{lem:vertex-degree} that $\E{\deg(v)} \in \bigO(1)$, such that a Chernoff-bound with a subsequent union bound over at most $\bigO(n)$ vertices yields that $\deg(v) \leq \log^{3/2}n$ and all vertices in layer $\Layer{0}$ are active \wehp and participate in our marking process. Using this bound and that $r(u) \leq R - 6\log\log n$ in conjunction with the fact that $\theta_R(\cdot, \cdot)$ is a monotonic decreasing function in both arguments (\Cref{rmk:theta-monotonicity}), it holds via law of total expectation that
\begin{align*}
    \E{|N'(u)|} \geq (1-o(1)) n\cdot \theta_R(R - 6\log\log n, R)\cdot \mu(\Layer{0}) \in \Omega(\log^3(n)),
\end{align*}
where in the last step we used \Cref{eq:layer_measure} and  \Cref{lem:max-angle}. Another combination of Chernoff and union bounds reveals that for any $u \in V \cap \B_0(R- 6\log\log n)$ it holds $|N'(u)| \in \Omega(\log^3 n)$ with probability $1 - n^{-\omega(1)}$.

Let $\mathcal{E}$ be the event that for all $v \in V \cap \Layer{0}$ it holds $\deg(v) \in \bigO(\log^{3/2}n)$ and that for all $u \in V \cap \B_0(R- 6\log\log n)$ it holds $|N'(u)| \in \Omega(\log^3(n))$. By previous discussion it follows via union bound that $\Pro{\mathcal{E}} \in 1 - n^{-\omega(1)}$. Note that for our fixed $u \in V \cap \B_0(R- 6\log\log n)$ it holds $\Pro{X_u = 0 |\mathcal{E}} \in e^{-\Omega(\log^{3/2}n)}$ by \Cref{eq:prob-bad-event}. Hence, we obtain using a union bound
\begin{align*}
 \Pro{\not\exists u  \in V \cap \B_0(R- 6\log\log n) : X_u = 0} \geq \Pro{\mathcal{E}}\cdot(1 - \sum_{u \in V \cap \B_0(R- 6\log\log n) }\Pro{X_u = 0|\mathcal{E}}) \in 1-n^{-\omega(1)},
\end{align*}
since $|V \cap \B_0(R- 6\log\log n)| \in \bigO(n)$ \wehp That is, we showed that $X_u \geq 1$ for all $u  \in V \cap \B_0(R- 6\log\log n)$ with probability $1 - n^{-\omega(1)}$ as desired. This finishes the proofs since the desired runtime $\bigO(1)$ is immediate.
\end{proof}
The following is an analogue statement of \Cref{lem:cheddar1}. Additionally, we show that any pair of vertices in the set $S$ has a minimum angular distance (\Cref{item:min_angle}).
\begin{lemma}[Step 2 of \algos] \label{lem:cheddar20}
    Let $G$ be a threshold hyperbolic random graph and consider the set of vertices $S$ in step 2 of \algos. Moreover, let $c(C, \alpha)$ be a constant large enough and consider annulus $\mathcal{A} \coloneq \B_0(R - {8\log\log n} + c(C, \alpha))\setminus \B_0(R - {8\log\log n} - c(C, \alpha))$. Then the following holds for $S$ with probability $1 - n^{-\omega(1)}$:

    \begin{enumerate}
        \item\label{item:ring2} All vertices contained in the set $S$ in the second step of \algos are in $\mathcal{A}$; $S \subseteq U_{(2)} \subseteq V \cap \mathcal{A}$.

        \item\label{item:active-sector2} In any sector with angle $\frac{\log^{30\alpha}n}{n}$ there exists a vertex $v$ such that $v \in S$.
        \item\label{item:min_angle} For any pair of vertices $u,v \in S$ it holds that the angular distance is $\angulardist{u}{v} \in \Omega(\log^8 n / n)$ and $\{u,v\}\not\in E.$
       
    \end{enumerate}
\end{lemma}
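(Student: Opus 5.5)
The first two items are essentially \Cref{lem:cheddar1} transplanted, so the plan is to reuse its proof with $S$ in place of $I_{(1)}$.

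\textbf{Item~\ref{item:ring2}.} The activation rule for $U_{(2)}$ is identical to that of $U_{(1)}$ in \algo: it is defined by the same degree window, on the full vertex set $V$. So the first part of the proof of \Cref{lem:cheddar1} applies verbatim. For a vertex inside $\B_0(R-8\log\log n-c(C,\alpha))$, \Cref{lem:vertex-degree} gives expected degree at least $10000\log^4 n$. For a vertex outside $\B_0(R-8\log\log n+c(C,\alpha))$ it gives expected degree at most $\log^4 n/10000$. Poisson Chernoff bounds and a union bound over the $\bigO(n)$ vertices then show that no vertex outside $\mathcal{A}$ has degree in the activation window, \wehp This yields $S\subseteq U_{(2)}\subseteq V\cap\mathcal{A}$.

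\textbf{Item~\ref{item:active-sector2}.} The rule placing $u$ in $S$ (largest random ID among its neighbours in $U_{(2)}$) is exactly the rule for joining $I_{(1)}$ in one Luby iteration on the active set $U_{(1)}$. The event that some vertex of the small area $\mathcal{S}$ joins $S$ therefore has the same distribution as in Claim~\ref{claim:active-sector}. The argument goes through unchanged:
\begin{itemize}
\item Each vertex of $\mathcal{S}\subset\Layer{\hat\ell}$ is active \wehp, and $\mathcal{S}$ is non-empty \wehp
\item An active vertex wins with probability $\Omega(1/\log^4 n)$.
\item The $\Theta(\log^{12\alpha}n)$ sub-sectors of each $\Psi$ are independent \wehp, because active neighbours lie within angle $\bigO(\log^8 n/n)=o(\phi)$.
\item A Chernoff bound over each $\Psi$, followed by a union bound over the $\Psi$-sectors, finishes.
\end{itemize}

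\textbf{Item~\ref{item:min_angle}.} Non-adjacency holds because $S$ is an independent set. If $u,v\in S$ were adjacent, both would be active, so one would have the larger ID and the other would not be a local maximum.

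For the angular bound, we show that \wehp every pair of active vertices at angular distance $\bigO(\log^8 n/n)$ (for a suitable small constant) is adjacent; then independence of $S$ forces larger separation. By Item~\ref{item:ring2}, all active vertices have radius $R-8\log\log n\pm c(C,\alpha)$ \wehp Applying the lower bound of \Cref{lem:max-angle} with $r(u)+r(v)\le 2R-16\log\log n+2c(C,\alpha)$ gives
\[
\theta_R(r(u),r(v))\ \ge\ \sqrt{e^{-R+16\log\log n-2c(C,\alpha)}}\ =\ e^{-c(C,\alpha)-C/2}\,\frac{\log^8 n}{n}.
\]
So any two active vertices whose angular distance is below this quantity are adjacent. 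Two members of $S$ are active and non-adjacent, so their angular distance exceeds $e^{-c(C,\alpha)-C/2}\log^8 n/n=\Omega(\log^8 n/n)$.

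A final union bound over the three \wehp events gives the probability $1-n^{-\omega(1)}$.

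\textbf{Main obstacle.} The only real work is the independence argument in Item~\ref{item:active-sector2}, and it is inherited directly from \Cref{lem:cheddar1}. The one thing to check is that the activation in Step~2 of \algos is defined on $V$ rather than on the residual graph $V_{(1)}$, so Step~1 does not affect it. It is, so the earlier computations apply unchanged.
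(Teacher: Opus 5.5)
Your proposal is correct and follows the paper's proof. Items~1 and~2 are reduced to \Cref{lem:cheddar1}, because Step~2 of \algos activates the same degree-window set on all of $V$ and uses the same local-maximum rule. Item~3 combines the non-adjacency of two local maxima with the lower bound of \Cref{lem:max-angle} at radii $R-8\log\log n\pm\bigO(1)$.

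You also write out the constant $e^{-c(C,\alpha)-C/2}$ and note that Step~1 randomness does not affect $S$. The paper leaves both points implicit.
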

\begin{proof}
    Items~\ref{item:ring2} and~\ref{item:active-sector2} follow directly from
\Cref{lem:cheddar1}, since step~2 of \algos activates exactly the same set of
vertices as step~1 of \algos and applies the same local selection rule.
Therefore, the arguments used in the proof of \Cref{lem:cheddar1} carry over
verbatim: the set $S$ is equivalent to $I_{(1)}$.

To show that \Cref{item:min_angle} holds, suppose for contradiction that $\{u,v\}\in E$. Since both $u$ and $v$ belong
to $S$, each has the largest ID in its closed neighbourhood. However, as
$u\in N(v)$ and $v\in N(u)$, at most one of the two vertices can have the
largest ID among the vertices of its closed neighbourhood, a contradiction.
Hence $u$ and $v$ are not adjacent. Then, using that $r(u), r(v) \in R - 8 \log\log n +\bigO(1)$ \wehp by \Cref{item:ring2}, it holds \wehp using \Cref{lem:max-angle} in conjunction with \Cref{rmk:theta-monotonicity} that 
    $$
    \angulardist{u}{v} > \theta_R(r(u),r(v)) \in  \Omega(\log^8 n / n).
    $$
    Hence the claimed lower bound on the angular distance holds for every pair of vertices in $S$ whenever the event of
Item~\ref{item:ring2} occurs. Since this event holds with probability
$1-n^{-\omega(1)}$, the proof is complete.
\end{proof}
We now show that after the second step of \algos, a hyperbolic random graph is shattered into components of size at most $\poly\log n$ for maximal matching.
\begin{proposition}[MM Shattering]\label{pro:shattering-mm}
    For a threshold hyperbolic random graph $G$, it holds with probability $1 - n^{-\omega(1)}$ that, after the second step of \algos, the largest connected component in $G[V_{(2)}]$ is of size at most $\log^{\bigO(1)}(n)$.\footnote{By $V_{(2)}$ we refer to the set of unmatched vertices after step 2 of \algos.} The round complexity of \algos is $\bigO(1)$ for $\LOCAL$ and $\bigO(\log^3\log n)$ for $\CONGEST$.
\end{proposition}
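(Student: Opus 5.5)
We mirror the proof of \Cref{pro:shattering-mis}, using \Cref{lem:match-inner-disk} in place of \Cref{lem:cheddar2} and \Cref{lem:cheddar20} in place of \Cref{lem:cheddar1}. The argument has three parts: a geometric claim that the matchings $M(s)$ carve out separating regions, the resulting sector decomposition, and the round complexity.

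\textbf{Step 1: the inner disk is gone.} By \Cref{lem:match-inner-disk}, \wehp every vertex of $\B_0(R-6\log\log n)$ is matched after step 1, so $V_{(2)}\subseteq V_{(1)}\setminus \B_0(R-6\log\log n)$.

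\textbf{Step 2: each $s\in S$ removes a wedge.} Fix $s\in S$. By \Cref{item:ring2} of \Cref{lem:cheddar20}, \wehp $r(s)\in R-8\log\log n\pm c$. Let $u\in V_{(1)}$ be a vertex with $r(u)\ge R-6\log\log n$ and $u\in N(s)$, and let $w\in N(u)$ be any neighbour. I will show that after step 2 the edge $\{u,w\}$ cannot be added to $M$.

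If $w$ has radius at least $r(s)+2\log(2\pi)$, then $w\in N_2(s)$ through $u$. Otherwise $r(w)<r(s)+2\log 2\pi$, so $w$ lies in the inner disk and was matched in step 1 \wehp. In the first case, if $w\in V_{(1)}$, the edge $\{u,w\}$ lies in $G_s=G[N_2(s)\cap V_{(1)}]$. Maximality of $M(s)$ then forces $u$ or $w$ to be matched, provided $M(s)$ and $M(s')$ do not conflict.

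For non-conflict I need $N_2(s)\cap N_2(s')\cap V_{(1)}=\emptyset$ for distinct $s,s'\in S$. By \Cref{item:min_angle}, $\angulardist{s}{s'}=\Omega(\log^8 n/n)$. A vertex $x\in N_2(s)\cap V_{(1)}$ has $r(x)\ge R-6\log\log n$ \wehp. A path $s,y,x$ with $y\in V$ forces $r(y)\gtrsim R-8\log\log n$, since otherwise $y$ lies in the inner disk and is matched.

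\emph{Main obstacle.} The naive bound $\angulardist{s}{x}\le\theta_R(r(s),r(y))+\theta_R(r(y),r(x))$ can be of order $\log^{4}n\cdot\log^{3}n/n=\log^7 n/n$. This is $o(\log^8 n/n)$ only with care about constants, and the intermediate vertex $y$ may have small radius. I would first handle this with \Cref{lem:triangle-inequality} to shortcut paths through low-radius intermediate vertices. If that is not enough, I would restrict $G_s$ to the outer part of the neighbourhood, namely the vertices with $r\ge R-6\log\log n$, which is what the statement needs. Choosing the annulus constants so that these sets are disjoint is the delicate step. Conflicts between $M(s)$ and $M_1$ are excluded because $G_s$ only uses $V_{(1)}$-vertices.

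\textbf{Step 3: sectors and size.} Order $S$ by angle as $v_0,\dots,v_{k-1}$ and define the sectors $\Phi_i$ as in \Cref{pro:shattering-mis}. By \Cref{item:active-sector2}, each $\Phi_i$ has angle $O(\log^{30\alpha}n/n)$. Chernoff and union bounds then give $|V\cap\Phi_i|=O(\log^{30\alpha}n)$ \wehp.

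Next, suppose $u\in V_{(2)}\cap\Phi_i$ and $v\in V_{(2)}\cap\Phi_{i+1}$ were adjacent. Both have radius at least $R-6\log\log n$. Since $v_{i+1}$ lies angularly between them, \Cref{lem:max-angle} together with the radius gap of $\Omega(\log\log n)$ implies, as in \Cref{pro:shattering-mis}, that at least one of $u,v$ is a neighbour of $s=v_{i+1}$. By Step 2 the edge $\{u,v\}$ would then be blocked, a contradiction. Hence every component of $G[V_{(2)}]$ lies inside one sector and has size $\log^{O(1)}n$.

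\textbf{Round complexity.} In \LOCAL, collecting $G_s$ takes $O(1)$ rounds and $M(s)$ is computed locally. In \CONGEST, the instances $G[U(s)]$ are disjoint and have $\polylog n$ size, since $U(s)$ lies in an angular window of width $\polylog n/n$. So \cite[Theorem 1.2]{fischer2020improved} with $\Delta,n'\le\polylog n$ runs in $O(\log^2\Delta\log n')=O(\log^3\log n)$ rounds, where $G[U(s)]$ is simulated over paths of length at most $2$.
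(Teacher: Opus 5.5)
Your outline follows the paper's proof: the inner disk is matched by \Cref{lem:match-inner-disk}, $S$ acts as a separator by \Cref{lem:cheddar20}, maximality of $M(s)$ gives $N(s)\cap V_{(2)}=\emptyset$, and then the sector argument of \Cref{pro:shattering-mis} applies. But the step that carries the whole proof is left open. You never show that the sets $U(s)=N_2(s)\cap V_{(1)}$, $s\in S$, are pairwise vertex-disjoint. Without this, the union of the $M(s)$ need not be a matching. Your Step~2 is explicitly conditional on non-conflict, and the round-complexity claim that the instances are disjoint has no support.

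Your analysis of this step also goes wrong in three places:
\begin{itemize}
\item $O(\log^7 n/n)$ versus $\Omega(\log^8 n/n)$ is a full $\log n$ factor, so no care about constants is needed.
\item \Cref{lem:triangle-inequality} cannot shortcut a path $s,y,x$ through a low-radius $y$. It only gives adjacency of two vertices whose common neighbour has the \emph{largest} radius, whereas here the common neighbour $y$ of $s$ and $x$ has the smallest.
\item ``Otherwise $y$ lies in the inner disk and is matched'' is a non sequitur. Being matched does not remove $y$ from $G$ when $N_2(s)$ is computed in $G$. Indeed, with the literal $N_2(s)$, a middle vertex of radius somewhat below $r(s)$ lets $U(s)$ reach angular distance $\omega(\log^8 n/n)$, so disjointness can genuinely fail.
\end{itemize}

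What closes the gap is your fallback, made precise: take 2-hop paths whose middle vertex lies in $V_{(1)}$, so it also has radius at least $R-6\log\log n$. This is what the paper's proof does when it bounds $\angulardist{s}{v}$ through an intermediate vertex $u\in U(s)$. It still gives $N(u)\cap V_{(1)}\subseteq U(s)$ for every $u\in N(s)\cap V_{(1)}$, which is all your Step~2 needs. The estimate then runs as follows.
\begin{itemize}
\item For $x\in U(s)$ with middle vertex $y$, \Cref{lem:max-angle} gives $\angulardist{s}{x}\le\theta_R(r(s),r(y))+\theta_R(r(y),r(x))=O(\log^7 n/n)$.
\item Non-adjacency of $s,s'$ together with the ring property of \Cref{lem:cheddar20} gives $\angulardist{s}{s'}>\theta_R(r(s),r(s'))=\Omega(\log^8 n/n)$.
\item A common vertex $x$ of $U(s)$ and $U(s')$ would force $\angulardist{s}{s'}\le\angulardist{s}{x}+\angulardist{x}{s'}=O(\log^7 n/n)$, a contradiction.
\end{itemize}
The paper pushes the same estimate further to rule out even edges between $U(s)$ and $U(s')$. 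With this in place, your Step~3 and the complexity bounds go through as in the paper. One small correction there: $G[U(s)]$ is an induced subgraph of $G$, so \cite{fischer2020improved} runs on it directly; there is no need to simulate edges over 2-paths.
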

\begin{proof}
\textbf{\LOCAL:}  We start by proving the \LOCAL part. Consider any pair of vertices $s, s' \in S$ in the second step of \algos. We show for any of the two sets of vertices $N_2(s) \cap V_{(1)} =: U(s) $ and  $N_2(s') \cap V_{(1)} =: U(s')$, that for any pair $v \in U(s)$ and $v' \in U(s')$ it holds $\{v,v'\} \not \in E$ \wehp This ensures that for any pair $s, s' \in S$,  there are no conflicts when $s$ and $s'$ compute in parallel $M(s)$ and $M(s')$ in the \LOCAL computation of the second step in \algos. To see this, note first that $\{s,s'\} \not\in E$ \wehp holds immediately by \Cref{lem:cheddar20}. Thus, it suffices to show that for any $v \in U(s)$ and $v' \in U(s')\setminus\{s'\}$ that $\{v,v'\} \not \in E$ \wehp Now, recall that by \Cref{lem:match-inner-disk} it holds \wehp for any $v,u \in V_{(1)}$ that $r(v),r(u) \geq R - 6\log\log n$. Moreover, by \Cref{lem:cheddar20} it holds for some constant $c$ that, $r(s) \geq R - 8\log\log n -c$ \wehp Thus, for any vertices $v, u \in U(s)$ it holds \wehp that 
  \begin{align}\label{eq:two-hop-angle}
      \angulardist{s}{v} \leq \theta_R(r(s), r(u)) + \theta_R(r(v), r(u)) < \Theta(1) \log^{7} n / n, 
  \end{align} 
by \Cref{lem:max-angle}. Moreover, using \Cref{lem:cheddar20}~\cref{item:min_angle}, it holds \wehp for the pair $s, s' \in S$ that
\begin{align}\label{eq:separator-angle}
     \angulardist{s}{s'}  \in  \Omega(\log^8 n / n).
\end{align}
Consequently, we obtain \wehp for any pair $v \in U(s)$ and $v' \in U(s')\setminus\{s'\}$ by combining \Cref{eq:two-hop-angle} and \Cref{eq:separator-angle} that
\begin{align}\label{eq:two-hop-intersection}
    \angulardist{v}{v'} = \angulardist{s}{s'} - \angulardist{s}{v} - \angulardist{s'}{v'} \in \Omega(\log^8 n / n) \in \omega(\theta_R(r(v), r(v'))) \text{ \wehp,}
\end{align}
using \Cref{lem:max-angle} in conjunction with $r(v') \geq R - 6\log\log n$ \wehp by \Cref{lem:match-inner-disk} and $r(v) \geq R - 8\log\log n -c$ \wehp by \Cref{lem:cheddar20}. It follows via \Cref{eq:two-hop-intersection} that no two-hop neighbourhoods $U(s') \cap U(s)$ share any edge \wehp as desired, which, by union bound then also holds for any pair $s, s' \in S$ \wehp We conclude that \wehp there are no conflicts for any matchings $M(s)$ and $M(s')$ and it is left to show that (a) we obtain the desired shattering and (b) the computation in \algos requires $\bigO(1)$ rounds of the \LOCAL model.

(a)~The shattering property follows by the same argument as in the proof of
\Cref{pro:shattering-mis} after replacing the set $I_{(1)}$ with $S$.
Indeed, by \Cref{lem:match-inner-disk}, every vertex in
$V \cap \B_0(R-6\log\log n)$ is already matched after Step~1 of \algos \wehp
Moreover, for every $s \in S$, the algorithm computes a maximal matching on
the induced subgraph $G[U(s)]$, where $U(s)=N_2(s)\cap V_{(1)}$.
Since $N(s)\subseteq U(s)$, every neighbour of $s$ is either matched itself
or matched to another vertex in $U(s)$. Consequently,
$N(s)\cap V_{(2)}=\emptyset$ for every $s\in S$.
Hence every unmatched vertex lies in the same outer annulus as in
\Cref{pro:shattering-mis}, and the remainder of the shattering argument
applies verbatim.

(b)~By \Cref{lem:match-inner-disk} step 1 of \algos requires constant rounds. The communication in step 2 of \algos also requires constant rounds so that active vertices are informed if they participate in the set $S$. Moreover, each vertex $s \in S$ in parallel collects its two-hop neighbourhood of unmatched vertices in constant rounds and then, in further constant rounds, informs the vertices of the set $U(s)$ about the matching $M(s) \subseteq E(G_2)$. Hence, at most $\bigO(1)$ communication rounds are required.

\smallskip

\textbf{\CONGEST:} The correctness of the \CONGEST part follows from the same arguments as for the \LOCAL model. For the round complexity, we consider the following: first, for all $s \in S$, in parallel, each induced subgraph $G(U(s))$ can be informed by $s$ to be active within $2$ rounds in \CONGEST. Next, fix a vertex $s \in S$. Since there exists a constant $c$ such that $r(s) \geq R - 8\log\log n -c$ by \Cref{lem:cheddar20}, it follows by \Cref{lem:vertex-degree} that $\E{\deg(s)} \in \Theta(\log^4 n)$. Thus, by a Chernoff and union bound it holds for any $s \in S$ that $\deg(s) \in \Theta(\log^4 n)$ \wehp Moreover, it holds for any $v \in V_{(1)}$ that $r(v) \geq R - 6\log\log n$ \wehp by \Cref{lem:match-inner-disk}. Subsequently, using \Cref{lem:vertex-degree} with a Chernoff and a union bound, we have for each $v \in V_{(1)}$ that $\deg(v) \in \bigO(\log^3 n)$. Hence, for all $U(s)$ it holds $|U(s)|\in \bigO(\log^7 n)$ \wehp Then, recall that for any pair of vertices $s, s' \in S$ we showed that there are no edges between $U(s)$ and $U(s')$ \wehp Thus, using in parallel for each $s \in S$ on the induced subgraph $G[U(s)]$ \cite[Theorem 1.2]{fischer2020improved}, we obtain a maximal matching for all such induced subgraphs $G[U(s)]$ in $\bigO(\log^3\log n)$ rounds without conflicts \wehp
\end{proof}
\begin{proof}[Proof of \Cref{thm:mainpolylog} (MM part)]
    \textbf{\LOCAL:} For MM, we obtain a shattering where each connected component is of size at most $\polylog n$ \wehp after $\bigO(1)$ using \Cref{pro:shattering-mm}. Applying in parallel for each component \cite[Theorem 3.1]{GG24}, yields  $\tilde{\bigO}(\log^{5/3}\log n)$ rounds \wehp.

\smallskip

    \textbf{\CONGEST:} The shattering into poly-logarithmic components in  $\bigO(\log^3\log n)$ rounds follows similarly from \Cref{pro:shattering-mm} \wehp Consequently, by applying in parallel for each connected component the maximal matching algorithm from \cite[Theorem 1.2]{fischer2020improved}, resulting in a round complexity of $\bigO(\log^3\log n)$.
\end{proof}
\section{Lower Bounds for MIS/MM \& Substructures in HRGs (Theorem \ref{thm:lowerbound})}\label{sec:lower-bound}
In this section, we prove our lower bounds for MIS and MM on HRGs. We show the following statement.

\theoremlowerbound*

To accomplish our lower bounds, we first establish that there exist many $d$-ary trees in HRGs with relatively large degree $d$ and height $h$ (\Cref{pro:d-regular-trees} in \Cref{sec:d-regular-trees}). We then use these substructures to obtain our lower bounds of \Cref{thm:lowerbound} in \Cref{sec:lower-bound-corollary}.
\subsection{On \texorpdfstring{$d$}{d}-ary Trees in Hyperbolic Random Graphs (Proof of Theorem \ref{pro:d-regular-trees})} \label{sec:d-regular-trees}
In this section, we prove the existence of (polynomially many) $d$-ary trees with $n' \approx \sqrt{\log n}$ vertices, of any degree $d  \leq n'$ growing in $n$ and height $h \approx \log(n')/\log(d)$ in an HRG. Specifically, we show the following.
\begin{restatable}[$d$-ary trees in hyperbolic random graphs]{theorem}{theoremregulartree}
\label{pro:d-regular-trees}
Let $G$ be a threshold hyperbolic random graph and let $m:= m(n) \leq \sqrt{\log n}$ be a function growing in $n$. Then, \aas, there exist $n^{\Omega(1)}$ disjoint sectors $\Phi_i$ such that for $T_i := G[V \cap \Phi_i]$, $T_i$ is a balanced $d$-ary tree\footnote{A rooted tree where each vertex, except for the leaves, has $d$ children.} with $n' \in \Omega(\sqrt{\log n})$ vertices, degree $d \in \Omega(m)$ and height $h \in \Omega(\log_m\log n')$. In particular, for each such tree $T_i$ with root $u$\footnote{The root node $u$ is the vertex with graph distance $d_G(u,w) = h$ to each leaf $w \in V(T_i)$.}, there exists a vertex $v \in V(H)$ in the giant component $H$ of $G$ such that $\{u,v\} \in E(H)$ is the unique edge between $T_i$ and $G \setminus V(T_i)$.
\end{restatable}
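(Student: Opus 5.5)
We build the trees explicitly inside narrow angular sectors near the boundary, show that one sector realises a tree with probability $n^{-o(1)}$, and then use independence across polynomially many disjoint sectors.

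\textbf{Step 1: the geometric template.} Fix $d \approx m$ and height $h$ with $d^h \le \sqrt{\log n}$. Place the root box at radius $r_0 := R - \ell_0$ with $\ell_0 \approx \log\log n$. Level-$i$ boxes lie at radius $r_i := r_0 + 2i\log d + c$, capped so that $r_h \le R$.

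The children of a level-$i$ vertex $u$ occupy $d$ boxes. Each box has small radial thickness, and the boxes sit at angular offsets spaced evenly within $\tfrac12\theta_R(r_i,r_{i+1}) \approx \tfrac12 e^{(R-r_i-r_{i+1})/2}$ around $\varphi(u)$. By \Cref{lem:max-angle}, every child box is then within distance $R$ of its parent. Siblings at common radius $r_{i+1}$ need angular separation larger than $\pi e^{(R-2r_{i+1})/2}$. The available spacing is $e^{(R-r_i-r_{i+1})/2}/d = e^{(R-2r_{i+1})/2}\,e^{(r_{i+1}-r_i)/2}/d$, and this exceeds the required separation once $r_{i+1}-r_i \ge 2\log d + 2\log(2\pi)$. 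Vertices in different subtrees, including cousins and uncles, are separated because the subtree angular spans shrink geometrically; the uncle–nephew case is checked directly with \Cref{lem:max-angle}, using \Cref{rmk:theta-monotonicity} to reduce everything to the worst radii.

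The total angular width of the sector $\Phi$ is $\phi = O(\theta_R(r_0,r_1)) = O(\sqrt{\log n}\cdot \mathrm{poly}(d)/n)$. This is $o(\log n/n)$ as required, after choosing $\ell_0$ and the box sizes suitably. Choosing $h \approx \log_d \sqrt{\log n}$ gives $n' = \Theta(\sqrt{\log n})$.

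\textbf{Step 2: a sector is nice with probability $n^{-o(1)}$.} Call $\Phi$ \emph{nice} if each box contains exactly one vertex and $\Phi$ minus the boxes is empty. By Poissonisation these events are independent. Each box is chosen to have mean $\lambda_i \ge c$ for a constant $c$, so it holds exactly one point with probability $\lambda_i e^{-\lambda_i} \ge \min(c e^{-c}, e^{-\lambda_i})$. Multiplying, the probability of niceness is at least
\[
p^{n'}\cdot e^{-n\mu(\Phi)} = e^{-O(\sqrt{\log n})}\cdot e^{-o(\log n)} = n^{-o(1)},
\]
exactly as in the technical overview.

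\textbf{Step 3: buffer and connection to the giant component.} Embed $\Phi$ in a buffer sector $\Psi$ of width $\psi = \phi + 2\theta$. Here $\theta$ is the largest angle over which a tree vertex could see a vertex of radius at least $R/2 + \omega(1)$; it is $n^{-1/2+o(1)}$ at most, but the relevant region is the part of $\Psi\setminus\Phi$ at radius between $R/2$ and $R$. We require this region to be empty except for one prescribed small box adjacent to the root, placed at radius about $r_0 - O(1)$ and angularly just outside $\Phi$. That box should contain exactly one vertex $v$.

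The expected mass of the required empty region is the delicate part. For a tree vertex at radius $r$, the points it can see at radius $s$ span angle about $e^{(R-r-s)/2}$. Integrating the density $e^{-\alpha(R-s)}$ over $s$ gives mass $O(n\,e^{-(r - R/2)\cdot\text{something}}\cdots)$, and I expect this to be $o(\log n)$ for $r \ge R - \log\log n$ because $\alpha > 1/2$. So the buffer event costs a further factor $n^{-o(1)}$.

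Separately, we use the a.a.s.\ fact that $\B_0(R/2 - \omega(1))$, or a suitable inner disk, is empty. Then no central vertex sees the tree and buffers are independent. The vertex $v$ must lie in the giant component $H$. For this I would choose $v$ of radius at most $R - \log\log n - O(1)$ and invoke the known result that w.h.p.\ all vertices of radius $R - \omega(1)$, or at least of radius below a suitable threshold, belong to $H$ (cf.\ \cite{fk-dhrg-18,km-slcrhg-19}). Alternatively, I would let $v$ see an adjacent $\Psi$-external inner vertex.

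\textbf{Step 4: counting.} There are $n^{1-o(1)}$ disjoint $\Psi$-sectors. Their events are independent given the empty inner disk, and each succeeds with probability $n^{-o(1)}$. A Chernoff bound therefore yields $n^{1-o(1)} = n^{\Omega(1)}$ good sectors a.a.s.

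\textbf{Main obstacle.} I expect the main obstacle to be Step 3: controlling the emptiness of the buffer while also guaranteeing that $v$ lies in $H$. The giant-component membership of a specific vertex has to be decoupled from the conditioning on the emptiness of $\Psi$.
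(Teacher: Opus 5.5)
Your Steps 1, 2 and 4 follow the paper's architecture: boxes near the boundary, niceness with probability $n^{-o(1)}$, disjoint buffer sectors, an empty inner disk, and a Chernoff bound. Step 3, however, has a genuine gap, and the fact you plan to invoke there is false.

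\textbf{The attachment vertex $v$ is not deep enough.} You put $v$ at radius $r_0-O(1)$, i.e.\ in layer $\approx\log\log n$, and appeal to ``w.h.p.\ every vertex of radius $R-\omega(1)$ lies in $H$''. Nothing of this kind holds at that depth. A vertex in layer $\log\log n$ has expected degree $\Theta(\sqrt{\log n})$, so it is isolated with probability $e^{-\Theta(\sqrt{\log n})}=n^{-o(1)}$. That layer holds $\Theta(n/\log^{\alpha}n)$ vertices, so $n^{1-o(1)}$ of them are isolated in expectation.

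A per-vertex $1-o(1)$ bound would not transfer directly either. The emptiness you condition on is a decreasing event and ``connected to the centre'' is increasing, so the two are negatively correlated. Routing $v$ to some vertex outside $\Psi$ instead reintroduces the cross-sector dependence that Step 4 must avoid.

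The missing idea is to attach the root to a much \emph{deeper} vertex. The paper takes $v$ in layer $\overline{\ell}=\lceil 2\log\log n/(1-\alpha)\rceil$. There a vertex has $\Theta(e^{(1-\alpha)\overline{\ell}})=\Theta(\log^2 n)$ expected neighbours further inside. Hence the \emph{global} event ``every vertex of layer $\overline{\ell}$ lies in $H$'' holds a.a.s.: Krohmer's thesis, Lemma~5.3, gives the connection to $\B_0(R/2)$, and Bode--Fountoulakis--M\"uller, Theorem~1.4, gives $\B_0(R/2)\subseteq H$. This event is intersected with the other a.a.s.\ events by a plain union bound, so no decoupling from the local conditioning is needed.

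The price is paid in the geometry. The target region for $v$ lies in layer $\overline{\ell}$, on the side of the root opposite to all descendants. Its angular offset from the root is between $\tfrac12$ and $1$ times $e^{C/2}\log^{1/2+1/(1-\alpha)}n/n$. The root then sees all of it, while no descendant sees any of it, because descendants' visibility angles are smaller by a factor of at least $m$. Its Poisson mean lies between $\log^{-O(1)}n$ and $O(\sqrt{\log n})$, so it contains exactly one vertex with probability $n^{-o(1)}$.

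\textbf{The inner disk and buffer width.} $\B_0(R/2-\omega(1))$ is not empty a.a.s.: $n\mu(\B_0(R/2-g))=\Theta(n^{1-\alpha}e^{-\alpha g})$, which diverges for every $g=o(\log n)$. The largest a.a.s.-empty inner disk is $\B_0(r^*)$ with $r^*=R-\log n/\alpha-\tfrac12\log\log n$. A tree vertex can see points just outside it at angular distance up to $\Theta(n^{1/(2\alpha)-1}\log^{3/4}n)$, which is polynomially larger than your $\theta=n^{-1/2+o(1)}$. With your buffer width, nothing in your argument excludes tree neighbours outside $\Psi$ at radii between $r^*$ and $R/2$. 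So neither the unique-edge property nor the independence of the sector events follows. The paper's fix is $\psi\asymp n^{1/(2\alpha)}\log n/n$, leaving $\asymp n^{1-1/(2\alpha)}/\log n$ sectors. This is polynomial exactly because $\alpha>1/2$, and it is not $n^{1-o(1)}$.

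\textbf{The emptiness cost must be summed level by level.} Bounding each of the $n'\approx\sqrt{\log n}$ tree vertices by the root's expected degree $\Theta(\sqrt{\log n})$ gives $\Theta(\log n)$. That is a factor $n^{-\Theta(1)}$, which need not leave $n^{\Omega(1)}$ successes. Summing instead $d^i$ vertices of expected degree $\Theta(e^{\ell_i/2})$ gives at most $O(\sqrt{\log n})$ per level, hence $O(h\sqrt{\log n})=o(\log n)$. With the paper's $d\le m/10000$ and layer gap $2\log m$, this is even a geometric series totalling $O(\sqrt{\log n})$.
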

The parameter $m$ of \Cref{pro:d-regular-trees} can be tuned to adjust the degree $d$ and also implicitly the height $h$ for a desired $d$-ary tree. For example, setting $m = \sqrt{\log n}$ gives a "star graph" with degree $\Omega(\sqrt{\log n})$ and height $1$. For our purposes, we will set $m = \log\log n$ later on in~\Cref{sec:lower-bound-corollary}. 

To prove \Cref{pro:d-regular-trees}, we partition the hyperbolic disk into $k$-many sectors where $k$ is polynomial in $n$. For polynomially many such sectors, we reveal that they contain a $d$-ary tree with the parameters $n', d,$ and $h$. The rough idea of how we find this structure goes as follows. We fix a sector $\Phi$ with angle $\phi$ and aim to find a root node $u$ located in layer $\Layer{\ell_0}$ intersecting sector $\Phi$ (where $\ell_0 \approx \log\log n$; see also \Cref{fig:d-regular-trees}a). Next we consider $d$ equally sized boxes in a layer $\Layer{\ell_1}$ with $\ell_1 < \ell_0$, where each box has width $\phi_{\ell_1}$ and height $1$. We choose these parameters such that they fulfil the following property: if a vertex $v$ lies in a box in layer $\Layer{\ell_1}$, then it is adjacent to the root node $u$ and $v$ has distance at least $R$ to any point in layer $\Layer{\ell_1}$ if the angular distance is larger than $\phi_{\ell_1}$ (see \Cref{fig:boxes} for a visualisation). 

For each box in layer $\Layer{\ell_1}$, we find exactly one vertex such that the degree of $u$ is $d$ ($u$ has $d$ children). Moreover, for any pair of boxes in $\Layer{\ell_1}$, by our parameter choice $\phi_{\ell_1}$, any pair of ``children'' of $u$ has distance at least $R$ and thus does not share an edge. Then we recursively build the tree by ``assigning'' any vertex $v$ in layer $\Layer{\ell_1}$ a set of $d$ boxes in a layer $\Layer{\ell_2}$, such that each box contains a vertex. We repeat this for each vertex in the boxes in $\Layer{\ell_2}$, obtaining boxes in layer $\Layer{\ell_3}$ and so on until we have boxes in layer $\Layer{\ell_h}$, yielding a tree of height $h$. To prove that this indeed yields the sought-after $d$-ary tree structure, we first show that the graph resulting from the geometric embedding, given that we have exactly one vertex in each assigned box while the rest of the sector is empty, indeed results in a $d$-ary tree \Cref{lem:tree-geometry}. Afterwards, we show that this event occurs with a small but not polynomial-decaying probability (\Cref{lem:tree-prob}). This, in conjunction with the event that all vertices used to build the tree have no further neighbours, occurs for a fixed, slightly larger sector with probability $n^{-o(1)}$. Since we have polynomial many sectors, there are, in expectation, $ n^{\Omega(1)}$ many sectors where the desired $d$-ary tree occurs. This, the bridge of the root to the giant component and the concentration result are addressed in the proof of \Cref{pro:d-regular-trees}.

To make this formal, we first define for any vertex $u$ the boxes where we wish to find exactly one vertex each, representing the children of $u$ (see also \Cref{fig:boxes}b for a sketch of the definition).
\begin{figure}[t]
    \centering \includegraphics[height=0.2\textheight]{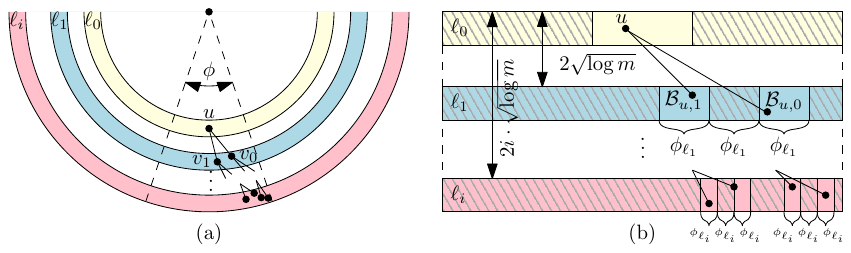}
    \caption{Sketch of our tree construction with degree $d=2$. (a) Layer $\Layer{\ell_0}$ contains the root vertex $u$ which has children $v_0$ and $v_1$ in layer $\Layer{\ell_1}$. (b) Illustration of the boxes $\boxbelow{u}{0}$ and $\boxbelow{u}{1}$ of $u$. A sector with angle $\phi$ is nice if the hatched area is empty and each box contains exactly one vertex.}
    \label{fig:boxes}
\end{figure}
\begin{definition}[Boxes]\label{def:boxes}
    Let $\mathcal{D}_R$ be a hyperbolic disk with radius $R = 2\log n + C$ and $10000\leq m \leq \sqrt{\log n}$. Moreover, define
    \begin{itemize}
    
        \item (root layer) $\rootlayer := \lceil \log\log n \rceil$,
        \item (layer of tree level $i$) $\ell_i :=  \rootlayer - \lceil 2i \cdot {\log m} \rceil$,
        \item (box width in tree level $i$) $\phi_{\ell_i} := \frac{100e^{\ell_{i} + C/2}}{n}$,
        \item (height of tree) $h := \lfloor{\log\log n}/(2\log m) \rfloor$ and
        \item (degree in tree) $d := \lfloor m/10000\rfloor$.
        
    \end{itemize}
    Consider any $i \in [h]$ and a vertex $u \in V_{\ell_i}$. Then for every $j \in [d]$, the $j$-th box of vertex $u$ is defined by the set of points
    \begin{align*}
        \boxbelow{u}{j} := \left\{x \in \Layer{\ell_{i+1}} : \varphi(u) + \frac{e^{\ell_i + C/2}}{10 n\cdot m} - (2j+1) \cdot \phi_{\ell_{i+1}} \leq\varphi(x) \leq \varphi(u) +\frac{e^{\ell_i + C/2}}{10 n\cdot m} - 2j \cdot \phi_{\ell_{i+1}} \right\}.
    \end{align*}
\end{definition}
Before embarking on the proof of our desired geometric properties, let us give some further intuition for our choice of parameters. Placing the "first" box of $u$ at angular distance $\approx \frac{e^{\ell_i}}{n\cdot m}$ ensures by \Cref{lem:max-angle} that $u$ "barely" has an edge to a vertex in this box. Then, "moving" the subsequent boxes of $u$ by angular difference $\approx \frac{e^{\ell_i}}{n}$ ensures that the hyperbolic distance between $u$ and any vertex contained in one of its boxes is at most $R$ as well. Moreover, by ``skipping'' every second box and by our choice $\phi_{\ell_i}$ we achieve that any pair of vertices in two different boxes on the same level does not have an edge. The multiplicative constant factors of $\phi_{\ell_i}$ and $d$ are chosen for convenience so that these properties are fulfilled. This gives us the desired property of a tree that $u$ is a ``parent'' to all vertices in its ``own'' boxes $\boxbelow{u}{j}$ while vertices that lie in boxes of $u$ are the ``children'' of $u$ and do not have an edge among each other.

To formalise our construction, we consider \Cref{alg:nice-sector}, which defines a \emph{nice} sector. We shall prove that the construction indeed gives a $d$-ary tree in a sector $\Phi$\footnote{The value for $m$ in \Cref{alg:nice-sector} is here the same as the parameter $m$ in \Cref{pro:d-regular-trees}.} (see also \Cref{fig:boxes}).
\begin{algorithm}
\caption{Nice sector $\Phi$}
\begin{algorithmic}[1]\label{alg:nice-sector}

\Require Sector $\Phi \subseteq \mathcal{D}_R$ with bisector $\varphi$ and angular width $\phi = \frac{5 \log n}{n\log m}$.

\State \textbf{assert} $|V_{\ell_0} \cap \Phi'| = 1$ where $\Phi' = \{\Phi \in \disk : \varphi - \phi/5 \leq \varphi(x) \leq \varphi +  \phi/5\} $ //Check for root node

\State $U \gets \{V_{\ell_0} \cap \Phi'\}$ //Add root node
\State $U_{all} \gets \emptyset$ //All vertices of the tree

\For{$i = 0$ to $h-1$} //Iterate through the levels of the tree
\State $U_{\text{all}} \gets U_{\text{all}} \cup U$ //Update the set of all vertices in
    \State $U_{\text{next}} \gets \emptyset$ //Next level of the tree
    \For{each $v \in U$} //Iterate through the vertices of a level
        \For{$j = 0$ to $d-1$} //Iterate through the boxes of a vertex
        \State \textbf{assert} $|V \cap \boxbelow{v}{j}| = 1$ //Check that there is exactly one vertex in each box of $v$
            \State $U_{\text{next}} \gets U_{\text{next}} \cup (V \cap \boxbelow{v}{j})$ //Add vertex of the box
        \EndFor
    \EndFor
    \State $U \gets U_{\text{next}}$
\EndFor

\State \textbf{assert} $(V \cap \Phi) \setminus U_{\text{all}} = \emptyset$ //Check if sector is empty except for the tree

\end{algorithmic}
\end{algorithm}
\begin{lemma}[$d$-ary tree geometry]\label{lem:tree-geometry}
    Let $m \in \omega(1)$ and $\Phi \subseteq \mathcal{D}_R$ be a sector with angular width $\phi = \frac{5 \log n}{n\cdot \log m}$. Then, if $\Phi$ is nice according to \Cref{alg:nice-sector}, the induced sub-graph $G[V \cap \Phi]$ is a $d$-ary tree with degree $d \in \Omega(m)$, height $h \in \Omega(\log\log n/ \log m)$ and $n' \in \Omega(\sqrt{\log n})$ vertices.
\end{lemma}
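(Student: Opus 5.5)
Since $\Phi$ is nice, $V\cap\Phi$ consists of exactly the root $u$ together with one vertex in each box generated by \Cref{alg:nice-sector}. The plan is therefore to determine all edges among these vertices from their radii and angles alone, using \Cref{lem:max-angle} and \Cref{rmk:theta-monotonicity}. It suffices to show three things: (i) every vertex $v\in V_{\ell_i}$ of the construction is adjacent to the vertex in each of its boxes $\boxbelow{v}{j}$; (ii) no other pair of tree vertices is adjacent; (iii) every box lies inside $\Phi$. Together these say that $G[V\cap\Phi]$ is exactly the prescribed $d$-ary tree.

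After that, the parameter bounds are bookkeeping. We have $d=\lfloor m/10000\rfloor\in\Omega(m)$ and $h=\lfloor\log\log n/(2\log m)\rfloor\in\Omega(\log\log n/\log m)$. The number of vertices is $n'=\sum_{i\le h}d^i\approx d^h$. Since $d\approx m/10^4$ and $h\approx \log\log n/(2\log m)$, we get $d^h = e^{(1-o(1))\frac12\log\log n}$, which is roughly $\sqrt{\log n}$. I would read the claimed $\Omega(\sqrt{\log n})$ up to this $(1-o(1))$ exponent, adjusting $h$ by $\pm1$ or tracking the constants if needed. Note that all layers satisfy $\ell_i\ge \ell_0-2h\log m-h\ge 0$, so every layer index is valid.

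\textbf{Step (i).} Let $v\in\Layer{\ell_i}$ and $x\in\boxbelow{v}{j}\subseteq\Layer{\ell_{i+1}}$. Then $r(v)+r(x)\ge 2R-\ell_i-\ell_{i+1}-2$, so by \Cref{lem:max-angle}
\[
\theta_R\ge e^{(\ell_i+\ell_{i+1})/2-R/2}\ge \frac{e^{\ell_i+C/2}}{m\,n}\cdot e^{-O(1)}\cdot e^{-C}.
\]
I would verify this against the rough form $\approx e^{\ell_i}/(nm)$, since $e^{\ell_{i+1}}\approx e^{\ell_i}/m^2$ and $e^{R/2}= n e^{C/2}$. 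On the other side, the angular distance $\angulardist{v}{x}$ is at most the offset $e^{\ell_i+C/2}/(10nm)$ plus $(2d+1)\phi_{\ell_{i+1}}\approx 200 d\,e^{\ell_{i+1}+C/2}/n\le e^{\ell_i+C/2}/(50nm)$, using $d\le m/10^4$. The constants $10$, $100$ and $10^4$ must make this fit below the lower bound for $\theta_R$. I expect this to be the main obstacle: the constants have to be reconciled carefully, including the $\lceil\cdot\rceil$ in $\ell_i$ and the layer thickness $1$, which cost factors $e^{\pm1}$ or $e^{\pm 1/2}$. I would first try to absorb these into the slack between the factor $10$ in the offset and the factor $1/50$ coming from $d\le m/10^4$.

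\textbf{Step (ii).} I would handle the non-edges in three cases. First, siblings in boxes of the same parent are separated by a gap of at least $\phi_{\ell_{i+1}}=100e^{\ell_{i+1}+C/2}/n$. Both have radius at least $R-\ell_{i+1}-1$, so by \Cref{lem:max-angle} $\theta_R\le \pi e^{\ell_{i+1}+1-R/2}<\phi_{\ell_{i+1}}$, and they are not adjacent. Second, vertices on the same level but with different parents are separated by at least the same gap, by induction on the nested angular intervals: each subtree occupies an interval of width $O(e^{\ell_i}/(nm))$ to one side of its root, and disjoint subtrees occupy disjoint intervals. Third, a vertex at level $i$ and a vertex at level $k\ge i+1$ that is not its child: I would bound $\theta_R\le\pi e^{(\ell_i+\ell_k)/2+1-R/2}$ and use the fact that the non-child lies beyond the maximal child offset, or in another subtree whose gap is dominated by $e^{\ell_i}/(nm)$. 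This needs a short case analysis, again with \Cref{rmk:theta-monotonicity}.

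\textbf{Step (iii).} The total angular extent of the tree below the root is at most $\sum_i e^{\ell_i+C/2}/(5nm)\le e^{\ell_0+O(1)}/(nm)=O(\log n/(nm))$. The root lies within $\phi/5$ of the bisector, where $\phi=5\log n/(n\log m)$, and $O(\log n/(nm))\ll \log n/(n\log m)$. Hence every box lies in $\Phi$, which completes the argument.
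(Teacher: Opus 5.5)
Your decomposition (containment of the boxes, parent--child edges from the lower bound in \Cref{lem:max-angle}, non-edges by sibling/ancestor/other-subtree cases, then counting) is the same as the paper's. However, the step you flag as the main obstacle is a genuine gap, and it cannot be closed by absorbing constants. The factor $e^{-C}$ in your own display is not an $O(1)$ ceiling loss. Using $\ell_i-2\log m-1<\ell_{i+1}<\ell_i-2\log m+1$ and $e^{-R/2}=e^{-C/2}/n$, \Cref{lem:max-angle} gives, for $v\in V_{\ell_i}$ and $x\in\boxbelow{v}{j}$,
\[
e^{-1/2}\,\frac{e^{\ell_i-C/2}}{nm}<\theta_R(r(v),r(x))<\pi e^{3/2}\,\frac{e^{\ell_i-C/2}}{nm}.
\]
Every point of $\boxbelow{v}{0}$, on the other hand, lies at angular distance $(\frac{1}{10}-O(\frac{1}{m}))\frac{e^{\ell_i+C/2}}{nm}$ from $v$. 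So your edge inequality needs $e^{C+1/2}\le 10$, and for $C\ge 6$ the vertex in $\boxbelow{v}{0}$ is provably \emph{not} adjacent to $v$.

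Your sibling inequality $\pi e^{\ell_{i+1}+1-R/2}<\phi_{\ell_{i+1}}$ has the mirror problem: it reads $\pi e\,e^{-C/2}<100\,e^{C/2}$. For $C\le -6$, two siblings in consecutive boxes are at angular distance at most $3\phi_{\ell_{i+1}}$ while $\theta_R\ge e^{\ell_{i+1}-C/2}/n$, so they are always adjacent. Hence, with \Cref{def:boxes} as written, the conclusion fails for $|C|$ large. The paper's proof asserts the same two inequalities ($\frac{e^{\ell_i+C/2}}{10nm}\le e^{(R-r(u)-r(x))/2}$ and $\phi_{\ell_i}>\theta_R(r(u),r(v))$) without justification, so it has the same hole.

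The repair is a sign flip. The natural angular unit at layer $\ell$ is $e^{\ell-R/2}=e^{\ell-C/2}/n$, which is the scale the tiling in \Cref{sec:geometric} uses. Replace $e^{\ell_i+C/2}$ by $e^{\ell_i-C/2}$ in both the offset and $\phi_{\ell_i}$. Then the edge test becomes $\frac{1}{10}<e^{-1/2}$ and the sibling test $100>\pi e$, both free of $C$. The remaining non-edge cases carry an extra factor $O(1/m)$, and containment is unaffected.

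In your third case, compare against the \emph{minimal} child offset, about $(\frac{1}{10}-\frac{2e}{100})\frac{e^{\ell_i-C/2}}{nm}$, not the maximal one. All boxes lie on one side of the parent and subtrees only extend further that way, but descendants of the child in $\boxbelow{u}{d-1}$ need not lie beyond the maximal offset. This is the paper's Case~1.

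On the vertex count you are right, and here the paper overclaims. Since $m\ge 10^4$, $\log(d^h)\le\frac{\log\log n}{2\log m}\bigl(\log m-\log(10^4)\bigr)$, so
\[
n'\le 2d^h\le 2\sqrt{\log n}\cdot\exp\Bigl(-\frac{\log(10^4)\,\log\log n}{2\log m}\Bigr).
\]
This is $o(\sqrt{\log n})$ whenever $\log m=o(\log\log n)$, e.g.\ for $m=\log\log n$ as used in \Cref{sec:lower-bound-corollary}. The paper's ``$n'\ge d^h\in\Omega(\sqrt{\log n})$'' is therefore unjustified. Changing $h$ by one cannot fix it, because the loss sits in the exponent and $h$ is capped by $\ell_h\ge 0$.

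Your estimate $d^h=(\log n)^{1/2-o(1)}$ is correct for $m=(\log n)^{o(1)}$. For polylogarithmic $m$ the floor in $h$ costs more: e.g.\ $m=(\log n)^{0.4}$ gives $h=1$ and $n'=\Theta((\log n)^{0.4})$. In general one only gets $n'\ge(\log n)^{1/4-o(1)}$. Either bound suffices downstream, where only $n'\ge\log^{c}n$ is used.

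Finally, $\ell_h\ge 0$ follows from $\lceil 2h\log m\rceil\le\lceil\log\log n\rceil=\ell_0$. Your chain through $\ell_0-2h\log m-h\ge 0$ is false as written.
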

\begin{proof}
    We show the desired statement as follows: first, we show that all boxes are within the sector $\Phi$. Then, we establish for any vertex $u \in V \cap \Phi$ that any point in a box of $u$ has distance at most $R$ to $u$. Note that this already suffices to prove that all desired edges for a tree exist such that every vertex has $d$ children (except for the leaves of the tree). Afterwards, we show that any vertex $u$ has a distance larger than $R$ to any point of a box that is not the box of $u$ according to \Cref{def:boxes}. Since the rest of the sector is empty, this ensures that we also have all non-edges. Using that we have $h$ layers through which we iterate in line 4 of \Cref{alg:nice-sector}, this yields a tree of depth $h$ where each vertex has $d$ children because of the loop in line 8. For the desired properties of our tree, $d$ and $h$ follow directly from \Cref{def:boxes} and the number of vertices is given by $n' \geq d^{h} \in \Omega(\sqrt{\log n})$.

    \paragraph{All boxes are contained in the sector.} By \Cref{def:boxes}, a vertex $u$ in layer $\Layer{\ell_i}$ has boxes with angular distance at most $\frac{e^{\ell_i + C/2}}{10 n\cdot m}$ since $d\cdot \phi_{\ell_{i+1}} \leq \exp(C/2 + \ell_i)/(100n\cdot m)$. Hence, the maximal angle that is spanned by any pair of points $x,y$ in two different boxes is at most 
    $$
    \angulardist{x}{y} \leq \sum_{i=0}^{h-1} \frac{e^{\ell_i + C/2}}{10 n\cdot m} \leq \Theta(1)\cdot \frac{e^{\ell_0}}{n\cdot m} \in o(\phi),  
    $$
where we used that $\ell_i =  \lceil \log\log n \rceil - \lceil 2i \cdot {\log m} \rceil$, $\phi = 5\cdot \log n / (n\cdot \log m)$ and $m \in \omega(1)$ in the last step. Thus, all boxes are contained in sector $\Phi$ with angular width $\phi$.

    \paragraph{Edges.} Recall that $R = 2\log n + C$. To show that all desired edges exist, consider any vertex $u \in V_{\ell_{i}}$. Note that any box $\boxbelow{u}{j}$ is in layer $\Layer{\ell_{i+1}}$. Hence, using that $d\cdot \phi_{\ell_{i+1}} \leq \exp(C/2 + \ell_i)/(100n\cdot m)$ it follows by the definition of $\boxbelow{u}{j}$ that the angular distance between $u$ and $x \in \boxbelow{u}{j}$ is at most 
    \begin{align*}
    \angulardist{u}{x} \leq \frac{e^{\ell_i + C/2}}{10 n\cdot m} \leq e^{(R - r(u) - r(x))/2} \leq \theta_R(r(u), r(x)),
    \end{align*}
using that $r(u) \leq R - \ell_i$, $r(v) \leq R - \ell_i + 2\log m + 1$ and \Cref{lem:max-angle} in conjunction with $\theta_R(r(u), r(x))$ being monotonically decreasing by \Cref{rmk:theta-monotonicity}. Thus, all desired edges exist.

\paragraph{Non-edges.} Next, let $u \in V_{\ell_i}$ and $v \in V_{\ell_j}$. W.l.o.g. let $\ell_i \geq \ell_j$. Then, if $v$ is not in a box of $u$, we have to show that $d_h(u,v) > R$ to finish the proof. To this end, we distinguish two cases.

\smallskip

\textbf{Case 1}~[Vertex $u$ is an ancestor of $v$]: Let $w$ be the vertex that is in the $(d-1)$-th box of $u$, i.e., $\boxbelow{u}{d-1}$ and note that, since $d\cdot \phi_{\ell_{i+1}} \leq \exp(C/2 + \ell_i)/(100n\cdot m)$,
\begin{align}\label{eq:lower-bound-angle}
    \angulardist{u}{v} \geq \angulardist{u}{w} > (1/10 -1/100)\frac{e^{C/2 + \ell_i}}{m\cdot n} ,
\end{align}
using the definition for the box $\boxbelow{u}{j}$ (\Cref{def:boxes}).
On the other hand, using that $u$ is an ancestor of $v$ but $v$ is not in a box of $u$, we observe that $\ell_i \geq \ell_j + 4\log m$. Thus, we obtain
 \begin{align}\label{eq:lower-bound-angle2}
     \theta_R(r(u), r(v)) \leq \Theta(1) \frac{e^{C/2 + \ell_i}}{m^2 \cdot n},
 \end{align}
by \Cref{lem:max-angle}. Hence, by combining \Cref{eq:lower-bound-angle} and \Cref{eq:lower-bound-angle2} we have by $m \in \omega(1)$
$$
\angulardist{u}{v} > \Theta(1)\frac{e^{C/2 + \ell_i}}{m\cdot n} > \Theta(1) \frac{e^{C/2 + \ell_i}}{m^2 \cdot n} \in \omega(\theta_R(r(u), r(v))),
$$
and there is no edge between $u$ and $v$ by \Cref{lem:max-angle}. This concludes our first case.

\smallskip

\textbf{Case 2}~[Vertex $u$ is not an ancestor of $v$]: Let $w$ be the vertex that is the lowest common ancestor of $u$ and $v$. First, consider the case that $w$ is the parent of both $u$ and $v$. In this case, $w \in V \cap \Layer{\ell_{i-1}}$ and as such, the minimal angular distance between $u$ and $v$, by \Cref{def:boxes}, is 
$$
\angulardist{u}{v} \geq \phi_{\ell_i} = \frac{100e^{\ell_{i} + C/2}}{n} > \theta_R(r(u), r(v)),
$$
where we applied \Cref{lem:max-angle} and $r(u), r(v) \geq R - \ell_i -2$. Hence, there is no edge between $u$ and $v$.

To finalise the case, consider now that $u$ and $v$ do not share the same parent such that $\ell_i > \ell_j$.
Let the lowest common ancestor $w$ be in layer $\Layer{\ell_k}$ and consider the two children $u'$ and $v'$ of $w$, (where both $u'$ and $v'$ are in layer $\Layer{\ell_{k+1}}$), that are ancestors of $u$ and $v$ respectively (possibly $u' = u$). Note that by $d\cdot \phi_{\ell_{i+1}} \leq \exp(C/2 + \ell_i)/(100n\cdot m)$ and using \Cref{def:boxes} it follows that
\begin{align}\label{eq:case2-part1}
    \angulardist{u'}{v'} \geq \phi_{\ell_{k+1}} = \frac{100 \cdot e^{\ell_k + C/2 }}{m^2 \cdot n}.
\end{align}
Then, using that $\varphi(u)\geq \varphi(u')$ and that $\varphi(v) \leq \varphi(v') + \sum_{t = j}^{k-1} \frac{e^{(C+ \ell_t + \ell_{t -1})/2}}{n}$ by \Cref{def:boxes}, it follows
\begin{align}\label{eq:case2-part2}
    \angulardist{u}{v} \geq \angulardist{u'}{v'} - \underbrace{\sum_{t = j}^{k-1} \frac{e^{(C+ \ell_t + \ell_{t -1})/2}}{n}}_{\bigO\left(\frac{e^{\ell_k}}{m^3 n}\right)}.
\end{align}
Plugging \Cref{eq:case2-part1} into \Cref{eq:case2-part2} and upper bounding the sum of exponentials by $\bigO(\exp{(\ell_k - 3\log m)} /n)$ we obtain via $m \in \omega(1)$
\begin{align*}
         \angulardist{u}{v} \geq \frac{(100 - o(1)) e^{\ell_k + C/2}}{m^2 \cdot n} >  \theta_R(r(u), r(v)),
\end{align*}
using that $\ell_k \geq \ell_i + 2\log m \geq \ell_j + 4\log m $ and $r(u) \geq R - \ell_i - 2, r(v) \geq R - \ell_j - 2$ in conjunction with \Cref{lem:max-angle}. Hence, there is no edge between $u$ and $v$ as desired. This concludes the case and the proof.
\end{proof}
We now prove that a fixed sector is nice with a small but not polynomial decaying probability.
\begin{lemma}[$d$-ary tree probability]\label{lem:tree-prob}
    Let $m \in \omega(1)$ and $m \leq \sqrt{\log n}$. Moreover, let $\Phi \subseteq \mathcal{D}_R$ be a sector with angular width $\phi = \frac{5 \log n}{n\log m}$. Then, $\Phi$ is nice according to \Cref{alg:nice-sector} with probability $n^{-o(1)}$.  
\end{lemma}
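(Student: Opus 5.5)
The plan is to use that the Poissonised model makes vertex counts in disjoint regions independent Poisson variables. \Cref{alg:nice-sector} is in effect a sequence of conditions on disjoint regions. The sector splits into the root window $\Phi'\cap\Layer{\ell_0}$, the boxes $\boxbelow{v}{j}$ at every level, and the remainder of $\Phi$.

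The difficulty is that the boxes depend on the random positions of their parents. To handle this, I will reveal the point process level by level: first the root window, then the boxes at level $1$ given the root, and so on. Conditionally on all vertices at levels $\le i$, the boxes at level $i+1$ are deterministic and pairwise disjoint. They are also disjoint from everything already revealed, since they lie in a different layer, and, by the non-edge and containment arguments of \Cref{lem:tree-geometry}, boxes of distinct parents are angularly separated. By the Poisson property, the conditional probability of the next level's event factorises over boxes.

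The lower bound is then the product of three factors.
\begin{enumerate}
\item \emph{Root.} We need $\Pr[|V_{\ell_0}\cap\Phi'|=1]=\lambda_0e^{-\lambda_0}$. By \Cref{lem:layer-properties}, $\lambda_0=n\mu(\Layer{\ell_0})\cdot\frac{2\phi/5}{2\pi}=\Theta(e^{-\alpha\ell_0}\log n/\log m)=\Theta(\log^{1-\alpha}n/\log m)$. Hence the factor is at least $e^{-O(\log n/\log m)}=n^{-o(1)}$, using $m=\omega(1)$.
\item \emph{Each box.} For a box in $\Layer{\ell_{i+1}}$, the expected number of points is $\lambda=n\mu(\Layer{\ell_{i+1}})\phi_{\ell_{i+1}}/(2\pi)=\Theta(e^{(1-\alpha)\ell_{i+1}})$. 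Since $\alpha<1$, this is $\Omega(1)$. Following the technical overview, $\Pr[X=1]=\lambda e^{-\lambda}\ge \min(c,e^{-\lambda})$ for a constant $c>0$: when $\lambda\ge1$ it is at least $e^{-\lambda}$, and when $\lambda=\Theta(1)$ it is a constant.
\item \emph{Remainder.} The rest of $\Phi$ must be empty, which has probability at least $e^{-n\mu(\Phi)}=e^{-\Theta(\log n/\log m)}=n^{-o(1)}$, since $n\mu(\Phi)=n\phi/(2\pi)$.
\end{enumerate}

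Multiplying, the box factors contribute at least $c^{n'}\exp(-\sum_{\text{boxes}}\lambda_B)$. Here $n'\le\sum_{i\le h}d^i=O(d^h)$, and $d^h\le m^{h}\le m^{\log\log n/(2\log m)}=\sqrt{\log n}$, so $c^{n'}=e^{-O(\sqrt{\log n})}=n^{-o(1)}$. The boxes are disjoint subsets of $\Phi$ (containment is from \Cref{lem:tree-geometry}), so $\sum\lambda_B\le n\mu(\Phi)=o(\log n)$. Multiplying this by the root and remainder factors, and noting that each is conditionally valid given the earlier revelations (by the disjointness above and the tower rule), gives $\Pr[\Phi\text{ nice}]\ge n^{-o(1)}$.

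I expect the main obstacle to be making the sequential conditioning rigorous. We must show that the remainder region and the boxes of all levels, determined adaptively, are disjoint from previously revealed regions, so that the Poisson independence applies at every stage. This means checking that boxes of different parents at the same level do not overlap. That follows from the angular-separation bounds already proven in the Non-edges part of \Cref{lem:tree-geometry}: the offsets of distinct subtrees are $\Omega(\phi_{\ell_{k+1}})$ and exceed the total angular spread of the descendants. A secondary point is verifying that $\lambda_B=\Omega(1)$, which uses $\ell_{i+1}\ge \ell_0-2h\log m-O(1)\ge0$.
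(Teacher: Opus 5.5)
Your proposal is correct. Its skeleton matches the paper's: a Poisson product over the root window, the boxes, and the empty remainder of $\Phi$. The key estimate on the box product, however, is different.

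The paper bounds the boxes of each level by $\exp\bigl(-\Theta(1)\, d^i e^{(1-\alpha)\ell_i}\bigr)$. It then uses $\alpha>1/2$, $e^{\ell_i/2}=O(\sqrt{\log n}/m^i)$ and $d\le m/10000$ to sum a geometric series, obtaining $e^{-\Theta(\sqrt{\log n})}$ for the boxes alone. Emptiness of the rest of $\Phi$ is added afterwards by monotonicity.

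You instead use $\lambda e^{-\lambda}\ge c\,e^{-\lambda}$ and pay $c^{n'}$ with $n'=O(\sqrt{\log n})$ for the number of boxes. You bound $\sum_B\lambda_B\le n\mu(\Phi)=O(\log n/\log m)$ because the boxes are disjoint subsets of $\Phi$. This is the argument sketched in the technical overview rather than the one in the formal proof.

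What your route buys:
\begin{itemize}
\item It needs only $\alpha<1$, not $\alpha>1/2$, and no slack constant in the intensity sum.
\item Your level-by-level revelation makes explicit why the adaptively placed boxes (they depend on the parents' random angles) can be treated as independent Poisson regions. The paper only asserts this (``any pair of boxes is disjoint'').
\item Because box intensities do not depend on the parents' angles, your bookkeeping in fact yields the exact identity $\Prob{\Phi \text{ is nice}}=\lambda_0\prod_B\lambda_B\cdot e^{-n\mu(\Phi)}$.
\end{itemize}

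The paper's route gives a sharper bound on the box part alone. The final order is the same, however, since the emptiness factor $e^{-\Theta(\log n/\log m)}$ dominates $e^{-\Theta(\sqrt{\log n})}$ when $m\le\sqrt{\log n}$.

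One small point to make explicit: $\lambda_0=\Theta(\log^{1-\alpha}n/\log m)=\omega(1)$, using $\log m\le\frac12\log\log n$. This is what justifies $\lambda_0e^{-\lambda_0}\ge e^{-\lambda_0}$ for the root factor.
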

\begin{proof}
 We show that all boxes of a nice sector according to \Cref{alg:nice-sector} have exactly one vertex with the desired probability $n^{-o(1)}$. To this end, fix any $i \in [h+1]\setminus\{0\}$ and consider any "non-root" layer $\Layer{\ell_i}$. Then, to fulfil the property of a nice sector according to \Cref{alg:nice-sector}, all $d^i$ boxes include exactly one vertex. Let $u$ be a vertex in layer $\Layer{\ell_{i-1}}$. Using \Cref{def:boxes} it holds for any single box in layer $\Layer{\ell_i}$ 
    \begin{align*}
        \mu(\boxbelow{u}{j}) = \frac{\phi_{\ell}}{2\pi} \cdot \mu(\Layer{\ell_i}) = \Theta(1)\frac{e^{\ell_i(1-\alpha)}}{n} \in \Omega(1/n),
    \end{align*}
    where we used \Cref{eq:layer_measure} and for the last step that $\ell_i \geq 0$. Thus, using the fact that the number of vertices in a box follows a Poisson point distribution, the probability that there is exactly one vertex in a box in layer $\Layer{\ell_i}$ is
    \begin{align}\label{eq:non-root-prob}
        p_i = n\cdot \mu(\boxbelow{u}{j}) \cdot e^{-n\cdot\mu(\boxbelow{u}{j})}\geq e^{\ell_i(1-\alpha)} \cdot \exp\left(-\Theta(1){e^{\ell_i(1-\alpha)}}\right).
    \end{align}
Hence, using that any pair of boxes is disjoint, the probability that all $d^i$ boxes in layer $\Layer{\ell_i}$ have exactly one vertex is given by
$$
\Prob{\mathcal{E}_{i}} = (p_i)^{d^i} \geq \exp\left(-\Theta(1){e^{\ell_i(1-\alpha)}}\right)^{d^i}.
$$
Next, we bound the probability that the desired "root" vertex in layer $\Layer{\ell_0}$ exists. Since the "box" of the "root" in \Cref{alg:nice-sector} has angle $\phi/5 = \log n /n\log(m)$ the probability that exactly one root vertex exists is by a Poisson distribution and \Cref{eq:layer_measure}
\begin{align}\label{eq:root-prob}
    p_0 = \frac{n\cdot\phi}{2\pi} \cdot \mu(\Layer{\ell_0})\cdot \exp{\left(-\frac{n\cdot\phi}{2\pi} \cdot \mu(\Layer{\ell_0})\right)} \geq \exp\left(-\Theta(1){e^{\ell_0(1-\alpha)}}\right).
\end{align}
since $m \leq \sqrt{\log n}$ and $\alpha < 1$. This implies that sector $\Phi$ has vertices that are required to be nice, i.e., there is the root vertex and every box has exactly one vertex as required in \Cref{alg:nice-sector}, with probability 
\begin{align*}
  \Prob{\mathcal{E}} \geq p_0 \cdot \prod_{i=1}^{h} \Pro{\mathcal{E}_{\ell_i}} \geq \prod_{i=0}^{h} \left((p_i)^{d^i}\right) \geq \prod_{i=0}^{h} \exp\left(-\Theta(1){e^{\ell_i(1-\alpha)}}\right)^{d^i},
\end{align*}
where we used \Cref{eq:non-root-prob} and \Cref{eq:root-prob} in the last step. Next, we use that $\alpha > 1/2$ which yields
\begin{align}\label{eq:probability-product}
  \Prob{\mathcal{E}}  \geq \prod_{i=0}^{h} \exp\left(-\Theta(1){d^i}{e^{\ell_i/2}}\right).
\end{align}
Since $\ell_i \leq \log\log n - 2i\cdot\log m + 1$, we obtain
\begin{align*}\
  \Prob{\mathcal{E}}  \geq \prod_{i=0}^{h} \exp\left(-\Theta(1){d^i}\cdot \sqrt{\log n}/m^i\right) \geq \prod_{i=0}^{h} \exp\left(-\Theta(1)\cdot \sqrt{\log n}/10000^i\right),
\end{align*}
where we used $d^i \leq (m/10000)^i$. Applying a geometric series, the above yields
\begin{align*}\
  \Prob{\mathcal{E}}  \geq \exp\left(-\Theta(1)\cdot\sum_{i=0}^{h} \sqrt{\log n}/10000^i\right) \geq e^{-\Theta(1)\sqrt{\log n}} \in n^{-o(1)},
\end{align*}
and we conclude that $\Phi$ has all vertices that are required to be nice with a probability that is decaying more slowly than polynomial in $n$. 

Finally, we lift this to the statement that the sector is also nice with probability $n^{-o(1)}$: we obtain a nice sector by showing that the sector, except for the boxes, is empty with essentially the same probability. To see this, note that by our choice of $\phi \in \bigO(\log n /(m n))$ and using that the number of vertices in sector $\Phi$ follow a Poisson distribution with expectation $\E{|V \cap \Phi|} = n \phi /2\pi$, the probability that the sector is empty is given by 
$$
\Prob{V \cap \Phi = \emptyset} = e^{-\E{|V \cap \Phi|}} \in e^{-\bigO(\log n /m)} \in n^{-o(1)},
$$
using $m \in \omega(1)$. Hence, given any area $\mathcal{S} \subseteq \disk$, the probability that $\Phi\setminus \mathcal{S}$ is empty is at least $n^{-o(1)}$ (using the fact that the probability of an area being empty is monotonically increasing if the considered area is shrinking). Then, let $\mathcal{S}$ be the area that we need to reveal for event $\mathcal{E}$ and it follows that a sector $\Phi$ is nice  with probability
$$\Prob{\Phi \text{ is nice}} \geq \Prob{\mathcal{E}}\cdot\Prob{V \cap (\Phi\setminus \mathcal{S}) = \emptyset | \mathcal{E}} \geq \Prob{\mathcal{E}}\cdot\Prob{V \cap \Phi = \emptyset}  \in n^{-o(1)},$$
as claimed.
\end{proof}
Next, we ``embed'' a nice sector in a larger ``buffer'' sector (see \Cref{fig:d-regular-trees}b for a sketch). This allows us, given that the ``buffer sector'' is large enough, to get independent probabilities among nice sectors to be disconnected from any other vertex in the disk. We use this to show that (polynomially) many $d$-ary trees exist and that they basically form their own component (except for a ``cut edge'' that goes from the root vertex of the tree to the giant, which is addressed in the ``in particular'' part of the statement ).
\theoremregulartree*
\begin{proof}
Throughout the proof, we partition the disk $\mathcal{D}_R$ into $k = \big\lceil \frac{n^{1-\frac{1}{2\alpha}}}{\log n}\rceil$ sectors such that for $i \in [k]$, the angle $\psi$ of any such sector $\Psi_i$ is $\psi \in \Theta\left(\frac{n^{\frac{1}{2\alpha}}\log n}{n}\right)$; see red sector in~\Cref{fig:d-regular-trees}b for an illustration. Moreover, let $r^* := R - \log n/\alpha - \log\log n /2$ and throughout the proof, we consider the area $\B_0(r^*)$; see hatched area in \Cref{fig:d-regular-trees}.

We prove our theorem in two steps. In the first step of the proof, we show that any sector $\Psi_i\setminus\B_0(r^*)$ contains a desired $d$-ary tree $T_i$ with probability $n^{-o(1)}$ if it contains a nice sector $\Phi_i$; blue sector in \Cref{fig:d-regular-trees}. This is addressed in Claim~\ref{claim:indicator}. 

Then, in a second step, we show that we have $n^{\Omega(1)}$ "buffer sectors" $\Psi_i$ \wehp that contain a tree $T_i$ and that if the area $\B_0(r^*)$ is empty, then the neighbourhood of any $d$-ary tree $T_i$ does not contain a vertex of any vertex of any other "buffer sector" $\Psi_j$. Showing that $\B_0(r^*)$ is empty \aas then gives the desired probabilistic guarantee of any $T_i$ not having any undesired edges. Finally, we finish the proof by showing that the edge of the "root" in $T_i$  which exists due to Claim~\ref{claim:indicator} connects $T_i$ to the giant in the desired fashion of the "in particular" statement of our theorem.

Now, for Claim~\ref{claim:indicator}, we consider any $\Psi_i$ and let $\Phi_i \subset \Psi_i$ be the sector of width $\phi = \frac{5\log n}{n\log m} \in o(\log n / n)$ as demanded in \Cref{lem:tree-prob} that has the same bisector as $\Psi_i$. In particular we consider a sector $\Psi_i$ where the corresponding sector $\Phi_i$ is nice. We introduce the following event $\mathcal{E}_i$ which says that no vertex in a nice sector $\Phi_i$ has an edge to a vertex in $\Psi_i \setminus \Phi_i$ except for the ``root'' vertex $u \in V_{\ell_0} \cap \Phi_i$ which has exactly one edge to a "special" vertex $v$ in $\Psi_i\setminus \Phi_i$ where $v$ lies in layer $\overline{\ell} := \lceil 2\log\log n/(1-\alpha)\rceil$. We later show that $v$ is part of the giant component.
\begin{enumerate}
\item[\textbf{Event $\mathcal{E}_i$}:]\label{item:eventful}
 let $U: = V \cap (\Phi_i \setminus \Layer{\ell_0})$, i.e., all vertices in $\Phi_i$ except for the ``root'' vertex in layer $\ell_0$, and let $\mathcal{E}_U:= \{$No vertex in $U$ has any neighbour in $\Psi_i \setminus (\Phi_i \cup \B_0(r^*))\}$. Moreover, let $u \in V \cap (\Phi_i \cap \Layer{\ell_0})$, i.e., the root vertex of $\Phi_i$, and let $\mathcal{E}_{\text{root}}:= \{\text{Vertex } u \text{ has exactly one neighbour $v$ in  } \Psi_i \setminus (\Phi_i \cup \B_0(r^*)), \text{ and } v \text{ is in layer } \Layer{\overline{\ell}}\}$.
\end{enumerate}
We define $\mathcal{E}_i := \mathcal{E}_U \cap \mathcal{E}_{\text{root}}$ and we will show that this event holds with probability $ n^{-o(1)}$. 
\begin{claim}\label{claim:indicator}
  $\Prob{\mathcal{E}_i \text{ }|\text{ } \Phi_i \text{ is nice}} \in n^{-o(1)}$. 
\end{claim}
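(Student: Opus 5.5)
The idea is to reduce $\mathcal{E}_i$ to an emptiness event for a suitable ``forbidden'' region $\mathcal{F}\subseteq\Psi_i\setminus(\Phi_i\cup\B_0(r^*))$, together with a ``bridge'' event for a small region $\mathcal{Q}\subseteq\Layer{\overline\ell}$. Since the Poisson process on $\Psi_i\setminus\Phi_i$ is independent of the configuration inside $\Phi_i$, we may condition on $\Phi_i$ being nice and on the exact positions of the tree vertices. Then
\[
\Prob{\mathcal{E}_i \mid \Phi_i \text{ nice}} \geq \Prob{|V\cap\mathcal{Q}|=1}\cdot\Prob{V\cap\mathcal{F}=\emptyset}.
\]
To make the regions deterministic, I define them using the worst-case radii inside the boxes. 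Each tree vertex in layer $\Layer{\ell_j}$ has radius at least $R-\ell_j-1$, so by \Cref{lem:max-angle} and \Cref{rmk:theta-monotonicity} its neighbourhood is contained in the union of the deterministic sets $N^+_x:=\{y: \angulardist{x}{y}\le \pi e^{(R-r_{\min}(x)-r(y))/2}\}$.

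\textbf{Forbidden region.} Let $\mathcal{F}$ be the union of $N^+_x$ over all tree positions $x$, intersected with $\Psi_i\setminus(\Phi_i\cup\B_0(r^*))$ and with the region $\mathcal{Q}$ removed. Every tree vertex lies within angle $o(\phi)$ of the bisector (see the proof of \Cref{lem:tree-geometry}), and $\Phi_i$ has half-width $\phi/2$. So a point $y$ in layer $\Layer{\ell}$ lies in $\mathcal{F}$ only if $\pi e^{(\ell_0+\ell+O(1))/2}/n\ge(1/2-o(1))\phi$. Since $\phi=\Theta(\log n/(n\log m))$ and $e^{\ell_0/2}\approx\sqrt{\log n}$, this forces $e^{\ell/2}\gtrsim\sqrt{\log n}/\log m$. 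Hence only layers $\ell\ge \ell_0-O(\log\log m)$ contribute.

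I will bound $n\mu(\mathcal{F})$ layer by layer. In layer $\Layer{\ell}$, the angular extent of $\mathcal{F}$ is at most $O(e^{(\ell_0+\ell)/2}/n)$, coming from the root; the deeper tree vertices contribute less. The mass density of that layer is $\Theta(e^{-\alpha\ell})$ by \Cref{eq:layer_measure}. This gives
\[
n\mu(\mathcal{F})\lesssim \sum_{\ell} e^{\ell_0/2} e^{(1/2-\alpha)\ell}=O\!\left(\sqrt{\log n}\, e^{(1/2-\alpha)\ell_{\min}}\right)=O(\sqrt{\log n}),
\]
where the geometric sum converges because $\alpha>1/2$. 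Including the $d^j$ vertices of level $j$ adds $d^j e^{\ell_j/2}\le\sqrt{\log n}/10000^j$ per level, which sums to $O(\sqrt{\log n})$ by the same geometric argument as in the proof of \Cref{lem:tree-prob}. Therefore $\Prob{V\cap\mathcal{F}=\emptyset}\ge e^{-O(\sqrt{\log n})}=n^{-o(1)}$.

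\textbf{Bridge vertex.} Choose $\mathcal{Q}$ as a box in $\Layer{\overline\ell}$ just outside $\Phi_i$, within the angular reach of the root and out of reach of every other tree vertex. At radius $R-\overline\ell$, the root's reach is $\Theta(e^{(\ell_0+\overline\ell)/2}/n)$. With $\overline\ell=\Theta(\log\log n)$ chosen as in the statement, this reach is $\omega(\phi)$ while still $o(\psi)$. Deeper tree vertices have reach smaller by a factor of at least $m$, so they do not reach $\mathcal{Q}$ as long as $\mathcal{Q}$ is placed between these two thresholds. I take $\mathcal{Q}$ to have expected occupancy $\Theta(1)$, or, if that forces it too narrow, any width with $n\mu(\mathcal{Q})=O(\sqrt{\log n})$. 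Then $\Prob{|V\cap\mathcal{Q}|=1}\ge e^{-O(\sqrt{\log n})}$, exactly as in the box argument of \Cref{lem:tree-prob}. The rest of the root's reach in $\Layer{\overline\ell}$ and deeper layers is already included in $\mathcal{F}$.

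Finally, I have to check that $\mathcal{F}$ stays inside $\Psi_i$ outside $\B_0(r^*)$. At radius $\ge r^*$, the root's reach is $O(e^{\ell_0/2}n^{1/(2\alpha)}\sqrt{\log n}/n)$, and this needs to be $o(\psi)$. I expect this comparison, and the careful counting of $n\mu(\mathcal{F})$ near layer $\ell_{\min}$, to be the main obstacle. The first thing I would try is to exploit the $\log n$ factor in $\psi$ against the $e^{\ell_0/2}\sqrt{\log n}$ factor, and, if it is tight, to adjust $r^*$ by a lower-order term. Multiplying the two independent probabilities gives $n^{-o(1)}$, which proves \Cref{claim:indicator}.
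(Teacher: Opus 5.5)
Your proof is correct, modulo the two small points below. It rests on the same two estimates as the paper's proof. First, the tree's outside neighbourhoods have total Poisson mass $\sum_j d^j\,\Theta(e^{\ell_j/2})=O(\sqrt{\log n})$, via the geometric series from \Cref{lem:tree-prob}. Second, a bridge region in $\Layer{\overline\ell}$ that only the root can reach contains exactly one vertex with probability $n^{-o(1)}$.

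What differs is the packaging. The paper bounds $\mathcal{E}_U$ as a product of per-vertex emptiness probabilities via sequential conditioning. It handles the rest of the root's neighbourhood separately (its goal G3), and then combines $\mathcal{E}_U$ with $\mathcal{E}_{\text{root}}$ by appealing to positive correlation. You instead condition on the whole configuration inside $\Phi_i$ and use a single deterministic region $\mathcal{F}$ disjoint from $\mathcal{Q}$. The probability then factors exactly by independence of disjoint regions, and $e^{-n\mu(\mathcal{F})}\ge e^{-\sum_x n\mu(N^+_x)}$ reproduces the paper's product. This makes the combination step fully rigorous at no extra cost.

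For the bridge, the paper places $\mathcal{S}$ on the side of the root opposite to all descendants, which all lie at larger angles, and then uses the factor-$m$ gap in reach. You use only the reach gap together with the $O(e^{\ell_0}/(nm))$ angular spread of the tree, which also works. However, ``just outside $\Phi_i$'' should read ``at angular offset $\Theta(e^{(\ell_0+\overline\ell)/2}/n)\gg\phi$ from the root''. Just outside $\Phi_i$, the descendants still reach.

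\textbf{First correction.} The step you call the main obstacle is not part of this claim. Your $\mathcal{F}$ is by definition contained in $\Psi_i\setminus(\Phi_i\cup\B_0(r^*))$, which is the only region $\mathcal{E}_i$ refers to, so nothing is left to check. Also, $r^*$ is fixed in the statement, so adjusting it is not an option. Containment of the tree's neighbourhoods in $\Psi_i$ is only used later in the proof of \Cref{pro:d-regular-trees}, and it holds with room to spare. Indeed $e^{(R-r^*)/2}=n^{1/(2\alpha)}\log^{1/4}n$, so the factor you wrote as $\sqrt{\log n}$ is really $\log^{1/4}n$. Hence the root's reach at radius $r^*$ is $O(n^{1/(2\alpha)}\log^{3/4}n/n)=o(\psi)$. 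Likewise, no care is needed near $\ell_{\min}$: summing your layer bound over all $\ell\ge 0$ already gives $\Theta(e^{\ell_j/2})$ per vertex. That is the expected degree from \Cref{lem:layer-properties}, which is exactly what the paper uses.

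\textbf{Second correction.} The Poisson mean of the whole admissible window in $\Layer{\overline\ell}$ is $\Theta\bigl(\log^{1/2-(2\alpha-1)/(1-\alpha)}n\bigr)$, which is $o(1)$ for $\alpha>3/5$. In that regime a $\Theta(1)$-occupancy box does not fit. Moreover, an upper bound $n\mu(\mathcal{Q})=O(\sqrt{\log n})$ alone does not give $\Prob{|V\cap\mathcal{Q}|=1}\ge e^{-O(\sqrt{\log n})}$, because $\lambda e^{-\lambda}$ is small when $\lambda$ is small. Take $\mathcal{Q}$ to be a constant fraction of the window and use $\lambda e^{-\lambda}\ge\lambda/e$ with $\lambda=\log^{-O(1)}n=n^{-o(1)}$, as the paper does in its goal G2.
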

{
\renewcommand{\qedsymbol}{$\blacksquare$} 
\begin{proof}[Proof of claim]
\textbf{Event $\mathcal{E}_U$:} We start by showing the desired properties for all "non-root" vertices. Conditioning on the property that $\Phi_i$ is nice according to \Cref{alg:nice-sector}, the probability for a vertex $v \in V_{\ell_j}$ in a layer $\ell_j < \ell_0$ to have no neighbour in $\Psi_i \setminus \Phi_i$ is at least
\begin{align}\label{eq:nice1}
\Prob{N(v) \cap  (\Psi_i \setminus \Phi_i) = \emptyset \text{ }|\text{ } v \in V_{\ell_j}, \Phi_i \text{ is nice}} \geq e^{-n\cdot\mu(\B_v(R) \cap \B_0(R))} \geq \exp{\left(-\Theta(1)e^{\ell_j /2}\right)},    
\end{align}
using that we have a Poisson distribution in conjunction with \Cref{eq:layer_expected_degree}.

Now, given that $\Phi_i$ is nice, let $\{v_1, v_2, \dots v_{n' - 1}\}$ be our set $ V \cap (\Phi_i \setminus \Layer{\ell_0})=:U$, i.e., vertices in a nice sector except for the ``root'', in no particular order. Moreover, let $\mathcal{E}_t$ be the event that the set of vertices  $\{v_1, v_2,\dots v_{t-1}\}$ have no neighbour in $\Psi_i \setminus \Phi_i$. Since we use a Poisson distribution, the probability of an area being empty is monotonically increasing if the considered area is shrinking and it holds for any $t$ that
\begin{align}\label{eq:nice2}
\Prob{v_{t} \text{ has no neighbour in $\Psi_i \setminus \Phi_i$ }| \mathcal{E}_t, \Phi_i \text{ is nice}} \geq \Prob{N(v_t) \cap  (\Psi_i \setminus \Phi_i) = \emptyset \text{ }|\text{ } v \in V_{\ell_j}, \Phi_i \text{ is nice}}. 
\end{align}
Then,  we obtain via \Cref{eq:nice1} and \Cref{eq:nice2} for the event $\mathcal{E}_U= \{$No vertex in $U$ has any neighbour in $\Psi_i \setminus (\Phi_i \cup \B_0(r^*))\}$ that 
\begin{align*}
\Prob{\mathcal{E}_U} &\geq \prod_{t = 1}^{n'-1}\Prob{v_{t} \text{ has no neighbour in $\Psi_i \setminus \Phi_i$ }| \mathcal{E}_t, \Phi_i \text{ is nice}}\\
&\geq \prod_{j = 1}^{h}\prod_{v \in V_{\ell_j} \cap \Phi_i} \Prob{N(v) \cap  (\Psi_i \setminus \Phi_i) = \emptyset \text{ }|\text{ } v \in V_{\ell_j}, \Phi_i \text{ is nice}}  \geq \prod_{j = 1}^{h} \exp{(-\Theta(1)e^{\ell_j /2}})^{d^j},     
\end{align*}
where we used in the last step that the number of vertices in layer $\ell_j$ of a nice sector $\Phi_i$ is $d^j$. By the same arguments we applied for \Cref{eq:probability-product}, we then get that 
\begin{align}\label{eq:prob-of-U}
  \Prob{\mathcal{E}_U} \in n^{-o(1)}.  
\end{align}
This concludes the part for "non-root" vertices. 

\smallskip

\textbf{Event $\mathcal{E}_{\text{root}}$:} Next, we consider the root vertex $u$ and establish our ``cut edge'' $\{u,v\}$ which we later show, connects $u$ to the giant. Let $u$ be the vertex in $\Phi_i$ that lies in layer $\Layer{\ell_0}$, i.e., the root of our tree in the nice sector $\Phi_i$. W.l.o.g. let $\varphi(u) = 0$ and consider for $\overline{\ell} = \lceil 2\log\log n/(1-\alpha)\rceil$ the set of points
$$
\mathcal{S}= \left\{x \in \Layer{\overline{\ell}} : - \frac{e^{C/2}\log^{\frac{1}{2} + \frac{1}{1-\alpha}}(n)}{n} \leq \varphi(x) \leq - \frac{e^{C/2}\log^{\frac{1}{2} + \frac{1}{1-\alpha}}(n)}{2n}\right\} \subset \Psi_i.
$$
We proceed by setting out the following goals.
\begin{enumerate}
    \item[(G1)] We show that any point in $x$ has distance at most $R$ to $u$ while any other vertex in $\Phi_i$, i.e., $v \in (V \cap \Phi_i)\setminus \{u\}$ , has distance at least $R$ to any point in $\mathcal{S}$.
    
    \item[(G2)] We show that there is exactly one vertex in $\mathcal{S}$ with probability $n^{-o(1)}$.
        
    \item[(G3)] We show that, under the condition of event $\mathcal{E}_U$, the probability of the event that $u$ has no neighbour in $\Psi_i \setminus (\Phi_i \cup \mathcal{S})$ is $n^{-o(1)}$.
\end{enumerate}
Note that if all goals are accomplished, then root $u$ has all desired properties for event $\mathcal{E}_{\text{root}}$. Moreover, if event $\mathcal{E}_U$ occurs, there are no undesired edges of ``non-root'' vertices in $\Psi_i \setminus \Phi_i$. Thus, since the events of (G2) and (G3) are holding independently with probability $n^{-o(1)}$ while the event of (G1) is deterministic, we get that the intersection of the event of all goals combined occur with probability $n^{-o(1)}$ such that
\begin{align*}
\Prob{\mathcal{E}_i \text{ }|\text{ } \Phi_i \text{ is nice}} = \Prob{\mathcal{E}_{\text{root}} \cap \mathcal{E}_{U} \text{ }|\text{ } \Phi_i \text{ is nice}} \geq \Prob{\mathcal{E}_U}\cdot \Prob{\mathcal{E}_{\text{root}} | \Phi_i \text{ is nice and event } \mathcal{E}_U \text{ occurs.}} \in n^{-o(1)},   
\end{align*}
as desired by using that event $\mathcal{E}_U$ and $\mathcal{E}_{\text{root}}$ are positively correlated and we previously established $\Prob{\mathcal{E}_U} \in n^{-o(1)}$ in~\Cref{eq:prob-of-U}. We continue by accomplishing each goal.
\paragraph{Goal 1.} To tackle that $u$ has distance at most $R$ to $x \in \mathcal{S}$, note that
$$
\angulardist{u}{x} \leq \frac{e^{C/2}\log^{\frac{1}{2} + \frac{1}{1-\alpha}}(n)}{n} \leq \frac{e^{(C + \ell_0 + \overline{\ell})/2}}{n}< \theta_R(r(u), r(x)),
$$
using that $\ell_0 = \lceil \log\log n \rceil$ and $\overline{\ell} = \lceil 2\log\log n/(1-\alpha) \rceil$ in conjunction with \Cref{lem:max-angle}. Thus, $u$ would have an edge to any potential vertex in $\mathcal{S}$.

Next, to see that any other vertex in $\Phi_i$ does not have an edge to a potential $v \in V \cap \mathcal{S}$, note that any vertex $w \in V \cap \Phi_i$ has angle $\varphi(w) > \varphi(u) = 0$, since $\Phi_i$ is nice. Hence, it holds for any $x \in \mathcal{S}$, (by $\ell_j < \ell_0$), that for any $w \in V \cap (\Phi_i \cap \Layer{\ell_j})$ the angular distance is 
$$
\angulardist{w}{x} \geq \frac{e^{C/2}\log^{\frac{1}{2} + \frac{1}{1-\alpha}}(n)}{2n} > \frac{1000 e^{(C + \ell_j + \overline{\ell})/2}}{n}> \theta_R(r(w), r(x)),
$$
since $r(w) \geq R - \ell_j - 1 \geq R - \log\log n + 2\log m - 2$, $m \in \omega(1)$ and $r(x) \geq R - \overline{\ell} - 1 \geq R - 2\log\log/(1-\alpha) - 2$. Thus, any vertex in $\Phi_i$, other than the ``root'' $u \in V \cap \Layer{\ell_0}$, would not have an edge to any vertex in $\mathcal{S}$ by \Cref{lem:max-angle}. This concludes the first goal.

\paragraph{Goal 2.} For the second goal, we calculate the measure for area $\mathcal{S}$ and obtain
$$
\mu(\mathcal{S}) = \mu(\mathcal{\Layer{\overline{\ell}}})\cdot \frac{e^{C/2}\log^{\frac{1}{2} + \frac{1}{1-\alpha}}(n)}{4\pi n} \in 1/(n\log^{\Theta(1)} n),
$$ 
using the angle spanned by area $\mathcal{S}$ and using \Cref{eq:layer_measure} in conjunction with $\overline{\ell}  \in \Theta(\log\log n)$. Hence, the expected number of vertices in $\mathcal{S}$ is 
$$
\E{|V \cap \mathcal{S}|} \in 1/\log^{\Theta(1)} n.
$$
Then, we apply the Poisson distribution and obtain for the probability that the number of vertices in $\mathcal{S}$ is exactly $1$ that 
$$
\Prob{|V \cap \mathcal{S}| = 1} = \E{|V \cap \mathcal{S}|} \cdot e^{-\E{|V \cap \mathcal{S}|}} \in n^{-o(1)},
$$
as set out by our second goal.

\paragraph{Goal 3.} For our third and final goal, note that we have
\begin{align}
    \mu(\B_u(R) \cap (\Psi_i \setminus (\Phi_i \cup \mathcal{S}))) \leq \mu(\B_u(R) \cap \B_0(R)) \leq \Theta(1)e^{\ell_0/2}/n,
\end{align}
using that $u \in V \cap \Layer{\ell_0}$ and \Cref{eq:layer_expected_degree}. Using the fact that the random variable of the number of vertices in $\B_u(R) \cap \B_0(R)$ follows a Poisson distribution and that $\ell_0 = \lceil \log\log n \rceil$, we then conclude that
$$
\Prob{ V \cap \B_u(R) \cap (\Psi_i \setminus (\Phi_i \cup \mathcal{S})) = \emptyset|\mathcal{E}_U} \geq e^{-n\cdot \mu(\B_u(R) \cap \B_0(R))} \in n^{-o(1)},
$$
as we sought to show for the third goal, and in conclusion, it follows that $\Prob{\mathcal{E}_i = 1 \text{ }|\text{ } \Phi_i \text{ is nice}} \in n^{-o(1)}$ as desired.
\end{proof}}
To finish the proof of our statement, recall that $\Psi_i$ has angle $\psi \in \Theta\left(\frac{n^{\frac{1}{2\alpha}}\log n}{n}\right)$ and thus, we have $k \in n^{\Omega(1)}$ such sectors in our entire disk. Now, let $X_i$ be the indicator random variable that is $1$ if the event $\{\mathcal{E}_i \cap \Phi_i$ is nice$\}$ occurs. Moreover, let $X:= \sum_{i=1}^k X_i$, i.e., the number of "buffer sectors" $\Psi_i$ that have the desired induced subgraph tree $T_i = G[V \cap \Phi_i]$. In the following, we obtain a lower bound for $X$. Recall that $r^* = R - \log n / \alpha - \log\log n /2$. Then, using linearity of expectation, \Cref{lem:tree-prob} and Claim~\ref{claim:indicator}  we have
\begin{align*}
\E{X\text{ }|\text{ }V \cap \B_0(r^*) = \emptyset} &= \sum_{i=1}^k \Prob{\mathcal{E}_i \cap \Phi_i \text{ is nice} \text{ }|\text{ }V \cap \B_0(r^*) = \emptyset} \\
&\geq k\cdot \Prob{\Phi_i \text{ is nice}\text{ }|\text{ }V \cap \B_0(r^*) = \emptyset}\cdot\Prob{\mathcal{E}_i \text{ }|\text{ } \Phi_i \text{ is nice}, V \cap \B_0(r^*) = \emptyset}\in n^{\Omega(1)}
\end{align*}
using that a nice sector according to \Cref{alg:nice-sector} has only vertices outside the area $\B_0(r^*)$ and $\mathcal{E}_i$ is independent of the event $V \cap \B_0(r^*) = \emptyset$. To obtain concentration for $X$, note that the sectors $\Psi_i$ are disjoint and the events that they are nice are mutually independent: determining whether $\Phi_i$ is nice only requires revealing the randomness within $\Phi_i$. In similar fashion, event $\mathcal{E}_i$ also only depends on the randomness in sector $\Psi_i \setminus \B_0(r^*)$. Therefore, a Chernoff bound applies and we obtain
\begin{align}\label{eq:X-concentration}
\Prob{X \in n^{\Omega(1)} \text{ }|\text{ }V \cap \B_0(r^*) = \emptyset} \in 1 - n^{-\omega(1)}.    
\end{align}
Next, we show for a sector $\Psi_i$ where event $\mathcal{E}_i$ occurs, that all neighbours of the nice sector $\Phi_i$ are contained in $\Psi_i$ if we condition on the event that $V \cap \B_0(r^*) = \emptyset$. Note that if $\Phi_i$ is nice, it holds for any vertex $u \in V \cap \Phi_i$ that $r(u) \geq R - \log\log n -2$. As such, the largest possible angle spanned among a vertex $u \in V \cap \Phi_i$ and any vertex $v \in V \cap \mathcal{D}_R\setminus\B_0(r^*)$ is
\begin{align}\label{eq:angle-neighbour-bound}
\theta_R(r(u), r(v)) \leq \theta_R(R - \log\log n - 2, \underbrace{R - \log n/\alpha - \log\log n /2}_{r^*}) \in \bigO\left(\frac{n^{\frac{1}{2\alpha}}\cdot \log^{3/4}n}{n}\right) \in o(\psi),   
\end{align}
by \Cref{lem:max-angle}. Hence, for any $u \in V \cap \Phi_i$ it holds that $N(u) \subseteq V \cap (\Psi_i\setminus \B_0(r^*))$.

In the next step, we bound the probability of the event that the area $\B_0(r^*)$ is empty (see hatched area in \Cref{fig:d-regular-trees}b). Note that by \Cref{lem:measure-inner-disk} the expected number of vertices in this area is $n\cdot\mu(B_0(r^*)) \in o(1)$. Hence, it holds via Poisson distribution that
\begin{align}\label{eq:empty-inner-disk}
   \Prob{V \cap \B_0(r^*) = \emptyset} \in e^{-o(1)} \in 1 - o(1).
\end{align}
Thus, using \Cref{eq:X-concentration}, \Cref{eq:empty-inner-disk}, and law of total probability we then get that
\begin{align}\label{eq:X-concentration2}
\Prob{X \in n^{o(1)}} \leq \Prob{V \cap \B_0(r^*) \neq \emptyset} + \Prob{V \cap \B_0(r^*) = \emptyset}\cdot \Prob{X \in n^{o(1)} \text{ }|\text{ }V \cap \B_0(r^*) = \emptyset} \in o(1).    
\end{align}
It follows that, \aas, non of the induced subgraph tree $T_i = G[V \cap \Phi_i]$ has any edge outside $\Psi_i$ using \Cref{eq:angle-neighbour-bound}.

To wrap things up, consider the event that $X \in n^{\Omega(1)}$ and let $\mathcal{E}_{\text{giant}}$ be the event that any vertex in layer $\overline{\ell} =\lceil 2\log\log n/(1-\alpha)\rceil$ is connected to the giant component of $G$. Note that the intersection of the two events implies our theorem if also $\B_0(r^*)$ contains no vertex, since event $\mathcal{E}_i$ ensures that $T_i$ is connected to the giant component in the desired way of the "in particular" statement of \Cref{pro:d-regular-trees}.

To bound the probability of event $\mathcal{E}_{\text{giant}}$ we observe the following: using \cite[Lemma 5.3]{Krohmer2016} any vertex $v \in V \cap \Layer{\overline{\ell}}$ is connected to the set of vertices in $V \cap \B_0(R/2)$ with probability $1-o(1)$. Moreover, by \cite[Theorem 1.4]{bfm-giant-15}, all vertices of the set $V \cap \B_0(R/2)$ are part of the giant component with probability $1-o(1)$. Thus, we conclude that also any $v \in V \cap \Layer{\overline{\ell}}$ is part of the giant component with probability  
\begin{align}\label{eq:giant}
\Prob{\mathcal{E}_{\text{giant}}} \in 1-o(1).    
\end{align}
Though, the three events $X \in n^{\Omega(1)}$, $\mathcal{E}_{\text{giant}}$ and $V\cap \B_0(r^*) = \emptyset$ are not independent, a union bound of the complements yields via \Cref{eq:empty-inner-disk}, \Cref{eq:X-concentration2} and \Cref{eq:giant}
\begin{align}\label{eq:wrap-up}
\Prob{X \in n^{\Omega(1)} \cap \mathcal{E}_{\text{giant}} \cap V \cap \B_0(r^*) = \emptyset} \in 1 - o(1).   
\end{align}
 Using \Cref{eq:wrap-up} then finishes the proof as the desired structural properties of a $d$-ary tree in a nice sector now follow from \Cref{lem:tree-geometry}.
\end{proof}
\subsection{Lower Bounds for MM and MIS (Proof of Theorem \ref{thm:lowerbound})}\label{sec:lower-bound-corollary}
\newcommand{\algohrg}[0]{A_{\text{hrg}}}
\newcommand{\algotree}[0]{A_{\text{tree}}}
In this section, we present a lower bound on the runtime of any distributed algorithm for solving MM and MIS on HRGs in the form of \Cref{thm:lowerbound}. Our proof strategy is as follows: first we show that an $r$-round randomised algorithm for MIS/MM on HRGs would imply an $r$-round randomised algorithm for MIS/MM on $d$-regular trees; this is addressed in \Cref{lem:algotree}. Then, we state a lemma for an established lower bound for MIS/MM on $d$-regular trees; see \Cref{lem:treeMISMMLowerbound} for the statement. By stacking these two lemmas together, we obtain the desired lower bound.

To accomplish our goal, we make use of the following tree substructure in HRGs, which follows from \Cref{pro:d-regular-trees} by setting the parameter $m(n) = \log\log n$.
\begin{corollary}[MIS and MM obstruction]\label{cor:d-reg-tree}
Let $G$ be a threshold hyperbolic random graph. Then, \aas there exists induced subgraph $T_{\text{hrg}}$ that is a balanced $d$-ary tree with degree $d_{\text{hrg}} \in \Theta(\log\log n)$, height $h_{\text{hrg}} \in \Theta(\log_d\log n)$ and $n'_{\text{hrg}} \in \Theta(\sqrt{\log n})$ vertices. Additionally, for each such tree $T_{\text{hrg}}$ there exists exactly one vertex $u \in V(T_{\text{hrg}})$ with a neighbour in $v \in V(G)\setminus V(T_{\text{hrg}})$ where $v$ is a vertex of the giant component of $G$. In particular, vertex $u$ has graph distance $h_{\text{hrg}}$ to all leaves of $T_{\text{hrg}}$.
\end{corollary}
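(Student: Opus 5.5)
The plan is to instantiate \Cref{pro:d-regular-trees} with the parameter choice $m(n) := \log\log n$. This choice is admissible: $m$ grows in $n$, and $m \le \sqrt{\log n}$ holds for all sufficiently large $n$. For the asymptotic statement we may also assume $m \ge 10000$, as required in \Cref{def:boxes}. The theorem then yields, \aas, $n^{\Omega(1)}$ (in particular at least one) disjoint sectors $\Phi_i$ such that $T_i = G[V\cap\Phi_i]$ is a balanced $d$-ary tree that is attached to the giant component $H$ by a unique edge $\{u,v\}$, where $u$ is the root. We fix one such $T_i$ and set $T_{\text{hrg}} := T_i$.

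Next I would read off the parameters. I would use the explicit values from \Cref{def:boxes} rather than only the $\Omega$-bounds in the theorem statement, since the corollary claims $\Theta$-bounds. The degree is $d = \lfloor m/10000\rfloor = \Theta(\log\log n)$. The height is $h = \lfloor \log\log n/(2\log m)\rfloor = \Theta(\log\log n/\log\log\log n)$, and since $\log d = \Theta(\log m)$ this equals $\Theta(\log_d \log n)$.

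For the vertex count, a balanced $d$-ary tree of height $h$ has $n' = \sum_{i=0}^{h} d^i = \Theta(d^h)$ vertices. The lower bound $n' \in \Omega(\sqrt{\log n})$ is given by \Cref{lem:tree-geometry}. For the matching upper bound I would use $d^h \le (m/10000)^{\log\log n/(2\log m)} \le m^{\log\log n/(2\log m)} = \sqrt{\log n}$, so $n' = \Theta(\sqrt{\log n})$. This is the one point where care is needed: I must check that the floors in $d$ and $h$ do not push the lower bound below $\Omega(\sqrt{\log n})$. Tracing the computation, one actually has $d^h \approx \sqrt{\log n}/10000^{h}$, which could be a subpolynomially smaller quantity than $\sqrt{\log n}$. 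So I would either accept the bound stated in \Cref{lem:tree-geometry} as given, or, if needed, weaken the claim to $\log^{1/2-o(1)} n$, which does not affect the downstream lower bound.

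Finally, the structural ``additionally'' part is taken directly from the ``in particular'' clause of \Cref{pro:d-regular-trees}. That clause says $\{u,v\}$ is the unique edge between $T_{\text{hrg}}$ and $G\setminus V(T_{\text{hrg}})$ and that $v \in H$. Hence $u$ is the only vertex of $T_{\text{hrg}}$ with an outside neighbour, and that neighbour lies in the giant component. Since $u$ is the root of a balanced tree of height $h_{\text{hrg}}$, its graph distance to every leaf is exactly $h_{\text{hrg}}$, as in the footnote defining the root in \Cref{pro:d-regular-trees}. The paths between $u$ and the leaves lie inside the induced tree, and there are no shortcuts through $G$, because the only exit from $T_{\text{hrg}}$ is the single cut edge at $u$.
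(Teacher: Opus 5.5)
Your route is exactly the paper's: the paper justifies the corollary only by instantiating \Cref{pro:d-regular-trees} with $m(n)=\log\log n$. Reading $d$ and $h$ off \Cref{def:boxes} rather than from the theorem is the right call, since the theorem only gives $\Omega$-bounds and its height bound $\Omega(\log_m\log n')$ is evidently meant to be $\Omega(\log\log n/\log m)$. Your derivation of the cut-edge property and of the root-to-leaf distance $h_{\text{hrg}}$ from the ``in particular'' clause is fine.

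Your suspicion about the vertex count is correct, so drop the option of accepting \Cref{lem:tree-geometry} as given.
\begin{itemize}
\item From $d\le m/10000$ and $h\le \log\log n/(2\log m)$ you get $d^h\le m^h\cdot 10000^{-h}\le\sqrt{\log n}\cdot 10000^{-h}$.
\item For $m=\log\log n$ the height $h=\Theta(\log\log n/\log\log\log n)$ tends to infinity. Hence $n'_{\text{hrg}}\le 2d^h=o(\sqrt{\log n})$; in fact $n'_{\text{hrg}}=\log^{1/2-\Theta(1/\log\log\log n)}n$.
\item The step $n'\ge d^h\in\Omega(\sqrt{\log n})$ in the proof of \Cref{lem:tree-geometry} is only valid when $h=O(1)$, i.e.\ for $m=\log^{\Omega(1)}n$.
\end{itemize}
So this construction does not deliver the claimed $\Theta(\sqrt{\log n})$, and no argument along these lines can prove it.

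Commit to your second option and state $n'_{\text{hrg}}\in\log^{1/2-o(1)}n$, so that $n'_{\text{hrg}}\ge\log^{c}n$ for every constant $c<1/2$. This costs nothing downstream. The proof of \Cref{thm:lowerbound} only needs $d_{\text{hrg}}$ and $h_{\text{hrg}}$ up to constants and a tree size of $\log^{\Theta(1)}n$, which keeps $\log_d n'=\Theta(\log\log n/\log\log\log n)$.
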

The following lemma shows that if there existed a randomised algorithm that solves MIS/MM in $o(\log\log n / \log\log\log n)$ rounds on hyperbolic random graphs, then this would imply a randomised algorithm that solves MIS/MM in $o(\log\log n / \log\log\log n)$ rounds on balanced $d$-regular trees with degree $d \approx \log\log n$ and height $h \approx \log\log n /\log\log\log n$. 
\begin{lemma}[HRG-to-tree coupling]\label{lem:algotree}
Let $G\sim\hrg$ be a threshold hyperbolic random graph with the properties of induced trees $T_{\text{hrg}}$ as stated in \Cref{cor:d-reg-tree} and let $\algohrg$ be a \LOCAL algorithm that solves MIS (MM) in $r < h_{\text{hrg}}/200$ rounds with error probability $p(n)$ on the giant component of $G$. 

Then, there exist an $r$-round \LOCAL algorithm $\algotree$, which solves MIS (MM) on a balanced $d$-regular tree $T$ with degree  $d=d_{\text{hrg}} + 1$ and height $h = \lfloor h_{\text{hrg}}/100 \rfloor$ with error probability at most $2p(n)$.
\end{lemma}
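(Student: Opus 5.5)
We will prove the lemma by simulation. The tree algorithm $\algotree$ runs $\algohrg$ on a fixed HRG in which a copy of the tree is embedded. Every node of $T$ whose $r$-hop view is isomorphic to the $r$-hop view of some node of a subtree $T_{\text{hrg}}$ takes that node's output. Fix a realisation $G$ with the properties of \Cref{cor:d-reg-tree}, chosen so that $\algohrg$ errs on it with probability at most $2p(n)$. Such a $G$ exists by Markov's inequality, since the properties hold \aas and the overall error is at most $p(n)$. Let $T_{\text{hrg}}$ be one of the induced trees, with root $u$ and cut edge $\{u,v\}$ into the giant component $H$.

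The key observation concerns the portion of $T_{\text{hrg}}$ far from both its root and its leaves. Every vertex of $T_{\text{hrg}}$ except the root and the leaves has degree exactly $d_{\text{hrg}}+1$: one parent and $d_{\text{hrg}}$ children. Take a vertex $w$ at depth between $h_{\text{hrg}}/3$ and $2h_{\text{hrg}}/3$. Because $r<h_{\text{hrg}}/200$, the $r$-ball of $w$ in $G$ lies inside $T_{\text{hrg}}$ and avoids both the root and the leaves, so it is exactly a $(d_{\text{hrg}}+1)$-regular ball of radius $r$. The same holds for every vertex of a balanced $(d_{\text{hrg}}+1)$-regular tree $T$ of height $h=\lfloor h_{\text{hrg}}/100\rfloor$: that is, a root with $d$ children and internal vertices with $d-1$ children, or the convention from \Cref{lem:treeMISMMLowerbound}. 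Each such vertex is either internal with degree $d$, or within distance $r$ of a leaf or the root.

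The main obstacle is boundary vertices of $T$, such as leaves, whose views do not match the interior of $T_{\text{hrg}}$. To handle them we embed $T$ itself into $T_{\text{hrg}}$. Pick a vertex $w_0$ at depth about $h_{\text{hrg}}/2$ and map the root of $T$ to $w_0$, using its parent and $d_{\text{hrg}}$ children as the $d$ neighbours. Grow $T$ outward: for each copy vertex, its non-parent neighbours are its $T_{\text{hrg}}$-children, together with the parent of $w_0$ for the upward branch. This gives an injective map of the ball of radius $h\le h_{\text{hrg}}/100$ around $w_0$ into $T_{\text{hrg}}$. The map stays strictly inside $T_{\text{hrg}}$, away from its leaves and root, because $h_{\text{hrg}}/2\pm h_{\text{hrg}}/100$ is far from both. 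Each vertex of $T$ then simulates $\algohrg$ on its image, building the image's $r$-hop view in $G$. That view contains vertices outside the image of $T$, but all of them are degree-$(d_{\text{hrg}}+1)$ tree vertices of $T_{\text{hrg}}$, so the view can be reconstructed by locally extending $T$ with virtual branches. Randomness and IDs are generated for virtual vertices as well; these can be derived from the ID of the real vertex and a path label, keeping them unique from an $n$-sized space. Since the simulated instance on the ball is exactly $\algohrg$'s execution on $G$, the outputs restricted to the image are correct whenever $\algohrg$ succeeds. Correctness is an MIS or MM of $G$ restricted to an induced subgraph that is a full neighbourhood-closed interior region.

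The remaining step is to check that a valid MIS or MM of $G$, restricted to the image, is valid on $T$. For vertices at the boundary of $T$ it need not be: a leaf of $T$ may be dominated, or matched, by a virtual neighbour. We fix this with one post-processing round. A boundary vertex of $T$ that is not in the set and has no chosen neighbour inside $T$ joins the set (or, for MM, pairs with a free $T$-neighbour). This is the delicate point, and we would first attempt it by choosing boundary conventions matching \Cref{lem:treeMISMMLowerbound}, which only requires correctness near the centre. Failing that, we can argue that the lower bound instance only checks vertices at distance more than $r$ from the boundary. The error probability is then at most $2p(n)$ and the round count is $r$, or $r+O(1)$, which is absorbed into the asymptotics.
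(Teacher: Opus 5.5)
Your route differs from the paper's. It is sound as far as it goes, but it proves a weaker statement: an $(r+O(1))$-round algorithm rather than the claimed $r$-round one.

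Embedding all of $T$ as a radius-$h$ ball deep inside $T_{\text{hrg}}$ and simulating virtual subtrees below the leaves is consistent. Each virtual vertex hangs off a unique leaf of $T$, which can generate its ID and random bits. Since $G$ is fixed in the hypothesis, this single coupling already gives error at most $p(n)$, so your Markov step is unnecessary.

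The cost is that leaves of $T$ are simulated by degree-$(d_{\text{hrg}}+1)$ vertices, so they may be dominated or matched by virtual neighbours. Your repair fixes this:
\begin{itemize}
\item for MIS, an undominated leaf joins, which is safe because its parent is not in the set and leaves are pairwise non-adjacent;
\item for MM, a free parent must additionally arbitrate among several freed leaf children, which your phrase ``pairs with a free $T$-neighbour'' leaves unresolved.
\end{itemize}
Either repair needs the parent's output, which depends on vertices at distance $r+1$ from the leaf, so it costs one or two extra rounds. Your fallback, that \Cref{lem:treeMISMMLowerbound} only checks vertices far from the boundary, is not something that lemma provides as stated.

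The idea that gives exactly $r$ rounds is to avoid the mismatch rather than repair it. The paper splits $V(T)$ into the set $U$ of vertices within distance $r$ of the leaves and $I=V(T)\setminus U$, and uses two different embeddings.
\begin{itemize}
\item $N_r[U]$ is mapped so that leaves of $T$ land on leaves of $T_{\text{hrg}}$. This works because $h_{\text{hrg}}\gg h$ and every non-leaf of $T_{\text{hrg}}$ has degree $d_{\text{hrg}}+1$, so these views match exactly, leaves included.
\item $N_r[I]$ is mapped into the deep interior of $T_{\text{hrg}}$, much as in your construction.
\end{itemize}
Running $A_{\text{hrg}}$ verbatim on $T$ then fails only if a local failure witness is centred in $U$ or in $I$. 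Each of these events has probability at most $p(n)$ under its own coupling, and the union bound is the real source of the factor $2$.

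Two couplings are unavoidable. An embedding that preserves the views of \emph{all} vertices of $T$ would make its image a connected component of $G$ isomorphic to $T$, and such a component cannot lie in the giant.

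For \Cref{thm:lowerbound}, your $(r+O(1))$-round version would suffice, since the extra rounds are absorbed in the asymptotics. It does not, however, establish the lemma as stated.
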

\begin{proof} 
We aim to design an algorithm $\algotree$ for our infinite family of balanced $d$-regular trees with degree $d$ and height $h$, such that $\algotree$ has error probability at most $2p(n)$. 
To do so, we run algorithm $\algohrg$ on the vertices of $T$, pretending that $T$ is an HRG $G\sim\hrg$ where $n$ is suitably chosen such that $n'_{\text{hrg}}$, $d_{\text{hrg}}$, and $h_{\text{hrg}}$ satisfy the conditions of \Cref{cor:d-reg-tree}.
To prove that algorithm~$\algotree$ is well-defined, we consider its behaviour on a $d$-regular tree $T$ with $n'$ vertices, and also consider the giant component of an HRG $G$ that contains an induced  tree substructure $T_{\text{hrg}}$ as stated in \Cref{cor:d-reg-tree}. Note that $T_{\text{hrg}}$ contains $T$ as an induced subgraph, and to show that the execution of $\algotree$ is well-defined, we next consider a mapping of the vertices from $T$ to $T_{\text{hrg}}$.

Let $U\subseteq V(T)$ be the set of vertices that are at a distance at most $r$ from the leaves of $T$, and let $I=V(T)\setminus U$. Let $N_{r}[U],N_{r}[I]$ be the $r$-hop neighbourhood of $U,I$, respectively. For $r < h$, we will map each of these subgraphs from $T$ to $T_{\text{hrg}}$ such that the $r$-hop local view of each vertex $u\in V(T)$ is the same as the $r$-hop local view of a vertex $v\in V(T_{\text{hrg}})$. 

Since $N_{r}[U]$ contains the vertices up to distance $r$ from the leaves, if we assume that  $N_{r}[U]$ contains $k$ leaves,  we map the $k$ leaves of $N_{r}[U]$ to the $k$ ``leftmost'' leaves of $T_{\text{hrg}}$. After that, we map the rest of the vertices $N_{r}[U]$ to vertices $T_{\text{hrg}}$ so that the mapping preserves the $r$-hop neighbourhoods of the vertices in $U$. Let $M(U)$ be the set of vertices in $T_{\text{hrg}}$ that $N_{r}[U]$ was mapped to. With this partial mapping, we have guaranteed that if we run $\algotree$ on $T$, every vertex $u\in U$ has a copy vertex $v\in V(T_{\text{hrg}})$ with the same view. Additionally, by the choice of $r<h/2\leq h_{\text{hrg}}/200$, no vertex in $U$ can identify that they are not in an HRG $G$.

We will now map the vertices from $N_{r}[I]$. First, note that the diameter of the induced subgraph $T':= N_{r}[I]$ is the same as $T$, if we first remove the leaves of $T$. Therefore, the induced subgraph of $N_{r}[I]$ has height $h-1$. Now let $w\in V(T')$ be the unique vertex that is at distance $h-1$ from every leaf of $T'$. Also, among the vertices of $T_{\text{hrg}}$ that are part of the partial mapping of $N_{r}[U]$, let $v\in M(U)$ be one of the vertices that are furthest away from the leaves in $T_{\text{hrg}}$. Starting from $v$, we traverse $h-1$ hops towards the ``root'' of the $T_{\text{hrg}}$, and let $w'\in V(T_{\text{hrg})}$ be the vertex we arrive at. We map $w\in V(T')$ to $w'\in V(T_{\text{hrg}})$. After that, we map the rest of the vertices $N_{r}[U]$ to vertices $T_{\text{hrg}}$ so that the mapping preserves the $r$-hop neighbourhoods of the vertices in $I$. Let $M(I)$ be the set of vertices in $T_{\text{hrg}}$ that $N_{r}[I]$ was mapped to.
Note that by our choice of $w$, $w'$, and $r$, every vertex in $M(I)$ is neither a leaf nor a root. This guarantees that the vertices in $I \subset V(T)$ cannot distinguish that they are not on an HRG $G$ after the execution of $\algotree$. Additionally, every vertex in $N_{r}[I]$ has an identical $r$-hop local view to its mapped vertex in $M(I)$.

We now proceed by bounding the error probability of $\algotree$. To this end, we utilise the mapping we laid out above. Since MIS and MM are locally checkable problems, if $\algotree$ fails on $T$, then there is a local witness of failure centred either at a vertex of $U$ or at a vertex of $I$. Denote these two events by $F_U$ and $F_I$. Thus, by union bound, we have
\begin{align*}
\Pro{\algotree \text{ fails on } T}
\le \Pro{F_U}+\Pro{F_I}.
\end{align*}
We first bound $\Pro{F_U}$. To achieve this, we couple the randomness of the algorithm $\algohrg$ on the vertices $N_r[U]$ with the randomness of the algorithm $\algotree$ on the mapped vertices in $M(U)$. Since $r\le h(n)/200$, no vertex in $M(U)$ ``sees'' the unique attachment vertex of $T_{\text{hrg}}$ to $G\setminus T_{\text{hrg}}$. Hence, the radius-$r$ views used by $\algotree$ in $T_{\text{hrg}}$ and the radius-$r$ views seen by $\algohrg$ in $G$ are identical for all vertices relevant to a failure witness centred in $U$. Therefore, under this coupling, every occurrence of $F_U$ yields a local MIS, respectively, MM, violation in the output of $\algohrg$ on $G$. Consequently, by the hypothesis of our lemma statement, it holds
\begin{align*}
\Pro{F_U}\le \Pro{\algohrg \text{ fails on } G}\le p(n).
\end{align*}
It remains to bound $\Pro{F_I}$. For vertices in $I$, the relevant $N_r[I]$ are far from the leaves. We couple the algorithm $\algotree$ for $N_r[I]$ with the algorithm $\algohrg$ for the mapped vertices in $M(I)$ that are also far from both the leaves and the unique attachment point to $G\setminus T_{\text{hrg}}$. Thus, all radius-$r$ views used by $\algotree$ around a potential failure witness in $I$ are identical to the corresponding views of $\algohrg$ in an execution on $G$. Hence every occurrence of $F_I$ also yields a local violation of the output of $\algohrg$ on $G$, and therefore again
\begin{align*}
\Pro{F_I}\le \Pro{\algohrg \text{ fails on } G}\le p(n).
\end{align*}
Combining the two bounds gives
$\Pro{\algotree \text{ fails on } T}
\le \Pro{F_U}+\Pro{F_I}
\le 2p(n).$
Equivalently, if $\algotree$ failed with probability greater than $2p(n)$, then either $F_U$ or $F_I$ would occur with probability greater than $p(n)$, contradicting the correctness guarantee of $\algohrg$ under the corresponding coupling.
\end{proof}
Next, we show a statement which will imply that any randomised algorithm on a balanced $d$-regular tree with degree $d \approx \log\log n$ and height $h \approx \log\log n / \log\log\log n$ requires $\Omega(\log\log n / \log\log\log n)$ rounds to solve MIS and MM respectively.
\begin{lemma}[\cite{BBKO22}]
\label{lem:treeMISMMLowerbound}
The randomised complexity with error probability $p$ of MIS/MM on a balanced $d$-regular tree with $n$ vertices is $\Omega(\min\{d, \log_d n, \log_d \log {1/p}-\bigO(1)\})$.
\end{lemma}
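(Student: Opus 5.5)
The statement is the lower bound of~\cite{BBKO22}, and the plan is to recover it from the round elimination framework rather than re-derive it. First, I will set the problem up so that round elimination applies. MIS and MM are locally checkable on trees, and a balanced $d$-regular tree of $n$ vertices has height $\Theta(\log_d n)$. Consider any $T$-round algorithm with $T \le c\log_d n$ for a small constant $c$. Every vertex at distance more than $T$ from the leaves has a radius-$T$ view that is a complete $d$-regular tree ball, which is isomorphic to a ball in the infinite $d$-regular tree. On these interior vertices the algorithm therefore behaves as an algorithm on the infinite $d$-regular tree, restricted to a set of vertices of size $\Omega(n)$. This is what produces the $\log_d n$ term of the minimum: once $T$ exceeds $c\log_d n$, the reduction breaks and we make no claim.

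Second, I will invoke the deterministic round elimination sequence for MIS and MM on $d$-regular high-girth graphs. For MM this is the sequence of~\cite{bbhors-jacm-2019}; for MIS it is the relaxation family of Balliu--Brandt--Kuhn--Olivetti. This gives a chain of problems $\Pi_0 \succeq \Pi_1 \succeq \dots \succeq \Pi_k$, where $\Pi_0$ is a relaxation of MIS (resp.\ MM), each $\Pi_{i+1}$ is at least as easy as the round-eliminated version of $\Pi_i$, and $k=\Omega(d)$. Every $\Pi_i$ with $i\le k$ is not solvable in $0$ rounds. This produces the $d$ term: an algorithm with $T<k$ rounds for $\Pi_0$ would yield a $0$-round algorithm for $\Pi_T$, a contradiction.

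Third, I will make the argument randomised using the probabilistic round elimination lemma. If $\Pi_i$ is solved in $t$ rounds with local failure probability $q$ at each node or edge, then $\Pi_{i+1}$ is solved in $t-1$ rounds with local failure probability at most $d^{\bigO(1)}q^{1/(d+1)}$, or a similar polynomial root. Iterating $T$ times turns the failure probability $p$ into $p^{1/d^{\bigO(T)}}\cdot d^{\bigO(1)}$. A $0$-round algorithm for a nontrivial $\Pi_T$ must fail locally with probability at least some constant $\varepsilon(d)\ge d^{-\bigO(1)}$. Hence $p^{1/d^{\bigO(T)}}\ge d^{-\bigO(1)}$, i.e.\ $d^{\bigO(T)}\ge \log(1/p)/\bigO(\log d)$, which gives $T\ge \log_d\log(1/p)-\bigO(1)$. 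Taking the minimum of the three constraints yields the statement.

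The main obstacle is the randomised step. The global error probability $p$ must be converted into a per-node local failure probability that is preserved under round elimination. The identifier space is also an issue: the probabilistic lemma requires IDs or random bits that are independent across the tree, and the lemma is proved for infinite or high-girth trees rather than finite balanced ones. For the first point, I would bound local failure by the global failure $p$, since any local violation is a global failure. For the second, I would use only random bits, so that no IDs are needed, and rely on the interior-vertex reduction of the first paragraph for the finite tree.
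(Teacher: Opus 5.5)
Your plan is essentially the paper's: the paper packages the same three ingredients into \Cref{thm:sevenOne} and then plugs in the sequences, namely a bounded-label round elimination sequence of length $\Theta(d)$ (from \cite{BBKO22} for MIS, and the $5$-label family of \cite{BO20} for MM), probabilistic round elimination with a general error parameter $p$, and the indistinguishability cap at $\log_d n$. One correction to your quantitative step: the per-step loss factor and the $0$-round failure bound depend on the label bound $f(d)$, which is $2^{d+1}$ for the MIS family, so they are of order $2^{\mathrm{poly}(d)}$ and $2^{-\mathrm{poly}(d)}$ respectively rather than $d^{\bigO(1)}$ and $d^{-\bigO(1)}$. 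Your $-\bigO(1)$ term survives only because this contributes $\log_d\log f(d)=\log_d(d+1)=\bigO(1)$, which is exactly the check the paper makes and the reason the bounded-label relaxations are essential.
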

\begin{proof}[Proof sketch:]
The result directly follows with \Cref{thm:sevenOne} (see Appendix~\Cref{sec:sevenOne} for a discussion), given that \cite{BBKO22} show that the respective sequences to apply the theorem exist for MIS and MM. More detailed,  \cite{BBKO22} shows that such a sequence with  $t=\Theta(d)$ and $f(d)=2^{d+1}$ exists for MIS. The main idea of the proof is that they construct a sequence and each problem in their sequence come with a fingerprint vector $z$. For MIS the vector is of length $2$ (formally $len(z)=1$ in their terminology) and the fingerprint vector of the original MIS problem is $[1,0]$. Then, fingerprint vectors can be obtained iteratively. The $j$-entry of the fingerprint for the $i$-th problem in the sequence is the $j$-prefix sum of the $i-1$-st fingerprint. Thus, the fingerprint vectors evolve like $[1,0]$, $[1,1]$,$[1,2]$, until $[1,t]$. They show that the sequence can be extended by yet another problem whenever the $l1$-norm of the vector is $\leq \Delta$. So, for MIS we can do $\Delta-1$ such steps. The precise interpretation of the fingerprint vector and the actual description of the problems in the sequence is technical and we refer to their paper for more details. We obtain that \cite[Lemma 6.1]{BBKO22} shows that the sequence for MIS is of length $\Theta(d)$ with $f(d)\leq 2^d(1+len(z))$ where $len(z)=1$ for the MIS problem (in their paper $len(z)=\beta$ which equals $1$ for MIS, see Section 5.1 in the arxiv version of their work.  
Thus, via \Cref{thm:sevenOne} the runtime bound for error probability $p$ for MIS simplifies to $\Omega\left(\min\left\{d,\log_{d}n, \log_{d}\log 1/p-\log_{d}\log 2^{d+1}\right\}\right)$. In particular, the term $\log_{d}\log 2^{d+1} = \log_{d}({d+1}) \in \bigO(1)$.

For MM \cite{BO20} (see e.g. Theorem 4.5) shows that the respective sequence with length $t=2d-1$ and $f(d)=5$ exists. For the latter, we cite from their work: \emph{Crucially, all the problems of the family are described using only $5$ labels, and while the result of the round elimination technique may contain more than $5$ labels, we will provide relaxations that allow us to map these problems back to this family} exists, and hence, \Cref{thm:sevenOne} yields a lower bound of  $\Omega\left(\min\left\{2d-1,\log_{d}n, \log_{d}\log 1/p-\log_{d}\log 5\right\}\right) \in \Omega\left(\min\left\{d,\log_{d}n, \log_{d}\log 1/p-\bigO(1)\right\}\right)$.
\end{proof}
We now put \Cref{cor:d-reg-tree}, \Cref{lem:algotree} and \Cref{lem:treeMISMMLowerbound} together to obtain our lower bounds of \Cref{thm:lowerbound}.
\theoremlowerbound*
\begin{proof}
Using \Cref{cor:d-reg-tree} in conjunction with \Cref{lem:algotree}, it follows \aas over the draw of $G\sim\hrg$ that any $o(\log\log n / \log\log\log n)$-round randomised algorithm for MIS or MM on $G$ that has an error probability of at most $p$ would imply the existence of a $o(\log\log n / \log\log\log n)$-round randomised algorithm for MIS or MM on balanced $d$-regular trees, where the degree is $d \in \Theta(\log\log n)$ and the tree has height $h\in \Theta(\log\log n / \log\log\log n)$, with error probability at most $2p$. Therefore, it suffices to show that no such algorithm exists on any such $d$-regular tree with running time $o(\log\log n / \log\log\log n)$ and error probability at most $2/n^{c}$, for every constant $c>0$. Indeed, this immediately implies that \aas over the draw of $G$, there does not exist an $o(\log\log n / \log\log\log n)$-round randomised algorithm on $G$ for MIS or MM with error probability $p = 1/n^{c}$. Consequently, to establish our main result, it remains to prove that no such algorithm exists for balanced $d$-regular trees.

We show this by using \Cref{lem:treeMISMMLowerbound}. Fix any constants $c, c_d, c_h > 0$ and consider a balanced $d$-regular tree $T$ with degree $d = c_d \cdot \log\log n$ and height $h = c_h \cdot \log_d\log n$ giving a $d$-regular tree of size $n' \geq (d-1)^{h} \geq \log^{c_n} n$ for some constant $c_n > 0$. Since the runtime for any algorithm with error probability $p'$ for MIS or MM on a balanced $d$-regular tree with $n'$ vertices is 
$$\Omega(\min\{d, \log_d n', \log_d \log {1/p'}-\bigO(1)\})$$
using \Cref{lem:treeMISMMLowerbound}, the runtime $t$ for our tree $T$ with parameters $d = c_d \cdot \log\log n$, $h = c_h \cdot \log_d\log n$ and $n' \geq \log^{c_n} n$ is 
$$t\in \Omega\left(\min\left\{\log\log n, \frac{\log\log n}{\log\log\log n},\frac{\log\log n}{\log\log\log n}\right\}\right)$$
setting $p'=2/n^c$. Hence, $t \in \Omega\left(\frac{\log\log n}{\log\log\log n}\right)$ and the theorem follows as this implies that \aas for $G$, an algorithm requires at least $t$ rounds to obtain a probabilistic guarantee of $1 - n^{-c}$ for MIS or MM for the giant component of $G$.
\end{proof}
\section{Embedding-Aware Symmetry Breaking (Theorem \ref{thm:geometric})} \label{sec:geometric}
In this section, we assume that each vertex knows its geometric position. So a vertex $v \in V(G)$ can also communicate its geometric coordinate $(r(v), \varphi(v))$ to any neighbour within one round. We prove the following.
\theoremgeometric*
The Maximal Independent Set part we show in \Cref{sec:mis_geom} and the Maximal Matching part in \Cref{sec:mm_geom}. Throughout the section we assume that radial and angular coordinates are unique to vertices; a property that holds almost surely.
To describe our algorithms in both sections, we make use of a tiling which ensures that vertices in the same tile form a clique. Moreover, for any two vertices $u$ and $v$ in the same layer, if the tiles containing $u$ and $v$ are separated by sufficiently many intermediate tiles, then there is no edge between $u$ and $v$.
 This is useful since all vertices in a tile can communicate with each other within one round, finding a local solution that is globally valid if active tiles are far enough apart.

\paragraph{Tiling.} Throughout this section, let $\ell \in [\lceil \log n / \log\log n \rceil]$. Recall the definition of a layer $$\Layer{\ell}:= \B_0(R-\ell)\setminus \B_0(R-\ell-1),$$ and for layer level $\ell$, let
$$N_\ell := 40\cdot \left\lceil \frac{n \cdot \pi e^{C/2}}{20e^{\ell}}\right\rceil,$$
where the constant $40$ is chosen for later convenience.  Then, for any $\ell$ and $i \in [N_\ell]$ we define a \emph{tile} $\tiletile{\ell}{i}$ by the set of points 
\begin{align}\label{eq:tile}
    \tiletile{\ell}{i} := \{x \in \Layer{\ell} : (i\cdot 2\pi)/N_{\ell} \leq \varphi(x) < ((i+1)\cdot 2\pi)/N_{\ell} \}.
\end{align}
For a sketch of the tiling, see also \Cref{fig:onion}a. We remark that $N_\ell$, the number of tiles per layer, is divisible by $40$.
\begin{remark}\label{rem:divisible}
    For any $\ell$ it holds 

    $$
    N_{\ell} \equiv 0 \pmod{40}.
    $$
\end{remark}
We sum up the key properties of our tiling in the following lemma. The first property tells us that vertices in the same tile form a clique, and the second property tells us that vertices in the same layer but with a constant number of tiles between them do not have an edge. Finally, the third property says that if a vertex $u$ has a neighbour $v$ which is located in a tile in a ``smaller'' layer, then $v$ is not a neighbour to another vertex $u'$ in the same layer as $u$, if the number of tiles between $u$ and $u'$ is a large enough constant. 
\begin{lemma}[Tiling properties]\label{lem:tiling}
    The following holds for any $\ell$ of our tiling:

    \begin{enumerate}
  \renewcommand{\labelenumi}{\roman{enumi}.}
  \item\label{item:one} for any $i \in [N_\ell]$, it holds for any $u \in V \cap \tiletile{\ell}{i}$ and $v \in V \cap \tiletile{\ell}{i}$ that $\{u,v\} \in E(G)$;
  \item\label{item:two} for any pair $i, j$ such that $|i-j| \geq 20$, it holds for any $u \in V \cap \tiletile{\ell}{i}$ and $v \in V \cap \tiletile{\ell}{j}$ that $\{u,v\} \not\in E(G)$;
  \item\label{item:three} let $\ell \geq \ell'$ and consider any pair $i,j$ such that $|i-j| \geq 40$. Moreover, let $u \in V \cap \tiletile{\ell}{i}$, $u' \in V \cap \tiletile{\ell}{j}$ and let $v \in N(u) \cap \Layer{\ell'}$. Then, it also holds  $\{u',v\} \not\in E(G)$.
\end{enumerate}
\end{lemma}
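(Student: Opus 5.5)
All three items reduce to comparing the angular distance between the relevant points with $\theta_R$, using \Cref{lem:max-angle} and the monotonicity in \Cref{rmk:theta-monotonicity}. Throughout, write the angular width of a tile in layer $\ell$ as
\[
w_\ell := 2\pi/N_\ell .
\]
By the definition of $N_\ell$ we have $w_\ell = \Theta(e^{\ell - C/2}/n)$. More precisely, the ceiling gives
\[
N_\ell \geq 2\pi n e^{C/2}/e^{\ell},
\qquad\text{so}\qquad
w_\ell \leq e^{\ell-C/2}/n .
\]
Conversely, since $\ell \leq \log n/\log\log n$, we have $n e^{-\ell} \to \infty$, so the ceiling costs only a $1+o(1)$ factor. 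This gives
\[
w_\ell \geq (1-o(1))\, e^{\ell-C/2}/n .
\]
A point of $\Layer{\ell}$ has radius in $[R-\ell-1, R-\ell]$. Recall $R = 2\log n + C$, and note that $r \geq 1$ and $r(x)+r(y) \geq R$ hold for all relevant points, so \Cref{lem:max-angle} applies. For two points in layers $\ell$ and $\ell'$ it gives
\[
e^{(\ell+\ell')/2 - C/2}/n \;\leq\; \theta_R \;\leq\; \pi e^{(\ell+\ell'+2)/2 - C/2}/n .
\]

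\textbf{Item (i).} Two points in the same tile have angular distance at most $w_\ell \leq e^{\ell - C/2}/n$. Their radii are at most $R-\ell$, so by monotonicity
\[
\theta_R(r(u),r(v)) \geq \theta_R(R-\ell,R-\ell) \geq e^{\ell - C/2}/n .
\]
Since $\angulardist{u}{v} \leq \theta_R(r(u),r(v))$, we get $\dist(u,v) \leq R$, and hence $\{u,v\} \in E(G)$.

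\textbf{Item (ii).} If $|i-j| \geq 20$ (measured cyclically), then at least $19$ full tiles separate $u$ and $v$, so
\[
\angulardist{u}{v} \geq 19\, w_\ell \geq 19(1-o(1))\, e^{\ell - C/2}/n .
\]
On the other hand, using the radii lower bound $R-\ell-1$,
\[
\theta_R(r(u),r(v)) \leq \pi e^{\ell+1-C/2}/n \approx 8.54\, e^{\ell-C/2}/n .
\]
This is smaller than the angular distance, so $u$ and $v$ are non-adjacent. The only care needed here is that the constant $40$ in $N_\ell$ gives the factor $2\pi n e^{C/2}/e^\ell$ precisely, and that $19 > \pi e$.

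\textbf{Item (iii).} Let $\ell' \leq \ell$, so $v$ is at least as close to the boundary as $u$ and $u'$, with $r(v) \geq R-\ell'-1$. Adjacency of $u$ and $v$ gives
\[
\angulardist{u}{v} \leq \pi e^{(\ell+\ell'+2)/2-C/2}/n \leq \pi e^{\ell+1-C/2}/n ,
\]
where the last step uses $\ell' \leq \ell$. Since $u$ and $u'$ are separated by at least $39$ tiles,
\[
\angulardist{u}{u'} \geq 39(1-o(1))\, e^{\ell-C/2}/n .
\]
By the triangle inequality for angles,
\[
\angulardist{u'}{v} \geq \bigl(39(1-o(1)) - \pi e\bigr)\, e^{\ell-C/2}/n .
\]
This must exceed $\theta_R(r(u'),r(v)) \leq \pi e\, e^{\ell-C/2}/n$. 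It does, because $39 > 2\pi e \approx 17.1$. Hence $\{u',v\} \notin E(G)$.

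The main obstacle is the bookkeeping of constants and the regime where \Cref{lem:max-angle} applies. One must check that $r(x)+r(y) \geq R$, which holds since both radii are at least $R - \ell - 1 \geq R/2$ for $\ell = o(\log n)$. One must also make sure the rounding in $N_\ell$ only helps, and that the cyclic wrap-around at angle $2\pi$ is handled; \Cref{rem:divisible} makes the indexing consistent. If the constants turn out tight, I would simply use the slack between $19$ (resp. $39$) and $\pi e$ (resp. $2\pi e$).
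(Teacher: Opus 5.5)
Your proof is correct and follows essentially the same route as the paper: bound the tile width via $N_\ell \ge 2\pi n e^{C/2}/e^{\ell}$, compare angular distances with the lower and upper bounds of \Cref{lem:max-angle} together with monotonicity, and for item (iii) use the same triangle-inequality step for angles. You are slightly more careful than the paper in two places: you control the ceiling in $N_\ell$ with the explicit $(1-o(1))$ factor, and you read $|i-j|$ cyclically, which is needed because the statement's literal $|i-j|$ fails at the wrap-around and the paper leaves this implicit.
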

\begin{proof}
\Cref{item:one}: W.l.o.g. let $R - \ell - 1 < r(u) \leq r(v) \leq R - \ell$. Then, using that $N_\ell = 40\cdot \left\lceil \frac{n \cdot \pi e^{C/2}}{20e^{\ell}}\right\rceil  \geq \frac{2 n \cdot \pi e^{C/2}}{e^{\ell}}$, it holds for the angular distance between $u, v$ that
\begin{align*}
    \angulardist{u}{v} &\leq 2\pi/ N_\ell
    \leq \frac{e^{\ell}}{ne^{C/2}} 
    \leq \theta_R(R- \ell, R - \ell)
    \leq \theta_R(r(u), r(v)),
\end{align*}
by applying the lower bound of \Cref{lem:max-angle} and  $\theta_R(\cdot, \cdot)$ is monotonically decreasing in both arguments (see e.g. \cite[Remark 4]{km-slcrhg-19}). Hence, $u$ and $v$ have an edge as desired.

\smallskip

\Cref{item:two}: W.l.o.g. let $R - \ell - 1 < r(u) \leq r(v) \leq R - \ell$. Then, using that there are at least $19$ tiles in between $u$ and $v$ and that $N_\ell = 40\cdot \left\lceil \frac{n \cdot \pi e^{C/2}}{20e^{\ell}}\right\rceil  \leq \frac{2 n \cdot \pi e^{C/2}}{e^{\ell}} + 1$, it holds for the angular distance between $u, v$ that
\begin{align*}
    \angulardist{u}{v} &\geq 38\pi/ N_\ell
    \geq \frac{18 e^{\ell}}{ne^{C/2}}
    > \frac{\pi e^{\ell+1}}{ne^{C/2}}
    \geq \theta_R(R- \ell - 1, R - \ell - 1)
    \geq \theta_R(r(u), r(v)),
\end{align*}
by applying the upper bound of \Cref{lem:max-angle} and the monotonicity of $\theta_R(\cdot, \cdot)$. Hence, $u$ and $v$ have no edge as desired.

\smallskip

\Cref{item:three}:
Since $v \in N(u)$, it holds 
$$
\theta_R(r(u), r(v)) \leq \pi e^{(R-r(v) -r(u))/2},
$$
using \Cref{lem:max-angle}. W.l.o.g. let $u$ have angular coordinate $\varphi(u) = 0$ such that 
$$\varphi(v) \leq \frac{\pi \cdot e^{\ell+1-C/2}}{n}, $$
using that $\theta_R(\cdot, \cdot)$ is monotonically decreasing in both arguments and $r(u), r(v) \geq R - \ell - 1$. 

On the other hand, we have 
$$\varphi(u') \geq 78\pi/ N_\ell \geq \frac{38 e^{\ell}}{ne^{C/2}} ,$$
using that $\varphi(u) = 0$ and between $u$ and $u'$ there are at least $39$ tiles as $|i-j|\geq 40$.

Combining these bounds on the angle, we get for the angular distance between $u'$ and $v$ that
$$
\angulardist{u'}{v} = |\varphi(u') - \varphi(v)| \geq \frac{38 e^{\ell}}{ne^{C/2}} - \frac{\pi \cdot e^{\ell+1-C/2}}{n} > \frac{10 e^{\ell}}{ne^{C/2}} > \frac{\pi \cdot e^{\ell+1-C/2}}{n} > \theta_R(r(u'), r(v)),
$$
using the monotonocity of $\theta_R(\cdot, \cdot)$, $r(u'), r(v) \geq R - \ell - 1$ and \Cref{lem:max-angle}. Subsequently, $v \not\in N(u')$ which is what we sought to prove. 
\end{proof}
\subsection{Embedding-Aware Maximal Independent Set (MIS part of Theorem \ref{thm:geometric})}\label{sec:mis_geom}
We now exploit the knowledge of the coordinates, which implies the knowledge of which tile a vertex belongs to, in order to find an independent set in an HRG.
We use the geometry and the coordinates in an algorithm we refer to as \algomis, for which we first give an informal description. 

We iterate through the layers in bottom-up fashion, starting with layer $\Layer{0}$ as follows: we ``activate'' tiles in the current layer such that ``active tiles'' are far enough apart from each other so that vertices in different active tiles do not have an edge. Thus, in parallel, we can then select one vertex in every active tile to participate in the independent set (as long as none of its neighbours already participates in the independent set). Since vertices in the same tile form a clique, this removes all vertices from a tile. We then iterate in this fashion through all tiles of a layer such that all vertices in a layer either participate in the independent set or have a neighbour that is part of the independent set. The key observation here is that we only need to iterate through $\bigO(\log\log n)$ layers, since after iterating through layer $\Layer{0}$, every vertex with radius at most $R - 4\log\log n$ has a neighbour that participates in the independent set (this is addressed in \Cref{lem:onion-mis-1}).   

Indeed, we shall show in this section that this procedure, which is formalised in \Cref{algo:onion-mis}, gives a maximal independent set within $\bigO(\log\log n)$ rounds. 
\begin{algorithm}
\caption{\algomis}
\begin{algorithmic}[1]\label{algo:onion-mis}

\State \textbf{Input:} Graph $G = (V,E)$
\State \textbf{Output:} Independent set $I \subseteq V$

\State $I \gets \emptyset$ \text{// Initialise independent set}

\For{$\ell = 0$ to $\lceil 4\log\log n \rceil$} \text{// Iterate through layers bottom-up}

    \For{$m = 0$ to $39$} \text{// Activate every 40th tile in layer $\ell$}

        \State \textbf{(Executed in parallel at all nodes $u \in V$)}

        \State $U \gets \emptyset$ \text{// Deactivate all previously active vertices}

        \State \textbf{Activation step}
        \ForAll{$u \in V$}
            \If{$u \in V \cap \tiletile{i}{j}$ such that $i \equiv m \ (\mathrm{mod}\ 40)$} \text{// Check if vertex $u$ is in active tile}
                \If{$N(u) \cap I = \emptyset$} \text{// Check if vertex $u$ has no neighbour in the independent set}
                    \State $U \gets U \cup \{u\}$ \text{// Activate vertex $u$} 
                \EndIf
            \EndIf
        \EndFor

        \State \textbf{Communication step}
        \ForAll{$u \in U$}
            \State send $\varphi(u)$ to all $v \in N(u) \cap U$ \text{// Send angular coordinate to all active neighbours}
        \EndFor

        \State \textbf{Selection step}
        \ForAll{$u \in U$}
            \State let $v_1 \prec v_2 \prec \dots$ be $N(u) \cap U$ sorted by $\varphi(v_i)$ \text{// Ordering by angular coordinates}
            \If{$u = v_1$} \text{// Select vertex $u$ with smallest angular coordinate}
                \State $I \gets I \cup \{u\}$ \text{// Add vertex $u$ to independent set}
            \EndIf
        \EndFor

    \EndFor

\EndFor

\State \textbf{Return} $I$

\end{algorithmic}
\end{algorithm}
We first observe that every 40-th tile in layer $\Layer{0}$ contains a vertex that is part of the independent set (if there is at least one vertex in this tile). 
\begin{observation}\label{obs:tile-layer0}
If $|V \cap \tiletile{0}{i}| \geq 1$ such that $i \equiv 0 \pmod{40}$, then $V \cap \tiletile{0}{i}$ contains a vertex that is in the set $I$.
\end{observation}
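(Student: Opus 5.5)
We argue directly from how \Cref{algo:onion-mis} behaves in the very first sub-iteration, i.e. $\ell = 0$ and $m = 0$. At that point $I = \emptyset$. So the activation step activates every vertex lying in a tile $\tiletile{0}{i}$ with $i \equiv 0 \pmod{40}$; the condition $N(u)\cap I=\emptyset$ holds trivially.

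Fix such an $i$ with $V \cap \tiletile{0}{i} \neq \emptyset$, and let $u$ be the vertex of $V\cap\tiletile{0}{i}$ with smallest angular coordinate. Angular coordinates are almost surely distinct. The goal is to show that $u$ is the minimum $v_1$ of its active neighbourhood, so that it joins $I$ in the selection step.

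To that end we determine $N(u)\cap U$:
\begin{itemize}
\item By \Cref{lem:tiling}~\ref{item:one}, every other vertex of $\tiletile{0}{i}$ is adjacent to $u$ and is active.
\item Any other active vertex lies in some tile $\tiletile{0}{j}$ with $j\equiv 0\pmod{40}$ and $j\neq i$. Then $|i-j|\geq 40\geq 20$, with the index difference read cyclically; this is consistent because $N_0$ is divisible by $40$ (\Cref{rem:divisible}). Hence \Cref{lem:tiling}~\ref{item:two} gives that such a vertex is not adjacent to $u$.
\end{itemize}
Therefore $N(u)\cap U = (V\cap\tiletile{0}{i})\setminus\{u\}$. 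By the choice of $u$, it is the minimum in the angular order among $u$ together with these neighbours. Reading the selection rule as comparing $u$ with its active neighbours (closed neighbourhood), $u$ is added to $I$.

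The only subtle point is the wrap-around of tile indices modulo $N_0$. Property \ref{item:two} is stated for $|i-j|\geq 20$, but the tiles at index $0$ and $N_0-40$ are $40$ apart cyclically, so the cyclic reading is the one needed; it follows from the same angular-distance bound in the proof of \Cref{lem:tiling}. Finally, vertices added to $I$ in later iterations never remove elements from $I$, so $u\in I$ persists in the returned set. This proves the observation.
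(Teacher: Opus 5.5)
Your proof is correct and follows essentially the same route as the paper. In the first sub-iteration ($\ell=m=0$, $I=\emptyset$), the clique property of tiles from \Cref{lem:tiling}~(i) makes the minimum-angle vertex of each nonempty active tile its unique candidate, and \Cref{rem:divisible} together with \Cref{lem:tiling}~(ii) rules out active neighbours in other active tiles. Your explicit treatment of the closed-neighbourhood reading of the selection rule and of the cyclic wrap-around of tile indices makes precise two points the paper leaves implicit.
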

\begin{proof}
    Throughout the proof, we call a tile that contains at least one active vertex an active tile. Note that the tiles for which we wish to show the desired property are the tiles that are active in \algomis if $\ell, m = 0$, i.e., the first active tiles in the algorithm such that $I = \emptyset$.

    We fix any active tile, i.e., a tile $\tiletile{0}{i}$ such that $i \equiv 0 \pmod{40}$. Since all vertices that share the same tile form a clique by \Cref{lem:tiling}~\Cref{item:one}, in the \emph{Selection step}, every active tile with at least one vertex has exactly one vertex $u$ that tries to be part of the set $I$. So it is sufficient to show that $u$ has no active neighbour in any other active tile. By the \emph{Activation step} and using \Cref{rem:divisible}, it follows that between two active tiles there are at least $39$ non-active tiles between two active tiles. Thus, since any other active tile is in the same layer as the tile $\tiletile{0}{i}$, any vertex $v$ that is included in any other active tile is not a neighbour of $u$ by \Cref{lem:tiling}~\Cref{item:two}. 
\end{proof}

We make use of this observation to show that, after iterating through layer $\Layer{0}$, vertices with radial coordinate larger than $R - 4\log\log n$ have a neighbour that is part of the independent set.
\begin{lemma}[First iteration of \algomis]\label{lem:onion-mis-1}
    When $\ell \geq 1$ in \algomis, any vertex $u \in V$ with radial coordinate $r(u) \geq R - 4\log\log n$ has a neighbour that is included in the set $I$ with probability $1 - n^{-\omega(1)}$. 
\end{lemma}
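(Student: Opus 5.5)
We read the statement literally: after the whole sweep over layer $\Layer{0}$ (all $40$ phases $m=0,\dots,39$ with $\ell=0$), every vertex $u$ with $r(u)\ge R-4\log\log n$ has a neighbour in $I$. We read ``has a neighbour in $I$'' as ``is dominated by $I$'', i.e.\ $u\in I$ or $N(u)\cap I\neq\emptyset$, since a vertex of $I$ cannot have a neighbour in $I$. Such a $u$ lies in some layer $\Layer{\ell}$ with $0\le \ell\le \lceil 4\log\log n\rceil$. We split by $\ell$ and handle two regimes.

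\textbf{Layer $0$ itself (deterministic).} Fix a non-empty tile $\tiletile{0}{i}$ and consider the phase $m\equiv i \pmod{40}$. Either every vertex of the tile already has a neighbour in $I$, and we are done. Or the set $U$ of still-undominated vertices in the tile is non-empty. By \Cref{lem:tiling}~\Cref{item:one} the tile is a clique, so exactly one vertex of $U$ is the angular minimum inside the tile. By \Cref{lem:tiling}~\Cref{item:two} and \Cref{rem:divisible}, it has no active neighbour in any other simultaneously active tile, because those tiles are at least $40$ indices away. This is the argument of \Cref{obs:tile-layer0}, applied to every phase rather than only to $m=0$. So this vertex joins $I$ and dominates its whole tile. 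After the $40$ phases, every vertex of $V\cap\Layer{0}$ is therefore dominated.

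\textbf{Layers $1\le\ell\le 4\log\log n$ (probabilistic).} Fix $u\in V\cap\Layer{\ell}$ and place it at angle $0$. By \Cref{lem:max-angle}, $u$ is adjacent to every point of $\Layer{0}$ within angle $\theta_R(r(u),R)\ge e^{(\ell-C)/2}/n$ in the worst case. Each layer-$0$ tile has angular width $2\pi/N_0=\Theta(1/n)$, so $\B_u(R)$ fully contains $k=\Theta(e^{\ell/2})$ layer-$0$ tiles. Each of these tiles is non-empty with constant probability, since its measure is $\Theta(1/n)$ by \Cref{lem:layer-properties}. By the previous paragraph, each non-empty tile contains a vertex of $I$ unless all its vertices were already dominated, and in the latter case those vertices are dominated by $I$-vertices in nearby tiles. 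We would like to conclude directly that some $I$-vertex lies inside $\B_u(R)$. For that, we use that tiles at index distance $\ge 40$ evolve independently of each other up to boundary effects (\Cref{lem:tiling}~\Cref{item:three}). We pick $\Theta(k)$ well-separated tiles well inside the neighbourhood of $u$. Each of them independently, with constant probability, is non-empty while its $\pm 40$ surrounding tiles are empty. In that case its phase certainly adds a vertex of that tile to $I$, and that vertex is a neighbour of $u$. The failure probability is then $e^{-\Omega(k)}$. A union bound over the $\bigO(n)$ vertices finishes the argument, provided $k=\omega(\log n)$.

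\textbf{Main obstacle.} The condition $k=\omega(\log n)$ only holds for $\ell\ge 2\log\log n+\omega(1)$. For the intermediate range $1\le \ell\lesssim 2\log\log n$, $u$ sees only $\bigO(\log n)$ layer-$0$ tiles, so the independent-tile argument does not reach $1-n^{-\omega(1)}$. Here I would first try to exploit domination of the neighbours of $u$ in layer $0$. Every non-empty tile in the neighbourhood of $u$ is dominated by the sweep. If $u$ has a layer-$0$ neighbour $v$, then $v$ is dominated by some $w\in I\cap\Layer{0}$ in the same or an adjacent tile. I would then argue via the angular slack of \Cref{lem:max-angle} that $w$ is still within angle $\theta_R(r(u),r(w))$ of $u$, using that $u$ has larger neighbourhood width than a layer-$0$ vertex. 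This is an analogue of \Cref{lem:triangle-inequality} in which the required radial gap is replaced by an angular margin. The delicate part is making this margin rigorous when $\ell$ is a small constant, because there the neighbourhood of $u$ in layer $0$ is only a constant number of tiles wide. If this fails, I would fall back to showing that the bad event has probability $n^{-\omega(1)}$ only for $\ell\ge c\log\log n$. For smaller $\ell$ I would instead show that $u$ itself is handled by the later sweep over layers $1,\dots,\lceil 4\log\log n\rceil$.
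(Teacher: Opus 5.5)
\textbf{The gap.} The regime $1\le \ell\lesssim 2\log\log n$, which you flag as the main obstacle, cannot be closed: under your literal reading the statement is false there.

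Fix $u\in V\cap\Layer{1}$. At the start of iteration $\ell=1$ we have $I\subseteq V\cap\Layer{0}$. By \Cref{lem:max-angle} and \Cref{rmk:theta-monotonicity}, every layer-$0$ neighbour of $u$ lies within angle $\theta_R(R-2,R-1)\le \pi e^{(3-C)/2}/n$ of $u$. This window has measure $\bigO(1/n)$, so by the Poisson property it is empty with probability $\Omega(1)$, and then $u$ has no neighbour in $I$. The failure probability is a constant, not $n^{-\omega(1)}$.

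Neither of your proposed repairs helps. The angular-margin analogue of \Cref{lem:triangle-inequality} presupposes that $u$ has a layer-$0$ neighbour at all. Even when it does, the vertex of $I$ dominating that neighbour can lie outside $u$'s window, which for constant $\ell$ is only $\bigO(1)$ tiles wide. The fallback ``$u$ is handled by the later sweep'' is the content of \Cref{lem:onion-mis-2}, not of this lemma.

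\textbf{What you are missing.} The inequality in the statement is reversed. The paper's proof only uses $\theta_R(r(u),R)\ge e^{-r(u)/2}\in\Omega(\log^2 n/n)$, which requires $r(u)\le R-4\log\log n$ up to an additive constant. This inner-disk version is also what \algomis needs, since the sweep analysed in \Cref{lem:onion-mis-2} already processes exactly the layers $0,\dots,\lceil 4\log\log n\rceil$.

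For the inner-disk claim, your probabilistic paragraph is essentially the paper's proof, and it simplifies:
\begin{itemize}
\item $\B_u(R)$ contains $\Omega(\log^2 n)=\omega(\log n)$ whole layer-$0$ tiles $\tiletile{0}{i}$ with $i\equiv 0\pmod{40}$.
\item These tiles are disjoint, so each is non-empty independently with constant probability.
\item By \Cref{obs:tile-layer0}, every non-empty such tile already puts a vertex into $I$ in the very first phase ($\ell=m=0$, where $I=\emptyset$), and that vertex is a neighbour of $u$.
\end{itemize}
So you need neither the condition that the surrounding $\pm 40$ tiles are empty nor any independence across phases. The per-vertex failure probability is $e^{-\Omega(\log^2 n)}$, and a union bound over the $\bigO(n)$ vertices finishes the proof. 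Your deterministic layer-$0$ paragraph belongs to \Cref{lem:onion-mis-2} and is not needed here.
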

\begin{proof}
Using Observation \ref{obs:tile-layer0}, it is sufficient to show that for any vertex $u$ with radius $r(u) \geq R - 4\log\log n$, there exists at least one neighbour $v \in N(u) \cap \tiletile{0}{i}$ such that $i \equiv 0 \pmod{40}$.

To show this, fix any vertex $u$ with radius $r(u) \geq R - 4\log\log n$ and note that, since $\theta_R(r(u), R) \in \Omega(\log^2(n)/n)$ (using \Cref{lem:max-angle}) and the angular width of a tile $\tiletile{0}{i}$ is at most $\Theta(1/n)$ (by \Cref{eq:tile}), there are at least $\Omega(\log^2 n)$ tiles $\tiletile{0}{i}$ such that $i \equiv 0 \pmod{40}$ in the neighbourhood disk $\B_u(R)$. Hence, it is sufficient to show that at least one of the $\Omega(\log^2 n)$ tiles contains any vertex. Since tiles do not overlap and $\mu(\tiletile{0}{i}) = \mu(\Layer{0})/{N_0} \in \Omega(1/n)$ (using \Cref{eq:layer_measure} and angle $1/N_0 \in \Theta(1/n)$ of a tile $\tiletile{0}{i}$), we conclude that the expected number of vertices in distinct active tiles (tiles $\tiletile{0}{i}$ tiles such that $i \equiv 0 \pmod{40}$) in the neighbourhood disk $\B_u(R)$ is $\Omega(\log^2 n)$. Since the number of vertices in each tile follow a Poisson-distribution, a Chernoff-bound yields that there is at least one active tile in the neighbourhood radius of $u$ with probability $1 - n^{-\omega(1)}$. Hence,$u$ has a neighbour $v \in N(u) \cap \tiletile{0}{i}$ that participates in the set $I$ with probability $1 - n^{-\omega(1)}$ and a union bound over all vertices gives the desired result. 
\end{proof}
Next, we show that \algomis produces an MIS on the vertices that were not removed by the previous lemma. 
\begin{lemma}[\algomis gives MIS]\label{lem:onion-mis-2}
   After \algomis has terminated, any vertex $u \in V$ with radial coordinate $r(u) \leq R - 4\log\log n$ is either in the set $I$ or has a neighbour $v \in N(u)$ that is in the set $I$.
\end{lemma}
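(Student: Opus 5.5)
The plan is to split the vertices according to whether the bottom-up sweep of \algomis processes their layer. The processed layers are $\Layer{0},\dots,\Layer{\lceil 4\log\log n\rceil}$, so they contain every vertex with $r(u)\ge R-\lceil 4\log\log n\rceil-1$. Vertices of strictly smaller radius are never activated. For those we show domination using only the independent-set vertices created while the sweep processes layer $\Layer{0}$, by repeating the counting argument of \Cref{lem:onion-mis-1}.

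\textbf{Processed layers (deterministic part).} Fix a layer $\ell$ and a residue $m$. The active tiles in this iteration are the $\tiletile{\ell}{i}$ with $i\equiv m \pmod{40}$. By \Cref{rem:divisible}, any two of them have index difference at least $40$, including across the wrap-around at angle $2\pi$. \Cref{lem:tiling}~\Cref{item:two} then shows that active vertices in different active tiles are non-adjacent. Hence for an active vertex $u$, the set $N(u)\cap U$ consists exactly of the other active vertices of its own tile. That tile is a clique by \Cref{lem:tiling}~\Cref{item:one}. So all active vertices of the tile see the same candidate set, and exactly one of them, the one with the smallest angular coordinate, joins $I$. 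Coordinates are assumed distinct, so this vertex is well defined. Independence is preserved, because a vertex is activated only if it has no neighbour in $I$, and two vertices selected in the same iteration are non-adjacent. Now take any $u$ in a tile $\tiletile{\ell}{i}$ and consider the iteration with $m\equiv i \pmod{40}$. Either $u$ already has a neighbour in $I$, or $u$ is active; in the latter case its tile contributes a vertex to $I$ that is $u$ itself or a neighbour of $u$. Membership in $I$ and being dominated are never undone, so this covers every vertex of every processed layer.

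\textbf{Deep vertices (probabilistic part).} Fix a vertex $u$ with $r(u)\le R-4\log\log n$. By \Cref{rmk:theta-monotonicity} and \Cref{lem:max-angle},
\[
\theta_R(r(u),R)\ \ge\ \theta_R(R-4\log\log n,R)\ \in\ \Omega(\log^2 n/n).
\]
Each tile of $\Layer{0}$ has angular width $\Theta(1/n)$, so $\B_u(R)$ fully contains $\Omega(\log^2 n)$ tiles $\tiletile{0}{i}$ with $i\equiv 0\pmod{40}$. By \Cref{obs:tile-layer0}, every nonempty such tile contains a vertex of $I$, and that vertex is adjacent to $u$. Each of these tiles has measure $\Omega(1/n)$ by \Cref{eq:layer_measure}, and the tiles are disjoint. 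Under the Poisson process, the number of nonempty tiles among them is therefore a sum of independent indicators with mean $\Omega(\log^2 n)$. A Chernoff bound shows that at least one is nonempty with probability $1-n^{-\omega(1)}$. A union bound over the $\bigO(n)$ vertices (\whp) then gives the claim for all deep vertices simultaneously.

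\textbf{Main obstacle.} The hard part is the bookkeeping, not the probability. One must check that the interval of radii treated in the deep case, $r(u)\le R-4\log\log n$, and the radii of the processed layers overlap and together cover all of $[0,R]$, including rounding at the boundary layer $\lceil 4\log\log n\rceil$. One must also check that the dominating vertices found in the deep case are really in $I$ when the algorithm terminates. They are, because \Cref{obs:tile-layer0} concerns the very first iteration ($\ell=m=0$), where $I=\emptyset$, and membership in $I$ is never revoked afterwards. Finally, the tile-separation claims must survive the cyclic wrap-around of tile indices, which is exactly the situation \Cref{rem:divisible} was designed to handle.
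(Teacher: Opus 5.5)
Your argument is correct, and its two halves are exactly the paper's two ingredients. The tile-by-tile argument for the swept layers is the paper's proof of this lemma. The paper phrases it as an induction over layers, but, as you observe, the per-tile argument together with the fact that membership in $I$ and domination are never undone already suffices. Your deep-vertex counting is the proof of \Cref{lem:onion-mis-1}.

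The real difference is in what is proved here. The paper's proof only establishes domination for vertices in the processed layers $\Layer{0},\dots,\Layer{\lceil 4\log\log n\rceil}$, deterministically. In effect it treats the statement with the inequality reversed, i.e.\ $r(u)\ge R-\lceil 4\log\log n\rceil-1$. The never-activated inner vertices are left to \Cref{lem:onion-mis-1}, whose inequality is also reversed relative to its proof.

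Because you read the statement literally, you must include the probabilistic part, so your conclusion holds only with probability $1-n^{-\omega(1)}$. This is unavoidable for the literal statement. A vertex below layer $\lceil 4\log\log n\rceil$ is never activated, so it is undominated whenever it has no neighbour in the swept layers, and that event has positive probability. You should say explicitly that you are proving the w.h.p.\ version. In exchange, your write-up already gives full maximality of $I$, which the paper obtains only by combining the two lemmas.
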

\begin{proof}
    Note that if for all $\ell \in [\lceil 4\log\log n\rceil + 1]$ it holds that after iteration $\ell$ all vertices up to radius $R - \ell$ are either part of the independent set or have a neighbour that is in the independent set, we are done.
    
We prove this by induction. For $\ell = 0$, observe that for any $m$ in~\Cref{algo:onion-mis} that for any pair of active vertices $u,v$, either (a) $u$ and $v$ lie in the same tile, or (b) there are at least $39$ tiles between $u$ and $v$. This follows from the Activation step of \algomis together with \Cref{rem:divisible}. Hence, by \Cref{lem:tiling}, every active vertex belongs to a connected component that forms a clique among active vertices in the same tile. Thus, for any active tile, we can select any vertex $u$ to be included in the $I$, and all other vertices in the same tile will have $u$ as a neighbour that is included in the set $I$. This is achieved by the \emph{Selection step} in \algomis and as such, for $\ell = 0$, all vertices $V \cap \Layer{\ell = 0}$ are either part of the set $I$, or they have a neighbour that is in the set $I$ since every tile gets processed exactly once. Thus, after iteration $\ell = 0$, the set $I$ is a maximal independent set for $G[V \cap \Layer{0}]$ since any tile in $\Layer{0}$ is exactly once active.

For the induction step, fix any $\ell$ and suppose that, by the induction hypothesis, after processing layers $0,\dots,\ell-1$, the set $I$ is a maximal independent set of
$
G\left[V \cap \left(\disk \setminus \B_0(R-\ell-1)\right)\right].
$
Consider iteration $\ell$ of \Cref{algo:onion-mis}. Any vertex $u \in V_\ell$ that already has a neighbour in $I$ is not activated by the algorithm and is therefore already dominated. Hence, it remains to consider the active vertices, namely those in $V_\ell \setminus N(I)$.

On the induced subgraph $G[V_\ell \setminus N(I)]$, the same arguments as in the base case apply. For any fixed value of $m$, active vertices are either contained in the same tile or are separated by at least $39$ tiles. Thus, by \Cref{lem:tiling}, active vertices in different active tiles are non-adjacent, while active vertices within the same tile form a clique. Consequently, the \emph{Selection step} chooses exactly one active vertex from every non-empty active tile, and every other active vertex in that tile has a neighbour that is added to $I$.

Therefore, after iteration $\ell$, every vertex in $V_\ell$ either already had a neighbour in $I$ before the iteration or is itself added to $I$ or has a neighbour that is added to $I$ during the iteration. Hence, after processing every tile in layer $\ell$ exactly once, every vertex up to radius $R-\ell$ is either contained in $I$ or has a neighbour in $I$, completing the induction.
\end{proof}
The following algorithm gives us a bound on the number of rounds for \algomis
\begin{lemma}[\algomis Runtime]\label{lem:onion-mis-run}
    The algorithm \algomis requires $\bigO(\log \log n)$ rounds \whp for \CONGEST.
\end{lemma}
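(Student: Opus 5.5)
We bound the round complexity by counting loop iterations of \Cref{algo:onion-mis} and showing that each inner iteration costs a constant number of \CONGEST rounds. The outer loop runs over $\ell = 0,\dots,\lceil 4\log\log n\rceil$, i.e.\ $\bigO(\log\log n)$ values. The inner loop runs over the $40$ residues $m$. So there are $\bigO(\log\log n)$ inner iterations in total, and it suffices to show that a single inner iteration takes $\bigO(1)$ rounds with messages of $\bigO(\log n)$ bits.

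For one inner iteration we go through the three steps in order.

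\emph{Activation step.} A vertex $u$ knows its own coordinates and hence its layer and tile index, so it can locally decide whether its tile is active. To test $N(u)\cap I=\emptyset$, we maintain the invariant that every vertex knows whether it has a neighbour in $I$. This costs one round per iteration: each vertex that joined $I$ in the previous selection step broadcasts a single bit to its neighbours.

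\emph{Communication step.} Each active vertex sends its angular coordinate to its neighbours, together with a bit saying it is active. This is one round.

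\emph{Selection step.} This is purely local computation: $u$ compares $\varphi(u)$ with the received angles of its active neighbours.

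The only \CONGEST-specific point is the message size. Coordinates are real numbers, so we transmit them rounded to $\bigO(\log n)$ bits of precision, which I expect to be the main (minor) obstacle. Vertices only need to compare angles. Since the coordinates are drawn from a continuous distribution, \whp all pairwise angular differences between the at most $\bigO(n)$ vertices (the Poisson count $N$ is $\bigO(n)$ \whp by a Chernoff bound) are at least $n^{-c}$ for some constant $c$; this follows from a union bound over pairs. Hence a $\Theta(\log n)$-bit encoding of $\varphi(u)$ preserves the order. Alternatively, a vertex can transmit its ID instead of its angle and break ties by ID, which avoids precision issues entirely. In either case the selection rule still chooses exactly one vertex per active clique tile, so the correctness argument of \Cref{lem:onion-mis-2} is unaffected. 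The only other information needed is which tile a vertex lies in, and this is computed locally.

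Multiplying $\bigO(1)$ rounds per iteration by $40\cdot(\lceil 4\log\log n\rceil+1)$ iterations gives $\bigO(\log\log n)$ rounds. The \whp qualifier comes only from the encoding/precision and vertex-count bounds above; the round count itself is otherwise deterministic.
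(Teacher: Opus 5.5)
Your proposal is correct and follows essentially the same argument as the paper: $40\cdot(\lceil 4\log\log n\rceil+1)$ iterations, each costing $\bigO(1)$ \CONGEST rounds, with angular coordinates sent using $\bigO(\log n)$ bits so that the ordering needed in the Selection step is preserved \whp (the paper invokes \Cref{lem:truncation}, where you use an equivalent pairwise-gap union bound). Your additional details are valid refinements of points the paper leaves implicit: the one-bit broadcast that keeps each vertex informed about $N(u)\cap I$, and the option of breaking ties by IDs, which even makes the round bound deterministic.
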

\begin{proof}
This follows from noting that \algomis iterates through $\bigO(\log\log n)$ layers and $m \in \bigO(1)$ in line 4 of \Cref{algo:onion-mis}. Moreover, each iteration there is only one communication step where vertices communicate their angular coordinate in the \emph{Communication step}. Though angular coordinates are a real number, it suffices to send $\bigO(\log n)$ bits \whp by \Cref{lem:truncation} to obtain an ordering that is required for the \emph{Selection step}. Hence \ $\bigO(1)$ communication rounds per iteration are sufficient \whp for \CONGEST. Thus, we obtain that the number of rounds is $\bigO(\log\log n)$ \whp as claimed.
\end{proof}
We finish the section by using the established results to show that \algomis produces a maximal independent set in $\bigO(\log\log n)$ rounds.
\begin{proposition}[Embedding-aware Maximal Independent Set]
    There exists a deterministic $\bigO(\log\log n)$-round \CONGEST algorithm to find a Maximal Independent Set on HRGs \whp if the vertices have access to their geometric coordinates.
\end{proposition}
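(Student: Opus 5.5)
The plan is to assemble the proposition directly from the lemmas already established for \algomis: independence, maximality, round complexity and determinism.

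\textbf{Independence.} We argue that $I$ stays independent throughout the execution, by induction over the iterations $(\ell,m)$. Within one iteration, \Cref{rem:divisible} and the Activation step imply that two active vertices either lie in the same tile or are separated by at least $39$ tiles of the same layer $\Layer{\ell}$. In the first case, the Selection step picks only the active vertex with minimal angular coordinate among its active neighbours; since the tile is a clique by \Cref{lem:tiling}~\Cref{item:one}, exactly one vertex per tile is selected. In the second case, the vertices are non-adjacent by \Cref{lem:tiling}~\Cref{item:two}. Across iterations, a vertex is only activated if $N(u)\cap I=\emptyset$, so a newly added vertex is never adjacent to an earlier member of $I$. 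Hence $I$ remains independent.

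\textbf{Maximality.} We split $V$ by radius.
\begin{itemize}
\item Vertices with $r(u)\geq R-4\log\log n$: by \Cref{lem:onion-mis-1}, with probability $1-n^{-\omega(1)}$ each of them has a neighbour in $I$ once layer $\Layer{0}$ has been processed. Any such vertex that was itself selected later is of course also covered.
\item Vertices with $r(u)\leq R-4\log\log n$: \Cref{lem:onion-mis-2} shows that each is in $I$ or dominated after termination.
\end{itemize}
These two ranges overlap, so together they cover all of $V$. There is one point I expect to need care, and it is the main obstacle. The loop runs only up to $\ell=\lceil 4\log\log n\rceil$, so I must check that the induction in \Cref{lem:onion-mis-2} really handles every vertex of radius at most $R-4\log\log n$, including those deep in the inner disk, which are never activated. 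If \Cref{lem:onion-mis-2} as stated does not already cover them, the fallback is the domination argument from the first bullet: any vertex in $\B_0(R-4\log\log n)$ has $\Omega(\log^2 n)$ nonempty active tiles of $\Layer{0}$ in its neighbourhood w.e.h.p., since $\theta_R$ only grows as the radius decreases (\Cref{rmk:theta-monotonicity}). Therefore it is dominated already after $\ell=0$, by the same Chernoff-and-union-bound argument over all $\bigO(n)$ vertices. This also shows that the upper radius range is handled before the later layers ever activate.

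\textbf{Rounds and determinism.} The round bound of $\bigO(\log\log n)$ in \CONGEST w.h.p. is exactly \Cref{lem:onion-mis-run}: there are $\bigO(\log\log n)$ layers, $40$ sub-iterations each, and $\bigO(1)$ rounds per sub-iteration. Messages are truncated to $\bigO(\log n)$ bits via \Cref{lem:truncation}, and the \whp\ in the statement is inherited from that lemma. The algorithm makes no random choices: activation depends only on coordinates, and selection uses the minimal angular coordinate. The only randomness is therefore in the HRG itself, which gives a deterministic algorithm that is correct \whp\ over the draw of $G$. A final union bound over the failure events of \Cref{lem:onion-mis-1} and \Cref{lem:truncation} completes the proof.
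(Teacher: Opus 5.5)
Your overall plan (independence, maximality from \Cref{lem:onion-mis-1} and \Cref{lem:onion-mis-2}, rounds from \Cref{lem:onion-mis-run}) matches the paper's proof. Your independence, runtime and determinism parts are fine, and the explicit independence argument is a small plus. The maximality part, however, has the radius ranges inverted and rests on a false claim.

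Taking \Cref{lem:onion-mis-1} literally, you assert that every vertex with $r(u)\ge R-4\log\log n$ is dominated once $\Layer{0}$ has been processed, and you conclude that ``the upper radius range is handled before the later layers ever activate''. This is wrong. After $\ell=0$, the set $I$ contains only $\Layer{0}$-vertices. A vertex of $\Layer{1}$ has expected degree $\Theta(1)$ by \Cref{lem:layer-properties}, so with constant probability it has no neighbour in $\Layer{0}$ and is still undominated. If your claim held, the loop over $\ell=1,\dots,\lceil 4\log\log n\rceil$ would be superfluous and the algorithm would finish in $\bigO(1)$ rounds.

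The tile-counting argument behind \Cref{lem:onion-mis-1} needs $\theta_R(r(u),R)\in\Omega(\log^2 n/n)$. By \Cref{lem:max-angle} this holds for $r(u)\le R-4\log\log n$. For $r(u)$ near $R$ one only has $\theta_R(r(u),R)=\Theta(1/n)$, which covers $\bigO(1)$ tiles. The inequalities in the \emph{statements} of \Cref{lem:onion-mis-1} and \Cref{lem:onion-mis-2} are swapped relative to their proofs. The informal description before the algorithm states it correctly: after $\Layer{0}$, every vertex of radius \emph{at most} $R-4\log\log n$ is dominated. This is also why your monotonicity remark works: it transfers the bound inward from the base radius $R-4\log\log n$, not from the outer range.

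The fix is to swap the roles of your two bullets.
\begin{itemize}
\item Your ``fallback'' is exactly the correct argument and is the real content of \Cref{lem:onion-mis-1}. Every $u\in\B_0(R-4\log\log n)$ has $\Omega(\log^2 n)$ tiles $\tiletile{0}{i}$ with $i\equiv 0 \pmod{40}$ entirely inside its neighbourhood. One of them is nonempty \wehp, and by \Cref{obs:tile-layer0} it contributes a vertex of $I$.
\item The vertices with $r(u)\ge R-4\log\log n$ lie in the processed layers $\Layer{0},\dots,\Layer{\lceil 4\log\log n\rceil}$. They are covered by the layer-by-layer induction in the proof of \Cref{lem:onion-mis-2}: each tile is processed exactly once, each tile is a clique, and exactly one of its still-undominated vertices joins $I$.
\end{itemize}
With this assignment the two ranges cover $V$, and the rest of your argument goes through and coincides with the paper's proof.
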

\begin{proof}
Note that set $I$ fulfils the properties of an MIS and thus, the result that \algomis produces an MIS follows from \Cref{lem:onion-mis-1} and \Cref{lem:onion-mis-2}. The runtime for \CONGEST follows from \Cref{lem:onion-mis-run}.
\end{proof}
\subsection{Embedding-Aware Maximal Matching (MM part of Theorem \ref{thm:geometric})}\label{sec:mm_geom}
In this section, we use geometry to find a maximal matching for a hyperbolic random graph in $\bigO(\log\log\log n)$ rounds. We do this as follows:

First, in a preprocessing step, we match all vertices with degree at least $\approx \log^3(n)$ with a vertex in layer $\Layer{0}$. This is accomplished by only marking vertices with degree $\lceil\log^{3/2} n\rceil$ and letting them draw one random neighbour and mark this incident edge. Thereafter, we consider all vertices with radius at most $R - 6\log\log n$ and show that all vertices such vertices are matched by this preprocessing step.  This is shown in \Cref{lem:match-inner-disk}.

Thereafter, all that is left to do is to match all unmatched vertices which have radius at least $R - 6\log\log n$. We partition the outer disk $\disk \setminus \B_0(R - 6\log\log n)$ into $6\log\log n$ layers. We then apply our tiling with the properties shown in \Cref{lem:tiling} to obtain for each induced subgraph of a layer a maximal matching in constant rounds, which can be done for each layer in parallel (see also \Cref{lem:matching-in-layer}).

Next, we use a divide-and-conquer approach to find a maximal matching for the outer disk. It can be shown that, if we have two annuli $\mathcal{A}_1$ and $\mathcal{A}_2$ for which we have each a maximal matching for the induced subgraphs, then we can extent these two matchings to a maximal matching for the induced subgraph of vertices in $\mathcal{A}_1 \cup \mathcal{A}_2$ in constant rounds. We show this in \Cref{lem:block-merging}.

Using \Cref{lem:matching-in-layer} as a base case, and using that by \Cref{lem:block-merging} we can ``merge'' the matchings of two annuli in constant rounds, we then merge pairs of annuli in parallel in constant rounds. By this, we can double the size of the induced subgraph of an annulus that is matched in constant rounds, and we can do this for all disjoint annuli in parallel. Thus, since the outer disk consists of $\bigO(\log\log n)$ layers, after $\bigO(\log\log\log n)$ merging iterations, each of which takes constant rounds, we have found a maximal matching for the outer disk in $\bigO(\log\log\log n)$ rounds. 

The following shows that we can compute a maximal matching for an induced subgraph of a layer in constant rounds.
\begin{lemma}[Maximal matching within a layer]\label{lem:matching-in-layer}
     Let $G$ be a threshold hyperbolic random graph where vertices are given their geometric coordinates as an input. Let $G_\ell := G[V_\ell]$. Then for all layers we can compute a maximal matching $M_{\ell} \subseteq E(G_\ell)$ for $G_\ell$ in $\bigO(1)$ rounds \whp for \CONGEST.
\end{lemma}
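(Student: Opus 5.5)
We compute the matching inside each layer with the tiling of \Cref{lem:tiling}, as in \algomis, except that edges replace vertices and the schedule has to cover edges running between nearby tiles. By \Cref{lem:tiling}~\Cref{item:one}, every tile $\tiletile{\ell}{i}$ induces a clique. By \Cref{item:two}, an edge of $G_\ell$ joins only tiles $\tiletile{\ell}{i}$ and $\tiletile{\ell}{j}$ with $|i-j| < 20$. Every intra-layer edge therefore lies inside a \emph{window} $W_{i} := \tiletile{\ell}{i} \cup \dots \cup \tiletile{\ell}{i+19}$ of $20$ consecutive tiles. The schedule runs over $m = 0, \dots, 39$. In phase $m$ we activate all windows whose starting index satisfies $i \equiv m \pmod{40}$; \Cref{rem:divisible} makes this consistent around the circle. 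Two simultaneously active windows are separated by at least $20$ tiles, so by \Cref{item:two} no edge runs between them. Every pair $(i,j)$ with $|i-j|<20$ lies in some window $W_{i'}$ with $i' \equiv m$ for some $m$, since the windows starting at indices $\equiv m \pmod{40}$ for $m=0,\dots,39$ cover all starting positions. All layers run this schedule in parallel, and the computations do not interfere because $M_\ell$ only uses edges of $G_\ell$.

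In an active window we extend the current matching greedily, deterministically and using the coordinates. We mark the vertex $w$ of the window that is still unmatched and has the smallest angular coordinate. The window contains at most $20$ tiles, so its vertices are all within constant hop distance and, more importantly, within angular distance $O(e^\ell/n)$. Instead of collecting the window at a leader, which is too expensive in \CONGEST, we run a constant number of greedy rounds inside each tile. The first step matches unmatched vertices within each tile in pairs ordered by angle: a vertex knows its tile-mates through one exchange of coordinates, so the pairing is computed locally with no conflicts. Afterwards each tile holds at most one unmatched vertex. The window then has at most $20$ unmatched vertices, one per tile, and these can announce themselves to their neighbours in one round. A greedy matching on this constant-size set, processed by tile index within the window, takes $O(1)$ rounds: it runs over the $20$ tile positions in order, and at each position the remaining unmatched vertex proposes to an unmatched neighbour in a later tile of the window. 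This leaves every edge between two unmatched vertices of the window covered.

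Maximality follows window by window. An edge $\{u,v\}$ of $G_\ell$ lies in a window $W$ that is active in some phase. At the end of that phase at least one of $u,v$ is matched, and matched vertices stay matched. The main obstacle is the \CONGEST implementation of the per-tile pairing: a tile can hold polylogarithmically many vertices, so a vertex cannot learn the whole sorted order. This still works because a tile is a clique. Every vertex receives all of its tile-mates' coordinates in a single round, with one message per edge of $O(\log n)$ bits by \Cref{lem:truncation}, which makes coordinates distinct \whp. Each vertex then computes its own rank locally and is paired with rank $2k \leftrightarrow 2k+1$. The per-window step involves only a constant number of messages per vertex. Altogether the procedure takes $O(1)$ rounds \whp.
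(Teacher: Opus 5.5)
Your proof is correct, but its second stage takes a different route from the paper's.

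\textbf{Shared first stage.} Both arguments first pair up vertices inside each tile by angular rank. A tile is a clique by \Cref{lem:tiling}~\Cref{item:one}, so afterwards each tile holds at most one unmatched vertex. This is exactly the paper's Claim.

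\textbf{The paper's second stage.} The paper runs $40$ phases. In each phase the lone unmatched vertex of every $40$-th tile grabs an arbitrary unmatched neighbour anywhere in $V_\ell$. Conflicts are ruled out by \Cref{lem:tiling}~\Cref{item:three}: active vertices $40$ tiles apart have disjoint neighbourhoods within the layer.

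\textbf{Your second stage.} You never use \Cref{item:three}. Instead you confine every intra-layer edge to a $20$-tile window via \Cref{item:two} and run pairwise disjoint windows in parallel. Inside a window, the at most $20$ unmatched vertices propose one tile position at a time, and only to later tiles of the same window. Conflict-freeness then follows from disjointness of the windows plus sequential processing. Maximality is the usual greedy argument: if an earlier-position $u$ and a later-position $v$ were both unmatched at the end, $u$ would have proposed when its position came up.

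\textbf{Comparison.} Your route needs less geometry and no random neighbour choice, at the cost of a longer but still constant schedule of $40\cdot 20$ sequential steps. The paper's schedule has only $40$ steps, but it relies on the finer neighbourhood-separation property.

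\textbf{Minor points.}
\begin{itemize}
\item The vertex $w$ you mark is never used.
\item The remark that all vertices of a window are within constant hop distance does not hold in general, since intermediate tiles may be empty. Nothing in your argument depends on it.
\item The separation of simultaneously active windows by $20$ tiles is not actually needed, because proposals never leave a window.
\item Neighbours only see truncated coordinates, so a vertex near a tile boundary cannot be reliably placed in its tile by others. Have each vertex compute its own (layer, tile) index from its exact coordinates and send it; this costs $O(\log n)$ bits, and the paper does exactly this.
\end{itemize}
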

\begin{proof}
    We use the following claim, which says that in constant rounds we can compute a matching such that each tile\footnote{Recall \Cref{eq:tile} for the definition of our tiling.} contains at most one unmatched vertex\footnote{We call a vertex $u$ unmatched if $\exists v \in N(u)$ such that $E(v) \cap M = \emptyset$ and $E(u) \cap M = \emptyset$, i.e., under current matching $M$, it is still possible to add an edge $e$ incident to $u$ such that $M \cup e$ is a valid matching.}.
    \begin{claim}\label{claim:tile-single-vertex}
         For any $\ell$ let $i \in [N_\ell]$ and let $V'\subseteq V$ be a set of unmatched vertices. Moreover, let $G' := G[V' \cap \tiletile{\ell}{i}]$. Then we can compute a matching $M' \subseteq E(G')$ with probability $1 - n^{-10}$ for \CONGEST in $\bigO(1)$ rounds, such that there exists at most one vertex in $V'$ that is not incident to an edge of $M'$.
    \end{claim}
    {
\renewcommand{\qedsymbol}{$\blacksquare$} 
\begin{proof}[Proof of claim] Consider any tile  $\tiletile{\ell}{i}$ and let $U:= V' \cap \tiletile{\ell}{i}$. In constant communication rounds, any vertex $u \in U$ can communicate the tile $\tiletile{\ell}{i}$ based on its radial and angular coordinate to all its neighbours $N(u)$. Thus, after constant rounds, every vertex $u$ has learned all vertices of $U$ (the set of vertices located in the same tile as $u$) since the induced subgraph of a tile forms a clique (\Cref{lem:tiling}). Next, every vertex $u$ sends the first  $\lceil 13 \cdot \log n \rceil$ bits of $\varphi(u)$ to all neighbour. Then, using \Cref{lem:truncation}, it holds with probability $1-n^{-10}$ that every vertex is able to compute an ordering $v_1 \prec v_2 \prec \dots$ which is the set $U$, ordered by angular coordinates $\varphi(v_i)$. Then, consider $u = v_{j}$ in this ordering. If it has an odd index $j$ based on our ordering, we add the edge $\{\{u,v_{j+1}\}\}$ to the matching $M'$. We now distinguish two cases which show that by this procedure, at most one vertex remains unmatched within a tile. 

\textbf{Case 1}~[Number of vertices in a tile $\tiletile{\ell}{i}$ is even]: all vertices can be paired with any neighbour in the same tile $\tiletile{\ell}{i}$ using \Cref{lem:tiling}. By our procedure of pairing odd-even pairs, every vertex is matched. 

\smallskip

\textbf{Case 2}~[Number of vertices in a tile $\tiletile{\ell}{i}$ is odd]: Observe that the vertex $u$ with the largest angular coordinate is the only vertex in $\tiletile{\ell}{i}$ after our pairing procedure that has no matching partner. Thus, there is exactly one vertex which is unmatched.

\smallskip

Since our described procedure requires constant rounds and vertices communicate in any round at most $\bigO(\log n)$ bits, the claim follows from the above two cases and that angular coordinates used are unique with probability $1 - n^{-10}$.
\end{proof}}
Now, applying Claim \ref{claim:tile-single-vertex} to each tile of layer $\ell$ in parallel, after constant rounds, there is at most $1$ unmatched vertex in each tile $\tiletile{\ell}{i}$. Given this property, we iterate through $\Layer{\ell}$ as follows: for $t \in [40]$ we attempt to match any possible unmatched vertex $u \in V \cap \tiletile{\ell}{i}$ where  $i \equiv t \ (\mathrm{mod}\ 40)$, i.e., we ``activate'' every 40-th tile. We do so by first letting every unmatched vertex $v$ send its layer $\ell \in \bigO(\log n)$ to every neighbour $N(v)$ in constant rounds. Thereafter, we activate any vertex $u \in V \cap \tiletile{\ell}{i}$ where $i \equiv t \ (\mathrm{mod}\ 40)$. Let $N'(u)$ be the set of unmatched neighbours of $u$ and $u$ marks the set of unmatched vertices $N'(u) \cap V_{\ell}$. Then $u$ picks a vertex $v$ uniform at random from $N'(u) \cap V_{\ell}$ and \{u,v\} is added to $M_{\ell}$. Since in each iteration $t$, we have for any pair of active vertices $u  \in V \cap \tiletile{\ell}{i}$ and $u' \in V \cap \tiletile{\ell}{j}$ that $|i-j|\geq 40$, it follows by \Cref{lem:tiling} that $N(u) \cap N(u') = \emptyset$. Therefore, no conflict can occur: no unmatched vertex can receive matching requests from two distinct active vertices during the same iteration phase $t$. Since after $40$ iterations we tried to match any unmatched vertex in $\Layer{\ell}$ and each iteration requires $\bigO(1)$ rounds while also $t \in \bigO(1)$, the procedure terminates after constant rounds. Moreover, $M_{\ell}$ is maximal for $G_{\ell}$: when a tile $\tiletile{\ell}{i}$ is processed, the unique unmatched vertex of tile $\tiletile{\ell}{i}$ is matched whenever it has an unmatched neighbour. Therefore, after tile $\tiletile{\ell}{i}$ has been processed, either it has become matched or every neighbour remaining in later tiles is already matched (or non-existent). Hence, using that each tile gets processed exactly once, no edge can remain between two unmatched vertices and we obtain a maximal matching $M_{\ell}$ for $G_{\ell}$ as desired.  
\end{proof}
In the following, let $\block{i}{k} := \bigcup_{\ell=i}^{k+i-1}\Layer{\ell}$. We refer to $\block{i}{k}$ as an \emph{annulus of size $k$} with \emph{starting layer $i$}. That is, an annulus of size $k$ with starting layer $i$ contains layer $i$ up to layer $k-i$ and in total $k$ layers. Note that an annulus of size $1$ is equivalent to a layer.

The following lemma says that if we are given two disjoint annuli for which we have a local maximal matching each, then we can enhance such matchings to a maximal matching for the induced subgraph of the two annuli combined in constant rounds (see also \Cref{fig:annuli}a for a sketch).
\begin{figure}[t]
    \centering \includegraphics[height=0.2\textheight]{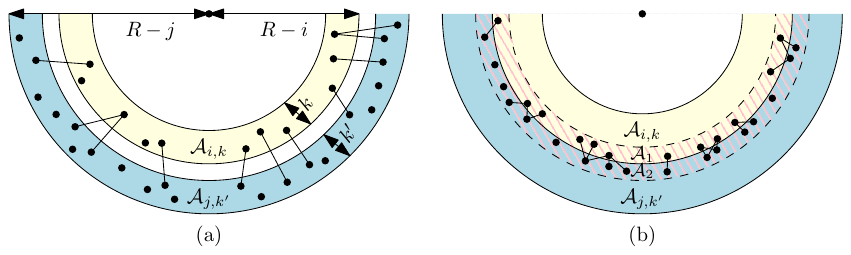}
    \caption{(a) Illustration of unmatched vertices in two annuli (blue and yellow area) and the set of edges available to enhance the matching $M_1 \cup M_2$. (b) The hatched area is the only area where conflicts for the matching might occur when merging two annuli.}
    \label{fig:annuli}
\end{figure}
\begin{lemma}[Merging two annuli]\label{lem:block-merging}
Let $G$ be a threshold hyperbolic random graph where vertices are given their geometric coordinates as an input. Let $i-j \geq k \geq k'$ such that $U_1:= V \cap \block{i}{k}$, $U_2:= V \cap \block{j}{k'}$, $G_1:= G[U_1]$ and $G_2:= G[U_2]$. If we are given matchings $M_1 \subseteq E(G_1)$ and $M_2 \subseteq E(G_2)$ such that $M_1$ is a maximal matching for $G_1$ and $M_2$ a maximal matching for $G_2$, then in \CONGEST, we can compute in $\bigO(1)$ rounds a matching $M \subseteq E(G_1 \cup G_2)\setminus(M_1 \cup M_2)$ such that $M_1 \cup M_2 \cup M$ is a maximal matching for $G[U_1 \cup U_2]$.    
\end{lemma}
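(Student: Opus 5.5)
The plan is to reduce the merge to a maximal matching problem on a bipartite residual graph in which one side has constant degree. That matching can then be found by a constant number of deterministic proposal phases.

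\textbf{Step 1: the residual graph.} Let $F_1 \subseteq U_1$ and $F_2 \subseteq U_2$ be the vertices not covered by $M_1$ and $M_2$, respectively. By maximality of $M_1$ in $G_1$, the set $F_1$ is independent in $G$, and likewise for $F_2$. Every edge that can still be added to $M_1\cup M_2$ therefore lies in the bipartite graph $H$ between $F_1$ and $F_2$. It suffices to compute a maximal matching $M$ of $H$: then $M_1\cup M_2\cup M$ is a matching, $M$ is disjoint from $M_1\cup M_2$, and no edge of $G[U_1\cup U_2]$ has both endpoints uncovered. Each vertex knows whether it is uncovered and, from its coordinates, its layer and tile. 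So after one round every vertex knows which of its neighbours lie in $F_1$ or $F_2$.

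\textbf{Step 2: vertices of $F_2$ have $\bigO(1)$ neighbours in $F_1$.} This is the main geometric step. Since $i-j\ge k\ge k'$, every layer of $U_2$ has strictly smaller index (larger radius) than every layer of $U_1$.
\begin{itemize}
\item Each tile $\tiletile{\ell'}{t}$ induces a clique by \Cref{lem:tiling}~\Cref{item:one}. Since $F_1$ is independent, every tile of a layer in $U_1$ contains at most one vertex of $F_1$.
\item Fix $z\in F_2\cap\Layer{\ell_z}$ and a layer $\ell'>\ell_z$ belonging to $U_1$. By \Cref{lem:max-angle} and \Cref{rmk:theta-monotonicity}, the neighbours of $z$ in $\Layer{\ell'}$ lie within angular distance at most $\pi e^{(\ell_z+\ell'+2-C)/2}/n$ of $z$.
\item Tiles of $\Layer{\ell'}$ have angular width $2\pi/N_{\ell'}=\Theta(e^{\ell'-C/2}/n)$. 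Hence $z$ meets only $\bigO(1+e^{(\ell_z-\ell')/2})$ tiles of $\Layer{\ell'}$.
\end{itemize}
Summing over $\ell'>\ell_z$ already gives a geometric series, apart from the additive $+1$ per layer, which would contribute a count of order $k$. To avoid this, I would use \Cref{lem:triangle-inequality}. Any two $F_1$-neighbours $x,y$ of $z$ with $r(x),r(y)\le r(z)-2\log(2\pi)$ would be adjacent, contradicting the independence of $F_1$. So all layers $\ell'\ge \ell_z+\lceil 2\log(2\pi)\rceil+1$ together contribute at most one $F_1$-neighbour. The $\bigO(1)$ remaining layers contribute $\bigO(1)$ tiles each. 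Altogether $\deg_H(z)\le c$ for an absolute constant $c$. This is deterministic given the tiling, so no probability is lost here. I expect this step to be the main obstacle: one has to be careful with the constants in \Cref{lem:max-angle} and with boundary tiles.

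\textbf{Step 3: proposal phases.} Each $z\in F_2$ orders its at most $c$ neighbours in $H$, for instance by angular coordinate, using $\bigO(\log n)$-bit truncations as in \Cref{lem:truncation}. Then run $c$ phases. In phase $t$, every still-unmatched $z$ whose $t$-th $H$-neighbour $u$ is still unmatched proposes to $u$. Every unmatched $u\in F_1$ that receives proposals accepts exactly one, say the smallest by angle, and informs the proposers. Each phase takes $\bigO(1)$ rounds with $\bigO(\log n)$-bit messages, so the total is $\bigO(c)=\bigO(1)$ \CONGEST rounds.

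The result is a matching, since each $z$ and each $u$ is matched at most once. It is also maximal in $H$. Suppose $\{u,z\}\in E(H)$ remained with both endpoints unmatched, and consider the phase in which $z$ considered $u$. At that time both were unmatched, because unmatched status only persists backwards. So $z$ proposed to $u$, and $u$, having received a proposal, accepted one and became matched, which is a contradiction. Together with Step 1, this shows that $M_1\cup M_2\cup M$ is a maximal matching of $G[U_1\cup U_2]$.
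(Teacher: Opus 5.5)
Your proof is correct, but it is organised differently from the paper's. The paper proposes from the \emph{inner} side and makes every proposal step conflict-free by geometry. It first splits into two cases according to the layer gap. If the annuli are at least four layers apart, every unmatched inner vertex proposes to an unmatched outer neighbour, and \Cref{lem:triangle-inequality} together with the independence of the unmatched inner vertices ensures that no outer vertex receives two proposals. Otherwise, it applies this far case twice, once with the facing border layers of $U_2$ removed and once with those of $U_1$ removed. It then resolves the remaining edges between the two border annuli by sweeping through the $\bigO(1)$ border layers of $U_1$ with the 40-class tile schedule, using \Cref{lem:tiling}~\Cref{item:three} to exclude conflicts. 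You instead reverse the direction: you concentrate all the geometry into one structural claim, namely that every uncovered \emph{outer} vertex has $\bigO(1)$ uncovered inner neighbours. This claim uses the same two ingredients: the triangle lemma for the layers more than $\lceil 2\log(2\pi)\rceil$ apart, and ``at most one unmatched vertex per clique tile'' for the few near layers. You then finish with a generic $c$-phase propose/accept procedure for bipartite graphs whose proposing side has bounded degree. Proposing from the outer side is what makes this work, since inner vertices can have unbounded residual degree. Your route gives a uniform argument with no case distinction, no activation schedule and no border-annulus index bookkeeping. It is also fully deterministic: if you order neighbours and break ties by IDs instead of truncated angles, you do not need \Cref{lem:truncation}. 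The paper's route, in exchange, needs no accept/reject handshake, because each of its proposal steps is conflict-free by construction.
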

\begin{proof}
We consider two cases. First, if the two annuli $\block{i}{k}$ and $\block{j}{k'}$ have distance larger than $4$, i.e., for any pair of points $x \in \block{i}{k}$ and $y \in \block{j}{k'}$ it holds $\dist(x,y) \geq 4$.

\smallskip

\textbf{Case 1}~[For the two annuli it holds $i-j \geq k + 4$]: consider the given matchings $M_1$ and $M_2$. Under these matchings, all unmatched vertices $v \in V \cap \block{j}{k'}$ send to any neighbour $w \in N(v)$ their layer $\ell$ in constant rounds. Then, if an unmatched vertex $u \in  V \cap \block{i}{k}$ has at least one unmatched neighbour in $V \cap \block{j}{k'}$, let $N'(u):= N(u) \cap \block{j}{k'}$ and $u$ picks uniform at random a neighbour $v$ from $N'(u)$ and $\{u,v\}$ is included in our matching $M$ such that $v \in V \cap \block{j}{k'}$. It is left to show that no unmatched vertex $v \in V \cap \block{j}{k'}$ can receive matching requests from two distinct vertices $u, u' \in V \cap \block{i}{k}$ in the same round. To see this, recall that under matching $M_1$, for any unmatched pair $u$ and $u'$, there is no edge $\{u,u'\} \in E$ since $M_1$ is maximal such that two unmatched vertices cannot be adjacent. Hence, it holds $\dist(u,u') \geq R$. Now, for the sake of contradiction, assume that $v \in N(u) \cap N(u')$. By our case assumption, it holds $r(u), r(u') \leq r(v) - 4$. By an application of \Cref{lem:triangle-inequality}, $v$ has distance at most $R$ to both $u$ and $u'$ if and only if $\dist(u,u') \leq R$. A contradiction, and thus, we can obtain the desired matching $M$ in constant rounds.

\smallskip

\textbf{Case 2}~[For the two annuli it holds $i-j < k + 4$]: let $m := \min(i+3, k)$ and $m' := \max(j, k'-3)$. We consider the two "border" annuli $$\mathcal{A}_1 := \bigcup_{\ell = i}^m (\Layer{\ell}) \subseteq \block{i}{k} \text{ and } \mathcal{A}_2 := \bigcup_{\ell = m'}^{k'} (\Layer{\ell}) \subseteq \block{j}{k'}.$$ 

For a visualisation, see the hatched area in \Cref{fig:annuli}b. Note that we can find a maximal matching $M'$ for the induced subgraph $G' = (V \cap (\block{i}{k} \cup (\block{j}{k'} \setminus \mathcal{A}_2)), E \setminus (M_1 \cup M_2))$ in constant rounds by the previous case when $i-j \geq k + 4$. Analogously, we can find a matching $M''$ for the induced subgraph $G'' = (V \cap ( \block{j}{k'} \cup (\block{i}{k} \setminus \mathcal{A}_1)), E \setminus (M_1 \cup M_2 \cup M'))$ in constant rounds. Hence, after constant rounds, we can obtain a matching $M_1 \cup M_2 \cup M' \cup M'' =: \tilde{M}$, such that for a maximal matching $M \supseteq \tilde{M}$ where $M \subseteq E(G_1 \cup G_2)$, all edges $M\setminus \tilde{M}$ are formed among vertex pairs $u,v$ where $u \in V \cap \mathcal{A}_1$ and $v \in V \cap \mathcal{A}_2$. 

We then find such a set of edges $M\setminus \tilde{M}$ in additional constant rounds as follows: we use $m-i+1$ iterations (iterating layer by layer through $\mathcal{A}_1$), where we consider in iteration $t \in [m-i+1]$ all unmatched vertices of the set $V \cap \Layer{m-t}$. That is, we iterate through the layers of $\mathcal{A}_1$ in top-down fashion. By using Claim \ref{claim:tile-single-vertex}, we can assume that each tile in $\Layer{m-t}$ contains at most $1$ unmatched vertex after constant rounds. Then, for $t' \in [40]$, in step $(t, t')$, we activate the set of unmatched vertices where $u \in V \cap \tiletile{\ell}{b}$ fulfils $b \equiv t' \ (\mathrm{mod}\ 40)$. That is, we activate every $40$-th tile. We then match in parallel any such active vertex $u$ with an unmatched vertex $v \in V \cap \mathcal{A}_2$ uniform at random from the set of unmatched neighbours of $u$ in $\mathcal{A}_2$. This can be done in $\bigO(1)$ rounds. Using \Cref{lem:tiling}, it holds for any pair of active vertices $u,u'$ that $N(u) \cap N(u') \cap \mathcal{A}_2 = \emptyset$ and thus, no conflict occurs by this procedure since any active tile contains at most one unmatched vertex. Since $\bigO(40\cdot(m-i+1))$ is constant, this procedure requires $\bigO(1)$ rounds. Using the obtained matching by this procedure in conjunction with our matching $\tilde{M}$ (which we acquired in constant rounds), we obtain our desired matching $M$ in $\bigO(1)$ rounds, which is maximal by similar arguments as used in \Cref{lem:matching-in-layer}: in iteration $(t,t')$, after processing a tile, its unique unmatched vertex has either been matched or has no unmatched neighbour in $\mathcal{A}_2$. Since every tile of $\mathcal{A}_1$
is processed exactly once, no unmatched edge remains between the border annulli $\mathcal{A}_1$ and $\mathcal{A}_2$. Together with the maximality of $\tilde{M}$ on all remaining vertex pairs, this implies that $M$ is a maximal matching of $G[U_1 \cup U_2]$ as desired.
\end{proof}
We now put everything together to obtain a maximal matching in $\bigO(\log\log\log n)$ rounds if the embedding of the graph is given.
\begin{proposition}[Embedding-aware Maximal Matching]
     There exists a $\bigO(\log\log\log n)$-round \CONGEST algorithm to find a Maximal Matching on HRGs \whp, if the vertices have access to their geometric coordinates
\end{proposition}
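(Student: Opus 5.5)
The plan is to combine the constant-round preprocessing from \Cref{lem:match-inner-disk} with the divide-and-conquer merging of \Cref{lem:matching-in-layer} and \Cref{lem:block-merging}.

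\textbf{Inner disk.} First run Step~1 of \algos. Every vertex with degree at most $\lceil\log^{3/2}n\rceil$ marks a uniformly random incident edge, and every vertex accepts one marked edge. By \Cref{lem:match-inner-disk}, after $\bigO(1)$ \CONGEST rounds every vertex of $V\cap\B_0(R-6\log\log n)$ is matched, with probability $1-n^{-\omega(1)}$. The unmatched vertices then all lie in the outer disk $\disk\setminus\B_0(R-6\log\log n)=\bigcup_{\ell=0}^{L-1}\Layer{\ell}$ with $L=\lceil 6\log\log n\rceil$. Each vertex knows its coordinates, so it knows its layer, and from then on only its unmatched status matters. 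Edges between an unmatched outer vertex and a matched inner vertex can never be added, so it suffices to compute a maximal matching of $G[V'\cap \bigcup_{\ell<L}\Layer{\ell}]$, where $V'$ is the set of vertices left unmatched by the preprocessing. I will apply \Cref{lem:matching-in-layer} and \Cref{lem:block-merging} to this restricted vertex set. Their proofs only use the tiling properties of \Cref{lem:tiling} and \Cref{lem:triangle-inequality}, and both are monotone under deleting vertices.

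\textbf{Base case and merging.} In parallel for every layer $\ell<L$, compute a maximal matching $M_\ell$ of $G[V'\cap V_\ell]$ in $\bigO(1)$ rounds using \Cref{lem:matching-in-layer}. Union bound over the $\bigO(\log\log n)$ layers and the polynomially many tiles; the failure probability per tile is $n^{-10}$. Pad $L$ up to a power of two $2^s$ with $s=\bigO(\log\log\log n)$. In phase $t=1,\dots,s$, for each $q$, merge the annuli $\block{2q\cdot 2^{t-1}}{2^{t-1}}$ and $\block{(2q+1)2^{t-1}}{2^{t-1}}$ via \Cref{lem:block-merging}. This gives a maximal matching of $G[V'\cap\block{q2^t}{2^t}]$ in $\bigO(1)$ rounds. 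Induction over $t$ yields a maximal matching of the whole outer disk after $s$ phases, i.e.\ $\bigO(\log\log\log n)$ rounds. Together with the preprocessing, $M$ is maximal on $G$: an edge with an inner endpoint has that endpoint matched, and outer edges are covered by the final annulus.

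\textbf{Main obstacle: parallel merges.} The difficulty is running all merges of one phase simultaneously without conflicts. Adjacent pairs being merged in parallel, $(\mathcal{A}_{2q},\mathcal{A}_{2q+1})$ and $(\mathcal{A}_{2q+2},\mathcal{A}_{2q+3})$, share a geometric boundary. A vertex of $\mathcal{A}_{2q+1}$ might therefore receive proposals both from its own partner annulus and from $\mathcal{A}_{2q+2}$, which is not part of its merge. I would restrict each merge to proposals along edges between its own two annuli; a vertex ignores any request coming from outside its current pair. With that restriction, the conflict-freeness argument of \Cref{lem:block-merging} only ever concerns vertices inside one pair. Matched edges between $\mathcal{A}_{2q+1}$ and $\mathcal{A}_{2q+2}$ are then handled in the next phase, where these annuli lie in a common block.

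If cross-pair interference still shows up in the tiled Case~2 schedule, the fallback is a constant-size time-multiplexing. Process the pairs with even $q$ first and those with odd $q$ afterwards. This doubles the constant per phase but keeps the total at $\bigO(\log\log\log n)$ rounds.

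Finally, all messages consist of layer indices, tile indices, and $\bigO(\log n)$-bit truncated angles, which suffice by \Cref{lem:truncation}. So every step fits in \CONGEST, and the overall success probability is $1-\bigO(1/n)$, i.e.\ \whp.
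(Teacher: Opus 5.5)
Your proposal is correct and follows essentially the same route as the paper's proof: inner-disk removal via \Cref{lem:match-inner-disk}, per-layer maximal matchings via \Cref{lem:matching-in-layer}, and $\bigO(\log\log\log n)$ doubling phases of \Cref{lem:block-merging}. Your explicit restriction to the residual vertex set $V'$ and to requests within each merged pair makes precise what the paper leaves implicit. Moreover, in \Cref{lem:block-merging} requests always go from the inner annulus of a pair into the outer annulus of that same pair, so merges run in parallel never collide and your even/odd time-multiplexing fallback is not needed.
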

\begin{proof}
\textbf{Match inner-disk.} By applying \Cref{lem:match-inner-disk}, in constant rounds we can find a matching $M \subseteq E$, such that for any $u \in V \cap \B_0(R - 6\log\log n)$, there exists an edge $E(u) \in M$ \wehp For the rest of the proof we condition on this matching and focus on matching vertices outside $ \B_0(R - 6\log\log n)$ since all vertices in $V \cap \B_0(R - 6\log\log n)$ are matched in constant rounds.

\smallskip

\textbf{Match within layers.} Next, we consider the $\lceil 6\log\log n\rceil$ layers of the outer disk $\disk \setminus \B_0(R - \lceil 6\log\log n\rceil)$. Using \Cref{lem:matching-in-layer}, we compute in parallel, for every layer $\Layer{\ell}$ contained in $\disk \setminus \B_0(R - \lceil 6\log\log n\rceil)$, a maximal matching in $G[V_\ell]$ in constant rounds \whp

\smallskip

\textbf{Match annuli.} Now, we finish the proof by induction.

\smallskip

\textbf{Base case:} For each $i \in [\lceil 3 \log\log n\rceil]$, consider the annuli
$
    \mathcal{A}_{2i,2} = \Layer{2i} \cup \Layer{2i+1}.
$
That is, each annulus consists of two consecutive layers. Using
\Cref{lem:block-merging}, we compute in parallel, for every $i \in [\lceil 3
\log\log n\rceil]$, a maximal matching for the annulus $\mathcal{A}_{2i,2}$ in constant rounds.

\textbf{Induction step:} Consequently, we can repeat this process $t$ times, doubling the annulus size in each step. More formally, in iteration $t \in [\lceil \log_2\log\log n + 3 \rceil] \setminus\{0\}$, for $i \in [\lceil 6\cdot 2^{-t} \log\log n\rceil]$ and $k = 2^t$ we consider the annuli $\block{k \cdot i}{k}$. By our induction hypothesis, for all $i' \in [\lceil 6\cdot 2^{-(t-1)} \log\log n\rceil]$ and $k' = 2^{t-1}$ we have a maximal matching for all the annuli $\block{k' \cdot i'}{k'}$ after $\bigO(t-1)$ rounds. Thus, by an application of \Cref{lem:block-merging}, we obtain, after an additional constant number of rounds, a maximal matching for all annuli $\block{k \cdot i}{k}$ after iteration $t$. That is, we ``merge'' two ``neighbouring'' annuli in constant rounds.

Setting $t= \lceil \log_2\log\log n + 3 \rceil $ we obtain for an annulus of size $k\geq \lceil 6\log\log n\rceil$ a maximal matching in $\bigO(\log\log\log n)$ rounds. Thus, we can find a maximal matching for the outer disk $\disk \setminus \B_0(R - \lceil 6\log\log n\rceil)$ in $\bigO(\log\log\log n)$ rounds. This, in conjunction with \Cref{lem:match-inner-disk}, yields a maximal matching for $G$ in $\bigO(\log\log\log n)$ rounds \whp as claimed.
\end{proof}
    
\printbibliography 

@String{FOCS = {Proceedings of the IEEE Symposium on Foundations of Computer Science (FOCS)}}

@String{PODC = {Proceedings of the ACM Symposium on Principles of Distributed Computing (PODC)}}

@String{SODA = {Proceedings of the SIAM-ACM Symposium on Discrete Algorithms (SODA)}}

@String{STOC = {Proceedings of the ACM Symposium on Theory of Computing (STOC)}}

@String{DISC = {Proceedings of the International Symposium on Distributed Computing (DISC)}}

@String{STACS = {{Proceedings of the International Symposium on Theoretical Aspects of Computer Science (STACS)}}}

@Article{luby86,
  author  = {M. Luby},
  journal = {SIAM Journal on Computing},
  title   = {{A Simple Parallel Algorithm for the Maximal Independent Set Problem}},
  year    = {1986}
}

@article{cl-ccrgg-02,
  author =	 {Chung, Fan and Lu, Linyuan},
  title =	 {{Connected Components in Random Graphs with Given
                  Expected Degree Sequences}},
  journal =	 {Annals of Combinatorics},
  year =	 2002,
  doi =          {10.1007/PL00012580},
}

@book{p-rgg-03,
  author =       {Penrose, Mathew},
  title =        {Random Geometric Graphs},
  year =         2003,
  publisher =    {Oxford University Press},
  isbn =         9780198506263,
}

@article{cl-adrgged-02,
  author =       {Chung, Fan and Lu, Linyuan},
  title =        {The Average Distances in Random Graphs with Given
                  Expected Degrees},
  journal =      {Proceedings of the National Academy of Sciences},
  year =         2002,
  doi =          {10.1073/pnas.252631999}
}

@article{vhhk-s-19,
  author =       {Voitalov, Ivan and van der Hoorn, Pim and van der
                  Hofstad, Remco and Krioukov, Dmitri},
  title =        {Scale-free Networks Well Done},
  journal =      {Physical Review Research},

  year =         2019,
  doi =          {10.1103/PhysRevResearch.1.033034}
}

@article{kpk-h-10,
  author =       {Krioukov, Dmitri and Papadopoulos, Fragkiskos and Kitsak,
                  Maksim and Vahdat, Amin and Bogu\~n\'a, Mari\'an},
  title =        {Hyperbolic Geometry of Complex Networks},
  journal =      {Physical Review E},
  year =         2010,
  doi =          {10.1103/PhysRevE.82.036106},
}

@article{bf-evacaga-22,
 title     = {On the External Validity of Average-case Analyses of Graph Algorithms},
  author    = {Thomas Bläsius and Philipp Fischbeck},
  year      = {2024},
  journal   = {ACM Transactions on Algorithms (TALG)},
  doi       = {10.1145/3633778}
}

@article{bfk-eggihrg-22,
  author =	 {Bläsius, Thomas and Friedrich, Tobias and Katzmann,
                  Maximilian and Meyer, Ulrich and Penschuck, Manuel
                  and Weyand, Christopher},
  title =	 {Efficiently generating geometric inhomogeneous and
                  hyperbolic random graphs},
  journal =	 {Network Science},
  year =	 2022,
  DOI =		 {10.1017/nws.2022.32}
}

@article{bfm-giant-15,
    author = {Bode, Michel and Fountoulakis, N. and Müller, Tobias},
year = {2015},
title = {On the largest component of a hyperbolic model of complex networks},
journal = {Electronic Journal of Combinatorics},
doi = {10.1214/17-AAP1314}
}

@article{fm-giant-18,
 URL = {https://www.jstor.org/stable/26542317},
 author = {Nikolaos Fountoulakis and Tobias Müller},
 journal = {The Annals of Applied Probability},
 title = {Law of large numbers for the largest component in a hyperbolic model of complex networks},
 year = {2018}
}

@inproceedings{gpp-rhg-12,
  author =       {Luca Gugelmann and Konstantinos Panagiotou and Ueli
                  Peter},
  title =        {{Random Hyperbolic Graphs: Degree Sequence and
                  Clustering}},
  booktitle =    {ICALP'12},
  year =         2012,
  doi =          {10.1007/978-3-642-31585-5_51},
}

@article{ms-k-19,
  author =       {Müller, Tobias and Staps, Merlijn},
  title =        {{The Diameter of KPKVB Random Graphs}},
  journal =      {Advances in Applied Probability},
  year =         2019,
  DOI =          {10.1017/apr.2019.23},
}

@article{km-slcrhg-19,
  author =       {Kiwi, Marcos and Mitsche, Dieter},
  title =        {On the Second Largest Component of Random Hyperbolic Graphs},
  journal =      {SIAM Journal on Discrete Mathematics},
  year =         2019,
  doi =          {10.1137/18M121201X},
}

@article{fk-dhrg-18,
  author =       {Tobias Friedrich and Anton Krohmer},
  title =        {{On the Diameter of Hyperbolic Random Graphs}},
  journal =      {SIAM Journal on Discrete Mathematics},
  year =         2018,
  doi =          {10.1137/17M1123961},
}

@article{bfk-chrg-18,
author = {Bl\"{a}sius, Thomas and Friedrich, Tobias and Krohmer, Anton},
title = {Cliques in Hyperbolic Random Graphs},
year = {2018},
journal = {Algorithmica},
doi = {10.1007/s00453-017-0323-3}
}

@phdthesis{katz-diss-23,
  doi = {10.25932/PUBLISHUP-58296},
  url = {https://publishup.uni-potsdam.de/58296},
  author = {Katzmann,  Maximilian},
  title = {About the analysis of algorithms on networks with underlying hyperbolic geometry},
  school = {Universit\"{a}t Potsdam},
  year = {2023},
}

@InProceedings{bks-hudg-23,
  author =	{Bl\"{a}sius, Thomas and Friedrich, Tobias and Katzmann, Maximilian and Stephan, Daniel},
  title =	{{Strongly Hyperbolic Unit Disk Graphs}},
  booktitle =	{STACS'23},
year =	 2023,
  doi =		{10.4230/LIPIcs.STACS.2023.13}
}

@inproceedings{bmrs-stacs-25,
  author       = {Samuel Baguley and
                  Yannic Maus and
                  Janosch Ruff and
                  George Skretas},
  title        = {Hyperbolic Random Graphs: Clique Number and Degeneracy with Implications
                  for Colouring},
  booktitle    = {STACS'25},
  year         = {2025},
  doi          = {10.4230/LIPICS.STACS.2025.13}
}

@inproceedings{bfkrz-esa-2023,
  author       = {Thomas Bl{\"{a}}sius and
                  Tobias Friedrich and
                  Maximilian Katzmann and
                  Janosch Ruff and
                  Ziena Zeif},
 
  title        = {On the Giant Component of Geometric Inhomogeneous Random Graphs},
  booktitle    = {ESA'23},
  year         = {2023},
  doi          = {10.4230/LIPICS.ESA.2023.20}
}

@inproceedings{pkbv-info-2010,
  author       = {Fragkiskos Papadopoulos and
                  Dmitri V. Krioukov and
                  Mari{\'{a}}n Bogu{\~{n}}{\'{a}} and
                  Amin Vahdat},
  title        = {Greedy Forwarding in Dynamic Scale-Free Networks Embedded in Hyperbolic
                  Metric Spaces},
  booktitle    = {INFOCOM'10},
  year         = {2010},
  doi          = {10.1109/INFCOM.2010.5462131}
}

@article{Kiwi2024,
  title = {Cover and hitting times of hyperbolic random graphs},
  DOI = {10.1002/rsa.21249},
  journal = {Random Structures \& Algorithms},
  author = {Kiwi,  Marcos and Schepers,  Markus and Sylvester,  John},
  year = {2024}
}

@article{Newman2003,
  title = {Why social networks are different from other types of networks},
  DOI = {10.1103/physreve.68.036122},
  journal = {Phys. Rev. E},
  author = {Newman,  M. E. J. and Park,  Juyong},
  year = {2003}
}

@article{PhysRevE.74.056114,
  title = {Clustering in complex networks. {I.} General formalism},
  author = {Serrano, M. \'Angeles and Bogu\~n\'a, Mari\'an},
  journal = {Phys. Rev. E},
  year = {2006},
  doi = {10.1103/PhysRevE.74.056114}
}

@article{Watts1998,
  title = {Collective dynamics of ‘small-world’ networks},
  DOI = {10.1038/30918},
  journal = {Nature},
  author = {Watts,  Duncan J. and Strogatz,  Steven H.},
  year = {1998}
}

@article{faloutsos1999power,
  title={On power-law relationships of the internet topology},
  author={Faloutsos, Michalis and Faloutsos, Petros and Faloutsos, Christos},
  journal={ACM SIGCOMM computer communication review},
  year={1999}
}

@inproceedings{DBLP:conf/analco/KiwiM15,
  author       = {Marcos Kiwi and
                  Dieter Mitsche},
  title        = {A Bound for the Diameter of Random Hyperbolic Graphs},
  booktitle    = {ANALCO'15},
  year         = {2015},
  doi          = {10.1137/1.9781611973761.3}
}

@inproceedings{bfk-tw-2016,
  title     = {Hyperbolic Random Graphs: Separators and Treewidth},
  author    = {Thomas Bläsius and Tobias Friedrich and Anton Krohmer},
  year      = {2016},
  booktitle = {ESA},
  doi       = {10.4230/LIPIcs.ESA.2016.15}
}

@article{katzmann-approxvc-2023,
  author       = {Thomas Bl{\"{a}}sius and
                  Tobias Friedrich and
                  Maximilian Katzmann},
  title        = {Efficiently Approximating Vertex Cover on Scale-Free Networks with
                  Underlying Hyperbolic Geometry},
  journal      = {Algorithmica},
  year         = {2023}
}

@article{katzmann-exactvc-2023,
  author       = {Thomas Bl{\"{a}}sius and
                  Philipp Fischbeck and
                  Tobias Friedrich and
                  Maximilian Katzmann},
  title        = {Solving Vertex Cover in Polynomial Time on Hyperbolic Random Graphs},
  journal      = {Theory Comput. Syst.},
  year         = {2023}
}

@article{bffkmm-icalp-2022,
  title     = {Efficient Shortest Paths in Scale-Free Networks with Underlying Hyperbolic Geometry},
  author    = {Thomas Bläsius and Cedric Freiberger and Tobias Friedrich and Maximilian Katzmann and Felix Montenegro-Retana and Marianne Thieffry},
  year      = {2022},
  journal   = {ACM Transactions on Algorithms (TALG)},
  doi       = {10.1145/3516483}
}

@article{Linial-92,
author = {Linial, Nathan},
title = {Locality in Distributed Graph Algorithms},
journal = {SIAM Journal on Computing},
year = {1992},
doi = {10.1137/0221015}
}

@book{Peleg-00,
author = {Peleg, David},
title = {Distributed computing: a locality-sensitive approach},
year = {2000}
}

@article{Naor91,
  author    = {Naor, M.},
  title     = {A Lower Bound on Probabilistic Algorithms for Distributive Ring Coloring},
  journal   = {{SIAM} J. Discrete Math.},
  year      = {1991}
}

@inproceedings{GG24,
  author       = {Mohsen Ghaffari and
                  Christoph Grunau},
  title        = {Near-Optimal Deterministic Network Decomposition and Ruling Set, and
                  Improved {MIS}},
  booktitle    = {FOCS'24},
  year         = {2024},
  doi          = {10.1109/FOCS61266.2024.00007}
}

@InProceedings{CLP18,
 author    = {Yi{-}Jun Chang and
               Wenzheng Li and
               Seth Pettie},
  title     = {An optimal distributed ({\(\Delta\)}+1)-coloring algorithm?},
  booktitle = {STOC'18},
  year      = {2018},
}

@inproceedings{RG20,
	author    = {V{\'{a}}clav Rozho\v{n} and
	Mohsen Ghaffari},
	title     = {Polylogarithmic-time deterministic network decomposition and distributed
	derandomization},
	booktitle = {STOC'20},
	year      = {2020}
}

@phdthesis{Krohmer2016,
  author      = {Anton Krohmer},
  title       = {Structures \& algorithms in hyperbolic random graphs},
url          = {https://publishup.uni-potsdam.de/frontdoor/index/index/docId/39597},
  type        = {doctoralthesis},
  school      = {Universit{\"a}t Potsdam},
  year        = {2016},
}

@article{hrg-spectral,
  title = {Spectral gap of random hyperbolic graphs and related parameters},
  DOI = {10.1214/17-aap1323},
  journal = {The Annals of Applied Probability},
  author = {Kiwi,  Marcos and Mitsche,  Dieter},
  year = {2018}
}

@article{DBLP:journals/im/AbdullahFB17,
  author       = {Mohammed Amin Abdullah and
                  Nikolaos Fountoulakis and
                  Michel Bode},
  title        = {Typical distances in a geometric model for complex networks},
  journal      = {Internet Math.},
  year         = {2017},
  doi          = {10.24166/IM.13.2017}

}

@article{BEPS,
author = {Barenboim, Leonid and Elkin, Michael and Pettie, Seth and Schneider, Johannes},
title = {The Locality of Distributed Symmetry Breaking},
year = {2016},
journal = {Journal of the ACM (JACM)},
doi = {10.1145/2903137}
}

@INPROCEEDINGS{Linial-87,
  author={Linial, Nathan},
  booktitle={FOCS'87}, 
  title={Distributive graph algorithms Global solutions from local data}, 
  year={1987},
  doi={10.1109/SFCS.1987.20}}

@article{Bogu2010,
  title = {Sustaining the Internet with hyperbolic mapping},
  DOI = {10.1038/ncomms1063},
  journal = {Nature Communications},
  author = {Boguñá,  Marián and Papadopoulos,  Fragkiskos and Krioukov,  Dmitri},
  year = {2010}}

@article{Serrano2008,
  title = {Self-Similarity of Complex Networks and Hidden Metric Spaces},
  DOI = {10.1103/physrevlett.100.078701},
  journal = {Physical Review Letters},
  author = {Serrano,  M. Ángeles and Krioukov,  Dmitri and Boguñá,  Marián},
  year = {2008}
}

@Book{barenboimelkin_book,
  author = 	 {Leonid Barenboim and Michael Elkin},
  title = 	 {Distributed Graph Coloring: Fundamentals and Recent Developments},
  publisher = 	 {Morgan \& Claypool Publishers},
  year = 	 2013}

@article{Komjthy2024,
  title = {Polynomial growth in degree-dependent first passage percolation on spatial random graphs},
  url = {http://dx.doi.org/10.1214/24-EJP1216},
  DOI = {10.1214/24-ejp1216},
  journal = {Electronic Journal of Probability},
  author = {Komjáthy,  Júlia and Lapinskas,  John and Lengler,  Johannes and Schaller,  Ulysse},
  year = {2024}
}

@article{Bringmann2019,
  title = {Geometric inhomogeneous random graphs},
  url = {http://dx.doi.org/10.1016/j.tcs.2018.08.014},
  DOI = {10.1016/j.tcs.2018.08.014},
  journal = {Theoretical Computer Science},
  author = {Bringmann,  Karl and Keusch,  Ralph and Lengler,  Johannes},
  year = {2019}
}

@article{yannic-greedy,
title = {Greedy routing and the algorithmic small-world phenomenon},
journal = {Journal of Computer and System Sciences},
year = {2022},
doi = {https://doi.org/10.1016/j.jcss.2021.11.003},
url = {https://www.sciencedirect.com/science/article/pii/S0022000021001112},
author = {Karl Bringmann and Ralph Keusch and Johannes Lengler and Yannic Maus and Anisur R. Molla}
}

@article{johannes-2026,
  doi = {10.48550/ARXIV.2410.22186},
  url = {https://arxiv.org/abs/2410.22186},
  author = {Cerf,  Sacha and Dayan,  Benjamin and De Ambroggio,  Umberto and Kaufmann,  Marc and Lengler,  Johannes and Schaller,  Ulysse},
  title = {Balanced Bidirectional Breadth-First Search on Scale-Free Networks},
  journal = {arXiv},
  year = {2024}
}

@inproceedings{mr-soda-26,
  author       = {Yannic Maus and
                  Janosch Ruff},
  title        = {On Distributed Colouring of Hyperbolic Random Graphs},
  booktitle    = {SODA'26},
  year         = {2026},
  url          = {https://doi.org/10.1137/1.9781611978971.91},
  doi          = {10.1137/1.9781611978971.91}
}

@article{Komjthy2020,
  title = {Explosion in weighted hyperbolic random graphs and geometric inhomogeneous random graphs},
  url = {http://dx.doi.org/10.1016/j.spa.2019.04.014},
  DOI = {10.1016/j.spa.2019.04.014},
  journal = {Stochastic Processes and their Applications},
  author = {Komjáthy,  Júlia and Lodewijks,  Bas},
  year = {2020}
}

@article{cliques-scale-free-2024,
  title     = {Maximal cliques in scale-free random graphs},
  author    = {Thomas Bläsius and Maximilian Katzmann and Clara Stegehuis},
  year      = {2024},
  journal   = {Network Science},
  doi       = {10.1017/nws.2024.13}
}

@inproceedings{fggkr-soda-23,
  author       = {Salwa Faour and
                  Mohsen Ghaffari and
                  Christoph Grunau and
                  Fabian Kuhn and
                  V{\'{a}}clav Rozhon},
  title        = {Local Distributed Rounding: Generalized to MIS, Matching, Set Cover,
                  and Beyond},
  booktitle    = {SODA'23},
  year         = {2023},
  url          = {https://doi.org/10.1137/1.9781611977554.ch168},
  doi          = {10.1137/1.9781611977554.CH168}
}

@inproceedings{ghaffari-soda-16,
  author       = {Mohsen Ghaffari},
  title        = {An Improved Distributed Algorithm for Maximal Independent Set},
  booktitle    = {SODA'16},
  year         = {2016},
  url          = {https://doi.org/10.1137/1.9781611974331.ch20},
  doi          = {10.1137/1.9781611974331.CH20}
}

@article{ks-stoc-26,
  author       = {Seri Khoury and
                  Aaron Schild},
  title        = {Breaking Barriers for Distributed {MIS} by Faster Degree Reduction},
  journal      = {STOC'26},
  year         = {2026},
  url          = {https://doi.org/10.1145/3798129.3800816},
  doi          = {10.1145/3798129.3800816},
}

@article{khoury2025round,
  author       = {Seri Khoury and
                  Aaron Schild},
  title        = {Round Elimination via Self-Reduction: Closing Gaps for Distributed
                  Maximal Matching},
  journal    = {FOCS'25},
  year         = {2025},
  url          = {https://doi.org/10.1109/FOCS63196.2025.00120},
  doi          = {10.1109/FOCS63196.2025.00120}
}

@article{kmw-jacm-2016,
  author       = {Fabian Kuhn and
                  Thomas Moscibroda and
                  Roger Wattenhofer},
  title        = {Local Computation: Lower and Upper Bounds},
  journal      = {J. {ACM}},
  year         = {2016},
  url          = {https://doi.org/10.1145/2742012},
  doi          = {10.1145/2742012}
}

@article{bbhors-jacm-2019,
author = {Balliu, Alkida and Brandt, Sebastian and Hirvonen, Juho and Olivetti, Dennis and Rabie, Mika\"{e}l and Suomela, Jukka},
title = {Lower Bounds for Maximal Matchings and Maximal Independent Sets},
year = {2021},
url = {https://doi.org/10.1145/3461458},
doi = {10.1145/3461458},
journal = {J. ACM},
}

@article{bgko-podc-23,
  author       = {Alkida Balliu and
                  Mohsen Ghaffari and
                  Fabian Kuhn and
                  Dennis Olivetti},
  title        = {Node and edge averaged complexities of local graph problems},
  journal      = {Distributed Comput.},
  year         = {2023},
  url          = {https://doi.org/10.1007/s00446-023-00453-1},
  doi          = {10.1007/S00446-023-00453-1}
}

@book{koonin2006power,
  title = {Power Laws,  Scale-Free Networks and Genome Biology},
  url = {http://dx.doi.org/10.1007/0-387-33916-7},
  DOI = {10.1007/0-387-33916-7},
  author = {Koonin,  Eugene V. and Wolf,  Yuri I. and Karev,  Georgy P.},
    publisher = {Springer US},
  year = {2006}
}

@article{preferal-attatch,
author = {Albert-László Barabási  and Réka Albert },
title = {Emergence of Scaling in Random Networks},
journal = {Science},
year = {1999},
doi = {10.1126/science.286.5439.509}}

@InProceedings{kostas-diameter,
  author =	{Benjert, Zylan and Lakis, Kostas and Lengler, Johannes and Ravi, Raghu Raman},
  title =	{{The Diameter of (Threshold) Geometric Inhomogeneous Random Graphs}},
  booktitle =	{STACS'26},
  year =	{2026}
}

@inproceedings{kostas-rumour,
  author       = {Marc Kaufmann and
                  Kostas Lakis and
                  Johannes Lengler and
                  Raghu Raman Ravi and
                  Ulysse Schaller and
                  Konstantin Sturm},
  title        = {Rumour Spreading Depends on the Latent Geometry and Degree Distribution
                  in Social Network Models},
  booktitle    = {SODA'26},
  year         = {2026},
  url          = {https://doi.org/10.1137/1.9781611978971.226},
  doi          = {10.1137/1.9781611978971.226}
}

@article{clara-21,
  author       = {Riccardo Michielan and
                  Clara Stegehuis},
  title        = {Cliques in geometric inhomogeneous random graphs},
  journal      = {J. Complex Networks},
  year         = {2021},
  url          = {https://doi.org/10.1093/comnet/cnac002},
  doi          = {10.1093/COMNET/CNAC002}
}

@article{Bet_Michielan_Stegehuis_2025, title={Localized geometry detection in scale-free random graphs}, DOI={10.1017/jpr.2025.10038}, journal={Journal of Applied Probability}, author={Bet, Gianmarco and Michielan, Riccardo and Stegehuis, Clara}, year={2025}}

@article{Jorritsma2025,
  title = {Cluster-size decay in supercritical kernel-based spatial random graphs},
  url = {http://dx.doi.org/10.1214/24-AOP1742},
  DOI = {10.1214/24-aop1742},
  journal = {The Annals of Probability},
  author = {Jorritsma,  Joost and Komjáthy,  Júlia and Mitsche,  Dieter},
  year = {2025}
}

@article{Molla2019,
  title = {Optimal deterministic distributed algorithms for maximal independent set in geometric graphs},
  DOI = {10.1016/j.jpdc.2019.05.012},
  journal = {Journal of Parallel and Distributed Computing},
  author = {Molla,  Anisur Rahaman and Pandit,  Supantha and Roy,  Sasanka},
  year = {2019}
}

@inproceedings{kmnw-disc-05,
author = {Kuhn, Fabian and Moscibroda, Thomas and Nieberg, Tim and Wattenhofer, Roger},
title = {Fast Deterministic Distributed Maximal Independent Set Computation on Growth-Bounded Graphs},
booktitle = {DISC'05},
year = {2005},
url = {https://www.microsoft.com/en-us/research/publication/fast-deterministic-distributed-maximal-independent-set-computation-growth-bounded-graphs/}
}

@article{Schneider2010,
  title = {An optimal maximal independent set algorithm for bounded-independence graphs},
  url = {http://dx.doi.org/10.1007/s00446-010-0097-1},
  DOI = {10.1007/s00446-010-0097-1},
  journal = {Distributed Computing},
  author = {Schneider,  Johannes and Wattenhofer,  Roger},
  year = {2010}
}

@article{Alon1986,
  title = {A fast and simple randomized parallel algorithm for the maximal independent set problem},
  url = {http://dx.doi.org/10.1016/0196-6774(86)90019-2},
  DOI = {10.1016/0196-6774(86)90019-2},
  journal = {Journal of Algorithms},
  author = {Alon,  Noga and Babai,  László and Itai,  Alon},
  year = {1986},
}

@inproceedings{thomas-socg-25,
  author       = {Thomas Bl{\"{a}}sius and
                  Jean{-}Pierre von der Heydt and
                  S{\'{a}}ndor Kisfaludi{-}Bak and
                  Marcus Wilhelm and
                  Geert van Wordragen},
  title        = {Structure and Independence in Hyperbolic Uniform Disk Graphs},
  booktitle    = {SoCG'25},
  year         = {2025},
  url          = {https://doi.org/10.4230/LIPIcs.SoCG.2025.21},
  doi          = {10.4230/LIPICS.SOCG.2025.21}
}

@article{fischer2020improved,
  title={Improved deterministic distributed matching via rounding},
  author={Fischer, Manuela},
  journal={Distributed Computing},
  year={2020}
}

@article{Balliu2026New,
  author       = {Alkida Balliu and
                  Filippo Casagrande and
                  Francesco d'Amore and
                  Dennis Olivetti},
  title        = {New Hardness Results for the {LOCAL} Model via a Simple Self-Reduction},
  journal      = {PODC'26},
  year         = {2026},
  doi          = {10.48550/ARXIV.2510.19972},
}

@article{Barenboim2009,
  title = {Sublogarithmic distributed MIS algorithm for sparse graphs using Nash-Williams decomposition},
  url = {http://dx.doi.org/10.1007/s00446-009-0088-2},
  DOI = {10.1007/s00446-009-0088-2},
  journal = {Distributed Computing},
  author = {Barenboim,  Leonid and Elkin,  Michael},
  year = {2009}
}

@inproceedings{ghaffari2023faster,
  title={Faster deterministic distributed MIS and approximate matching},
  author={Ghaffari, Mohsen and Grunau, Christoph},
  booktitle={STOC'23},
  year={2023}
}

@InProceedings{ghaffariDISC19,
  author =	{Ghaffari, Mohsen and Portmann, Julian},
  title =	{{Improved Network Decompositions Using Small Messages with Applications on MIS, Neighborhood Covers, and Beyond}},
  booktitle =	{DISC'19},
  year =	{2019},
  URL =		{https://drops.dagstuhl.de/entities/document/10.4230/LIPIcs.DISC.2019.18},
  doi =		{10.4230/LIPIcs.DISC.2019.18}
}

@InProceedings{gghir-soda-23,
author = {Mohsen Ghaffari and Christoph Grunau and Bernhard Haeupler and Saeed Ilchi and Václav Rozhoň},
title = {Improved Distributed Network Decomposition, Hitting Sets, and Spanners, via Derandomization},
  booktitle =	{SODA'23},
  year =	{2023},
doi = {10.1137/1.9781611977554.ch97},
URL = {https://epubs.siam.org/doi/abs/10.1137/1.9781611977554.ch97}
}

@inproceedings{lenzen2011mis,
  title={MIS on trees},
  author={Lenzen, Christoph and Wattenhofer, Roger},
  booktitle={PODC'11},
  year={2011}
}

@article{Mtivier2010,
  title = {An optimal bit complexity randomized distributed MIS algorithm},
  url = {http://dx.doi.org/10.1007/s00446-010-0121-5},
  DOI = {10.1007/s00446-010-0121-5},
  journal = {Distributed Computing},
  author = {Métivier,  Y. and Robson,  J. M. and Saheb-Djahromi,  N. and Zemmari,  A.},
  year = {2010}
}

@inproceedings{Coupette2021,
  title = {A Breezing Proof of the KMW Bound},
  url = {http://dx.doi.org/10.1137/1.9781611976496.21},
  DOI = {10.1137/1.9781611976496.21},
  booktitle = {SOSA'21},
  author = {Coupette,  Corinna and Lenzen,  Christoph},
  year = {2021}
}

@inproceedings{Kisfaludi-Bak-SODA20,
  author       = {S{\'{a}}ndor Kisfaludi{-}Bak},
  title        = {Hyperbolic intersection graphs and (quasi)-polynomial time},
  booktitle    = {SODA'20},
  year         = {2020},
  url          = {https://doi.org/10.1137/1.9781611975994.100},
  doi          = {10.1137/1.9781611975994.100}
}

@inproceedings{bdhm-socg26,
  author       = {Thomas Bl{\"{a}}sius and
                  Emil Dohse and
                  Deborah Haun and
                  Laura Merker},
  title        = {Product Structure and Treewidth of Hyperbolic Uniform Disk Graphs},
  booktitle    = {SoCG'26},
  year         = {2026},
  url          = {https://doi.org/10.4230/LIPIcs.SoCG.2026.18},
  doi          = {10.4230/LIPICS.SOCG.2026.18}
}

@article{HarrisSS18,
  author       = {David G. Harris and
                  Johannes Schneider and
                  Hsin{-}Hao Su},
  title        = {Distributed ({\(\Delta\)} +1)-Coloring in Sublogarithmic Rounds},
  journal      = {J. {ACM}},
  year         = {2018},
  url          = {https://doi.org/10.1145/3178120}
}

@article{CKP19,
  author       = {Yi{-}Jun Chang and
                  Tsvi Kopelowitz and
                  Seth Pettie},
  title        = {An Exponential Separation between Randomized and Deterministic Complexity
                  in the {LOCAL} Model},
  journal      = {{SIAM} J. Comput.},
  year         = {2019},
  url          = {https://doi.org/10.1137/17M1117537},
  doi          = {10.1137/17M1117537}

}

@misc{GHMN26,
      title={Robust Shattering Arguments}, 
      author={Mohsen Ghaffari and Magnús M. Halldórsson and Yannic Maus and Alexandre Nolin},
      year={2026},
      eprint={2606.27847},
      archivePrefix={arXiv},
      url={https://arxiv.org/abs/2606.27847}, 
}

@inproceedings{G19,
  author       = {Mohsen Ghaffari},
  title        = {Distributed Maximal Independent Set using Small Messages},
  booktitle    = {SODA'19},
  year         = {2019},
  url          = {https://doi.org/10.1137/1.9781611975482.50},
  doi          = {10.1137/1.9781611975482.50}
}

@inproceedings{PR16,
  author       = {Sriram V. Pemmaraju and
                  Talal Riaz},
  title        = {Using Read-k Inequalities to Analyze a Distributed {MIS} Algorithm},
  booktitle    = {OPODIS'16},
  year         = {2016},
  url          = {https://doi.org/10.4230/LIPIcs.OPODIS.2016.9},
  doi          = {10.4230/LIPICS.OPODIS.2016.9}
}

@inproceedings{quantum-stoc,
  author       = {Alkida Balliu and
                  Sebastian Brandt and
                  Xavier Coiteux{-}Roy and
                  Francesco d'Amore and
                  Massimo Equi and
                  Fran{\c{c}}ois Le Gall and
                  Henrik Lievonen and
                  Augusto Modanese and
                  Dennis Olivetti and
                  Marc{-}Olivier Renou and
                  Jukka Suomela and
                  Lucas Tendick and
                  Isadora Veeren},
  title        = {Distributed Quantum Advantage for Local Problems},
  booktitle    = {STOC'25},
  year         = {2025}
}

@inproceedings{BBKO22,
  author       = {Alkida Balliu and
                  Sebastian Brandt and
                  Fabian Kuhn and
                  Dennis Olivetti},
  title        = {Distributed {\(\Delta\)}-coloring plays hide-and-seek},
  booktitle    = {STOC'22},
  year         = {2022},
  url          = {https://doi.org/10.1145/3519935.3520027},
  doi          = {10.1145/3519935.3520027}
}

@inbook{quantum-soda,
author = {Alkida Balliu and Filippo Casagrande and Francesco d’Amore and Massimo Equi and Barbara Keller and Henrik Lievonen and Dennis Olivetti and Gustav Schmid and Jukka Suomela},
title = {Distributed Quantum Advantage in Locally Checkable Labeling Problems},
booktitle = {SODA'26},
doi = {10.1137/1.9781611978971.49},
URL = {https://epubs.siam.org/doi/abs/10.1137/1.9781611978971.49}
}

@inproceedings{BO20,
  author       = {Sebastian Brandt and
                  Dennis Olivetti},
  title        = {Truly Tight-in-{\(\Delta\)} Bounds for Bipartite Maximal Matching
                  and Variants},
  booktitle    = {PODC'20},
  year         = {2020},
  url          = {https://doi.org/10.1145/3382734.3405745},
  doi          = {10.1145/3382734.3405745}
}

\newpage

\appendix

\section{Luby's Algorithm Retains a Polynomial Degree After Constant Rounds}\label{sec:hrg-aint-nuthin-to-fuck-with}
In this section, we show that a standard Luby algorithm requires more than constant rounds so that the degree of every remaining vertex is at most $n^{o(1)}$ (see \Cref{luby-aint-shatter-my-hrg} for a formal statement). Throughout the section, we call a vertex $v \in V$ a leaf if $\deg(v) = 1$. We use the following fact of HRGs (\cite[Lemma 10]{mr-soda-26}).
\begin{lemma}[Layer-leaves]\label{lem:leaves}
Let $G\sim\hrg$ be a threshold hyperbolic random graph and let $u \in V(G)$ be a vertex in $G$ with radius $r \in R - \omega(1)$. Moreover, let $N_0 := \Layer{0} \cap N(u)$ and let $L_0(u) := \{v \in N_0(u) : \deg(v) = 1 \}$. Then, with non-vanishing probability, $|L_0(u)| \in \Theta(n\cdot e^{-r/2})$.
\end{lemma}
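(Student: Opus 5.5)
We prove the two inequalities separately and then combine them. The upper bound is a first-moment argument. The lower bound uses a second-moment (Paley--Zygmund) argument on the Poisson process. Throughout, we treat $u$ as a point added at radius $r$ (Palm version of the Poisson process, which is legitimate by Mecke's formula). We write $\lambda_u := n e^{-r/2}$.

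\emph{Upper bound.} Since $L_0(u) \subseteq N(u)$, it suffices to bound $\deg(u)$. By \Cref{lem:vertex-degree}, $\E{\deg(u)} = \Theta(\lambda_u)$. By Markov's inequality, $\Pro{\deg(u) > K\lambda_u} \le \Theta(1)/K$, and this is an arbitrarily small constant for $K$ a large enough constant. (If $\lambda_u = \omega(1)$, which holds since $r = R-\omega(1)$, a Chernoff bound even gives failure probability $o(1)$.)

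\emph{Lower bound, first moment.} Let $\mathcal{S}_u := \{x \in \Layer{0} : \angulardist{x}{u} \le \theta_R(r, R-1)\}$. By \Cref{rmk:theta-monotonicity}, every point of $\mathcal{S}_u$ is within distance $R$ of $u$. By \Cref{lem:max-angle} and \Cref{eq:layer_measure}, $n\mu(\mathcal{S}_u) = \Theta(n\cdot e^{(R-r)/2}/n) = \Theta(\lambda_u)$. A vertex $v \in \mathcal{S}_u$ is a leaf iff $\B_v(R)\setminus\{u\}$ contains no other point. Since $v \in \Layer{0}$, \Cref{lem:layer-properties} gives $n\mu(\B_v(R)) = \bigO(1)$, so this event has probability $e^{-\bigO(1)} = \Omega(1)$. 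Applying Mecke's formula to $L := |L_0(u) \cap \mathcal{S}_u|$ yields $\E{L} = \int_{\mathcal{S}_u} n\rho(x)\, e^{-n\mu(\B_x(R))}\dd x = \Theta(\lambda_u)$.

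\emph{Lower bound, second moment.} By the bivariate Mecke formula,
\[
\E{L(L-1)} = \int\!\!\int_{\mathcal{S}_u^2} n^2\rho(x)\rho(y)\, e^{-n\mu(\B_x(R)\cup \B_y(R))}\,\mathbb{1}[\dist(x,y)>R]\dd x\dd y .
\]
Writing $e^{-n\mu(\cup)} = e^{-n\mu(\B_x)}e^{-n\mu(\B_y)}e^{n\mu(\B_x\cap \B_y)}$, the task reduces to bounding the overlap term.

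This is the step I expect to be the main obstacle: showing $n\mu(\B_x(R)\cap\B_y(R)) = o(1)$ for all but an $o(1)$-fraction of pairs $(x,y) \in \mathcal{S}_u^2$. The plan is to integrate $\rho$ over the lens-shaped intersection, using \Cref{lem:max-angle} radius by radius. For two layer-$0$ points at angular distance $\delta$, a point at radius $s$ lies in both balls only if its angle is within $\pi e^{(R-s-(R-1))/2}$ of both, which forces $e^{(R-s)/2} \gtrsim n\delta$. I therefore expect $n\mu(\B_x\cap\B_y) = \bigO((n\delta)^{-(2\alpha-1)})$, which is $o(1)$ whenever $\delta = \omega(1/n)$. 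Since $\mathcal{S}_u$ has angular width $\omega(1/n)$ (because $r = R-\omega(1)$), pairs with $\delta = \bigO(1/n)$ form an $o(1)$-fraction of all pairs. On these close pairs we bound the integrand trivially by its value without the overlap factor times $e^{\bigO(1)}$, which still contributes only $o(\E{L}^2)$. Together this gives $\E{L^2} \le (1+o(1))\E{L}^2 + \E{L}$.

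\emph{Combining.} Since $\E{L} = \Theta(\lambda_u) = \omega(1)$, we get $\E{L^2} \le (1+o(1))\E{L}^2$. Paley--Zygmund then yields $\Pro{L \ge \E{L}/2} \ge 1/4 - o(1)$. Choose $K$ in the upper bound so that $\Pro{\deg(u) > K\lambda_u} \le 1/8$. A union bound over the two failure events gives, with probability at least $1/8 - o(1)$, that $\E{L}/2 \le L \le |L_0(u)| \le \deg(u) \le K\lambda_u$, i.e.\ $|L_0(u)| \in \Theta(n e^{-r/2})$.
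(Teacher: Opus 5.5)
The paper gives no proof of this lemma. It is imported as a black box from \cite[Lemma 10]{mr-soda-26}, so your argument has to stand on its own, and it essentially does. Your route has three parts: a Palm/Mecke first moment over a window of layer-$0$ neighbours, the bivariate Mecke formula for $\E{L(L-1)}$, and a lens estimate. The lens estimate says that the balls of two layer-$0$ points at angular distance $\delta$ overlap in measure $n\mu(\B_x(R)\cap\B_y(R))=\bigO((n\delta)^{-(2\alpha-1)})$. I checked it: integrating the radial density $\approx\alpha e^{-\alpha(R-s)}$ against an angular window of order $e^{-s/2}$ up to the cutoff $e^{-s/2}\asymp\delta$ does give $\bigO(n^{1-2\alpha}\delta^{-(2\alpha-1)})$.

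Elsewhere the paper partitions into small sectors and applies Chernoff (e.g.\ inside \Cref{lem:cheddar2}), which needs a separate reason why different sectors are (nearly) independent. Your second-moment computation handles the long-range dependence from the deep reach of a layer-$0$ ball directly and quantitatively. Moreover, since you obtain $\E{L^2}\le(1+o(1))\E{L}^2$, Chebyshev gives $L=(1+o(1))\E{L}$ with probability $1-o(1)$, which is stronger than what is claimed.

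One step is wrong as written. Because $\theta_R$ is decreasing (\Cref{rmk:theta-monotonicity}) and points of $\Layer{0}$ have radius in $(R-1,R]$, the adjacency threshold of $x\in\Layer{0}$ to $u$ is $\theta_R(r,r(x))\in[\theta_R(r,R),\theta_R(r,R-1))$. Your window of half-width $\theta_R(r,R-1)$ therefore contains points near radius $R$ that are \emph{not} neighbours of $u$. For those points, ``leaf iff $\B_x(R)\setminus\{u\}$ is empty'' fails, since they are never in $L_0(u)$. So your Mecke integrals do not compute the moments of $L$. There are two easy fixes:
\begin{itemize}
\item Use $\theta_R(r,R)$ instead. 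It is still at least $e^{-r/2}$ by \Cref{lem:max-angle}, so $n\mu(\mathcal{S}_u)=\Theta(\lambda_u)$ is unchanged.
\item Or simply take $\mathcal{S}_u=\B_u(R)\cap\Layer{0}$, so that $L=|L_0(u)|$.
\end{itemize}

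Two smaller points should also be made explicit.
\begin{itemize}
\item \Cref{lem:max-angle} needs radius at least $1$, so handle radii $s<1$ in the lens integral separately. Their mass is $\bigO(n^{1-2\alpha})=o(1)$.
\item Make the close/far split with an explicit threshold $\delta\le K_n/n$, where $K_n\to\infty$ and $K_n=o(\lambda_u)$. Then close pairs cost $\bigO(K_n/\lambda_u)\E{L}^2$, and far pairs carry an overlap factor $1+\bigO(K_n^{-(2\alpha-1)})$.
\end{itemize}
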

Let $k, d \in \mathbb{Z}^+$ and let $W = \{w_1, w_2, \dots, w_{k}\}$ be a set of vertices. We say that $W$ is a \emph{similar degree path of length $k$} if the induced subgraph $G[W]$ is a path and if the degree of any vertex $w \in W$ is $\deg(w) \in \Theta(d)$. Moreover, we say that a vertex $u \in V\setminus W$ with degree $\deg(u) = d$ has a \emph{similar degree path $W$ of length $k$}, if $G[W\cup \{u\}]$ is a path. The following lemma says that most vertices have a similar degree path, where the upper bound on $r$ guarantees expected degree at least polylogarithmic, while the lower bound ensures that the sectors contain polylogarithmically many vertices.
\begin{lemma}[Similar degree path]\label{lem:similar-degree-path}
Let $G\sim\hrg$ be a threshold hyperbolic random graph and let $u \in V(G)$ be a vertex in $G$ with radius $\frac{\log n - \log\log n}{\alpha} \leq r \leq 2\log n - \frac{2\log\log n}{1-\alpha}$. Then, for any constant $c \in \bigO(1)$, $u$ has a similar degree path $W$ of length $c$ and degree $\Theta(n\cdot e^{-r/2})$ \wehp In particular, the similar degree path $W$ is contained in the area $\{x \in \disk : r - 1/1000 \leq r(x) \leq r  \text{ and } \varphi(u) \leq \varphi(x) \leq \varphi(u) + 6c\cdot (1 - 1/1000)e^{R/2-r}\}$.
\end{lemma}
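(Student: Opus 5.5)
We will construct the path explicitly inside the thin region $\mathcal{S}$ given in the statement. The plan is to split $\mathcal{S}$ into small boxes and show that each box contains a vertex, that consecutive boxes are adjacent, and that non-consecutive boxes are not.

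\emph{Step 1: the boxes.} Set $\delta := 1/1000$ and $\Delta\varphi := (1-\delta)e^{R/2-r}$, which is of order $\theta_R(r,r)$. Define boxes
$$\mathcal{S}_j := \{x : r-\delta \le r(x)\le r,\ \varphi(u) + (2j-1)\Delta\varphi \le \varphi(x) \le \varphi(u) + 2j\Delta\varphi\}$$
for $j = 1,\dots,c$. Let $w_j$ be the vertex in $\mathcal{S}_j$ with the smallest angle, and put $w_0 := u$. All $\mathcal{S}_j$ lie in the area stated in the lemma.

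\emph{Step 2: geometry.} Using \Cref{lem:max-angle} together with its exact form $\theta_R(r_1,r_2) = (2+o(1))e^{(R-r_1-r_2)/2}$ when $r_1+r_2-R=\omega(1)$, we want two facts.
\begin{itemize}
\item Points in consecutive boxes $\mathcal{S}_j,\mathcal{S}_{j+1}$ have angular distance at most $3\Delta\varphi < \theta_R(r,r)$. For $u$ and $\mathcal{S}_1$ the distance is at most $2\Delta\varphi$. Hence consecutive path vertices are adjacent.
\item Points in boxes $\mathcal{S}_j,\mathcal{S}_{j'}$ with $j'\ge j+2$ have angular distance at least $3\Delta\varphi$, and $u$ is at distance at least $3\Delta\varphi$ from $\mathcal{S}_j$ for $j\ge 2$. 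Since $\theta_R(r-\delta,r-\delta)\approx 2e^{\delta}e^{R/2-r}<3(1-\delta)e^{R/2-r}$, these pairs are non-adjacent.
\end{itemize}
The constant $3$ in the second fact only works if we use the sharp constant $2$ rather than the crude $\pi$ of \Cref{lem:max-angle}. Establishing that sharp asymptotic for $\theta_R$ directly from \Cref{eq:angle-func} is the main obstacle. It holds because $r\le 2\log n - \frac{2\log\log n}{1-\alpha}$ and $r\ge(\log n-\log\log n)/\alpha > R/2 + \omega(1)$ give $2r-R=\omega(1)$. If this step fails, we instead take the gap between boxes as a free constant and shrink the box width.

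Choosing the leftmost vertex of each box keeps the argument robust: it does not matter how many vertices a box holds, since $G[W\cup\{u\}]$ only involves the chosen vertices.

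\emph{Step 3: probability and degrees.} By \Cref{eq:layer_measure}, each box has expected count
$$n\mu(\mathcal{S}_j) = \Theta\bigl(n e^{R/2-r}e^{-\alpha(R-r)}\bigr) = \Theta\bigl(e^{(1-\alpha)(R-r)}\bigr),$$
using that $\mu$ of a radial band of width $\delta$ is a constant fraction of a layer. The upper bound on $r$ gives $R-r\ge 2\log\log n/(1-\alpha)$, so this count is $\Omega(\log^2 n)$. The Poisson tail (\Cref{lem:Poisson-Chernoff}) then makes every box non-empty with probability $1-n^{-\omega(1)}$.

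Every $w_j$ has radius in $[r-\delta,r]$. \Cref{lem:vertex-degree} gives expected degree $\Theta(ne^{-r/2})$, and the upper bound on $r$ makes this $\Omega(\log^{1/(1-\alpha)} n)=\omega(\log n)$. A Chernoff bound gives $\deg(w_j)\in\Theta(ne^{-r/2})$ w.e.h.p., and the same holds for $u$.

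A union bound over the $c=\bigO(1)$ boxes, and over all $\bigO(n)$ candidate vertices $u$ if a statement for all such $u$ is needed, completes the argument. The lower bound on $r$ is only used to ensure $r+r-R=\omega(1)$, so that the boxes lie in the regime where the sharp angle estimate applies.
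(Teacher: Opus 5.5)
Your plan has the same architecture as the paper's proof: angular boxes inside the band $r-\delta\le r(x)\le r$, occupancy $\Theta(e^{(1-\alpha)(R-r)})=\Omega(\log^2 n)$ from the upper bound on $r$, Chernoff bounds for the degrees, and one chosen vertex per box. Your Step~3 is fine. The gap is in the first bullet of Step~2. You have $3\Delta\varphi=3(1-\delta)e^{R/2-r}\approx 2.997\,e^{R/2-r}$, while $\theta_R(r,r)=(2+o(1))e^{R/2-r}$, so the claimed inequality $3\Delta\varphi<\theta_R(r,r)$ is false. Two vertices in consecutive boxes $\mathcal{S}_j,\mathcal{S}_{j+1}$ whose angular distance is close to $3\Delta\varphi$ are not adjacent, so $\{w_j,w_{j+1}\}$ need not be an edge. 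This has nothing to do with the sharpness of the constant in $\theta_R$. Your second bullet requires $3\Delta\varphi>\theta_R(r-\delta,r-\delta)>\theta_R(r,r)$, so the two bullets contradict each other. With boxes and gaps of equal width, the largest angular distance between consecutive boxes equals the smallest angular distance between boxes two apart, so no rescaling of $\Delta\varphi$ can make both bullets hold. Only the adjacency of $u$ to $\mathcal{S}_1$ (bound $2\Delta\varphi<\theta_R(r,r)$) survives.

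The repair is close to what your fallback sentence suggests, but it is forced rather than optional: the gaps must be strictly wider than the boxes. Write $E:=e^{R/2-r}$ and use the boxes $\{x: r-\delta\le r(x)\le r,\ \varphi(u)+jp-w\le\varphi(x)\le\varphi(u)+jp\}$ for $j=1,\dots,c$, with $p=1.9E$ and $w=0.05E$. Then:
\begin{itemize}
\item consecutive boxes, and $u$ with the first box, are within angular distance $p+w=1.95E<\theta_R(r,r)$;
\item boxes two apart, and $u$ with boxes $j\ge 2$, are at angular distance at least $2p-w=3.75E>\theta_R(r-\delta,r-\delta)=(2e^{\delta}+o(1))E$;
\item all boxes lie in the stated region, since $cp\le 6c(1-\delta)E$;
\item each box still has $\Omega(\log^2 n)$ expected vertices.
\end{itemize}
Alternatively, you can keep your boxes and exploit the leftmost choice, but then you must add an argument. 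With probability $1-n^{-\omega(1)}$, the leftmost vertex of each box lies within $\varepsilon\Delta\varphi$ of its left edge, because that sub-box also has $\omega(\log n)$ expected vertices. This gives $\varphi(w_{j+1})-\varphi(w_j)\le(2+\varepsilon)\Delta\varphi<\theta_R(r,r)$ for $\varepsilon<2\delta/(1-\delta)$. Do not borrow the paper's layout as a fix. There, boxes have width $\psi=2(1-\delta)E$ and every third one is selected, so consecutive selected boxes are at angular distance at least $2\psi\approx 3.996E>\pi e^{\delta}E\ge\theta_R(r-\delta,r-\delta)$. That construction needs the same kind of recalibration.
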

\begin{proof}
Chosen with hindsight, let $\psi := 2(1 - 1/1000)e^{R/2-r}$ and, w.l.o.g., let $\varphi(u) = 0$. Then consider the sector $\Phi := \{x \in \disk : 0 \leq \varphi(x) \leq 3c \cdot\psi\}$, i.e., the sector that lies counter-clockwise to vertex $u$, where one ray of $\Phi$ intersects $u$. We partition $\Phi$ into $3c$ equally sized sub-sectors, such that for all $i \in [3c+1]\setminus\{0\}$, sector $\Phi_i$ has angle $\psi$. Now, let $\mathcal{S}_i := \Phi_i \cap (\B_0(r)\setminus\B_0(r-1/1000))$. We show that for $\mathcal{S}:= \bigcup_{i=1}^{3c}$, there exists a subset of vertices $W \subseteq V \cap \mathcal{S}$, s.t. $G[W]$ induces the desired similar degree path of $u$ \wehp 
First, we note that by \Cref{lem:vertex-degree}, every vertex $w \in W\cup \{u\}$ has degree $\E{\deg(w)} \in \Theta(n\cdot e^{-r/2})$. Since $r \leq  2\log n - \frac{2\log\log n}{1-\alpha}$, and for every $w \in W$ it holds $r - 1/1000 \leq r(w) \leq r$ a Chernoff and union bound reveals for all $w \in W$ that $\deg(w) \in \Theta(\deg(u))$ with probability $1 - n^{-\omega(1)}$
Next, we observe that the expected number of vertices in $\mathcal{S}_i$ is 
$$
\E{|V \cap \mathcal{S}_i|} = n \mu(\mathcal{S}_i) = \frac{n\psi}{2\pi}\mu(\B_0(r)\setminus\B_0(r-1/1000)) = \Theta(1)n\cdot e^{R/2-r}\cdot e^{-\alpha(R-r)}, 
$$
by our choice of $\psi$ and \Cref{lem:measure-inner-disk}. Using that hypothesis that $r \leq 2\log n - \frac{2\log\log n}{1-\alpha}$ and $R = 2\log n + C$ we then have $\E{|V \cap \mathcal{S}_i|} \in \Omega(\log^2 n)$. Using Poisson distribution for each $\mathcal{S}_i$ and a union bound for $3c$ areas $\mathcal{S}_i$, we have that each $\mathcal{S}_i$ contains a vertex with probability $1 - n^{-\omega(1)}$.
We proceed by showing that, given that each $\mathcal{S}_i$ contains at least one vertex, $u$ has a similar degree path $W$ of length $c$, which finishes the proof. To this end, we use \cite[Lemma 3.1]{gpp-rhg-12} and by our angular width $\psi$, it follows for any $x \in \mathcal{S}_{i}$ and $y \in \mathcal{S}_{j}$ that $\dist(x,y) \leq R$ if $|i - j| \leq 3$ while $\dist(x,y) > R$ if $|i - j| \geq 5$ since $r - 1/1000 \leq r(x), r(y) \leq r$. Consequently, if for every $\mathcal{S}_i$ where $i$ is divisible by $3$ we include exactly one vertex $w \in V \cap \mathcal{S}_i$ in the set $W$, this ensures that $G[W]$ induces a path of length $c$. Moreover,
$u$ has an edge to the unique selected vertex $w\in V\cap\mathcal S_3$, but not to any $w \in V \cap \mathcal{S}_i$ if $i\geq 6$. It follows that $W$ is a similar degree path of length $c$ of $u$. Since every $\mathcal{S}_i$ contains at least one vertex \wehp and all vertices $w \in W$ have degree $\deg(w) \in \Theta(\deg(u))$ \wehp, a union bound over both events finishes the proof, where the 'in particular' statement directly follows from our construction of $W$.
\end{proof}
We now show that an HRG has many vertices that have both polynomially many leaves and a similar degree path of constant length. This structure will be useful to show that Luby does not shatter an HRG in constant rounds.
\begin{lemma}[Luby Obstruction]\label{lem:luby-obstruction}
    Let $G \sim \hrg$ be a threshold hyperbolic random graph and let $\eps < 1 - 1/(2\alpha)$ be a constant larger than $0$. Then, \aas there exist a set of vertices $U(\eps) \subseteq V$ of size $|U(\eps)| \in \Omega\left(n^{1 - 1/2\alpha -\eps}\right)$, where a vertex $u \in U(\eps)$ has the following properties.
    \begin{enumerate}
           \item\label{item:property1}\textbf{Degree of $u$}: The degree of $u$ is $\deg(u) \in \Theta(n^{\eps / 2})$.
        \item\label{item:property2}\textbf{Leaves of $u$}: Vertex $u$ has $\Theta(n^{\eps / 2})$ neighbours that are leaves.
        \item\label{item:property3}\textbf{Similar degree path of $u$}: For any constant $t \in \bigO(1)$, $u$ has a similar degree path $W(u)$ of length at least $2t$.
    \end{enumerate}
     Moreover, for any pair $u, u' \in U(\eps)$ it holds that for any pair $v \in L(u) \cup W(u) \cup \{u\}$ and $v' \in L(u') \cup W(u') \cup \{u'\}$ that $\{v,v'\} \not\in E(G)$.
\end{lemma}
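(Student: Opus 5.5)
Fix the radius $r_\eps := R - \eps\log n$. By \Cref{lem:vertex-degree}, a vertex at this radius has expected degree $\Theta(n e^{-r_\eps/2}) = \Theta(n^{\eps/2})$. Consider the annulus $\mathcal{A}_\eps := \B_0(r_\eps)\setminus \B_0(r_\eps - 1/1000)$. We will pick the set $U(\eps)$ among the vertices of $\mathcal{A}_\eps$, in well-separated angular sectors.

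First check the range. Since $\eps < 1-1/(2\alpha)$, we have
\[
r_\eps \geq \frac{\log n - \log\log n}{\alpha}.
\]
Since $\eps>0$ is constant, we have
\[
r_\eps \leq 2\log n - \frac{2\log\log n}{1-\alpha}.
\]
So \Cref{lem:similar-degree-path} applies \wehp to every vertex at radius $r_\eps$. After a union bound, every $u \in V\cap\mathcal{A}_\eps$ then has a similar degree path $W(u)$ of length $2t$ and degree $\Theta(n^{\eps/2})$, contained in an angular window of width $O(e^{R/2-r_\eps}) = O(n^{\eps}/n)$. A Chernoff bound also gives $\deg(u)\in\Theta(n^{\eps/2})$ \wehp. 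This settles Items~\ref{item:property1} and~\ref{item:property3}.

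\textbf{Partition into sectors.} For separation, partition $\disk$ into sectors $\Psi_j$ of angle $\psi := n^{1/(2\alpha)+\eps}\cdot\polylog(n)/n$. This gives $k = \Omega(n^{1-1/(2\alpha)-\eps})$ sectors; the polylog factor can be absorbed by shrinking $\eps$ slightly, or tracked as in the statement. Inside each $\Psi_j$ take a central subsector $\Phi_j$ of width $\Theta(n^{\eps}/n)$. Call $\Psi_j$ good if three things hold:
\begin{enumerate}
\item $\Phi_j\cap\mathcal{A}_\eps$ contains a vertex $u_j$, with its path $W(u_j)$ inside $\Phi_j$;
\item $u_j$ has $\Theta(n^{\eps/2})$ leaves in $\Layer{0}$;
\item the inner disk $\B_0(r^*)$, with $r^* = R-\log n/\alpha-\log\log n/2$ as in the proof of \Cref{pro:d-regular-trees}, is empty.
\end{enumerate}
The expected number of vertices in $\Phi_j\cap\mathcal{A}_\eps$ is $n\cdot\frac{n^\eps}{n}\cdot e^{-\alpha\eps\log n}=\Theta(n^{\eps(1-\alpha)})\to\infty$, so the first condition holds \wehp. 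For the second, apply \Cref{lem:leaves} with $r=r_\eps = R-\omega(1)$: it gives $|L_0(u_j)|\in\Theta(n e^{-r_\eps/2})=\Theta(n^{\eps/2})$ with non-vanishing probability $p>0$.

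\textbf{Independence and counting.} Condition on $\B_0(r^*)=\emptyset$, which holds \aas by \Cref{lem:measure-inner-disk}. Every vertex in $\Phi_j$ has radius at least $r_\eps - 1$, and its neighbours outside $\B_0(r^*)$ lie within angle $\theta_R(r_\eps-1,r^*) = O(n^{1/(2\alpha)+\eps/2}\log^{1/4}n/n)$, as in \Cref{eq:angle-neighbour-bound}. This is $o(\psi)$. Hence:
\begin{itemize}
\item neighbourhoods of $u_j$, $W(u_j)$ and $L_0(u_j)$ stay inside $\Psi_j$;
\item the good-events of distinct $\Psi_j$ depend only on disjoint regions of the Poisson process, so they are independent;
\item vertices from distinct good sectors are non-adjacent, which is the ``moreover'' claim. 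Leaves have degree 1, so their only neighbour is $u_j$, and they create no cross edges either.
\end{itemize}
The number of good sectors is then a sum of $k$ independent indicators with mean at least $p$ each. A Chernoff bound gives $\Omega(k)$ good sectors \wehp given $\B_0(r^*)=\emptyset$. Combining with the \aas emptiness of $\B_0(r^*)$ finishes the proof, with $U(\eps):=\{u_j : \Psi_j \text{ good}\}$.

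\textbf{Main obstacle.} The hard step is the second condition, the leaves. \Cref{lem:leaves} gives only non-vanishing probability and is stated for a single vertex. To use it here, the event must be local to $\Psi_j$, so that independence holds. Whether a neighbour $v\in\Layer{0}$ of $u_j$ has degree 1 depends on the region $\B_v(R)$, which reaches radius $r^*$ only within angle $o(\psi)$ of $v$. With $\B_0(r^*)$ empty, the event is therefore measurable with respect to the process inside $\Psi_j$. I would argue this locality carefully and, if needed, reprove \Cref{lem:leaves} inside a sector by a direct Poisson computation.
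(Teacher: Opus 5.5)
Your proposal is essentially the paper's argument, and it is correct up to two small repairs. The shared steps are: sectors of angle $\Theta(n^{1/(2\alpha)+\eps}/n)$, one vertex at radius $\approx R-\eps\log n$ per sector, \Cref{lem:leaves} (non-vanishing probability) for the leaves, \Cref{lem:similar-degree-path} and a Chernoff bound (\wehp) for the path and the degree, and independence across sectors plus a Chernoff bound after conditioning on $V\cap\B_0(r^*)=\emptyset$.

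First, drop the $\polylog(n)$ factor in $\psi$. Your own estimate $\theta_R(r_\eps-1,r^*)=O(n^{1/(2\alpha)+\eps/2}\log^{1/4}n/n)$ is already $o(n^{1/(2\alpha)+\eps}/n)$, whereas shrinking $\eps$ is not allowed because $\eps$ also fixes $\deg(u)$.

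Second, make $u_j$ canonical as the paper does: take it to be the unique vertex of a box in $\Layer{\lceil\eps\log n\rceil}$ of angular width $\Theta(n^{\alpha\eps}/n)$ with constant expected occupancy, which is an event of constant probability. Then \Cref{lem:leaves} is applied to a well-defined vertex rather than to one picked among the $\Theta(n^{\eps(1-\alpha)})$ vertices of your strip.
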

\begin{proof}
    We partition the disk $\disk$ into $\left\lfloor\frac{n\cdot 2\pi}{n^{1/(2\alpha) + \eps}} \right\rfloor =:  k$ sectors, (where $k \geq n^{1 - 1/2\alpha -\eps}\in n^{\Omega(1)}$ due to our choice of $\eps < 1 - 1/(2\alpha)$), such that for $i \in [k]$, a sector $\Phi_i$ has angle $\phi \in \Theta(n^{1/(2\alpha) + \eps}/n)$. We consider one such sector $\Phi_i$ and let $X_i$ be the indicator random variable that is $1$ if $\Phi_i$ contains a vertex $u$ with properties \Cref{item:property1}, \Cref{item:property2} and \Cref{item:property3}. We first show that $\Prob{X_i =1} >0$. Thereafter, we show for $X:= \sum_{i=1}^k X_i$ that $X \in \Omega(k)$ \aas what proves our desired statement.
    \paragraph{\Cref{item:property1}:} Without loss of generality, let $\Phi_i$ have bisector $0$. Then, for $\hat{\ell} := \lceil \eps \cdot \log n \rceil$, let $\mathcal{S} := \{x \in \Layer{\hat{\ell}}: - n^{\alpha \cdot \eps }/n < \varphi(x) < n^{\alpha \cdot \eps }/n\} \subset \Phi_i$. Hence, using \Cref{eq:layer_measure}, we obtain
    $$
    \E{|V\cap \mathcal{S}|} = n\cdot\mu(\mathcal{S}) = \Theta(1)\cdot n^{\alpha\cdot \eps}\cdot e^{-\alpha\hat{\ell}} \in \Theta(1).
    $$
    Thus, since $|V\cap \mathcal{S}|$ follows a Poisson-distribution, we get
    \begin{align}\label{eq:Prob-event1}
        \Prob{\mathcal{E}} >0, 
    \qquad \mathcal{E} := \{|V \cap \mathcal{S}|=1\}. 
    \end{align}
    Next, condition on event $\mathcal{E}$ and let let $u$ denote the unique vertex in $V\cap\mathcal S$. Then it follows
    \begin{align}\label{eq:Prob-event2}
        \Prob{\mathcal{E}_1 | \mathcal{E}} \in 1 - n^{-\omega(1)}, 
    \qquad \mathcal{E}_1 := \{\deg(u) \in \Theta(n^{\eps / 2})\},
    \end{align}
by \Cref{eq:layer_expected_degree} and a Chernoff-bound since $u \in V_{\hat{\ell}}$ where $\hat{\ell} = \lceil \eps \cdot \log n \rceil$. 

\paragraph{\Cref{item:property2}:} Let $|L_0(u)| := |\{v \in N_0(u) : \deg(v) = 1 \}|$ as in \Cref{lem:leaves}. Then, using \Cref{lem:leaves} it follows
   \begin{align}\label{eq:Prob-event3}
        \Prob{\mathcal{E}_2 | \mathcal{E}} >0, 
    \qquad \mathcal{E}_2 := \{|L_0(u)| \in \Theta(n^{\eps / 2})\},
    \end{align}
since $u$ is in layer $\hat{\ell} = \lceil\eps \log n\rceil$ and thus, $r(u) \leq R - \lceil\eps \log n \rceil$.

 \paragraph{\Cref{item:property3}:} Applying \Cref{lem:similar-degree-path}, it follows that 
   \begin{align}\label{eq:Prob-event4}
        \Prob{\mathcal{E}_3 | \mathcal{E}} \in 1 - n^{-\omega(1)}, 
    \qquad \mathcal{E}_3 := \{|W(u)| \geq 2t\},
    \end{align}
    and we remark that for event $\mathcal{E}_3$, the vertices of $W(u)$ are within an area $\mathcal{S}' := \{x \in \disk : R- \hat{\ell} - (1 + 1/1000) \leq r(x) \leq R - \hat{\ell} \text{ and } \varphi(u) \leq \varphi(x) \leq \varphi(u) + 6t\cdot (1 - 1/1000)e^{-R/2 +\hat{\ell} + 1 + 1/1000}\}$ by the 'in particular' statement in \Cref{lem:similar-degree-path}. Thus, $\mathcal{S} \subset \Phi_i$, since $\Phi_i$ has an angle of $\Theta(n^{1/(2\alpha) + \eps}/n)$ while $\mathcal{S}'$ spans an angle $\Theta(n^{\eps}/n) \in o(n^{1/(2\alpha) + \eps}/n)$, using that $R = 2\log n +C$, $t \in \bigO(1)$ and $\hat{\ell} = \lceil \eps \log n \rceil$. Our desired 'moreover' statement then follows since for any pair of sectors $\Phi_i$ and $\Phi_j$ where $\mathcal{E} \cap \mathcal{E}_1 \cap \mathcal{E}_2 \cap \mathcal{E}_3$ occur, we consider $u \in V \cap \Phi_i$ and $u' \in V \cap \Phi_j$ to be the vertices occurring due to event $\mathcal{E}$ and it holds for any pair $v \in L(u) \cup W(u) \cup \{u\}$ and $v' \in L(u') \cup W(u') \cup \{u'\}$ that $\{v,v'\} \not\in E(G)$. This is since every object associated with $u$ that is contained in $\Phi_i$, and distinct sector $\Phi_j$ are separated by an angular distance exceeding the maximal adjacency angle of \Cref{lem:max-angle} for any pair of vertices $v \in L(u) \cup W(u) \cup \{u\}$ and $v' \in L(u') \cup W(u') \cup \{u'\}$; there are no edges between distinct structures belonging to different sectors. Subsequently, if $X \in \Theta(k)$ \aas this finishes the proof.

    To obtain this result, note first that $$\Prob{X_i = 1} = \Prob{\mathcal{E} \cap \mathcal{E}_1 \cap \mathcal{E}_2 \cap \mathcal{E}_3}.$$
    
    Then, using \Cref{eq:Prob-event1}, \Cref{eq:Prob-event2}, \Cref{eq:Prob-event3} and \Cref{eq:Prob-event4} in conjunction with a union bound and conditional probabilities, we have that 
$$ \Prob{\mathcal{E} \cap \mathcal{E}_1 \cap \mathcal{E}_2 \cap \mathcal{E}_3} \geq \left(1 - \Prob{\mathcal{E}_1^C | \mathcal{E}} - \Prob{\mathcal{E}_2^C | \mathcal{E}} -\Prob{\mathcal{E}_3^C | \mathcal{E}}\right)\Prob{\mathcal{E}} > 0,$$
and we conclude that $\E{X} \in \Theta(k) \in n^{\Omega(1)}$.  

We wrap up the proof as follows: we show for any $i \neq j$ that $X_i$ and $X_j$ are independent so that a Chernoff bound yields $X \in \Omega(k)$. To this end, we show that events $\mathcal{E}, \mathcal{E}_1, \mathcal{E}_2,$ and $\mathcal{E}_3$ in a sector $\Phi_i$ only depends on the randomness of the Poisson point process in $\Phi_i$.

\paragraph{Event $\mathcal{E}$:} For event $\mathcal{E}$ this is immediate since $\mathcal{S} \subset \Phi_i$.

\paragraph{Event $\mathcal{E}_1$:} For event $\mathcal{E}_1$, note that by \Cref{eq:empty-inner-disk} there is no vertex in $\B_0(r^*)$ \aas where $r^* = R - \log n/\alpha - \log\log n /2$. Thus, since $u \in V_{\hat{\ell}}$ and $\hat{\ell} = \lceil \eps \log n \rceil$, the sector that we need to reveal to obtain the degree of $u$ has angle at most 
$$n^{\alpha \cdot \eps }/n + 2\theta_R(r^*, R - \hat{\ell} -2) \in \bigO\left(e^{(\hat{\ell} - r^* )/2}\right) \in o(\phi) \text{ \aas,}$$
using $\phi \in \Theta(n^{1/(2\alpha) + \eps}/n)$ and \Cref{lem:max-angle} in conjunction with \Cref{rmk:theta-monotonicity}. That is, a smaller angle than the angle spanned by $\Phi_i$.

\paragraph{Event $\mathcal{E}_2$:} By a similar argument we get for $\mathcal{E}_2$, using that a leaf $v \in L_0(u)$ has radius $r(v) \geq R - 1$ and thus $\angulardist{u}{v} \leq \theta_R(R- \hat{\ell} - 2, r(v)) \in \bigO(n^{-\eps/2})$, \aas that we do not require to reveal a sector with an angle that is larger than $\bigO(n^{-\eps/2}) + \theta_R(R-1, r^*) \in o(\phi)$ using \Cref{lem:max-angle}.

\paragraph{Event $\mathcal{E}_3$:} For event $\mathcal{E}_3$, recall that we remarked that the vertices of $W(u)$ are within an area $$\mathcal{S}' := \{x \in \disk : R- \hat{\ell} - (1 + 1/1000) \leq r(x) \leq R - \hat{\ell} \text{ and } \varphi(u) \leq \varphi(x) \leq \varphi(u) + 6t\cdot (1 - 1/1000)e^{-R/2 +\hat{\ell} + 1 + 1/1000}\}.$$ Thus, \aas, we need to reveal a sector of angle at most $ \bigO\left(e^{-R/2 +\hat{\ell}}\right) + \theta_R(r^*, R - \hat{\ell} -2)$. Using \Cref{lem:max-angle}, $\hat{\ell} = \lceil \eps \log n \rceil$ and $r^* = R - \log n/\alpha - \log\log n /2$, this angle is again $o(\phi)$ and thus, also completely contained within sector $\Phi_i$. 

\smallskip

Finally, using that $\E{X} \in \Theta(k) \in n^{\Omega(1)}$, it follows by a Chernoff bound
$$
\Prob{\B_0(r^*) \cap V = \emptyset} \Prob{X \in \Omega(k) | \B_0(r^*) \cap V = \emptyset} \in (1-o(1)) \cdot (1-e^{-\Omega(k)}).
$$
That is, $|U(\eps)| \in \Omega(k) \in \Omega\left(n^{1 - 1/2\alpha -\eps}\right)$ \aas as desired.
\end{proof}
We now use \Cref{lem:luby-obstruction} to show that for any constant $t \in \bigO(1)$ number of iterations of Luby's algorithm, there remains a vertex $u$ with polynomial degree \aas To formalise this, we write $V_{(t)}$ for the set of vertices that are not removed by Luby after iteration $t$, let $G_t = [V_t]$ and $\Delta(G_t) := \max_{v \in V_t}\deg_t(v)$ where for $v \in V_t$ we used $\deg_t(v):= |\{u \in N(v) : u \in V_t\}|$.
\begin{proposition}\label{luby-aint-shatter-my-hrg}  Let $G \sim \hrg$ be a threshold hyperbolic random graph and let $t \in \bigO(1)$ be any fixed constant. Then $\Delta(G_t) \in n^{\Omega(1)}$ \aas   
\end{proposition}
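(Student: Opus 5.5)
We will use the gadgets of \Cref{lem:luby-obstruction}. Fix a constant $\eps$ with $0<\eps<1-1/(2\alpha)$ and apply the lemma with path length $2t$ (taking $t$ to be the number of Luby iterations). This gives, \aas, a set $U(\eps)$ of $n^{\Omega(1)}$ vertices $u$ with three properties: $\deg(u)\in\Theta(n^{\eps/2})$; $u$ has $\Theta(n^{\eps/2})$ leaf neighbours $L(u)$; and $u$ has a similar degree path $W(u)=w_1,\dots,w_{2t}$ whose vertices all have degree $\Theta(n^{\eps/2})$. Moreover, the gadgets $\{u\}\cup L(u)\cup W(u)$ of distinct $u$ are pairwise non-adjacent. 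The goal is to show that, \aas over the algorithm's randomness, some $u\in U(\eps)$ is still undecided after $t$ iterations and keeps $\Omega(n^{\eps/2})$ of its leaves undecided. Then $\Delta(G_t)\ge n^{\eps/2}/\Theta(1)\in n^{\Omega(1)}$.

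\textbf{Per-gadget event.} Fix $u\in U(\eps)$. For each iteration $s\le t$ we want three things to hold. First, no vertex of $\{u,w_1,\dots,w_{2t}\}$ joins the independent set. Second, no leaf of $u$ joins. Third, the chain of undecided vertices along $u,w_1,w_2,\dots$ stays protected. The natural sufficient event is that in every iteration $s$ the vertex $w_{2t-s+1}$ (or some suitable vertex near the end of the path) is a local maximum. This removes only the far end of the path while $u$ and the vertices closer to it survive. A cleaner alternative would be to require that in each iteration some non-gadget neighbour of $w_1,\dots,w_{2t}$ beats them, but that is hard to control. So instead I will lower bound the probability of the following event. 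In iteration $s$, the leaf-neighbours of $u$ and $u$ itself draw values below those of specified neighbours. Concretely, $u$ loses to a fixed neighbour $w_1$ (probability $\Omega(1)$). Each leaf of $u$ loses to $u$, which happens automatically once $u$'s value is high relative to them. By concentration, a $1-o(1)$ fraction of the leaves have lower values than $u$ whenever $u$'s value is in the top $o(1)$ quantile. Neither $u$ nor any $w_i$ wins, since each has $\Theta(n^{\eps/2})$ neighbours, so it is a local maximum with probability only $O(n^{-\eps/2})$. Each leaf that is not a local maximum survives. The delicate point is that leaves lying below $u$ survive only while $u$ neither joins nor is removed, and $u$ is removed only if one of its neighbours wins. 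Since every neighbour of $u$ has degree large compared with $1$ (except the leaves themselves), I would bound the probability that some non-leaf neighbour of $u$ wins. The role of the similar degree path is that, across the $t$ iterations, $u$ always retains a neighbour whose degree is comparable to its own. Hence $u$ is never isolated, and it never wins with high probability relative to its leaves.

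\textbf{Probability and amplification.} I expect the per-gadget success probability to be at least some $p=n^{-o(1)}$, or at least $n^{-c\eps}$ for a small $c$, after a per-iteration analysis. By the non-adjacency in \Cref{lem:luby-obstruction}, together with the fact that the relevant $t$-hop neighbourhoods lie in disjoint sectors (by the same angular-confinement argument as in that lemma), the gadget events are independent. With $|U(\eps)|=n^{\Omega(1)}$ gadgets and $p\cdot|U(\eps)|\to\infty$, some gadget succeeds with probability $1-o(1)$. Choosing $\eps$ small makes $p|U(\eps)|$ grow polynomially.

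\textbf{Main obstacle.} The hardest step is the per-gadget bound: showing that $u$ survives all $t$ iterations with only a non-polynomially small loss, while keeping most of its leaves undecided. The difficulty is that $u$ is removed as soon as any of its $\Theta(n^{\eps/2})$ neighbours wins, and each leaf neighbour wins with constant probability whenever its value exceeds $u$'s. I would first handle this by conditioning on $u$ drawing the maximum value among its closed neighbourhood except for $w_1$. That event has probability $\Theta(n^{-\eps/2})$, which is polynomially small but beaten by $|U(\eps)|\in\Omega(n^{1-1/(2\alpha)-\eps})$ for small $\eps$. Then $u$ does not win because of $w_1$, no leaf wins, and it remains to ensure that $w_1$ also does not win, by requiring that $w_2$ beats $w_1$, and so on along the path. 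The path of length $2t$ supplies enough such blockers for $t$ rounds, giving success probability $n^{-O(t\eps)}$. That is still $n^{-o(1)}$-amplifiable by taking $\eps$ small relative to $t$.
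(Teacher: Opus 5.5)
Your final argument, the one in your last paragraph, is essentially the paper's. The earlier $1-o(1)$-fraction-of-leaves heuristic should be dropped: a single leaf drawing above $u$ joins the independent set and dominates $u$. What remains matches the paper: the gadgets of \Cref{lem:luby-obstruction}; in every round $u$ beats all its neighbours except $w_1$, while values increase along the remaining path so only its far end can join (shrinking it by two per round); a cost of $n^{-O(t\eps)}$ per round, hence $n^{-O(t^2\eps)}$ overall, absorbed by choosing $\eps=\Theta(1/t^2)$ as the paper does; and amplification over independent gadgets, which the paper does round by round via $K_{(i)}$ and you do with one joint event, an equivalent organisation.

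Two details need tightening. First, the non-path neighbours of each $w_j$ must also be forced below the chain; otherwise one of them can join and delete $w_1$, leaving $u$ unblocked in the next round. The paper places $u$'s value in a window of width $n^{-\eps}$ just below $1$, so forcing all $O(n^{\eps/2})$ outside neighbours of the gadget below it costs only a constant factor. Second, independence needs only that the closed $1$-hop neighbourhoods of distinct gadgets are disjoint, since your events depend only on those draws, and angular confinement gives this. Your claim about $t$-hop neighbourhoods is unnecessary and need not hold, because such neighbourhoods can pass through high-degree vertices near the centre that are adjacent across angles far exceeding the sector width.
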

\begin{proof}
With hindsight, we set $\eps := \frac{1 - \frac{1}{2\alpha}}{2t(t+2)}$ and consider the set of vertices $U(\eps)$ of \Cref{lem:luby-obstruction}. We show that after $t$ rounds there exists \aas a vertex $u \in U(\eps)$ such that $\deg_t(u) \in n^{\Omega(1)}$.

To this end, for $i \in [t]$, let $L_{(i)}(u) := \{v \in N(u) \cap V_{(i)} : \deg(v) = 1\}$ (the ``remaining leaves'' of $u$ after round $i$), $W_{(i)}(u):=
\left(\bigcup_{j=1}^{2(t-i)}\{w_j\}\right)\cap V_{(i)}$\footnote{Recall that $W(u) := \{w_1, w_2, \dots w_k\}$ is the similar degree path of $u$ such that for $j \in [k]$ it holds $\{w_{j}, w_{j+1}\} \in E$.} (the ``remaining'' similar degree path of $u$ after round $i$) and we define the random variable
\begin{align}
    K_{(i)} := |\{u \in U(\eps) \cap V_{(i)} : |L_{(i)}(u)| \in n^{\Omega(1)} \text{ and } |W_{(i)}(u)| \geq 2(t-i)\}|.  
\end{align}
Note that if $K_{(t)} \geq 1$ \aas this is sufficient to prove our desired statement and that $K_{(0)} \in \Omega\left(n^{1 - 1/2\alpha - \frac{1 - \frac{1}{2\alpha}}{2t(t+2)}}\right)$ \aas due to \Cref{lem:luby-obstruction} and our choice of $\eps$. In order to prove $K_{(t)} \geq 1$ \aas, we will use the following claim, which tells us that if $K_{(i)}$ is polynomial in $n$, then so is $K_{(i+1)}$ \aas given that $i < t$.
\begin{claim}\label{claim:appendix}
Let $\eps := \frac{1 - \frac{1}{2\alpha}}{2t(t+2)}$. Then, for $i \in [t]$ it holds
$$
\Pro{K_{(i+1)} \in \Omega\left(n^{1 - 1/2\alpha - \eps -(i+1)\eps(2t+1)} \right) | K_{(i)} \in  \Omega\left(n^{1 - 1/2\alpha - \eps -i\eps(2t+1)} \right)} \in 1 - o(1).
$$
\end{claim}
{
\renewcommand{\qedsymbol}{$\blacksquare$} 
\begin{proof}[Proof of claim] 
Let $U_{(i)} = \{u \in U(\eps) \cap V_{(i)} : |L_{(i)}(u)| \in n^{\Omega(1)} \text{ and } |W_{(i)}(u)| \geq 2(t-1)\}$, i.e., $|U_{(i)}| =  K_{(i)}$. We analyse one round of Luby on $U_{(i)}$. To this end, we fix a vertex $u \in U_{(i)}$ and consider the following events:
\begin{enumerate}
    \item[($\mathcal{E}_u$)] Vertex $u$ draws a random number for Luby that lies in the open interval $(1- \frac{2t + 1}{n^\eps}, 1- \frac{2t}{n^\eps})$.
    
    \item[($\mathcal{E}_{\text{path}}$)] For $j \in [2(t-i)+1 ]\setminus\{0\}$, vertex $w_j \in W(u)$\footnote{Where $w_j \in W(u)$ is the $j$-th vertex of the similar degree path $W(u)$ such that $\{u, w_1\} \in E$ and  $\{w_j, w_{j+1}\} \in E$.} draws a random number for Luby that lies in the open interval $(1- \frac{2t + 1 -i}{n^\eps}, 1- \frac{2t- i}{n^\eps})$. 
    
\item[($\mathcal{E}_{\text{leaves}}$)] Any vertex $v \in N((W(u) \cup \{u\}) \setminus (W(u)) \cup \{u\})$  draws a random number for Luby that lies in the open interval $(0, 1- \frac{2t + 1}{n^\eps})$.   
\end{enumerate}
Observe that if event $\mathcal{E}_u \cap \mathcal{E}_{\text{path}} \cap \mathcal{E}_{\text{leaves}} =: \mathcal{E}$ occurs, then $u \in U_{(i+1)}$. To see this, note that all leaves of $u$ draw a smaller number than $u$ and $u$ is not removed as it is the largest number among neighbours, except for $w_1 \in N(u)$, which does not join the independent set since $w_2 \in N(w_1)$ draws a larger number than $w_1$. This ``chain'' of events continues for the entire similar degree path $W_{(i)}(u)$ by the intersection of the two events $\mathcal{E}_{\text{path}}$ and $\mathcal{E}_{\text{leaves}}$. Hence, only the "terminal" vertex $w_{2(t-i)}$ of $W_{(i)}(u)$ that might join the independent set, removing all its neighbours including the neighbour $w_{2(t-i-1)}$ but no other vertex of the similar degree path $W_{(i)}(u)$. Since $u$ has polynomially many leaves, all necessary conditions for $u \in U_{(i+1)}$ are fulfilled. We continue by lower bounding the probability of event $\mathcal{E}$.

\paragraph{Event $\mathcal{E}_u$:} Since the range of the open interval is $n^{-\eps}$ we have
\begin{align}\label{eq:randomdraw1}
    \Pro{\mathcal{E}_u} = n^{-\eps}.
\end{align}
\paragraph{Event $\mathcal{E}_{\text{path}}$:} Similar the event $\mathcal{E}_u$, the range of the interval of each random number is $n^{-\eps}$. Since each draw is independent and we draw at most $2t$ random numbers, we obtain
\begin{align}\label{eq:randomdraw2}
    \Pro{\mathcal{E}_{\text{path}}} \geq n^{-2t\eps}.
\end{align}
\paragraph{Event $\mathcal{E}_{\text{leaves}}$:} Since for any $w \in W(u) \cup \{u\}$ we have $\deg(w) \in \bigO(n^{\eps/2})$, we can upper bound $m := |N((W(u) \cup \{u\}) \setminus (W(u)) \cup \{u\})| \in \bigO(n^{\eps/2})$ using that $|W(u)|$ is constant since $t \in \bigO(1)$. Applying this with the fact that the desired event for every vertex $v$ of drawing a random number within the interval $(0, 1- \frac{2t + 1}{n^\eps})$ has probability $1- \frac{2t + 1}{n^\eps}$, we obtain via independence among the events for each $v$
\begin{align}\label{eq:randomdraw3}
    \Pro{\mathcal{E}_{\text{leaves}}} \geq \left(1- \frac{2t + 1}{n^\eps}\right)^{m} \in \Omega(1),
\end{align}
where we used that $t$ is constant and $m \in \bigO(n^{\eps/2})$.

Putting \Cref{eq:randomdraw1}, \Cref{eq:randomdraw2} and \Cref{eq:randomdraw3} together and using independence we then obtain
\begin{align}\label{eq:final-prob}
  \Pro{\mathcal{E}} = \Pro{\mathcal{E}_u \cap \mathcal{E}_{\text{path}} \cap \mathcal{E}_{\text{leaves}}} \in \Omega(n^{-(2t +1)\eps}).  
\end{align}
To finish the proof of our claim, let $X_u$ be the indicator random variable that event $\mathcal{E}$ occurs for vertex $u \in U_{(i)}$ and define the random variable $X := \sum_{u\in U_{(i)}} X_u \geq K(i+1)$. We then get by linearity of expectation $\Exp{X} \geq \Pro{\mathcal{E}}\cdot |U_{(i)}| \in \Omega\left(n^{1 - 1/2\alpha - \eps -(i+1)\eps(2t+1)} \right)$ by the assumption that $K_{(i)} \in  \Omega\left(n^{1 - 1/2\alpha - \eps -i\eps(2t+1)} \right)$ and \Cref{eq:final-prob}. Since the random variables $X_u$ are independent by the 'moreover' statement of \Cref{lem:luby-obstruction}, a Chernoff bound gives the desired result that $K_{(i+1)} \in \Omega\left(n^{1 - 1/2\alpha - \eps -(i+1)\eps(2t+1)} \right)$ \aas
\end{proof}}
Now, using that $K_{(0)} \in \Omega\left(n^{1 - 1/2\alpha - \frac{1 - \frac{1}{2\alpha}}{2t(t+2)}}\right)$ \aas by \Cref{lem:luby-obstruction}, and applying Claim~\ref{claim:appendix}iteratively for $t \in \bigO(1)$ rounds, a union bound over the $t$ induction steps yields
\begin{align*}
\Pro{K_{(t)} \in \Omega\left(n^{1 - 1/2\alpha - \eps -(t+1)\eps(2t+1)} \right)} &\geq 1 - \Pro{K_{(0)} \not \in \Omega\left(n^{1 - 1/2\alpha - \frac{1 - \frac{1}{2\alpha}}{2t(t+2)}}\right)}\\ 
&- \Pro{\bigcup_{i=0}^{t-1} K_{(i+1)} \not \in \Omega\left(n^{1 - 1/2\alpha - \eps -(i+1)\eps(2t+1)} \right) | K_{(i)}\in  \Omega\left(n^{1 - 1/2\alpha - \eps -i\eps(2t+1)} \right)}\\ &\in 1- o(1).
\end{align*}
This finishes the proof since by our choice $\eps := \frac{1 - \frac{1}{2\alpha}}{2t(t+2)} > 0$, $K_{(t)} \in n^{\Omega(1)}$ \aas
\end{proof}
\section{Concentration Bounds}
\begin{lemma}[Chernoff bound]\label{lem:Chernoff}
For $i \in [k]$, let $X_i \in \{0,1\}$ be independent random variables and $X = \sum_i X_i$. Then
\begin{align*}
    \Pro{X\geq \frac{3}{2}\Exp{X}} \leq \exp{\left(-\frac{\Exp{X}}{12}\right)} \text{ and }
    \Pro{X\leq \frac{1}{2}\Exp{X}} \leq \exp{\left(- \frac{\Exp{X}}{8}\right)}.
\end{align*}
\end{lemma}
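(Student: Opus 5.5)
This is the standard multiplicative Chernoff bound for a sum of independent Bernoulli variables, and I would prove it by the exponential moment (Cramér--Chernoff) method. Write $\mu := \Exp{X}$, $p_i := \Pro{X_i = 1}$, so that $\mu = \sum_i p_i$.

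\textbf{Upper tail.} For any $\lambda > 0$, Markov's inequality applied to $e^{\lambda X}$ gives
\[
\Pro{X \geq (1+\delta)\mu} \leq e^{-\lambda(1+\delta)\mu}\,\Exp{e^{\lambda X}}.
\]
By independence, $\Exp{e^{\lambda X}} = \prod_i (1 + p_i(e^{\lambda}-1))$. Using $1+x \leq e^{x}$ this is at most $\exp(\mu(e^\lambda - 1))$. Choosing $\lambda = \ln(1+\delta)$ yields the classical bound
\[
\Pro{X \geq (1+\delta)\mu} \leq \left(\frac{e^{\delta}}{(1+\delta)^{1+\delta}}\right)^{\mu}.
\]
I then need the elementary inequality $\delta - (1+\delta)\ln(1+\delta) \leq -\delta^2/3$ for $\delta \in (0,1]$. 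I would verify it by observing that both sides vanish at $\delta = 0$ and comparing derivatives: the derivative of the left side is $-\ln(1+\delta)$, that of the right side is $-2\delta/3$, and $\ln(1+\delta) \geq 2\delta/3$ on $[0,1]$ by concavity of $\ln(1+\delta)$ (it holds at both endpoints). Setting $\delta = 1/2$ gives exponent $-\mu/12$, which is the first claim.

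\textbf{Lower tail.} Symmetrically, with $\lambda > 0$ apply Markov to $e^{-\lambda X}$:
\[
\Pro{X \leq (1-\delta)\mu} \leq e^{\lambda(1-\delta)\mu}\,\Exp{e^{-\lambda X}} \leq \exp\bigl(\mu(e^{-\lambda}-1) + \lambda(1-\delta)\mu\bigr).
\]
Choosing $\lambda = -\ln(1-\delta)$ gives $\left(e^{-\delta}/(1-\delta)^{1-\delta}\right)^{\mu}$. The inequality $-\delta - (1-\delta)\ln(1-\delta) \leq -\delta^2/2$ follows from the Taylor expansion $(1-\delta)\ln(1-\delta) \geq -\delta + \delta^2/2$, which holds since all higher-order terms are nonnegative. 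Taking $\delta = 1/2$ yields $\exp(-\mu/8)$.

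\textbf{Main obstacle.} There is no real obstacle here. The only care needed is checking the two scalar calculus inequalities on the relevant range of $\delta$; I would simply evaluate them at $\delta = 1/2$ directly if the general form becomes fiddly. Alternatively, one can cite any standard reference for the multiplicative Chernoff bounds.
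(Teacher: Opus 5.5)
Your proof is correct: the exponential-moment bounds with $\lambda=\ln(1+\delta)$ and $\lambda=-\ln(1-\delta)$, together with the two scalar inequalities evaluated at $\delta=1/2$, give exactly $\exp(-\mu/12)$ and $\exp(-\mu/8)$. The paper does not prove this lemma but states it as the standard multiplicative Chernoff bound, and your Cram\'er--Chernoff derivation is the textbook argument it implicitly relies on.
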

The following is a Chernoff\footnote{For convenience, we refer to both as a Chernoff bound whenever using either of the two.} type deviation bound (see e.g. \cite[Lemma 6]{Kiwi2024}) which is necessary for the distribution of vertices as it follows a Poisson point distribution.
\begin{lemma}[Poisson Chernoff bound]\label{lem:Poisson-Chernoff}
    Let $X$ have a Poisson distribution with mean $\E{X}$. Then for $a \geq e^{3/2}$,
    \begin{align*}
        \Prob{X\leq \frac{1}{2}\E{X}} \leq \exp{(- \E{X}/8)} \text{ and }
    \Prob{X\geq a\cdot\E{X}} \leq \exp{(-a\cdot \E{X} /2)}.
    \end{align*}
\end{lemma}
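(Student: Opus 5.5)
We prove both tail bounds directly with the exponential-moment (Chernoff) method. The only input needed is the moment generating function of a Poisson variable, so nothing from the hyperbolic setting is used. Write $\lambda := \E{X}$. For every real $t$ we have
\begin{align*}
\E{e^{tX}} = \sum_{k\ge 0} e^{tk}\frac{\lambda^k e^{-\lambda}}{k!} = \exp\left(\lambda(e^{t}-1)\right).
\end{align*}
In each tail we apply Markov's inequality to $e^{tX}$ with a suitable sign of $t$, plug in a concrete value of $t$, and check that the resulting exponent is at least the claimed one.

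\textbf{Upper tail.} For $t>0$, Markov's inequality gives
\begin{align*}
\Prob{X\ge a\lambda} \le e^{-ta\lambda}\,\E{e^{tX}} = \exp\left(\lambda(e^t-1) - ta\lambda\right).
\end{align*}
We choose the optimising value $t=\log a$, which is positive since $a\ge e^{3/2}>1$. This yields the bound $\exp\left(-\lambda(a\log a - a + 1)\right)$. It remains to verify the elementary inequality $a\log a - a + 1 \ge a/2$, that is, $a(\log a - 3/2) + 1 \ge 0$. This holds because $\log a \ge 3/2$ by the hypothesis $a \ge e^{3/2}$; the hypothesis is exactly what makes this step go through. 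We conclude $\Prob{X\ge a\lambda}\le \exp(-a\lambda/2)$.

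\textbf{Lower tail.} For $s>0$, apply Markov to $e^{-sX}$:
\begin{align*}
\Prob{X\le \lambda/2} \le e^{s\lambda/2}\,\E{e^{-sX}} = \exp\left(\lambda(e^{-s}-1) + s\lambda/2\right).
\end{align*}
Taking the optimising value $s=\log 2$ gives the exponent $\lambda\left(-\tfrac12 + \tfrac{\log 2}{2}\right) = -\lambda\,\tfrac{1-\log 2}{2}$. Numerically $\tfrac{1-\log 2}{2}\approx 0.153 \ge \tfrac18$, so the bound is at most $\exp(-\lambda/8)$.

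There is no genuine obstacle here. The only care needed is the numerical check of the two constants ($\tfrac{1-\log 2}{2}\ge \tfrac18$, and $\log a\ge 3/2$ ensuring $a\log a-a+1\ge a/2$). An alternative route is to view $X$ as a limit of binomials and invoke \Cref{lem:Chernoff}, but the direct moment-generating-function computation is cleaner, so we use that.
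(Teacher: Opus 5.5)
Your argument is correct. The paper gives no proof of this lemma; it imports it from \cite[Lemma 6]{Kiwi2024}. Your moment-generating-function derivation is therefore a self-contained replacement for that citation rather than a reconstruction of an argument in the text.

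Both numerical checks hold, taking $\log$ to be the natural logarithm as the hypothesis $a\ge e^{3/2}$ presupposes:
\begin{itemize}
\item $a\log a-a+1=a(\log a-1)+1\ge a/2+1$ once $\log a\ge 3/2$;
\item $(1-\log 2)/2\approx 0.153>1/8$.
\end{itemize}
The degenerate case $\lambda=0$ is trivial.

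The constants in the statement are exactly those produced by the textbook Poisson tails $\Prob{X\ge x}\le e^{-\lambda}(e\lambda/x)^x$ with $x=a\lambda$, and $\Prob{X\le(1-\varepsilon)\lambda}\le e^{-\varepsilon^2\lambda/2}$ with $\varepsilon=1/2$. Both come from the same exponential-moment method, so your proof is presumably the argument behind the cited lemma. Optimising at $s=\log 2$ directly, as you do, even gives the slightly stronger lower-tail exponent $(1-\log 2)\lambda/2$ in place of $\lambda/8$.

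One minor overstatement: $a\ge e^{3/2}$ is sufficient but not necessary for $a\log a-a+1\ge a/2$, which already holds for $a\ge 4$. So the hypothesis is not ``exactly'' what makes that step work, though this does not affect the proof.
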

\section{Round Elimination Lower Bounds with General Error Probabilities}\label{sec:sevenOne}
We extend Theorem~7.1 from \cite{BBKO22} that can provide lower bounds for randomised algorithms in the \LOCAL model for problems that meet the prerequisites of the theorem. We emphasise that we do not claim any major contribution here; we only change their theorem to work with a general error probability instead of a hard-coded error probability of $1/n$. Thus, we refer to \cite{BBKO22} for more background and formal definitions, and here we focus on the differences in the proof. In their work, graph problems are modelled using the black-white formalism, where a locally checkable problem $\Pi = (\Sigma, \mathcal{N}, \mathcal{E})$ is defined by a label space $\Sigma$, node (white) constraints $\mathcal{N}$, and hyperedge (black) constraints $\mathcal{E}$~\cite{BBKO22}.
The goal in some computational models, like the \LOCAL model, is to compute a labelling that satisfies all constraints. MIS and MM can both easily be modelled in this formalism~\Cite{BBKO22,BO20}. 
 Within this framework, \textit{Round Elimination (RE)} acts as a mechanical operator, denoted by $\overline{\mathcal{R}}(\mathcal{R}(\Pi))$, which transforms a given problem into a new variant that is exactly one round easier to solve~\cite{BBKO22}. In a nutshell, if one starts with a problem $\Pi_0$ and applies this operator $t$ times and obtains a problem $\Pi_t$ that cannot be solved in $0$ rounds, one obtains a $t$ round lower bound for $\Pi_0$. However, as analysed in \cite{BBKO22}, the allowed error probabilities also increase with each step of applying the operator and also the label space required to describe the problems increases doubly exponentially with each naïve application of the operator. Thus, to prevent the state space from expanding uncontrollably with each iteration, a \textit{relaxation} is applied. A relaxation is a mapping to a structurally simpler problem $\Pi'$ where any valid solution for $\Pi$ remains a valid solution for $\Pi'$~\cite{BBKO22}. The \textit{label complexity function} $f(\Delta)$ serves as a strict upper bound on the size of the label alphabet ($|\Sigma| \le f(\Delta)$) across all intermediate problems and their relaxations throughout a sequence~\cite{BBKO22}. The main challenge is finding these relaxations such that the label complexity remains small and such that the sequence is long. For MIS and MM, one can limit the label complexity to $2^{\Delta+1}$ and for MM to $5$ while providing sequences of length $\Theta(\Delta)$~\cite{BBKO22,BO20} (see \Cref{lem:treeMISMMLowerbound}). The following theorem implies that such \emph{round elimination lower bound sequences} provide lower bounds for solving the respective problem in the  randomised \LOCAL model with a certain error probability $p$.
\begin{theorem}[Extension of Theorem~7.1~in~\cite{BBKO22}]
\label{thm:sevenOne}
Let $\Pi_0 \rightarrow \Pi_1 \rightarrow \cdots \rightarrow \Pi_t$ be a sequence of problems.
Assume that, for all $0 \le i < t$, and for some function $f$, the following holds:
\begin{itemize}
    \item There exists a problem $\Pi_i'$ that is a relaxation of $\mathcal{R}(\Pi_i)$;

    \item $\Pi_{i+1}$ is a relaxation of $\overline{\mathcal{R}}(\Pi_i')$;

    \item The number of labels of $\Pi_i$, and the ones of $\Pi_i'$, are upper bounded by $f(\Delta)$.
\end{itemize}
Also, assume that $\Pi_t$ has at most $f(\Delta)$ labels and is not $0$-round solvable in the deterministic port numbering model, even if the port numbering assignment satisfies some local constraints $C^{\mathrm{port}}$.

Then, $\Pi_0$ requires
\[
\Omega\left(\min\left\{t,\log_{\Delta}n, \log_{\Delta}\log 1/p-\log_{\Delta}\log f(\Delta)\right\}\right)
\]
rounds in the randomised \LOCAL model with local error probability $p$, even if the port numbering satisfies some local constraints $C^{\mathrm{port}}$.
\end{theorem}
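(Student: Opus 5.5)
The plan follows the proof of Theorem~7.1 in \cite{BBKO22} essentially verbatim, tracking the error probability as a free parameter $p$ instead of the hard-coded $1/n$. Suppose $\Pi_0$ has a $T$-round randomised \LOCAL algorithm with local error probability $p$, and suppose $T$ is smaller than each of the three terms in the minimum, scaled by suitable constants. We then apply round elimination $T$ times.

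\textbf{Per-step error propagation.} The core ingredient is the randomised round elimination lemma of \cite{BBKO22}. In it, a $t$-round algorithm for $\Pi_i$ with local failure probability $p_i$ yields a $(t-1)$-round algorithm for $\Pi_{i+1}$ with local failure probability at most
\begin{equation*}
p_{i+1} \le c \cdot f(\Delta)^{c} \cdot p_i^{1/(c\Delta)}
\end{equation*}
for an absolute constant $c$. Relaxations never increase failure probability, since a valid solution of $\mathcal{R}(\Pi_i)$ maps to a valid solution of $\Pi_i'$, and similarly for $\overline{\mathcal{R}}$. This is also where the label bound $f(\Delta)$ enters. The lemma requires the $T$-round neighbourhoods to be trees, meaning there are no short cycles and the views are consistent with the port constraints $C^{\mathrm{port}}$. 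This is guaranteed as long as $T \le \varepsilon \log_\Delta n$, which gives the $\log_\Delta n$ term exactly as in the original proof.

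\textbf{Iterating.} Taking logarithms, the recursion becomes
\begin{equation*}
\log(1/p_{i+1}) \ge \frac{\log(1/p_i)}{c\Delta} - \bigO(\log f(\Delta)).
\end{equation*}
By induction, after $i \le T$ steps we have
\begin{equation*}
\log(1/p_i) \ge (c\Delta)^{-i}\log(1/p) - \bigO(\log f(\Delta)),
\end{equation*}
since the additive losses form a geometric series dominated by the last term. Now assume $T \le \varepsilon\left(\log_\Delta\log(1/p) - \log_\Delta\log f(\Delta)\right)$. For small enough $\varepsilon$ this makes $(c\Delta)^{T} \le \log(1/p)/\log(f(\Delta))^{C'}$, so $p_T \le f(\Delta)^{-C''}$. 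In particular, $p_T$ falls below the threshold $1/f(\Delta)^{\Delta+1}$ used in \cite{BBKO22}.

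\textbf{Contradiction at round zero.} We also need $T \le t$, so that the sequence contains a problem $\Pi_T$. If $T<t$, we simply continue along the sequence, or note that $\Pi_t$ is obtained from $\Pi_T$ by further relaxed round eliminations; in either case the endpoint is a 0-round algorithm. We then use the original pigeonhole argument. Any 0-round randomised algorithm with failure probability below $f(\Delta)^{-(\Delta+1)}$ can be derandomised: with at most $f(\Delta)$ labels, some fixed output assignment must succeed on every locally constrained port configuration. This gives a deterministic 0-round port-numbering algorithm, contradicting the hypothesis on $\Pi_t$.

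The main obstacle is the first step: checking that the error blow-up in \cite{BBKO22}'s elimination lemma really has the form $\poly(f(\Delta))\cdot p^{1/\Theta(\Delta)}$, with no hidden dependence on $n$. In the original proof, $n$ appears because $p=1/n$ is merged with the tree-likeness condition. I would re-derive their bound with $p$ kept symbolic, confirming that $n$ is used only through girth and ID-space arguments, which depend on $\log_\Delta n$. This gives the stated minimum of three terms.
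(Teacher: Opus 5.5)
Your proposal follows the paper's argument: rerun \cite{BBKO22} with $p$ kept symbolic (the paper just plugs a general $p$ into their Lemma~A.4 bound $p_{\mathrm{Error}}\ge f(\Delta)^{-\Delta^{10T}}$, which already contains the per-step error bookkeeping you flag as the main obstacle), and cap the bound at $\log_\Delta n$ because the indistinguishability lift to \LOCAL breaks once nodes can see the whole tree, not because of girth, since trees have no cycles. Two small repairs are needed: first, $p_T\le f(\Delta)^{-C''}$ with constant $C''$ does not imply $p_T<f(\Delta)^{-(\Delta+1)}$, so require $(c\Delta)^{T+1}\lesssim\log(1/p)/\log f(\Delta)$ instead, which costs only an additive $O(1)$ in $T$; second, when $T<t$ you should derandomise already at $\Pi_T$ and then propagate deterministic $0$-round solvability along the relaxed sequence to $\Pi_t$, because continuing the randomised eliminations would exhaust the error budget.
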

\begin{proof}
The proof is almost verbatim along the proof of the almost identical claim in \cite[Theorem 7.1]{BBKO22}. Thus, we focus on the changes to their proof that are limited to the following two aspects: 
\begin{itemize}
    \item \textbf{Change~1:} In their proof, the error probability is hardcoded to $p=1/n$. This is done in \cite[A.5]{BBKO22}. Instead, doing similar calculations with a general parameter $p$ for the error probability leads to the following claim:
    
    \smallskip
    
    \emph{\textbf{Modified A.5:} Let $\Pi_0 \rightarrow \Pi_1 \rightarrow \cdots \rightarrow \Pi_t$ be a sequence of problems satisfying the conditions
of \Cref{thm:sevenOne}. Any randomised algorithm in the port numbering model\footnote{See \cite{BBKO22} for the precise definition of the port numbering model.} running in strictly less than $\min\{t, \frac{1}{10}
(\log_{\Delta}(\log(1/p)) -
\log_{\Delta} \log f(\Delta))\}$ rounds must fail with probability at least $p$.}
\smallskip
\begin{proof}
    Applying \cite[Lemma A.4]{BBKO22}, after $t$ rounds the error probability of any randomised algorithm is 
    $$
    p_{\text{Error}} \geq \frac{1}{{f(\Delta)^{\Delta}}^{10 t}} \geq \frac{1}{{f(\Delta)^{\log (1/ p) / \log f(\Delta)}}} \geq p,
    $$
where in the second step we used our hypothesis $t < \frac{1}{10}
(\log_{\Delta}( \log( 1/p)) - \log_{\Delta} \log f(\Delta))$. So we obtain $p_{\text{Error}} \geq p$ as desired.
\end{proof}    
\item \textbf{Change~2:} In the proof of \cite[A.6]{BBKO22}, the previous claim is lifted to the \LOCAL model using an indistinguishability argument. This argument only works if nodes cannot see the whole graph, requiring to cap the lower bound at $\Omega(\log_{\Delta}n)$, which is the diameter of an $n$-node $\Delta$-regular tree. In their work,  this does not appear in their theorem as the cap is always larger than the bound $\log_{\Delta}( \log( 1/p)) - \log_{\Delta} \log f(\Delta))$ imposed when used with a hardcoded $p=1/n$. Thus, for our theorem, the lower bound cannot exceed $\log_{\Delta}n$, to which we adjust by adding it as a third argument in the min function. 

\smallskip

\emph{\textbf{Modified A.6:} Let $\Pi_0 \rightarrow \Pi_1 \rightarrow \cdots \rightarrow \Pi_t$ be a sequence of problems satisfying the conditions of\Cref{thm:sevenOne}. Any randomised algorithm running in the \LOCAL model, in $\Delta$-regular balanced trees of $n$ nodes, that fails with probability at most $p$, requires $\Omega\left(\min\left\{t,\log_{\Delta}(\log( 1/p)) -\log_{\Delta}\log f(\Delta), \log_{\Delta} n \right\}\right)$ rounds.}
\end{itemize}

\smallskip

Note that this is exactly our desired statement \Cref{thm:sevenOne}.
\end{proof}
\section{Truncating Angular Coordinates}
By the following lemma, $\bigO(\log n)$ bits per angular coordinate suffice so that each angular coordinate is unique \whp
\begin{lemma}\label{lem:truncation}
    Let $G$ be a threshold hyperbolic random graph and for any pair of vertices $u,v \in V(G)$, let $\varphi_b(u)$ and $\varphi_b(v)$ be the first $b$ random bits of the angular coordinates $\varphi(u)$ and $\varphi(v)$ respectively. Then, for any pair $u,v$, any constant $c > 0$ and $b = \lceil c\cdot \log n \rceil$, it holds $\varphi_b(u) \neq \varphi_b(v) $ with probability $1 - \bigO(n^{-c+5/2})$.
\end{lemma}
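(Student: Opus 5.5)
The approach is to reduce uniqueness to a collision bound for pairs of independent uniform angles, and then take a union bound over all pairs.

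\emph{Independence and uniformity of angles.} In the Poissonised model the intensity $\lambda(r,\varphi)=n\rho$ does not depend on $\varphi$. Conditioned on $N$, the points are therefore i.i.d. The angular coordinate of each point is uniform on $[0,2\pi)$ and independent of its radius. Normalising, $\varphi(u)/2\pi$ is a uniform random number in $[0,1)$, and its binary expansion consists of i.i.d.\ fair bits. This is the sense in which the statement speaks of ``random bits''.

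\emph{Single pair.} Fix two distinct points $u,v$. Since their angles are independent and uniform, the first $b$ bits $\varphi_b(u)$ and $\varphi_b(v)$ are independent and uniform over $\{0,1\}^b$. Hence
\[
\Pro{\varphi_b(u)=\varphi_b(v)} = 2^{-b} \le n^{-c\log 2}.
\]
Here $\log$ is the natural logarithm. To be safe with the base, I would take $b=\lceil c\log_2 n\rceil$, which gives $2^{-b}\le n^{-c}$. With the natural logarithm one instead gets $n^{-c\log 2}$, and the constant $c$ is simply absorbed.

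\emph{Union bound over pairs.} The number of vertices $N$ is Poisson with mean $n$. By \Cref{lem:Poisson-Chernoff}, $N\le e^{3/2}n$ except with probability $e^{-\Omega(n)}$. Conditioned on $N$, a union bound over at most $N^2$ pairs gives a collision probability of at most $N^2 2^{-b}$. Splitting on the event $N\le e^{3/2}n$, the total failure probability is at most
\[
\bigO(n^2)\cdot n^{-c} + e^{-\Omega(n)} \;=\; \bigO(n^{-c+2}),
\]
which is already within the claimed $\bigO(n^{-c+5/2})$.

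The step that needs the most care is the conditioning. Identifying ``the pair $u,v$'' requires labelling the points of the Poisson process. I would handle this by conditioning on $N$ and using the i.i.d.\ representation, so that the fixed-pair bound holds uniformly over all labelled pairs before taking the union bound.
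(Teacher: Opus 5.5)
Your argument is correct. Its core is the same as the paper's: a fixed pair collides with probability exactly $2^{-b}$, followed by a union bound over pairs. The difference lies in how the Poisson number of points is handled.

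The paper bounds the collision probability by $\binom{n}{2}2^{-b}$ in the binomial model with exactly $n$ points. It then transfers this to the Poissonised model via a de-Poissonisation lemma from \cite{katz-diss-23}, paying a factor $\bigO(n^{1/2})$; that transfer is the only source of the exponent $-c+5/2$.

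You stay in the Poisson model, condition on $N$ to obtain the i.i.d.\ representation, and control $N$ with \Cref{lem:Poisson-Chernoff}. This is self-contained and gives the sharper bound $\bigO(n^{-c+2})$. It also sidesteps a subtlety in the paper's route. Such transfer lemmas are usually stated in the Poisson-to-binomial direction (via $\Pro{\mathcal{E}\mid N=n}\le \Pro{\mathcal{E}}/\Pro{N=n}$), whereas the paper needs the reverse direction.

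You can even drop the tail bound on $N$. Averaging the conditional union bound over $N$ and using $\Exp{\binom{N}{2}}=n^2/2$ for $N\sim\mathrm{Poisson}(n)$ gives a collision probability of at most $\tfrac{n^2}{2}2^{-b}$ directly.

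Your remark on the base of the logarithm is apt. The paper's step $\binom{n}{2}2^{-b}\in\bigO(n^{-c+2})$ silently needs $b\ge c\log_2 n$, while $\log$ is natural elsewhere (e.g.\ $R=2\log n+C$). This is an imprecision in the statement shared by both proofs, not a gap in yours.
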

\begin{proof}
    Throughout the proof, we work with the binomial model of hyperbolic random graphs, in which the number of vertices is fixed to be $n$. At the end of the proof, the result is transferred to the Poisson model with $\mathbb{E}[N]=n$, incurring only a minor additional error probability (see also \cite[\S 3.3.4, Lemma 3.9]{katz-diss-23} for a detailed discussion).

Let $b = \lceil c\cdot \log n \rceil$ as given by our hypothesis and let $\mathcal{E}$ be the event that there exists a pair of vertices $u,v \in [n]$ where  $\varphi_b(u) = \varphi_b(v)$. Since the angular coordinates are sampled independently and uniformly from $[0,2\pi)$, the first $b$ bits of every angular coordinate form an independent uniformly distributed bit string in $\{0,1\}^b$. Hence, for every fixed pair of distinct vertices $u,v \in [n]$ we have $\Prob{\varphi_b(u)=\varphi_b(v)} = 2^{-b}$. Subsequently, it holds for our "collision" event $\mathcal{E}$ by a union bound
$$
\Prob{\mathcal{E}} \leq {n\choose 2}\cdot 2^{-b}\in\bigO(n^{-c+2}),
$$
where in the last step we plugged in $b = \lceil c\cdot \log n \rceil$. We then obtain the desired probability bound by the transfer from the binomial model to Poisson~\cite[Lemma 3.9]{katz-diss-23} which yields $\Prob{\mathcal{E}}\cdot\bigO(n^{1/2}) \in \bigO(n^{-c+5/2})$.
\end{proof}
\end{document}